\newcommand{\ignore}[1]{}
\newcommand{\eps}{\mathrm{eps}}
\begin{document}
    
    \title{Wasserstein F-tests for Fr\'echet regression on Bures-Wasserstein manifolds}
    \author{Haoshu Xu\thanks{Graduate Group in Applied Mathematics and Computational Science, University of Pennsylvania, Philadelphia, PA 19104, USA; email: \texttt{haoshuxu@sas.upenn.edu}}
    \and 
    Hongzhe Li\thanks{Department of Biostatistics, Epidemiology and Informatics, University of Pennsylvania, Philadelphia, PA 19104, USA; email: \texttt{hongzhe@pennmedicine.upenn.edu}.}
    }

    \maketitle
    
    \abstract{  
        This paper addresses the problem of regression analysis where the outcome is a random covariance matrix and the covariates are Euclidean. The study is situated within the framework of Fr\'echet regression on the Bures-Wasserstein manifold, which is pertinent to fields like single-cell genomics and neuroscience, where covariance matrices are observed across many samples. 
        Fr\'echet regression on the Bures-Wasserstein manifold is formulated as estimating the conditional Fr\'echet mean given covariates $x$.  A non-asymptotic $\sqrt{n}$-rate of convergence (up to $\log n$ factors) is obtained for our estimator, uniformly for $\norm{x} \lesssim \sqrt{\log n}$, which is crucial for  deriving the asymptotic null distribution and assessing the power of our proposed statistical test for the null hypothesis of no association.   Additionally, a central limit theorem for the point estimate is derived, offering insights into testing covariate effects. The null distribution of the test statistic is shown to converge to a weighted sum of independent chi-square distributions. The test’s power is also demonstrated against a sequence of contiguous alternatives. Simulation results validate the accuracy of the asymptotic distributions. 
        Finally, the proposed methods are applied to a single-cell gene expression dataset, illustrating changes in gene co-expression networks with age.
    }
    \medskip
    
    \noindent\textbf{Keywords:} Fr\'echet regression; Functional calculus;  Hypothesis testing;  Optimal transport;   Wasserstein distance.

    \clearpage
    \tableofcontents
    \medskip

    \section{Introduction}
        \label{sec: intro}
        In modern data analysis, positive definite matrices frequently arise across various fields, including medical imaging \citep{dryden09, fillard07}, neuroscience \citep{friston11_brain,kong20,kong21}, signal processing \citep{arnaudon13}, and computer vision \citep{caseiro12}. For example, in large-scale single-cell RNA-seq data, individual-specific covariance matrices can be estimated and interpreted as co-expression networks among a set of genes. In neuroimaging, covariance matrices (or correlation matrices after standardization) of multiple brain regions are used to summarize functional connectivity, providing insights into brain network organization. A central challenge in these applications is performing regression analysis where the covariance matrix serves as the outcome variable in relation to a set of covariates.

Several regression models have been proposed for covariance matrix outcomes. \citet{Chiu1996} developed a method that models the elements of the logarithm of the covariance matrix as a linear function of the covariates, but this approach requires estimating a large number of parameters. \citet{HoffNiu} proposed a regression model where the covariance matrix is expressed as a quadratic function of the explanatory variables. \citet{Zou2017} linked the matrix outcome to a linear combination of similarity matrices derived from the covariates and examined the asymptotic properties of different estimators under this framework. \citet{Zhao_CAP} introduced the Covariate Assisted Principal (CAP) regression model for multiple covariance matrix outcomes, focusing on identifying linear projections of the covariance matrices associated with the covariates. \citet{kong20} and \citet{kong21} developed linear and nonparametric regression methods for high-dimensional, low-rank matrices using nuclear norm regularization. However, these approaches often either impose specific structural assumptions on the covariance matrices or require the estimation of a large number of parameters.

In this paper, we focus on general regression analysis where the responses are symmetric positive-definite (SPD) matrices and the predictors are Euclidean within the framework of the Fr\'echet regression model\citep{muller16}.
This model defines a global regression function that links response data in an arbitrary metric space with Euclidean predictors. A key consideration in this context is the choice of an appropriate metric for SPD matrices. Various metrics are available, including the Log-Euclidean metric \citep{logEuc}, the Cholesky metric \citep{dryden09} and the Log-Cholesky metric \citep{lin19}. Each of these metrics has distinct properties and implications for regression analysis. However, the Bures-Wasserstein metric $W$, introduced by \citet{bures69}, stands out for its desirable mathematical and geometric properties. It is defined for any pair of SPD matrices $A,B \in \cS_d^{++}$ as 
\begin{equation}
    W(A,B) = \sbr{\tr{A} + \tr{B} - 2 \tr \rbr{A^{1/2}BA^{1/2}}^{1/2}}^{1/2}.
    \label{eqn: W_gaussian}
\end{equation}
This distance possesses desirable properties under both scaling and rotation: for positive scalar $c>0$ and orthogonal matrix $O \in \cO_d$, it satisfies
\begin{align*}
    W(cA,cB) &= c^{1/2} W(A,B)\\
    W(OAO^\top, OBO^\top) &= W(A,B)
\end{align*}
Known as the Bures distance in quantum information theory \citep{bures69}, this metric is equivalent to the Wasserstein distance between two centered Gaussian distributions with specified covariance matrices. The Wasserstein distance, a fundamental concept in optimal transport (OT) theory \citep{villani03, villani09}, which blends optimization, analysis, and geometry, has become highly valuable in various statistical and machine learning applications \citep{imbens06, deb23_rank, wgan17, ot17_domain_adap, hallin21}. This equivalence provides the Bures-Wasserstein distance with a clear and meaningful probabilistic interpretation as the minimal "effort" required to transport one Gaussian distribution, defined by its covariance matrix, to another.
In single-cell genomics, most measurement technologies are destructive, meaning that the same cell cannot be observed multiple times or profiled over time. Consequently, measurements at each time point are often modeled as distributions, making optimal transport techniques well-suited for analyzing the associated dynamics \citep{schie19_ot,bunne22_prox,bunne23_aligned_sb,bunne23_sb_gau,bunne23_neural_ot}. Given its connection to optimal transport theory, the Bures-Wasserstein metric is a natural choice for studying gene expression covariance matrices. When equipped with this metric, the space $\cS^{++}_d$ becomes a Riemannian manifold, known as the Bures-Wasserstein manifold \citep{bhatia19_BW}.

In contrast, the Log-Euclidean metric, the Cholesky metric  and the Log-Cholesky metric lack clear practical interpretations and do not naturally align with the structure of distributions or physical models. 
In addition, the Log-Euclidean metric $d_{\mathrm{LE}}$ is insensitive to scaling; specifically, $d_{\mathrm{LE}} (cA,cB) = d_{\mathrm{LE}}(A,B)$ for any $c>0$, making it less suitable for applications where the magnitude of matrices is a critical factor. 
The Log-Cholesky metric $d_{\mathrm{LC}}$ treats the diagonal and off-diagonal components of the Cholesky factor $L_A$ differently, resulting in $d_{\mathrm{LC}}(cA,cB)$ not being proportional to $d_{\mathrm{LC}}(A,B)$ for $c>0$, which further limits its applicability.
Additionally, both the Cholesky metric $d_{\mathrm{C}}$ and the Log-Cholesky metric $d_{\mathrm{LC}}$ lack rotational invariance; specifically, for $O \in \cO_d$,
\begin{align*}
     d_{\mathrm{C}}(OAO^\top, OBO^\top)&\neq d_{\mathrm{C}}(A,B),\\
    d_{\mathrm{LC}}(OAO^\top, OBO^\top) &\neq d_{\mathrm{LC}}(A,B).
\end{align*}
These limitations reduce the practical utility of these metrics for applications involving SPD matrices where scaling and rotation are essential, such as in aligning data from different sources \citep{rong}.

\ignore{
    In this paper, we focus on regression analysis where the responses are valued in $(\cS^{++}_d,W)$ and the predictors are Euclidean. 
    the affine-invariant metric \citep{moak05, Fletcher07},

    A fundamental consideration in the study of regression models for covariance matrices is the choice of metric. 
    Various metrics on the space $\cS^{++}_d$ of $d \times d$ positive definite matrices have been explored, including the log-Euclidean metric [lin19], Cholesky metric [aoas09] and log-Cholesky metric [lin19].

    the trace metric \citep{lang99}, affine-invariant metric \citep{moak05,Fletcher07} and log-Cholesky metric \citep{lin19}. Among these, the Bures-Wasserstein metric $W$, originally introduced by \citet{bures69} , is defined for any pair of matrices $Q,S \in \cS^{++}_d$ as

    This metric has garnered interest in quantum information theory \citep{bures69}, where it is referred to as the Bures distance, and it coincides with the Wasserstein distance between two centered Gaussian distributions with corresponding covariance matrices. The Wasserstein distance, a special case of optimal transport (OT) theory—which lies at the intersection of optimization, analysis, and geometry—measures the minimal cost required to transport mass from one probability distribution to another \citep{villani03,villani09}. This metric has proven valuable in various statistical and machine learning tasks \citep{imbens06,deb23_rank,wgan17,ot17_domain_adap,hallin21}.
}

Having selected the Bures-Wasserstein distance as the metric for SPD matrices in the Fr\'echet regression model, we consider independent and identically distributed (i.i.d.) pairs of predictor and response variables $(X_1,Q_1),\ldots, (X_n,Q_n) \in \RR^p \times \cS^{++}_d$. The objective is to conduct inference, particularly to test the effect of the covariate $X$ on the response variable~$Q$. 
Previous research on the Fr\'echet regression model has primarily focused on consistency in the asymptotic regime \citep{muller16,muller22_unfm}, with inference being considered only in the specific case of one-dimensional (1D) density curves as response variables under the Wasserstein metric \citep{petersen21_WFtest}. Notably, for any pair of 1D distributions $\mu,\nu \in \cP(\RR)$ with distribution functions $F_{\mu}, F_{\nu}$, the 2-Wasserstein distance between them is given by $\sbr{\int_0^1\abs{F^{-1}_{\mu}(t)-F^{-1}_{\nu}(t)}^2 d t}^{1/2}$. Therefore, the 1D Wasserstein space is unique in that it  has zero sectional curvature \citep{ambrosio}, and can be \textit{isometrically} embedded into a Hilbert space. This flat geometry allows for closed-form expressions in 1D Fr\'echet inference and provides well-understood solutions to various problems
\citep{zemel_vic_16,muller23_W_regression,bigot17_pca}. 
However, most metric spaces of interest are nonlinear and exhibit nonzero curvature. For example, the Wasserstein space of distributions in $d$ dimension is positively curved for $d>1$ \citep{ambrosio}, with the Bures-Wasserstein manifold being a special case. Consequently, \textit{isometric} embeddings of such curved metric spaces into a Hilbert space cannot be assumed, making it difficult to derive distributional results for Fr\'echet regression in these spaces. This complexity poses additional challenges for statistical inference, particularly in providing guarantees on significance levels and power.

\subsection{Main contribution}
We focus on Fr\'echet regression on the Bures-Wasserstein manifold, and our main contributions are threefold.

First, we establish a non-asymptotic $\sqrt{n}$-rate of convergence (up to $\log n$ factors) for the regression estimate $\hQ(x)$ uniformly over the region $\norm{x} \lesssim \sqrt{\log n}$. To the best of our knowledge, this is the first non-asymptotic uniform convergence result for Fr\'echet regression over a potentially diverging region. Beyond the standard assumptions of light tails for $X$ and $Q$, we only require well-separation and a local curvature lower bound, which are mild conditions and are verified in a simple case.
These results are crucial for later deriving the asymptotic null distribution and power of our proposed test for the association between covariance matrices and covariates.

Second, we derive a central limit theorem for the point estimate $\hQ(x)$, leading to the construction of a pointwise confidence region. The covariance operator of the limiting Gaussian distribution is shown to have contributions from two sources: the intrinsic variability in $Q$ and the imperfect information regarding $X$.

Third, we carefully construct a test statistic with a tractable asymptotic null distribution, which is represented as a weighted sum of $\chi^2_p$ distributions.
The weights are determined by the covariance of the tangent vector, which can be interpreted as a generalization of classical noise variance. The proposed test is also shown to be powerful against a sequence of contiguous alternatives. To the best of our knowledge, this is the first test developed for Fr\'echet regression on a space with nonzero sectional curvature.

Finally, we validate our theoretical results through numerical simulations and real applications. 

\subsection{Related works}

\paragraph{Statistical OT} In addition to advances in computational optimal transport (OT) \citep{cuturi_13,19_computational_ot,altschuler_near-linear_2017}, there has been a growing interest in the statistical aspects of OT, where the stability of estimated densities \citep{weed19_cost_est}, Wasserstein distances \citep{aop19_op_cst_clt,weed19_entropy_cost,weed23_entropy_cost_clt,altschuler_asymptotics_2022}, transport maps \citep{rig21_map_optimal,weed22_map_ent,weed22_map_plug,weed22_entrp_map_clt,weed23_map_disc_est,weed23_map_clt}, and Fr\'echet means \citep{agueh11_Wbary,pass17_barycenter,legouic17_existence_W_bary, rig22_fast_alex,alt21_averaging} are investigated in the presence of sampling noise.

For the Wasserstein Fr\'echet mean, \citet{rig22_fast_alex} established a parametric rate of convergence for the empirical Fr\'echet mean in the more general Alexandrov spaces, which include the 2-Wasserstein space as a special case, by introducing a bi-extendibility condition that translates into regularity conditions on the Kantorovich potentials. This condition was later relaxed by \citet{che20_gd} when establishing the linear rate of convergence for gradient descent algorithms over the Wasserstein space.

Taking this further, \citet{zemel_vic_16} and \citet{agueh17_clt} established central limit theorems for the empirical Fr\'echet mean of 1D distributions. Subsequently, a central limit theorem for multivariate Gaussians was established by exploiting the first-order differentiability of optimal transport maps \citep{aap21}.

\paragraph{Fr\'echet mean} The Fr\'echet mean is a natural generalization of the concept of an average to abstract metric spaces. For its properties in general curved metric spaces, see \citet{12_Bary_alex,yokota16,rig22_fast_alex} and references therein. The existence and uniqueness of the Fr\'echet mean in the context of Riemannian manifolds and Wasserstein spaces are established in \citet{agueh11_Wbary,pass17_barycenter,legouic17_existence_W_bary}. The asymptotic properties of the empirical Fr\'echet mean on a Riemannian manifold are addressed in \citet{bha03, bha05,rig22_fast_alex}.

\paragraph{Fr\'echet regression} 
The Fr\'echet regression model can be viewed as an extension of the Fr\'echet mean by incorporating a weighted average and was first introduced by \citet{muller16}. \citet{petersen21_WFtest} proposed an F-test specifically for the case of 1D density responses. In their approach, uniform convergence was not necessary for inference because the problem could be embedded into a Hilbert space, allowing for explicit expressions. However, in our case, demonstrating uniform convergence is crucial to ensure that the contributions from the remainder term in the Taylor expansion are negligible.

It is worth noting that while the uniform convergence of Fr\'echet regression is also addressed in \citet{muller16} and \citet{muller22_unfm}, their results are asymptotic and only uniform over a fixed compact set of $x$. In contrast, our result is non-asymptotic, with uniformity achieved within a compact set that expands in diameter to accommodate the potential unboundedness of $\supp X$.

\paragraph{Regression models on manifolds} It is also important to note that the Fr\'echet regression model is defined purely in terms of distance, making it applicable to any abstract metric space. Meanwhile, a separate line of research focuses on regression on manifolds \citep{yuan_local_2012,jrssb17_RSS,lin23,muller23_W_regression}. These regression models are grounded in the concept of tangent spaces from differential geometry. Because tangent spaces are linear, they allow regression on manifolds to essentially reduce to classical linear regression within these tangent spaces.

\subsection{Organization}
The remainder of the paper is organized as follows. In Section \ref{sec: prelim}, we provide the necessary background on optimal transport. Section \ref{subsec: model formulation} formulates the Fr\'echet regression model, followed by the assumptions outlined in Section \ref{subsec: assumption}. The main results are presented in Section \ref{sec: main}, where we demonstrate the uniform convergence of our estimator in Section \ref{subsec: est} and propose our test along with theoretical guarantees in Section \ref{subsec: test}. Finally, Section \ref{sec: numerics} presents a Riemannian gradient descent algorithm and numerical simulations to validate our theory. The proofs of our theorems and technical lemmas are provided in the Appendix. We conclude with a discussion in Section \ref{sec: discuss}.

\subsection{Notation}
We denote by $\ZZ$ and $\RR_+$ the set of integers and the set of non-negative real numbers.
For any $a,b \in  \RR$, we write $a \vee b = \max \rbr{a,b}$ and $a \wedge b = \min \rbr{a,b}$.
For any $x>0$, we write $\log^{+}(x):=\log (x) \vee 1$.
For any integer $K \geq 1$, $[K]=\cbr{1,\ldots,K}$.
Given $z_i \in \RR$ for $i \in [n]$, the set $\cbr{z_1,\ldots,z_n}$ is denoted by $z_1^n$.
The Euclidean norm on $\RR^p$ is denoted $\norm{\cdot}$. For any $x \in \RR^p$ and $L>0$, let  $B_x(L)=B(x,L)=\cbr{y \in \RR^d :\norm{x-y} \leq L}$.
We denote by $\cS_d, \cS_d^+, \cS_d^{++}$ the set of all $d \times d$ symmetric, positive semi-definite and positive definite matrices. For any real $a<b$, we define $\cS_d(a,b):= \cbr{A \in \cS_d: a I_d \prec A \prec b I_d}$. The subscript $d$ is omitted when it's clear from context. 
Given any $A \in \cS_d$, denote the largest and smallest eigenvalue of $A$ by $\eigmax{A}$ and $\eigmin{A}$.
The Frobenius norm and operator norm of a matrix $A$ is denoted by $\Fnorm{A}$ and $\Onorm{A}$.
Given any matrix $A \in \RR^{m,n}$, let $\vecc A \in \RR^{mn}$ denote the vectorized $A $ obtained by stacking columns of $A$.
Given a random variable $X$ and $\alpha>0$, the $\psi_{\alpha}$-"norm" of $X$, denoted by $\Ynorm{\alpha}{X}$, is defined in Appendix \ref{subsec: Techlem_conc}.
The support of a probability distribution is denoted by $\supp(\cdot)$.

Given normed spaces $Y$ and $Z$, let $L(Y;Z)$ denote the space of all bounded linear operator from $Y$ to $Z$. Given a function $\phi:Y \to Z$ and integer $k \geq 0$, the $k$-th differential $d^k\phi$, its operator norm $\norm{d^k\phi}$ and symmetric norm $\opnorm{d^k \phi}$ are defined in Section \ref{sec: prelim}.

Finally, the quantities $C$ and $c$ will refer to constants whose value may change from line to line. 
Given sequences $(a_n)^{\infty}_{n=1}$ and $(b_n)^{\infty}_{n=1}$, we write $a_n \lesssim b_n$ if there exists $C>0$ such that $a_n \leq C b_n$, and we also write $a_n \asymp b_n$ if $b_n \lesssim a_n \lesssim b_n$. 
The constant $C$ is always permitted to depend on $d$ and other problem parameters when they are clear from context.

    \section{Preliminaries}
        \label{sec: prelim}

We provide a concise overview of fundamental concepts in optimal transport, along with associated differential properties, specifically focusing on the case of centered Gaussian distributions.

Given a Polish space $(E,d)$, let $\cP_2(E)$ denote the collection of all (Borel) probability measures $\mu$ on $E$ such that $\mathbb{E}_{X \sim \mu}d(X,y)^2<\infty$ for some $y \in E$. One can show that the definition of $\cP_2(E)$ is independent of the choice of $y$. We specialize to the case when $E = \RR^d$ with Euclidean distance. For any pair of measures $\mu, \nu \in \cP_2(\RR^d)$, let $\Pi(\mu, \nu)$ be the set of couplings of between $\mu$ and $\nu$, that is, the collection of probability measures $\pi$ on $\RR^d \times \RR^d$ such that if $(X,Y)\sim \pi$, then $X \sim \mu$ and $Y \sim \nu$. The 2-Wasserstein distance between $\mu$ and $\nu$ is defined as 
\begin{equation}
    W(\mu, \nu):= \sbr{\inf _{\pi \in \Pi(\mu, \nu)} \EE_{(X, Y) \sim \pi} \norm{X-Y}^2}^{1/2}
    \label{def: wstn}
\end{equation}
Here, with a slight abuse of notation, we use $W$ to denote both the Wasserstein distance between two distributions and the Bures-Wasserstein distance between two PSD matrices. This notation is justified by the fact that both distances coincide and have a closed-form expression (\ref{eqn: W_gaussian}) when we identify centered Gaussian distributions with their covariance matrices.

Let $\cP_{2, \mathrm{ac}}(\RR^d)$ denote the subset of measures in $\cP_2(\RR^d)$ that are absolutely continuous with respect to the Lebesgue measure.
Given $\mu_0, \mu_1 \in \cP_{2, \mathrm{ac}}(\RR^d)$, Brenier's theorem guarantees the existence of  a unique optimal coupling $\pi^{\star} \in \Pi(\mu_0, \mu_1)$ that achieves the minimum in (\ref{def: wstn}) and that it is induced by the optimal transport map $T_{\mu_0}^{\mu_1}: \RR^d \to \RR^d$ in the sense that $T_{\mu_0}^{\mu_1}(X) \sim \mu_1$ whenever $X \sim \mu_0$. Specifically, when $\mu_0,\mu_1$ are centered Gaussian distributions with covariance matrices $Q, S$, the optimal transport map is given by the linear map
\begin{equation}
    T_{Q}^{S} = S^{ 1/2}\rbr{S^{ 1/2} Q S^{ 1/2}}^{-1/2} S^{1/2} = Q^{ -1/2}\rbr{Q^{ 1/2} S Q^{ 1/2}}^{1/2} Q^{-1/2}
\end{equation}
For completeness, additional background on the geometry of optimal transport is provided in Appendix \ref{subsec: appx_bg_ot}.

\citet{aap21} showed that for any fixed $S \in \cS^{++}_d$, $T_Q^S$ is (Fr\'echet) differentiable with respect to $Q$, with the differential at $Q$ denoted by $dT^S_Q$. They also showed that for fixed $S \in \cS^{++}_d$, the squared Wasserstein distance $W^2(\cdot,S): \cS^{++}_d \to \RR$ is twice differentiable, with the corresponding 1st and 2nd differential $d W^2(Q,S),d^2 W^2(Q,S)$ satisfying
\begin{equation}
    \begin{aligned}
        &d W^2(Q, S)(X)=\inner{I-T_Q^S}{X} \\
        & d^2 W^2(Q, S)(X, Y)=-\inner{X}{dT_Q^S(Y)}
    \end{aligned}
\end{equation}
In this paper, higher order differentials $d^k T^S_Q$ are essential for the development of the theory. Hence additional background on functional calculus are provided in Appendix \ref{subsec: appx_bg_func} for self-containedness.

    \section{Problem Formulation}
        \label{sec: formulation}
        In this section, we formulate the Fr\'echet regression model in Section \ref{subsec: model formulation} and outline the assumptions in Section \ref{subsec: assumption}.

\subsection{Fr\'echet regression on Bures-Wasserstein manifold}
\label{subsec: model formulation}
Given a metric space $(\cY,d)$, let $\PP$ be a probability distribution over $\RR^p \times \cY$ that generates random pairs $(X,Y)$. The primary challenge in formulating a regression model between $Y$ and $X$ lies in the fact that the metric-space-valued response  $Y$ does not lend itself to linear operations. The Fr\'echet regression model \citep{muller16} seeks to generalize classical linear regression to a general metric space, building upon the concept of the Fr\'echet mean, which we introduce first.

The Fr\'echet mean $\EE_{\text{Fr\'echet}} Y$, generalizes the notion of an average in a general metric space $(\cY,d)$ and is defined as:
\begin{align}
    \EE_{\text{Fr\'echet}} Y:= \argmin_{y \in \cY} \EE_{Y} d^2(y,Y) 
    \label{eqn: frechet_mean}
\end{align}
The above definition is motivated by the fact that when $\cY$ is a Euclidean space,  $\EE_{\text{Fr\'echet}} Y$ coincides with the classical expectation $\EE Y$. Hence, we will omit the subscript and denote the Fr\'echet mean simply as $\EE Y$ hereafter.

With the notion of the Fr\'echet mean established, the Fr\'echet regression model proposed by \citet{muller16} is defined as:
\begin{equation}
    \EE_{\mathrm{\text{Fr\'echet}}} \sbr{Y | X=x} = \argmin_{y \in \cY} \EE_{(X,Y) \sim \PP} \sbr{w(x,X)d^2(y,Y)}
    \label{eqn: frechet_metric}
\end{equation}
where the weight function $w$ is defined as
\begin{equation}
    w(x, X)=1+(x-\mu)^{\top} \Sigma^{-1}(X-\mu), \quad \mu=E(X), \Sigma=\operatorname{Var}(X)
    \label{eqn: frechet_weight}
\end{equation}
In this context, the conditional expectation $\EE_{\mathrm{\text{Fr\'echet}}} \sbr{Y | X=x}$ on the left-hand side of (\ref{eqn: frechet_metric}) is a natural extension of \eqref{eqn: frechet_mean}, defined as the conditional minimizer:
\begin{align*}
    \EE_{\mathrm{\text{Fr\'echet}}} \sbr{Y| X=x} := \argmin_{y \in \cY} \EE_{Y} \sbr{d^2(y,Y) | X=x}
\end{align*}
The objective function $\EE \sbr{w(x,X) d^2(y,Y)}$ on the right-hand side of (\ref{eqn: frechet_metric}) generalizes the Fr\'echet mean (\ref{eqn: frechet_mean}) by introducing a weighted expectation, similar in spirit to kernel estimators in non-parametric statistics \citep{larry}.
The specific choice of the weight $w(x,X)$ in (\ref{eqn: frechet_weight}) is motivated by the requirement that the minimizer corresponds to the desired conditional expectation of $Y$ at $x$ under the classical linear regression model. Specifically, 
when $\cY = \RR$ and $\EE[Y|X=x]= a^\top (x-\mu)+b$, one can verify that
\begin{align}
    a^\top (x-\mu)+b = \argmin_{y \in \RR} \EE_{Y} \sbr{w(x,X)\rbr{y-Y}^2}
    \label{eqn: linear}
\end{align}
For further details on the existence and uniqueness of the various concepts defined above, see \citet{muller16} and the references therein.

When specialized to the Bures-Wasserstein manifold $(\cY,d) = (\cS_d^{++},W)$, we denote $Q^*(x):= \EE_{\mathrm{\text{Fr\'echet}}} \sbr{Q|X=x}$, and our model assumes
\begin{equation}
    Q^*(x) = \argmin_{S \in \cS_d^{++}} F(x,S), \quad \text{where } F(x,S):=\EE \left[  w(x,X) W^2(S,Q) \right]
    \label{eqn: frechet_assumption}
\end{equation}
For detailed discussions on the existence and uniqueness of both the population and empirical Fr\'echet mean on the Bures-Wasserstein manifold, we refer readers to \citet{agueh11_Wbary}, \citet{aap21}, and \citet{invitation}.

In the general case, there is no closed-form expression for $Q^*(x)$, similar to the situation with Bures-Wasserstein barycenters \citep{agueh11_Wbary}.
To gain intuition about the types of functional relationships between $Q$ and $X$ that the Fr\'echet regression model (\ref{eqn: frechet_assumption}) captures, consider a special case where $Q$ is concentrated on matrices that commute with each other. In this setting, for any commuting matrices $S, Q \in \cS_d^{++}$, the Bures-Wasserstein distance~(\ref{eqn: W_gaussian}) between $Q$ and $S$  simplifies to $W(Q, S) = \Fnorm{Q^{1/2} - S^{1/2}}$, which forms an isometric Hilbert embedding of $\supp Q$. Under this simplification, the Fr\'echet regression model (\ref{eqn: frechet_assumption}) essentially reduces to a linear regression model on the square roots of the covariance matrices in a similar flavor as (\ref{eqn: linear}):
\begin{align*}
    Q^*(x) = \argmin_{S \in \cS_d^{++}} \EE \sbr{w(x,X) \Fnorm{S^{1/2} - Q^{1/2}}^2}
\end{align*}
A specific example of this is provided in Example \ref{example_c} in Section \ref{sec: numerics}.

\subsection{Assumptions}
\label{subsec: assumption}
To establish rigorous theoretical guarantees for estimation and hypothesis testing, we impose the following model assumptions, which we believe are both theoretically minimal and sufficiently general.

We start with conditions on the marginal distribution of the covariate $X$ and the conditional distribution of $Q$ given $X$, as outlined in Assumption \ref{assumption: X} and \ref{assumption: bdd_Q}.
\begin{assumption}
    $X$ is sub-Gaussian with $\left\|X \right\|_{\psi_2}\leq C_{\psi_2}$ and $\lambda_{\min}(\Sigma)\geq c_{\Sigma}$ for some constants $C_{\psi_2},c_{\Sigma} >0$.
    \label{assumption: X}
\end{assumption}
 
\begin{assumption}
    Given $X=x \in \supp X$, the eigenvalues of $Q$ are bounded away from $0$ and infinity  in the sense that
    \begin{equation}
        \PP\rbr{Q \in \cS_d \rbr{\gamma_{\Lambda}(\norm{x-\mu})^{-1},\gamma_{\Lambda}(\norm{x-\mu})} | X=x} = 1
        \label{eqn: asmptn_bdd_Q}
    \end{equation}
    where $\gamma_{\Lambda}: \RR^+ \to \RR^+$ is defined by
    \begin{equation}
        \gamma_{\Lambda}(t) := c_\Lambda \rbr{t \vee 1}^{C_\Lambda}
        \label{eqn: asmptn_bdd_Q_gamma1}
    \end{equation}
    for some constants $c_\Lambda \geq 1$ and $C_\Lambda \geq 0$.
    \label{assumption: bdd_Q}
\end{assumption}
Assumption \ref{assumption: bdd_Q} implies upper and lower bounds on both the conditional expectation $Q^*(x)$ (see Lemma \ref{lem: Q*x}) and noise. These bounds become progressively weaker as $x \to \infty$.
In the literature on covariance matrix estimation, an upper bound on the population covariance matrix is commonly assumed \citep{cai_optimal_2010}. The lower bound on $\eigmin{Q^*(x)}$ here is analogous to the uniform upper bound on the conditional densities in Fr\'echet regression for 1D density response curves \citep[Assumption T4]{petersen21_WFtest}, since density is inversely proportional to the standard deviation in a 1D location-scale family. Assumptions on both upper and lower bounds are natural in the context of optimal transport. For example, \citet{rig21_map_optimal}, \citet{weed22_map_plug}, and \citet{weed22_map_ent} assumed smoothness and strong convexity of the Brenier potential for optimal transport map estimation, which translates to upper and lower bounds on the eigenvalues of the covariance matrix in the Gaussian case. Similarly, \citet{alt21_averaging} assume both upper and lower bounds on the eigenvalues to ensure a variance inequality proposed in \citet{che20_gd}, which is critical for proving the $\sqrt{n}$-convergence of the empirical barycenter and the linear convergence of a gradient descent algorithm on the Bures-Wasserstein manifold.
\begin{remark}
    The bounds presented here depend on $\gamma_{\Lambda}(\norm{x-\mu})$, which diverge as $x \to \infty$. This is motivated by the behavior of the conditional mean $\EE (Y|X=x)$ in classical linear regression, where it also diverges as $x \to \infty$. Specifically, when $\mu=0$ and $\EE[Y|X] = a^\top X+b$, it can be shown that $\abs{\EE[Y|X]} \leq \gamma_{\Lambda}(X)$, where $c_{\Lambda} = \norm{a}+\abs{b}$ and $C_{\Lambda}=1$. In our context, we posit that a polynomial growth rate $\gamma_{\Lambda}(t)\lesssim t^{C_{\Lambda}}$, which is permissible under (\ref{eqn: asmptn_bdd_Q_gamma1}), is often met in practical applications.
\end{remark}
\begin{remark}
    The bounded noise assumption can be relaxed by assuming that $Q$ has a light tail conditional on $Q^*(X)$. Specifically, we can assume that
    \begin{align*}
        \PP \cbr{ \frac{\eigmin{Q^*(x)}}{\eigmin{Q}} \vee  \frac{\eigmax{Q}}{\eigmax{Q^*(x)}} > t \Big| X=x} \lesssim  \exp \rbr{-c t^\alpha}, \qquad \forall t>0
    \end{align*}
    for some constant $\alpha>0$. This implies that $\eigmax{Q}$ is still bounded from above and $\eigmin{Q}$ is bounded from below. The proof presented in our paper remains valid by incorporating additional concentration arguments.
\end{remark}

The following assumption pertains to the Fr\'echet regression model. To ensure identifiability, we assume:\begin{assumption}
    For any $x \in \supp X$, $Q^*(x)$ is the unique minimizer of $F(x,\cdot)$.
    \label{assumption: frechet}
\end{assumption}

Then, it is natural to impose further assumptions on the minimizer $Q^*(x)$ of $F(x;\cdot)$. Assumption \ref{assumption: minimizer_global} below concerns the global behavior of $F(x,\cdot)$ outside a local neighborhood around $Q^*(x)$, while Assumption \ref{assumption: minimizer_local} focuses locally on the eigenvalue lower bound of the second differential of $F(x,\cdot)$ at $Q^*(x)$.

\begin{assumption}
    \label{assumption: minimizer_global}
     There exist constants $\alpha_F \geq 1$ and $\delta_F > 0$ such that for any $x \in \supp X$ and any $(\delta,\Delta)$ that satisfies $0 \leq \delta\leq \delta_F \leq \Delta$, the following 
    \begin{equation}
        \inf \left\{  F(x,S) - F(x,Q^*(x)): \delta \leq \Fnorm{S-Q^*(x)} \leq \Delta \right\} \geq \frac{\delta^{\alpha_F}}{\gamma_F(\norm{x-\mu},\Delta)}  
        \label{eqn: assmptn_min_sep}
    \end{equation}
    holds where $\gamma_F: \RR^+ \times \RR^+ \to \RR^+$ is defined by
    \begin{align*}
        \gamma_F(t_1,t_2) = c_F \rbr{t_1 \vee 1}^{C_F} \rbr{t_2 \vee 1}^{C_F}
    \end{align*}
    for constants $c_F \geq 1, C_F \geq 0$. 
\end{assumption}

\begin{remark}
    Assumption \ref{assumption: minimizer_global} is motivated by the well-separated-maximizer assumption in M-Estimation \citep[Lemma 3.2.1]{wellner}, which is also assumed in \citep{muller16}. In the special case when $X$ and $Q$ are independent, Lemma \ref{lem: F_ind} demonstrates that Assumption \ref{assumption: minimizer_global} holds with $\delta_F =1$, $\alpha_F = 2$, $C_F =1$ and some constant $c_F$ large enough. Here we allow for a polynomial dependence on $\norm{x-\mu}$ in the definition of $\gamma_{\lambda}$ so that (\ref{eqn: assmptn_min_eigen}) is still expected to hold in the general case when $X$ and $Q$ are not independent.
\end{remark}

\begin{assumption}
    \label{assumption: minimizer_local}
    For any $x \in \supp X$, consider the symmetric linear operator
    \begin{align*}
        \EE \rbr{-w(x,X)dT_{Q^*(x)}^{Q}}: \cS_d \to \cS_d,
    \end{align*}
    which is the second differential of $F(x,.\cdot)$ as $Q^*(x)$. This operator has a lower bound for its minimum eigenvalue given by:
    \begin{align}
        \lambda_{\min}\rbr{-\EE w(x,X)dT_{Q^*(x)}^{Q}} &\geq \frac{1}{\gamma_{\lambda}(\norm{x-\mu})} \label{eqn: assmptn_min_eigen}
    \end{align}
    where $\gamma_{\lambda}: \RR^+ \times \RR^+ \to \RR^+$ is defined by
    \begin{equation*}
        \gamma_{\lambda}(t) := c_{\lambda} \rbr{t \vee 1}^{C_{\lambda}}
    \end{equation*}
    for some constants $c_{\lambda} \geq 1$ and $C_{\lambda} \geq 0$.
\end{assumption}

\begin{remark}
    We provide some clarification on the term $\EE \rbr{-w(x,X)dT_{Q^*(x)}^{Q}}$ here. 
    It is equal to the second differential of $F(x,\cdot):\cS_d^{++}\to \RR$ at $Q^*(x)$.
    Note that under the Fr\'echet regression model (\ref{eqn: frechet_assumption}), $Q^*(x)$ is assumed to be the minimizer of the objective function~$F(x,\cdot):\cS_d^{++}\to \RR$. As a result, the second differential of $F(x,\cdot)$ at $Q^*(x)$ is positive semi-definite. Similar to the non-singular Fisher information assumption in maximum likelihood estimation \citep{vdv}, we further assume that the second differential $\EE \rbr{-w(x,X) T_{Q^*(x)}^Q}$ has positive eigenvalues that are lower-bounded as in (\ref{eqn: assmptn_min_eigen}). For relevant concepts in functional calculus, see Appendix \ref{subsec: appx_bg_func}, and for the justification that $\EE \rbr{-w(x,X)dT_{Q^*(x)}^{Q}}$ is equal to the second differential, see Appendix \ref{subsec: Techlem_dkT}.
\end{remark}


\begin{remark}
    In the special case when $X$ and $Q$ are independent, which is a consequence of Assumption \ref{assumption: cond_ind} and the null hypothesis of no effect (\ref{eqn: testing_null}) below, one can show that (Lemma \ref{lem: F_ind} in Appendix \ref{subsec: Techlem_QF}) Assumption \ref{assumption: minimizer_local} holds for $C_{\lambda} = 0$ and $c_{\lambda}$ large enough. Again, dependence on $\norm{x-\mu}$ is allowed in order to account for the possible unboundedness of $x$ when $X$ and $Q$ are independent.
\end{remark}

Finally, we assume conditional independence between $X$ and $Q$ given $Q^*(X)$ for hypothesis testing. 
However, this assumption is not necessary for the results on uniform convergence (Theorem \ref{thm: unfm_conv}) or the central limit theorem (Theorem \ref{thm: estimation}).
\begin{assumption}
    $X$ and $Q$ are independent conditional on $Q^{*}(X)$.
    \label{assumption: cond_ind}
\end{assumption}

    \section{Statistical Inference for Fr\'echet Regression on Bures-Wasserstein manifold}
        \label{sec: main}
        Building on the assumptions outlined in Section \ref{sec: formulation}, we now focus on hypothesis testing within the Fr\'echet regression model. We begin by establishing the uniform convergence of the Fr\'echet regression estimator and proving a central limit theorem in Section \ref{subsec: est}. The uniform convergence is not only theoretically important but also crucial for determining the asymptotic size and power of our proposed test. In Section \ref{subsubsec: test_statistic}, we introduce the test statistic, followed by an analysis of its asymptotic null distribution and asymptotic power in Section \ref{subsubsec: test_theory}.

\subsection{Estimation under the Fr\'echet regression model}
\label{subsec: est}

We define the Fr\'echet regression estimator as follows. For $\rho = n^{-1}$, let the empirical mean and the regularized covariance estimator be given by
\begin{equation*}
    \bar{X}=n^{-1} \sum_{i=1}^n X_i, \quad \hSigma_{\rho}=n^{-1} \sum_{i-1}^n\left(X_i-\bar{X}\right)\left(X_i-\bar{X}\right)^{\top} + \rho I_p.
\end{equation*}
The estimator is then defined as
\begin{equation}
    \hQ_{\rho} (x) := \argmin_{S \in \cS^{++}_d} F_{n, \rho}(x,S), \qquad F_{n, \rho}(x,S):= \frac{1}{n}\sum_{i=1}^n \left[  w_{n,\rho} (x,X_i)W^2(S,Q_i) \right]
    \label{eqn: frechet_estimate}
\end{equation}
where $w_{n,\rho}(x,X_i)= 1 + (x-\bar{X})\hSigma_{\rho}^{-1}(X_i - \bar{X})$. 

The estimator was initially studied by \citet{muller16} with $\rho = 0$, where its asymptotic properties in a general metric space were analyzed under additional assumptions on the covering number, based on the theory of M-estimation. Here, we introduce an additional regularization term, $\rho I_p$, to the empirical covariance matrix. This modification offers several benefits:
First, similar to ridge regression, the regularization term enhances numerical stability due to the involvement of the estimated precision matrix in $w_{n,\rho}$.
Next, as shown in Theorem \ref{thm: unfm_conv}, beyond providing a high-probability bound for the estimation error in (\ref{eqn: thm_unfm_supL}), it enables us to establish a uniform non-asymptotic $\sqrt{n}$-rate of convergence for the expected estimation error in (\ref{eqn: thm_unfm_Fmoment}), defined as:
\begin{equation*}
    \mathbb{E} \sup_{\norm{x - \mu} \lesssim \sqrt{\log n}} \Fnorm{\widehat{Q}_{\rho}(x) - Q^*(x)}. 
\end{equation*}
For $\rho=0$, as studied by \citet{muller16}, it is challenging even to prove the existence of such a moment. However, by setting \(\rho = n^{-1}\), we can show that this moment exists and converges uniformly at a parametric rate. Similar results have been reported in the literature on high-dimensional precision matrix estimation \citep{weidong, cai16}. 
Furthermore, the inclusion of the regularization term $n^{-1}I_p$ does not affect the asymptotic central limit theorem (Theorem \ref{thm: estimation}) or the asymptotic level and power of our test (Theorem \ref{thm: test_dist}, \ref{thm: power}). This is because it only introduces a bias on the order of $n^{-1}$ which is negligible compared to the typical stochastic variation of order $n^{-1/2}$. In fact, $\rho$ can be set to any $n^{-c}$ with a constant $c>1/2$. For simplicity, we fix $\rho = n^{-1}$ unless otherwise specified.


Our results on the non-asymptotic uniform convergence of the estimation error are summarized in Theorem \ref{thm: unfm_conv} below.


\begin{theorem}
    \label{thm: unfm_conv}
    Let $1 \leq L_n \leq C_L \sqrt{\log n}$ for some constant $C_L>0$.
    Suppose Assumption~\ref{assumption: X}-\ref{assumption: minimizer_global} hold. Then
    \begin{enumerate}[label = \arabic*.]
        \item For $\rho \in \cbr{0,n^{-1}}$ and any arbitrarily large fixed constant $\tau > 0$, with probability at least $1-C_3 n^{-\tau}$, we have
        \begin{align}
            \sup_{\norm{x -\mu}\leq L_n} \Fnorm{\hQ_{\rho}(x)-Q^*(x)} \leq  C_1 \frac{ \log^{C_2} {n}}{\sqrt{n}}
            \label{eqn: thm_unfm_supL}
        \end{align}
        for some constants $C_1, C_2, C_3>0$ independent of $n$.
        \item For $\rho = n^{-1}$, we have
        \begin{align}
            \EE \sup_{ \norm{x-\mu} \leq L_n } \Fnorm{\hQ_{\rho}(x)-Q^*(x)} \leq C_{4} \frac{\log^{C_{5}}{n}}{\sqrt{n}}
            \label{eqn: thm_unfm_Fmoment}
        \end{align}
        for some constants $C_{4}, C_{5}>0$ independent of $n$.
    \end{enumerate}
\end{theorem}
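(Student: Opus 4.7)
The plan is to run a standard M-estimation argument, converting closeness of objective values into closeness of minimizers via Assumption~\ref{assumption: minimizer_global}. Since $\hat Q_\rho(x)$ minimizes $F_{n,\rho}(x,\cdot)$ and $Q^*(x)$ minimizes $F(x,\cdot)$, the usual two-term trick gives
\[
F(x,\hat Q_\rho(x)) - F(x,Q^*(x)) \;\leq\; 2 \sup_{S \in \mathcal K_n} \bigl|F_{n,\rho}(x,S) - F(x,S)\bigr|,
\]
for a compact set $\mathcal K_n$ containing both $Q^*(x)$ and $\hat Q_\rho(x)$. Combined with~\eqref{eqn: assmptn_min_sep}, this yields
$\Fnorm{\hat Q_\rho(x) - Q^*(x)}^{\alpha_F} \leq 2\,\gamma_F(L_n, \mathrm{diam}(\mathcal K_n)) \cdot \sup_S |F_{n,\rho}(x,S) - F(x,S)|$, provided the right-hand side is small enough to place $\hat Q_\rho(x)$ inside the local neighborhood of radius $\delta_F$. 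Because $L_n \lesssim \sqrt{\log n}$ and $\mathrm{diam}(\mathcal K_n)$ is at most polylogarithmic, $\gamma_F$ contributes only $\log^c n$ factors.

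The next step is to construct a high-probability event $\mathcal E_n$ on which all randomness is simultaneously tame. Under Assumption~\ref{assumption: X}, sub-Gaussian matrix concentration yields $\|\bar X - \mu\| + \|\hat\Sigma - \Sigma\|_{\mathrm{op}} \lesssim \sqrt{(\log n)/n}$ with probability at least $1-n^{-\tau}$, so $\hat\Sigma_\rho^{-1}$ is close to $\Sigma^{-1}$ in operator norm, making $w_{n,\rho}(x,X_i)$ close to $w(x,X_i)$ uniformly in $\|x-\mu\| \leq L_n$. By sub-Gaussianity $\max_i \|X_i-\mu\| \lesssim \sqrt{\log n}$ on $\mathcal E_n$, and then Assumption~\ref{assumption: bdd_Q} gives simultaneous polylog eigenvalue bounds on all $Q_i$, $Q^*(x)$ (via Lemma~\ref{lem: Q*x}), and, by a perturbation argument around $Q^*(x)$, on $\hat Q_\rho(x)$ as well. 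Consequently one can restrict $S$ to $\mathcal K_n \subset \cS_d(c/\log^c n,\,\log^c n)$ without loss.

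On $\mathcal E_n$, I would prove the uniform empirical process bound by a covering/Bernstein argument. For fixed $(x,S)$, $F_{n,\rho}(x,S)-F(x,S)$ is an average of independent, polylog-bounded centered variables, so Bernstein yields deviations of order $\log^c n / \sqrt n$ with probability $1-n^{-\tau}$. The pair $(x,S)$ ranges over a set of polynomial covering number, and on $\mathcal E_n$ the map $(x,S)\mapsto F_{n,\rho}(x,S)$ is Lipschitz in a controlled polylog sense (using that $W^2(S,Q_i)$ and $w_{n,\rho}$ are smooth in their arguments with polylog derivatives), so a union bound over an $n^{-1}$-net upgrades the pointwise bound to the required supremum. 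Inserting this into the inversion of Assumption~\ref{assumption: minimizer_global} and taking the $\alpha_F$-th root gives~\eqref{eqn: thm_unfm_supL}.

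The moment bound~\eqref{eqn: thm_unfm_Fmoment} is where the ridge $\rho=n^{-1}$ becomes essential, and this is the step I expect to be the most delicate. I would split $\mathbb E\sup_x\Fnorm{\hat Q_\rho(x)-Q^*(x)} = \mathbb E[\,\cdot\,\mathbf 1_{\mathcal E_n}] + \mathbb E[\,\cdot\,\mathbf 1_{\mathcal E_n^c}]$; the first summand is already $\lesssim \log^{C_5}n/\sqrt n$ by part~1, while the second requires an \emph{unconditional} a-priori bound on $\|\hat Q_\rho(x)\|_F$. Here $\rho = n^{-1}$ forces $\|\hat\Sigma_\rho^{-1}\|_{\mathrm{op}}\leq n$ deterministically, hence $|w_{n,\rho}(x,X_i)| \leq n\cdot\mathrm{poly}(\|x-\mu\|,\|X_i-\mu\|)$ and, combining with the first-order optimality condition for $\hat Q_\rho(x)$, one obtains $\|\hat Q_\rho(x)\|_F \leq \mathrm{poly}(n)\cdot\max_i\|Q_i\|$. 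Since $\PP(\mathcal E_n^c)\lesssim n^{-\tau}$ can be chosen with $\tau$ arbitrarily large and $\max_i\|Q_i\|$ has polylog moments of all orders under Assumption~\ref{assumption: bdd_Q}, Cauchy–Schwarz kills the polynomial factor and the off-event contribution is negligible. Without the ridge, $\hat\Sigma^{-1}$ has heavy tails (it need not even be finite in expectation) and this final integration step would fail, which is the raison d'être of the regularization.
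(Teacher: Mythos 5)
Your M-estimation argument, as written, does not deliver the $\sqrt{n}$ rate in part~1 — it only gives the slower rate $n^{-1/(2\alpha_F)}$. Assumption~\ref{assumption: minimizer_global} has $\alpha_F \geq 1$, and the paper's own canonical example (Lemma~\ref{lem: F_ind}, the independent case) has $\alpha_F = 2$. Your chain of inequalities is
\[
\Fnorm{\hQ_{\rho}(x) - Q^*(x)}^{\alpha_F} \;\lesssim\; \gamma_F(\cdot,\cdot)\cdot \sup_S \abs{F_{n,\rho}(x,S) - F(x,S)} \;\lesssim\; \polylog{n} \cdot \frac{\polylog{n}}{\sqrt{n}},
\]
and taking the $\alpha_F$-th root gives $\Fnorm{\hQ_{\rho}(x) - Q^*(x)} \lesssim \polylog{n}\cdot n^{-1/(2\alpha_F)}$, which for $\alpha_F = 2$ is $n^{-1/4}$, not $n^{-1/2}$. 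This is precisely the ``slow rate'' the paper proves in Lemma~\ref{lem: prf_conv_lem_slow}, and the paper explicitly remarks afterward that this rate is insufficient for the downstream hypothesis-testing results and must be boosted.

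What your proposal is missing is the bootstrap (self-improvement) step that converts the slow rate into the fast $\sqrt{n}$ rate. The paper does this (Lemma~\ref{lem: fast}) by Taylor-expanding the first-order optimality condition $\frac{1}{n}\sum_i w_{n,\rho}(x,X_i)(T_{\hQ_{\rho}(x)}^{Q_i} - I_d) = 0$ to second order around $Q^*(x)$, solving for $\hQ_{\rho}(x)-Q^*(x)$ to obtain the quadratic inequality
\[
\Fnorm{\hQ_{\rho}(x)-Q^*(x)} \;\leq\; a_0 + a_2\,\Fnorm{\hQ_{\rho}(x)-Q^*(x)}^2
\]
with $a_0 \lesssim \polylog{n}/\sqrt{n}$ and $a_2 \lesssim \polylog{n}$, and then using the slow rate to rule out the large root. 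Crucially, this step requires a lower bound on the operator $-\hPhi_{\rho}(x) = -\frac{1}{n}\sum_i w_{n,\rho}(x,X_i)\,dT^{Q_i}_{Q^*(x)}$, which is where Assumption~\ref{assumption: minimizer_local} (the local curvature/``Fisher information'' lower bound on the second differential) enters — an assumption you never invoke. The well-separation condition alone is not enough; it controls the global landscape of $F$, but not the local quadratic structure at the minimizer, and it is the latter that buys the parametric rate. Your treatment of the moment bound in part~2 (splitting on a high-probability event and using the deterministic bound $\Onorm{\hSigma_{\rho}^{-1}} \leq n$ to control the tail contribution) is essentially the same as the paper's Lemma~\ref{lem: bdd}/Lemma~\ref{lem: F_moment} and is fine as an outline, modulo the fact that the on-event estimate you feed into it must first be the genuine $\sqrt{n}$ bound. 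A smaller point: the map $S \mapsto W^2(S,Q)$ is only $1/2$-Hölder, not Lipschitz, in the Frobenius norm (matrix square root is only $1/2$-Hölder), so the chaining/covering step for $\sup_S\abs{F_{n,\rho}-F}$ requires the Hölder version of Dudley's bound (the paper's Lemma~\ref{lem: Rademacher}); your sketch asserts Lipschitzness, which is not quite right, though it does not change the order of the resulting bound.
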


\begin{remark}
    The factor $\sqrt{\log n}$ in the upper bound of diameter $L_n$ can be refined to $\log^{\xi_L} n$ for an arbitrarily large fixed constant $\xi_L>0$. One implication of Theorem \ref{thm: unfm_conv} is that for $\rho \in \cbr{0,n^{-1}}$, we have
    \begin{align}
        \max_{i \in [n]} \Fnorm{\hQ_{\rho}(X_i) - Q^*(X_i) } \lesssim \frac{\polylog{n}}{\sqrt{n}}
        \label{eqn: thm6_Qi}
    \end{align}
    with probability at least $1 - O\rbr{n^{-\tau}}$ for any fixed $\tau >0$. This result is crucial for deriving the asymptotic size and power of our proposed test.  \citet{muller16} investigated asymptotic uniform convergence over a fixed compact set. In our context, their results state that for a fixed  $L>0$,
    \begin{align*}
        \sup_{ \norm{x-\mu} \leq L} W \rbr{\hQ_{0}(x), Q^*(x)} = O_p\rbr{n^{-\frac{1}{2(\alpha'-1)}}}
    \end{align*}
    for any $\alpha'>2$. However, their results do not guarantee (\ref{eqn: thm6_Qi}) unless the covariate $X$ is bounded by $L$, and therefore are insufficient for deriving the size and power of our test when $X$ has unbound support. Consequently, we employ non-asymptotic bounds and derive (\ref{eqn: thm6_Qi}) from (\ref{eqn: thm_unfm_supL}).
\end{remark}

\begin{remark}
    We use the Frobenius norm to quantify error here instead of the Wasserstein distance for the following reasons:
    \begin{itemize}
        \item Differential properties of $W^2(\cdot,\cdot)$: The Wasserstein distance is used in the objective function (\ref{eqn: frechet_assumption}) due to its favorable properties under scaling and rotation, as well as its meaningful probabilistic interpretation, as discussed in Section \ref{sec: intro}. After adopting the Bures-Wasserstein metric in our formulation of the Fr\'echet regression model, we need to leverage the differential properties of this metric to establish the statistical properties of the estimator. These differential properties are more naturally expressed using the Frobenius norm. For instance, for any $Q, \tQ \in \cS_d^{++}$ and any $A \in \cS_d$, we have
        \begin{align*}
            d_Q W^2(Q, \tQ)(A)=\inner{I_d-T_Q^{\tQ}}{A}
        \end{align*}
        where the Frobenius inner product is used, and $T_Q^{\tQ}$ is the optimal transport map from the normal distribution $N(0,Q)$ to $N(0,\tQ)$. Similarly, existing Taylor expansions for the Bures-Wasserstein metric are also formulated in terms of the Frobenius norm. Hence, it is more straightforward to express the estimation error using the Frobenius norm. For further details on the differential properties of $W$, see Appendix \ref{subsec: Techlem_dkT}.
        \item Relationship Between Frobenius Norm and Wasserstein Distance: The Frobenius norm and the Wasserstein distance are closely related. Specifically, Lemma \ref{lem: diff} in Appendix \ref{subsec: Techlem_dkT} shows that for any $Q, \tQ \in \cS_d^{++}$, the Wasserstein distance $W(Q, \tQ)$ can be upper-bounded in terms of their Euclidean distance. Consequently, we can derive analogous results for the Wasserstein distance, as in Corollary \ref{cor: W_moment} below. In fact, the two rates are equivalent up to a $\polylog{n}$ factor in our setting.
    \end{itemize}
\end{remark}

\begin{corollary}
    \label{cor: W_moment}
    Let $1 \leq L_n \leq C_L \sqrt{\log n}$ for some constant $C_L>0$.
    Suppose Assumption~\ref{assumption: X}-\ref{assumption: minimizer_global} hold. Then
    \begin{enumerate}[label = \arabic*.]
        \item For $\rho \in \cbr{0,n^{-1}}$ and any arbitrarily large fixed constant $\tau > 0$, with probability at least $1-C_{W3} n^{-\tau}$, we have
        \begin{align*}
            \sup_{\norm{x -\mu}\leq L_n} W\rbr{\hQ_{\rho}(x), Q^*(x)} \leq  C_{W1} \frac{ \log^{C_{W2}} {n}}{\sqrt{n}}
        \end{align*}
        for some constants $C_{W1}, C_{W2}, C_{W3}>0$ independent of $n$.
        \item For $\rho = n^{-1}$, we have
        \begin{align*}
            \EE \sup_{ \norm{x-\mu} \leq L_n } W\rbr{\hQ_{\rho}(x), Q^*(x)} \leq C_{W4} \frac{\log^{C_{W5}}{n}}{\sqrt{n}}
        \end{align*}
        for some constants $C_{W4}, C_{W5}>0$ independent of $n$.
    \end{enumerate}
\end{corollary}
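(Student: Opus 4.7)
The plan is to transfer Theorem~\ref{thm: unfm_conv} from the Frobenius norm to the Wasserstein distance via the comparison inequality from Lemma~\ref{lem: diff}: whenever $A,B\in\cS_d^{++}$ have smallest eigenvalues bounded below by $m>0$, one has $W(A,B)\lesssim m^{-1/2}\Fnorm{A-B}$. The standard derivation uses $W(A,B)\le\Fnorm{A^{1/2}-B^{1/2}}$ (from nuclear-operator norm duality applied to $\tr(A^{1/2}B^{1/2})\le\tr\bigl((A^{1/2}BA^{1/2})^{1/2}\bigr)$) combined with the Sylvester identity $A-B=(A^{1/2}-B^{1/2})A^{1/2}+B^{1/2}(A^{1/2}-B^{1/2})$, giving $\Fnorm{A^{1/2}-B^{1/2}}\le\Fnorm{A-B}/\bigl(\sqrt{\eigmin{A}}+\sqrt{\eigmin{B}}\bigr)$.

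First I would establish uniform eigenvalue lower bounds. By Assumption~\ref{assumption: bdd_Q} and Lemma~\ref{lem: Q*x}, $\eigmin{Q^*(x)}\gtrsim\gamma_\Lambda(\norm{x-\mu})^{-1}\gtrsim(\log n)^{-C_\Lambda/2}$ uniformly on $\norm{x-\mu}\le L_n$. On the good event $E$ of Theorem~\ref{thm: unfm_conv} Part~1, the Frobenius deviation is at most $C_1\log^{C_2}(n)/\sqrt n$, eventually much smaller than this polylogarithmic lower bound, so Weyl's inequality yields the same lower bound for $\eigmin{\hQ_\rho(x)}$ uniformly on $E$.

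For Part~1 of the corollary, combining the comparison with these eigenvalue bounds yields $W(\hQ_\rho(x),Q^*(x))\lesssim(\log n)^{C_\Lambda/4}\,\Fnorm{\hQ_\rho(x)-Q^*(x)}$ on $E$, and the claimed bound follows from~(\ref{eqn: thm_unfm_supL}) after absorbing the extra polylogarithmic factor into the exponent $C_{W2}$. For Part~2, write $S_n := \sup_{\norm{x-\mu}\le L_n}W(\hQ_\rho(x),Q^*(x))$ and split $\EE S_n = \EE[S_n\mathbf{1}_E]+\EE[S_n\mathbf{1}_{E^c}]$. The first term is handled exactly as in Part~1 together with~(\ref{eqn: thm_unfm_Fmoment}). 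For the second, I use the crude bound $W^2(\hQ_\rho(x),Q^*(x))\le\tr{\hQ_\rho(x)}+\tr{Q^*(x)}$ (valid since the cross term is non-negative); $\tr{Q^*(x)}$ is polylog in $n$ on $\norm{x-\mu}\le L_n$ by Assumption~\ref{assumption: bdd_Q}, while $\tr{\hQ_\rho(x)}$ admits a deterministic polynomial-in-$n$ upper bound via $\Onorm{\hSigma_\rho^{-1}}\le n$ under $\rho=n^{-1}$ together with the almost-sure bounds on the $Q_i$. Taking $\tau$ large enough in Theorem~\ref{thm: unfm_conv} makes $\EE[S_n\mathbf{1}_{E^c}] = O(n^{-c})$ for any fixed $c$, negligible next to $\log^{C_5}(n)/\sqrt n$.

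The main obstacle will be the deterministic worst-case bound on $\tr{\hQ_\rho(x)}$ used on $E^c$, which must hold uniformly over the random region $\norm{x-\mu}\le L_n$. It should follow from a first-order optimality analysis of $F_{n,\rho}(x,\cdot)$ together with the support bounds of Assumption~\ref{assumption: bdd_Q} and the $n$-bound on $\Onorm{\hSigma_\rho^{-1}}$, but it is the most delicate ingredient of the plan.
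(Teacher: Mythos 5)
Your plan for Part~1 is sound and differs from the paper's only in cosmetic details: the paper already supplies a two-sided comparison between $W^2$ and the Frobenius distance in Lemma~\ref{lem: diff}~\ref{eqn: lem_diff_W_Frob}, which plays exactly the role of your $W(A,B)\le\Fnorm{A-B}/\bigl(\sqrt{\eigmin A}+\sqrt{\eigmin B}\bigr)$, so you could invoke it directly rather than re-deriving the square-root comparison. Combined with the eigenvalue control from Lemma~\ref{lem: Q*x} and (\ref{eqn: prf_slow_3}), this transfers the high-probability Frobenius bound to $W$ on the good event, exactly as you describe.

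Part~2 has a genuine gap, and it is precisely the ``delicate ingredient'' you flagged: there is \emph{no} deterministic polynomial-in-$n$ upper bound on $\tr\hQ_\rho(x)$. Your claim relies on ``almost-sure bounds on the $Q_i$,'' but Assumption~\ref{assumption: bdd_Q} only bounds $Q_i$ \emph{conditionally on $X_i$}, with a bound $\gamma_\Lambda(\norm{X_i-\mu})$ that diverges polynomially in $\norm{X_i-\mu}$; since $X_i$ is sub-Gaussian (hence unbounded), $\eigmax{Q_i}$ has no deterministic bound, and neither does $\hQ_\rho(x)$ (which depends on \emph{all} the $Q_i$, not only those with $\norm{X_i-\mu}\le L_n$). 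The product of a deterministic bound with $\PP(E^c)$ therefore cannot close the argument. The paper instead uses a three-way split of the expectation (see Lemma~\ref{lem: W_moment}, whose proof mirrors Lemma~\ref{lem: F_moment}): it introduces a third event $\tE_\rho := \{\sup_{x\in B_\mu(L)}\Onorm{\hQ_\rho(x)}\le n^2\tM_L\}$; on $\tE_2^c\cap\tE_\rho$ the bound $W^2(\hQ_\rho(x),Q^*(x))\lesssim\Onorm{\hQ_\rho(x)}\vee\Onorm{Q^*(x)}\lesssim n^2\tM_L$ (Lemma~\ref{lem: diff}~\ref{eqn: diff_W_sup}) times $\PP(\tE_2^c)$ suffices; on $\tE_\rho^c$ one must integrate the stretched-subexponential tail of $\sup_x\Onorm{\hQ_\rho(x)}$ established in Lemma~\ref{lem: bdd}. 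To repair your proof you would need to replace the deterministic bound by this tail-integration step (or, equivalently, a second-moment bound on $\sup_x\Onorm{\hQ_\rho(x)}$ via Lemma~\ref{lem: bdd} plus Cauchy--Schwarz), which is the essential content the two-way split $E / E^c$ cannot carry.
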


To the best of our knowledge, Theorem \ref{thm: unfm_conv} and Corollary \ref{cor: W_moment} provide the first non-asymptotic uniform convergence results for Fr\'echet regression over a potentially diverging region. The proof of Theorem \ref{thm: unfm_conv} is intricate, and we provide an outline here. First, we use an approach similar to \citet{agueh11_Wbary} to establish that the largest eigenvalues of the estimates $\hQ_{\rho}(x)$ are uniformly bounded from above. Next, we show that $\hQ_{\rho}(x)$ converges uniformly at a slow rate by employing the chaining method. Special attention is required here due to the H\"older continuity of the squared distance $W^2$; see Lemma \ref{lem: prf_conv_lem_slow} for details. Finally, we achieve a uniform fast rate of convergence by solving a quadratic inequality related to the convergence rate. A detailed proof of Theorem \ref{thm: unfm_conv} is provided in Appendix \ref{sec: prf_unfm_conv}.

Theorem \ref{thm: unfm_conv} implies the pointwise consistency of the Fr\'echet regression estimator (\ref{eqn: frechet_estimate}). Moving one step further, we establish a central limit theorem that is elusive in general metric spaces with nonzero curvature \citep{muller16}.
In order to present the theorem,  we pause to introduce several notations. First, for any $x \in \RR^p$,  define $\vec{x}:=\begin{pmatrix}
    1 &  x^\top
\end{pmatrix}^\top$. For any random vector $X \in \RR^p$ with covariance matrix $\Sigma$, denote $\vec{\Sigma}= \EE \vec{X}\vec{X}^\top \in \RR^{(p+1) \times (p+1)}$. Next, let $[ \EE ( -w(x,X)dT_{Q^*(x)}^{Q} ) ]^{-1}$  denote the inverse of $\EE ( -w(x,X)dT_{Q^*(x)}^{Q} )$ which is a linear operator in $L(\cS_d;\cS_d)$ (Appendix \ref{subsec: Techlem_dkT}). Last, given elements $x_1,x_2$ of a Hilbert space $\cH$, the tensor product operator $x_1 \otimes x_2: \cH \to \cH$ is defined by $\rbr{x_1 \otimes x_2} y = \inner{x_1}{y} x_2$ for any $y \in \cH$; see \citet{hsing} for properties of the tensor product operator and its role in the central limit theorem for random elements of a Hilbert space.

With these notations in place, the central limit theorem for the Fr\'echet regression estimator $\hQ_{\rho}(x)$ is stated as Theorem \ref{thm: estimation}.

\begin{theorem}
    Let $\rho \in \cbr{0,n^{-1}}$. Suppose Assumption \ref{assumption: X}-\ref{assumption: minimizer_global} hold. Then for any fixed $x \in \supp X$, the following central limit theorem
    \begin{equation}
        \sqrt{n}\left[ \hQ_{\rho}(x) -Q^*(x) \right]  \xlongrightarrow{w} \sbr{\EE_{(X,Q)} \left( -w(x,X)dT_{Q^*(x)}^{Q} \right)}^{-1} Z_x,
        \label{eqn: thm_est_Q}
    \end{equation}
    holds. Here $Z_x \sim \cN(0, \Xi_x)$ is a Gaussian random element of $\RR^{d \times d}$ with covariance operator $\Xi_x$ equal to $\EE V_x \otimes V_x$ where
    \begin{align*}
        &V_x = V_{x,1} + V_{x,2}\\
        &V_{x,1} = w(x,X) \rbr{ T^{Q}_{Q^*(x)} - I_d }\\
        &V_{x,2} = - \rbr{ \vec{x}^\top \vSigma^{-1} (\vec{X} \vec{X}^\top - \vSigma)\vec{\Sigma}^{-1} \otimes I_d} \cdot \rbr{\EE \vec{X} \otimes (T^{Q}_{Q^*(x)} - I_d)}
    \end{align*}

    \label{thm: estimation}
\end{theorem}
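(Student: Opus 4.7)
The plan is to follow the classical M-estimation paradigm: use the first-order optimality condition $d_S F_{n,\rho}(x, \hQ_\rho(x)) = 0$, Taylor expand the gradient around $Q^*(x)$, invert the empirical Hessian, and apply a multivariate CLT to the linearized score. From the differentials $d_S W^2(S, Q_i)(A) = \inner{I_d - T^{Q_i}_S}{A}$ and $d^2_S W^2(S, Q_i)(A, B) = -\inner{A}{dT^{Q_i}_S(B)}$ recalled in Section \ref{sec: prelim}, the expansion reads
\begin{equation*}
    0 = d_S F_{n,\rho}(x, Q^*(x)) + d^2_S F_{n,\rho}(x, Q^*(x))\bigl[\hQ_\rho(x) - Q^*(x)\bigr] + R_n,
\end{equation*}
where the integral remainder $R_n$ is controlled by a constant times $\Fnorm{\hQ_\rho(x) - Q^*(x)}^2$ via bounds on $d^2 T$ developed in Appendix \ref{subsec: Techlem_dkT}. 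Theorem \ref{thm: unfm_conv} gives $\Fnorm{\hQ_\rho(x) - Q^*(x)} = O_P(n^{-1/2}\polylog{n})$, so $R_n = o_P(n^{-1/2})$. A standard LLN combined with $\hSigma_\rho^{-1} - \Sigma^{-1} = O_P(n^{-1/2})$ shows the empirical Hessian $-n^{-1}\sum_i w_{n,\rho}(x, X_i)\, dT^{Q_i}_{Q^*(x)}$ converges in operator norm to $H(x) := \EE(-w(x,X)dT^Q_{Q^*(x)})$, which is boundedly invertible by Assumption \ref{assumption: minimizer_local}. Consequently $\sqrt{n}[\hQ_\rho(x) - Q^*(x)] = -H(x)^{-1} \sqrt{n}\, d_S F_{n,\rho}(x, Q^*(x)) + o_P(1)$, so the task reduces to analyzing the score.

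Decompose the score as $\sqrt{n}\, d_S F_{n,\rho}(x, Q^*(x)) = I_n + II_n$, where
\begin{equation*}
    I_n = n^{-1/2}\sum_{i=1}^n w(x,X_i)(I_d - T^{Q_i}_{Q^*(x)}),\quad II_n = n^{-1/2}\sum_{i=1}^n \bigl[w_{n,\rho}(x,X_i) - w(x,X_i)\bigr](I_d - T^{Q_i}_{Q^*(x)}).
\end{equation*}
The first piece $I_n$ is an i.i.d.\ centered sum since the population first-order condition at $Q^*(x)$ reads $\EE\, w(x,X)(I_d - T^Q_{Q^*(x)}) = 0$; its $i$-th summand equals $-V_{x,1}$ evaluated at $(X_i, Q_i)$. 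For $II_n$, use the identities $w(x,X_i) = \vec{x}^\top \vec{\Sigma}^{-1}\vec{X}_i$ and $w_{n,\rho}(x,X_i) = \vec{x}^\top \hat{\vec{\Sigma}}_\rho^{-1}\vec{X}_i$ together with the resolvent expansion $\hat{\vec{\Sigma}}_\rho^{-1} - \vec{\Sigma}^{-1} = \vec{\Sigma}^{-1}(\vec{\Sigma} - \hat{\vec{\Sigma}}_\rho)\vec{\Sigma}^{-1} + O_P(n^{-1})$ to factor
\begin{equation*}
    II_n = \vec{x}^\top \vec{\Sigma}^{-1}(\vec{\Sigma} - \hat{\vec{\Sigma}}_\rho)\vec{\Sigma}^{-1}\cdot \sqrt{n}\Bigl(n^{-1}\sum_{i=1}^n \vec{X}_i (I_d - T^{Q_i}_{Q^*(x)})\Bigr) + o_P(1).
\end{equation*}
Replacing the inner LLN-convergent factor by its limit $M := \EE\, \vec{X}(I_d - T^Q_{Q^*(x)})$ and linearizing $\sqrt{n}(\hat{\vec{\Sigma}}_\rho - \vec{\Sigma}) = n^{-1/2}\sum_i (\vec{X}_i \vec{X}_i^\top - \vec{\Sigma}) + o_P(1)$ (the $\rho I_p$ regularization contributing only an $O(n^{-1/2})$ bias) expresses $II_n$ as an i.i.d.\ sum whose $i$-th summand equals $-V_{x,2}$ evaluated at $X_i$.

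Putting the pieces together, $\sqrt{n}\, d_S F_{n,\rho}(x, Q^*(x)) = -n^{-1/2}\sum_{i=1}^n V_x^{(i)} + o_P(1)$, where $V_x^{(i)}$ denote the i.i.d.\ copies of $V_x = V_{x,1} + V_{x,2}$ evaluated at the sampled $(X_i, Q_i)$, and the multivariate CLT in the finite-dimensional Hilbert space $(\cS_d, \inner{\cdot}{\cdot})$ gives $\sqrt{n}\, d_S F_{n,\rho}(x, Q^*(x)) \xlongrightarrow{w} -Z_x$ with $Z_x \sim \cN(0, \EE\, V_x \otimes V_x)$. Slutsky's theorem then yields $\sqrt{n}[\hQ_\rho(x) - Q^*(x)] \xlongrightarrow{w} H(x)^{-1} Z_x$, as stated. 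The main obstacle is the uniform control of the Taylor remainder $R_n$: without the non-asymptotic rate from Theorem \ref{thm: unfm_conv} one could not guarantee $R_n = o_P(n^{-1/2})$, and the clean separation of the asymptotic variance into the two contributions $V_{x,1}$ and $V_{x,2}$ would fail. A secondary but nontrivial technical point is establishing the requisite third-differential bounds for the optimal transport map $T^Q_{\cdot}$ via the functional-calculus machinery of Appendix \ref{subsec: appx_bg_func}, which underpin both the remainder estimate and the operator-norm LLN for the empirical Hessian.
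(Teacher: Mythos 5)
Your proposal is correct and follows essentially the same route as the paper's proof: the second-order Taylor expansion of the empirical optimality condition, the use of Theorem \ref{thm: unfm_conv} to kill the quadratic remainder, convergence and invertibility of the empirical Hessian via Assumption \ref{assumption: minimizer_local}, and the decomposition of the score into the oracle-weight i.i.d.\ sum (giving $V_{x,1}$) plus the resolvent-linearized plug-in correction (giving $V_{x,2}$), finished by the CLT and Slutsky. Your split into $I_n$ and $II_n$ matches the paper's $a_{n,1}$ versus $a_{n,2}+a_{n,3}$, with your replacement of the LLN-convergent factor by its limit playing exactly the role of the paper's negligible cross term $a_{n,3}$.
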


\begin{remark}
    Since a linear transformation of a multivariate normal distribution is still normal,
    Theorem \ref{thm: estimation} implies that asymptotically the entries of $\sqrt{n}( \hQ_{\rho}(x) -Q^*(x) )$  jointly follow a multivariate normal distribution in $\RR^{d^2}$ when vectorized. More specifically, we have
    \begin{align}
        \sqrt{n} \rbr{\vecc \hQ_{\rho}(x) - \vecc Q^*(x) } \xlongrightarrow{w} \cN \rbr{0, \Omega_x}
        \label{eqn: clt_vec}
    \end{align}
    Here $\Omega_x = H_x^{-1} \EE (\vecc V_x)(\vecc V_x)^\top H_x^{-1}$ where $H_x \in \RR^{d^2 \times d^2}$ is the matrix representing the invertible linear operator $\EE ( -w(x,X)dT_{Q^*(x)}^{Q} )$; see Appendix \ref{subsec: Techlem_dkT} for the expression of $H_x$.
\end{remark}

\begin{remark}
    Theorem \ref{thm: estimation} can be utilized to obtain confidence regions for entries of $Q^*(x)$. The asymptotic variance can be estimated using the plug-in method as follows:
    \begin{align*}
        \hat{\Xi}_x = \frac{1}{n} \sum_{i=1}^n V_{x,i} \otimes V_{x,i} 
    \end{align*}
    where
    \begin{align*}
        V_{x,i} &= V_{x,1,i} + V_{x,2,i}\\
        V_{x,1,i} &= w_{n,\rho}(x,X_i) \rbr{ T^{Q_i}_{\hQ_{\rho}(x)} - I_d }\\
        V_{x,2,i} &= - \rbr{ \vec{x}^\top \hat{\vSigma}_{\rho}^{-1} (\vec{X}_i \vec{X}_i^\top - \hat{\vSigma})\hat{\vSigma}_{\rho}^{-1} \otimes I_d} \cdot \rbr{ \frac{1}{n}\sum_{j=1}^n \vec{X}_j \otimes (T^{Q_j}_{\hQ_{\rho}(x)} - I_d)}
    \end{align*}
    Similarly, $\sbr{\EE_{(X,Q)} \left( -w(x,X)dT_{Q^*(x)}^{Q} \right)}^{-1}$ can be estimated by
    \begin{align*}
        \sbr{\frac{1}{n} \sum_{i=1}^n \rbr{-w_{n,\rho}(x,X_i)dT_{\hQ_{\rho}(x)}^{Q_i}}  }^{-1}
    \end{align*}
    In Fig. \ref{fig: QQ_clt} in Section \ref{subsec: simulation}, the asymptotic variance of each entry of $Q^*(x)$ is extracted using~(\ref{eqn: clt_vec}) and appropriate indexing
\end{remark}
Theorem \ref{thm: estimation} shows that the covariance operator $\Xi_x$ has contributions from both $V_{x,1}$ and $V_{x,2}$. If the expectation $\mu$ and the covariance matrix $\Sigma$ of the predictor $X$ are known, and we obtain an estimate $\hQ^*_{0}(x)$ of $Q^*(x)$ by directly optimizing an 'oracle' objective function $F_n^*(x,S)$ defined as $F^*_n(x,S)= \frac{1}{n} \sum_{i=1}^n w(x,X_i) W^2(S,Q_i)$,
then $\hQ^*_{0}(x)$ would follow a central limit theorem with covariance operator exactly equal to $\EE V_{1,x} \otimes V_{1,x}$. When $\mu$ and $\Sigma$ are unknown and empirical estimates $\hmu$ and $\hSigma$ are plugged in as in (\ref{eqn: frechet_estimate}), there is an additional contribution to $\Xi_x$ from $V_{x,2}$. Theorem \ref{thm: estimation} is supported by numerical experiments in Section \ref{sec: numerics}, and 
the proof is given in Appendix \ref{sec: proof_est}.
To the best of our knowledge, the expression for $\Xi_x$ cannot be further simplified in general. However, under the hypothesis that $X$ and $Q$ are independent, which can be a consequence of the null hypothesis in Section \ref{subsec: test} and Assumption~\ref{assumption: cond_ind}, the term $V_{x,2}$ vanishes, leaving $\Xi_x$ with contributions only from $V_{x,1}$. This result is summarized in Corollary \ref{cor: clt_ind} below; see Appendix~\ref{sec: prf_clt_cor} for the proof.


\begin{corollary}
    \label{cor: clt_ind}
    Instate the assumptions in Theorem \ref{thm: estimation}. If $X$ and $Q$ are independent, then 
    \begin{equation*}
        \sqrt{n}\left[ \hQ_{\rho}(x) -Q^*(x) \right] \xlongrightarrow{w} \left( \EE_{(X,Q)} \left( -w(x,X)dT_{Q^*(x)}^{Q} \right) \right)^{-1} Z_x',
    \end{equation*}
    holds where $Z_x' \sim \cN(0, \Xi_x')$ and
    \begin{align*}
        \Xi_x' = \EE \sbr{w(x,X) \rbr{ T^{Q}_{Q^*} - I_d } \otimes \rbr{ T^{Q}_{Q^*} - I_d } w(x,X)}
    \end{align*}
\end{corollary}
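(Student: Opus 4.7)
} The plan is to invoke Theorem \ref{thm: estimation} directly, which already gives the central limit theorem
\[
\sqrt{n}\bigl[\hQ_{\rho}(x) - Q^*(x)\bigr] \xlongrightarrow{w} \bigl[\EE(-w(x,X)dT_{Q^*(x)}^Q)\bigr]^{-1} Z_x,
\]
with $Z_x \sim \cN(0,\Xi_x)$ and $\Xi_x = \EE V_x \otimes V_x$ where $V_x = V_{x,1} + V_{x,2}$. The entire task is therefore to show that, under the additional hypothesis $X \perp Q$, the second term $V_{x,2}$ vanishes, so that $V_x = V_{x,1}$ and $\Xi_x$ collapses to the stated $\Xi_x'$.

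The key structural observations are the following. First, under independence of $X$ and $Q$, the conditional Fr\'echet mean $Q^*(x)$ coincides with the unconditional Fr\'echet mean of $Q$ for every $x$, call it $Q^*$; in particular, $Q^*(x)$ does not depend on $x$, and the random element $T_{Q^*(x)}^{Q} - I_d = T_{Q^*}^Q - I_d$ is a function of $Q$ alone. Consequently, by independence of $\vec X$ and $Q$, the tensor product factorizes:
\[
\EE \bigl[\vec X \otimes (T_{Q^*}^Q - I_d)\bigr] \;=\; \EE[\vec X] \otimes \EE[T_{Q^*}^Q - I_d].
\]
Second, the first-order optimality condition for $Q^*$ as minimizer of $S \mapsto \EE W^2(S,Q)$, combined with the differential formula $dW^2(S,Q)(A) = \langle I_d - T_S^Q, A\rangle$ recalled in Section~\ref{sec: prelim}, yields $\EE[T_{Q^*}^Q - I_d] = 0$. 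Plugging this into the displayed identity shows that the tensor product is the zero operator, and hence $V_{x,2} \equiv 0$ almost surely.

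With $V_{x,2}=0$, the covariance operator becomes $\Xi_x = \EE V_{x,1}\otimes V_{x,1}$, which is exactly
\[
\Xi_x' = \EE\bigl[w(x,X)(T_{Q^*}^Q - I_d) \otimes (T_{Q^*}^Q - I_d)\, w(x,X)\bigr],
\]
completing the identification claimed in the corollary. The one point requiring a line of care is checking that the first-order optimality condition indeed gives $\EE[T_{Q^*}^Q - I_d]=0$ as an identity in $\cS_d$ (rather than as a pairing against test directions); this follows because $dW^2(S,Q)(A)=\langle I_d - T_S^Q, A\rangle$ is linear in $A$ and the vanishing of the expectation for every symmetric $A$ is equivalent to the vanishing of $\EE[I_d - T_{Q^*}^Q]$ in $\cS_d$. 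No obstacle of substance arises beyond this bookkeeping, since all the heavy analytic work is already absorbed in Theorem \ref{thm: estimation}.
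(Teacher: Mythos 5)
Your proposal is correct and follows essentially the same approach as the paper: under independence, show $V_{x,2}=0$ by proving $\EE\bigl[\vec X \otimes (T^Q_{Q^*}-I_d)\bigr]=0$, which reduces to establishing $\EE\bigl[T^Q_{Q^*}-I_d\bigr]=0$. The one small difference is in how that last identity is obtained: you first observe that $Q^*(x)\equiv Q^*$ is the unconditional Fr\'echet mean (a fact the paper records separately in Lemma~\ref{lem: F_ind}) and apply the first-order condition for the unweighted problem, whereas the paper's proof of the corollary works directly from the weighted optimality condition $\EE\bigl[w(x,X)(T^Q_{Q^*(x)}-I_d)\bigr]=0$, factors it by independence, and uses $\EE\,w(x,X)=1$ to pull out the scalar. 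Both routes are equivalent in substance, and the rest of your argument (tensor-product factorization under independence, and invoking Theorem~\ref{thm: estimation} for the heavy lifting) matches the paper's.
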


\subsection{Hypothesis testing}
\label{subsec: test}
In this section, we focus on testing the global null hypothesis of no effects within the Fr\'echet regression model on the Bures-Wasserstein manifold, defined as follows:
\begin{equation}
    \mathcal{H}_0: Q^*(x) \equiv Q^* \quad \text{for some unknown } Q^*,
    \label{eqn: testing_null}
\end{equation}
where $Q^*(x)$ is constant across all $x$. 

We begin by providing the motivation and the formulation of the test statistic in Section \ref{subsubsec: test_statistic}. Subsequently, we explore its asymptotic null distribution in Section \ref{subsubsec: test_theory}, followed by an analysis of the asymptotic size and power of the proposed test.

\subsubsection{Test statistic}
\label{subsubsec: test_statistic}
Note that a key distinction between the Fr\'echet regression model (\ref{eqn: frechet_assumption}) and classical linear or generalized linear regression models is that the conditional expectation in the Fr\'echet model is determined by an optimization problem, rather than through a link function. Consequently, there is no parameter analogous to the slope \(\beta\) found in linear models. Therefore, the null hypothesis can only be tested by directly aggregating comparisons between the estimated predictions $\hQ_{\rho}(X_i)$ and the Fr\'echet mean $Q^*$.

Given the uniform consistency of the Fr\'echet regression estimator $\hQ_{\rho}(x)$ in Theorem \ref{thm: unfm_conv}, we propose the following test statistic for testing the null hypothesis in (\ref{eqn: testing_null}).
\begin{equation}
    \hat{\cT}_{\rho} = \sum_{i=1}^n \Fnorm{\hat{H}_{\rho}\cdot \left( \hQ_{\rho}(X_i) - \hQ_{\rho}(\bar{X}) \right) }^2, \quad \text{where} \quad \hat{H}_{\rho} =-\frac{1}{n} \sum_{i=1}^n dT_{\hQ_{\rho}(\bar{X})}^{Q_i}
    \label{eqn: test_stat}
\end{equation}

To develop some intuition about $\hat{\cT}_{\rho}$, consider that under the null hypothesis (\ref{eqn: testing_null}), the "difference" between $\hQ_{\rho}(X_i)$ and $Q^*$ should be small. Since $Q^*$ is unknown, we estimate it by $\hQ_{\rho}(\bar{X})$. Note that $\hQ_{\rho}(\bar{X})$ coincides with the Fr\'echet mean of $Q_1,...,Q_n$. Specifically, by definition (\ref{eqn: frechet_estimate}), we have $w_{n,\rho}(\bar{X},X_i)=1$, which implies:
\begin{align*}
    \hQ_{\rho}(\bar{X}) = \argmin_{S \in \cS_d^{++}} \frac{1}{n} \sum_{i=1}^n W^2(S,Q_i)
\end{align*}
which is precisely the Fr\'echet mean of $Q_1,\ldots,Q_n$. Therefore, a sensible test statistic would be of the form
\begin{align*}
    \cT_f =\sum_{i=1}^n f\left( \hQ_{\rho}(X_i), \hQ_{\rho}(\bar{X}) \right)
\end{align*}
where $f: \RR^{d \times d} \times \RR^{d \times d} \to \RR$ is some function that measures the "difference" between $\hQ_{\rho}(X_i)$ and $\hQ_{\rho}(\bar{X})$. The function $f$ is expected to satisfy $f(\cdot,\cdot) \geq 0$, $f(Q,Q) = 0$ and $f(Q,S)=f(S,Q)$. Assuming tightness of the second-order Taylor approximation, the uniform consistency of $\hQ_{\rho}(x)$ (Theorem \ref{thm: unfm_conv}) then implies
\begin{equation}
    \cT_f \approx \sum_{i=1}^{n} \left \langle \frac{1}{2}H_f \left( \hQ_{\rho}(X_i) - \hQ_{\rho}(\bar{X}) \right), \hQ_{\rho}(X_i) - \hQ_{\rho}(\bar{X}) \right\rangle
    \label{eqn: test_stat_intuition_quad}
\end{equation}
where $H_f$ is the Hessian of $f$ at $\hQ_{\rho}(\bar{X})$. This justifies the quadratic form of (\ref{eqn: test_stat}). 

Furthermore, the specification of $H_f$ should depend on the detailed  distribution of $\hQ_{\rho}(X_i) - \hQ_{\rho}(\bar{X})$ such that the quadratic form in (\ref{eqn: test_stat_intuition_quad}) has a well-defined asymptotic distribution. To derive the asymptotic distribution of $\hQ_{\rho}(X_i) - \hQ_{\rho}(\bar{X})$, we rely on the optimality condition $\sum_{j}^{} w_{n,\rho}(x,X_j) (T^{Q_j}_{\hQ_{\rho}(x)} - I_d) = 0$. Under the null hypothesis (\ref{eqn: testing_null}),  and assuming tightness of the first order approximation of the optimality conditions at $X_i$ and $\bar{X}$ we have
\begin{equation}
    \begin{aligned}
        0 &\approx \sum_{j}^{} w_{n,\rho}(X_i,X_j) (T^{Q_j}_{Q^*} - I_d) + \sum_{j}^{} w_{n,\rho}(X_i,X_j) dT^{Q_j}_{Q^*} \left( \hQ_{\rho}(X_i) - Q^* \right)\\
        0 &\approx \sum_{j}^{} w_{n,\rho}(\bar{X},X_j) (T^{Q_j}_{Q^*} - I_d) + \sum_{j}^{} w_{n,\rho}(\bar{X},X_j) dT^{Q_j}_{Q^*} \left( \hQ_{\rho}(\bar{X}) - Q^* \right)
    \end{aligned}
\end{equation}
Take the difference and rearrange, we arrive at
\begin{equation}
    \begin{aligned}
        \left( -\frac{1}{n}\sum_{j}^{} dT_{Q^*}^{Q_j} \right) \cdot \sqrt{n}(\hQ_{\rho}(X_i) - Q^*) &\approx \underbrace{\frac{1}{\sqrt{n}}\sum_{j}^{} (w_{n,\rho}(X_i,X_j) - 1)(T^{Q_j}_{Q^*} - I_d)}_{a_1(X_i)}\\
        &+ \underbrace{\frac{1}{\sqrt{n}}\sum_{j}^{} (w_{n,\rho}(X_i,X_j)-1) dT^{Q_j}_{Q^*} \left( \hQ_{\rho}(X_i) - Q^* \right)}_{a_2(X_i)}
    \end{aligned}
\end{equation}
It can be shown that $a_2(X_i)$ is negligible compared to $a_1(X_i)$, leading to
\begin{equation}
    \left( -\frac{1}{n}\sum_{j}^{} dT_{Q^*}^{Q_j} \right)\cdot \sqrt{n}(\hQ_{\rho}(X_i) - Q^*) \approx a_1(X_i).
    \label{eqn: test_intuition_approx_a1}
\end{equation}
Specifically, the intuition behind (\ref{eqn: test_intuition_approx_a1}) is as follows. The null hypothesis (\ref{eqn: testing_null}) and Assumption \ref{assumption: cond_ind} imply that $X$ and $Q$ are independent. As a result, $\frac{1}{\sqrt{n}}\sum_{j}^{} (w(x,X_j)-1) dT^{Q_j}_{Q^*}$ has zero expectation and is of order $O_p(1)$ by the central limit theorem, suggesting that
\begin{align*}
    \frac{1}{\sqrt{n}}\sum_{j}^{} (w_{n,\rho}(X_i,X_j)-1) dT^{Q_j}_{Q^*} = O_p(1)
\end{align*}
by an approximation argument.
Then the consistency of $\hQ_{\rho}(x)$ implies that $a_2(X_i)$ is of order $o_p(1)$, making it negligible compared to $a_1(X_i)$.

Further calculations confirm a tractable asymptotic distribution for $\sum_{}^{} \left\|a_1(X_i) \right\|^2$, which suggests setting $H_f = (-\frac{1}{n}\sum_{j}^{} dT_{Q^*}^{Q_j}) \otimes (-\frac{1}{n}\sum_{j}^{} dT_{Q^*}^{Q_j})$. 
Since $Q^*$ is unknown, $H_f$ is approximated by $\hat{H}_{\rho}\otimes \hat{H}_{\rho}$, leading directly to our test statistic $\hat{\cT}_{\rho}$ as defined in (\ref{eqn: test_stat}).

\begin{remark}
    \citet{petersen21_WFtest} proposed the following test statistic for the case where
    responses are 1D densities, which has a simpler form compared to ours. For $\rho=0$,
    \begin{align*}
        \hat{\cT}_{\rho,\mathrm{1D}} = \sum_{i=1}^n W^2\left( \hQ_{\rho}(X_i), \hQ_{\rho}(\bar{X}) \right)
    \end{align*} 
    However, their results rely heavily on the isometric Hilbert embedding of the 1D Wasserstein space and do not generalize to higher dimensions. In contrast,  our test statistic (\ref{eqn: test_stat}) is applicable in any dimension and is motivated by the Wald statistic used in generalized linear models. Moreover, one can show that $\hat{\cT}_{\rho}$ and $\hat{\cT}_{n,\mathrm{1D}}$ are equivalent in the special case of 1D Gaussian distributions. Since our test statistic (\ref{eqn: test_stat}) can also be viewed as a generalization of the numerator of the global F-test in multiple linear regression, we refer to $\hat{\cT}_{\rho}$ in (\ref{eqn: test_stat}) as the  Wasserstein F -statistic following \citet{petersen21_WFtest}.
\end{remark}

\subsubsection{Theoretical properties}
\label{subsubsec: test_theory}

We now discuss the theoretical guarantees of our test statistic $\hat{\cT}_{\rho}$ and the corresponding test. To begin, Theorem \ref{thm: test_dist} provides the asymptotic null distribution of $\hat{\cT}_{\rho}$.
\begin{theorem}
    \label{thm: test_dist}
    Let $\rho \in \cbr{0,n^{-1}}$. Suppose Assumption \ref{assumption: X}-\ref{assumption: frechet} and \ref{assumption: cond_ind} hold. Then under the null (\ref{eqn: testing_null}), the test statistic $\hat{\cT}_{\rho}$ satisfies
    \begin{equation}
        \hat{\cT}_{\rho} \xlongrightarrow{w} \sum_{i} \lambda_i w_i
        \label{eqn: asymp_null_dist}
    \end{equation}
    where $w_i$ are i.i.d. $\chi_p^2$ random variables, and $\lambda_i$ are the eigenvalues of the following operator:
    \begin{align*}
        \EE \sbr{\rbr{T_{Q_*(\mu)}^Q -I_d} \otimes \rbr{T_{Q_*(\mu)}^Q -I_d}}
    \end{align*}
\end{theorem}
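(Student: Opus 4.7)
The plan is to linearize $\hat H_\rho\bigl(\hQ_\rho(X_i) - \hQ_\rho(\bar X)\bigr)$ using the first-order optimality conditions for $\hQ_\rho$, reduce $\hat{\cT}_\rho$ to a quadratic form in iid mean-zero vectors, and then diagonalize the resulting Gaussian quadratic form. Set $E_j := T^{Q_j}_{Q^*} - I_d$ and $M_j := \Sigma^{-1/2}(X_j - \mu)$. Under $\mathcal{H}_0$ combined with Assumption \ref{assumption: cond_ind}, $X$ and $Q$ are independent so $E_j \perp M_j$; the Fr\'echet-mean first-order condition forces $\EE E_j = 0$, and $\mathrm{Cov}(M_j) = I_p$.

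First I expand the optimality equation $\sum_j w_{n,\rho}(x,X_j)(T^{Q_j}_{\hQ_\rho(x)} - I_d) = 0$ at $x=X_i$ and $x=\bar X$ to second order around $Q^*$, bound the quadratic remainder using the uniform rate $\sup_i \Fnorm{\hQ_\rho(X_i) - Q^*} \lesssim \polylog{n}/\sqrt n$ w.h.p.\ from Theorem \ref{thm: unfm_conv}, and invoke the LLN (using $X\perp Q$ and consistency of $\hQ_\rho(\bar X)$) to show that the empirical Hessian operators $\hat H_{X_i}$, $\hat H_{\bar X}$, and $\hat H_\rho$ all converge in operator norm to $H := -\EE\, dT^Q_{Q^*}$. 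Subtracting the two expansions and using $w_{n,\rho}(X_i,X_j) - w_{n,\rho}(\bar X,X_j) = (X_i - \bar X)^\top \hSigma_\rho^{-1}(X_j - \bar X)$ yields, uniformly in $i$,
\begin{equation*}
    \sqrt n\, \hat H_\rho\bigl(\hQ_\rho(X_i) - \hQ_\rho(\bar X)\bigr) = \frac{1}{\sqrt n}\sum_{j=1}^n (X_i - \bar X)^\top \hSigma_\rho^{-1}(X_j - \bar X)\, E_j + r_i, \qquad \sup_i \Fnorm{r_i} = o_p(1).
\end{equation*}

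Substituting into (\ref{eqn: test_stat}), expanding the Frobenius norm and collapsing the sum over $i$ via $\sum_i (X_i - \bar X)(X_i - \bar X)^\top = n\hSigma$ gives
\begin{equation*}
    \hat{\cT}_\rho = \frac{1}{n}\sum_{j,k}(X_j - \bar X)^\top \hSigma_\rho^{-1}\hSigma\,\hSigma_\rho^{-1}(X_k - \bar X)\inner{E_j}{E_k} + o_p(1).
\end{equation*}
Since $\rho = n^{-1}$ and $\hSigma \to \Sigma$, the middle factor reduces to $\Sigma^{-1} + o_p(1)$, and replacing $\bar X$ by $\mu$ contributes only $o_p(1)$ (the cross terms involve $\bar M := \Sigma^{-1/2}(\bar X - \mu) = O_p(n^{-1/2})$). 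Hence $\hat{\cT}_\rho = n^{-1}\norm{\sum_j Y_j}^2 + o_p(1)$, where $Y_j := \vecc(E_j)\otimes M_j \in \RR^{d^2 p}$. Under $\mathcal{H}_0$ the $Y_j$ are iid, mean zero, with $\EE Y_j Y_j^\top = \Lambda \otimes I_p$, where $\Lambda := \EE[\vecc(E_1)\vecc(E_1)^\top]$; the Kronecker structure follows from $E_j \perp M_j$ and $\mathrm{Cov}(M_j) = I_p$. A multivariate CLT then gives $\hat{\cT}_\rho \Rightarrow Z^\top Z$ with $Z \sim \cN(0, \Lambda \otimes I_p)$. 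Diagonalizing $\Lambda$, the eigenvalues of $\Lambda\otimes I_p$ are $\{\lambda_i\}$ each with multiplicity $p$, so $Z^\top Z = \sum_i \lambda_i w_i$ with $w_i \overset{\mathrm{iid}}{\sim}\chi^2_p$; the $\lambda_i$ are precisely the eigenvalues of $\EE[(T^Q_{Q^*(\mu)}-I_d)\otimes(T^Q_{Q^*(\mu)}-I_d)]$ in the paper's tensor-product-operator convention, matching (\ref{eqn: asymp_null_dist}).

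The main obstacle is the uniform-in-$i$ control of the remainder $r_i$: because $\hat{\cT}_\rho$ squares and sums $n$ linearized contributions, I need $\sup_i \Fnorm{r_i} = o_p(1)$ rather than merely pointwise negligibility, which is exactly what the non-asymptotic uniform convergence in Theorem \ref{thm: unfm_conv} (rather than a pointwise CLT) was designed to deliver. This in turn demands operator-norm bounds on $d^2 T^{Q_j}_\cdot$ at intermediate matrices along the segment from $\hQ_\rho(X_i)$ to $Q^*$, which are supplied by the eigenvalue bounds in Assumption \ref{assumption: bdd_Q}. Once the linearization is in hand, the remaining CLT and diagonalization steps are routine.
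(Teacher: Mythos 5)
Your proposal is essentially the same proof as the paper's: both linearize the empirical first-order condition at $x=X_i$ and $x=\bar X$, invoke the uniform rate from Theorem \ref{thm: unfm_conv} to control the Taylor remainder uniformly in $i$ (which, as you note, is precisely where pointwise consistency would fail), collapse the sum over $i$ via $\sum_i (X_i-\bar X)(X_i-\bar X)^\top = n\hSigma$ to a quadratic form in $\sum_j \hSigma_\rho^{-1/2}(X_j-\bar X)\otimes E_j$, replace $\hSigma_\rho \to \Sigma$ and $\bar X \to \mu$ at the cost of $o_p(1)$, and finish with a CLT and spectral decomposition of the Kronecker covariance $\Lambda \otimes I_p$. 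The paper organizes the remainder bookkeeping into explicit terms $\alpha_1,\alpha_2,\alpha_3$ (empirical-minus-population weights, second-order Taylor remainder, and the discrepancy from plugging $\hQ_\rho(\bar X)$ rather than $Q^*$ into $\hat H_\rho$) and defers their control to dedicated lemmas, but the underlying argument matches yours.
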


\begin{remark}
    The proof of Theorem \ref{thm: test_dist} relies on the uniform consistency established in Theorem \ref{thm: unfm_conv}, since $\hat{\cT}_{\rho}$ involves estimated predictions $\hQ_{\rho}(\cdot)$ at random covariates $\{X_i\}_{i \in [n]}$. In contrast, the uniform consistency is not needed in \citet{petersen21_WFtest} because the Wasserstein space $W_2(\RR)$ is essentially flat (has zero sectional curvature), which allows for a closed-form expression for $\hat{\cT}_{n,1D}$. However, the Bures-Wasserstein manifold $(\cS_d^{++},W)$ is positively curved when $d>1$ \citep{ambrosio}, and no closed-form expression is available for $\hat{\cT}_{\rho}$. Thus, we rely on uniform consistency to ensure the tightness of the Taylor approximation. For the proof, 
    see Appendix \ref{sec: proof_null}.
\end{remark}

Theorem \ref{thm: test_dist} asserts that $\hat{\cT}_{\rho}$ converges weakly to a weighted sum of $\chi_p^2$s with weights determined by the eigenvalues of the covariance operator $\EE (T_{Q_*}^Q -I_d) \otimes (T_{Q_*}^Q -I_d)$. To get a corresponding test, note that 
the asymptotic null distribution in Theorem \ref{thm: test_dist} depends on unknown parameters, namely the eigenvalues $\lambda_i$, which must be approximated to formulate a rejection region. A natural approach would be to estimate the eigenvalues $\hlambda_i$ of the sample average $\frac{1}{n}\sum_{i=1}^{n} T^{Q_i}_{\hQ_{\rho}(\bar{X})} \otimes T^{Q_i}_{\hQ_{\rho}(\bar{X})}$ and let $\hq_{1-\alpha}$ be the $1-\alpha$ quantile of $\sum_{i=1}^{} \hlambda_{i} w_i $. Then we define our test $\Phi_{\alpha}$ for any $\alpha \in (0,1)$ by
\begin{equation}
    \Phi_{\rho, \alpha} = \1 \rbr{\hat{\cT}_{\rho} > \hq_{1-\alpha}}
    \label{eqn: test}
\end{equation}

With Theorem \ref{thm: test_dist} established, Proposition \ref{prop: size} below shows that
$\Phi_{\rho,\alpha}$ has an asymptotic size of $\alpha$ under the null hypothesis. See Appendix \ref{sec: prf_size} for the proof.
\begin{prop}
    \label{prop: size}
    Suppose Assumption \ref{assumption: X}-\ref{assumption: frechet} and \ref{assumption: cond_ind} hold. Then under the null (\ref{eqn: testing_null}),
    \begin{equation*}
        \PP \rbr{\hat{\cT}_{\rho} > \hq_{1-\alpha}} \to \alpha
    \end{equation*}
    as $n\to \infty$.
\end{prop}

Finally, we turn to an analysis of the power of the test $\Phi_{\rho,\alpha}$ under a sequence of contiguous alternatives. To this end, we denote by $\mathfrak{P}$ the set of distributions of $(X,Q)$ that satisfy Assumption \ref{assumption: X} - \ref{assumption: cond_ind},
\begin{equation*}
    \mathfrak{P}:=\left\{ \PP \in \cP_2\left( \RR^p \times \cS_d^+\right): \PP \text{ satisfies } \text{Assumption } \ref{assumption: X} - \ref{assumption: cond_ind} \right\}
\end{equation*}
For any $\PP \in \mathfrak{P}$, We measure the deviation of $Q^*(x)$ from being a constant function of $x$ by $\EE \dist^2(Q^*(X),Q^*(\mu))$ where $\dist$ is chosen to be either the Wasserstein distance or the Frobenius distance. We define the corresponding alternatives under each distance as follows:
\begin{equation*}
    \begin{aligned}
        H_{1,n}: \PP \in \mathfrak{P}_{F}(a_n) := \left\{ \widetilde{\PP} \in \mathfrak{P}: \EE_{(X,Q) \sim \widetilde{\PP}} \Fnorm{Q^*(X)-Q^*(\mu)}^2 \geq a_n^2\right\},\\
        \tH_{1,n}: \PP \in \mathfrak{P}_{W}(a_n) := \left\{ \widetilde{\PP} \in \mathfrak{P}: \EE_{(X,Q) \sim \widetilde{\PP}} W^2 \rbr{Q^*(X),Q^*(\mu)}\geq a_n^2\right\}.
    \end{aligned}
\end{equation*}
 Theorem \ref{thm: power} shows that $\Phi_{\rho,\alpha}$ is powerful against both $H_{1,n}$ and $\tH_{1,n}$ whenever $a_n \gtrsim n^{-(1/2 - \alpha_2)}$ for some constant $\alpha_2 >0$. The proof is provided in Appendix \ref{sec: proof_power}.
\begin{theorem}
    \label{thm: power}
    Let $\rho \in \cbr{0,n^{-1}}$.
    Consider a sequence of alternative hypotheses
    $H_{1,n}$
    with $a_n$ being a sequence such that $a_n \gtrsim \frac{1}{n^{1/2 - \alpha_2}}$ for some some constant $\alpha_2 >0$.
    Then the worst case power converges uniformly to $1$, that is 
    \begin{equation*}
        \inf_{\PP \in \mathfrak{P}_{F}(a_n)} \PP\left( \hat{\cT}_{\rho} > \hq_{1-\alpha} \right) \to 1
    \end{equation*}
    as $n \to \infty$. The same result also holds for alternative hypotheses $\tH_{1,n}$, defined by the Wasserstein distance, that is
    \begin{equation*}
        \inf_{\PP \in \mathfrak{P}_{W}(a_n)} \PP\left( \hat{\cT}_{\rho} > \hq_{1-\alpha} \right) \to 1
    \end{equation*}
    as $n \to \infty$
\end{theorem}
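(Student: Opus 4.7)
The plan is to show that under $H_{1,n}$, the test statistic $\hat{\cT}_{\rho}$ grows at rate $n a_n^2 \gtrsim n^{2\alpha_2} \to \infty$, while the critical value $\hat{q}_{1-\alpha}$ remains stochastically bounded, so the rejection probability tends to one uniformly over $\mathfrak{P}_F(a_n)$. First I decompose $\hQ_{\rho}(X_i) - \hQ_{\rho}(\bar{X}) = S_i + \epsilon_i$, where $S_i := Q^*(X_i) - Q^*(\mu)$ carries the signal and $\epsilon_i := \sbr{\hQ_{\rho}(X_i) - Q^*(X_i)} - \sbr{\hQ_{\rho}(\bar{X}) - Q^*(\mu)}$ is the estimation error. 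Assumption \ref{assumption: X} and standard sub-Gaussian tail bounds give $\max_{i \in [n]} \norm{X_i - \mu} \lesssim \sqrt{\log n}$ and $\norm{\bar X - \mu} = O_P(n^{-1/2})$ with high probability, so Theorem \ref{thm: unfm_conv} applies and delivers $\max_i \Fnorm{\epsilon_i} \lesssim \polylog(n)/\sqrt n$ on a high-probability event.

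Next, applying the scalar inequality $(a-b)^2 \geq a^2/2 - b^2$ to $a = \Fnorm{\hat{H}_{\rho} S_i}$ and $b = \Fnorm{\hat{H}_{\rho} \epsilon_i}$ and summing yields
\begin{align*}
    \hat{\cT}_{\rho} \geq \tfrac{1}{2} \sum_{i=1}^n \Fnorm{\hat{H}_{\rho} S_i}^2 - \sum_{i=1}^n \Fnorm{\hat{H}_{\rho} \epsilon_i}^2.
\end{align*}
The operator $\hat{H}_{\rho}$ converges to $H^* := \EE\sbr{-dT^{Q}_{Q^*(\mu)}}$ by Theorem \ref{thm: unfm_conv} combined with the law of large numbers. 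By Assumption \ref{assumption: minimizer_local}, the minimum eigenvalue of $H^*$ is at least $1/\gamma_{\lambda}(0)$, and by Assumption \ref{assumption: bdd_Q} the operator norm of $\hat H_\rho$ is bounded above, so the noise sum contributes only $O_P(\polylog n)$, while $\Fnorm{\hat H_\rho S_i}^2 \geq c \Fnorm{S_i}^2$ uniformly in $i$ with high probability. After truncating at $\norm{X_i - \mu} \leq L_n$, with tails controlled by the polynomial growth of $\gamma_{\Lambda}$ and Lemma \ref{lem: Q*x}, a Hoeffding-type concentration for bounded i.i.d.\ summands then gives $\tfrac{1}{n} \sum_i \Fnorm{S_i}^2 \geq \tfrac{1}{2} \EE \Fnorm{Q^*(X) - Q^*(\mu)}^2 \geq a_n^2/2$ with probability tending to one uniformly over $\mathfrak{P}_F(a_n)$, so that $\hat{\cT}_{\rho} \gtrsim n a_n^2 \gtrsim n^{2\alpha_2}$.

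For the critical value, $\sum_i \hat{\lambda}_i$ equals the trace of the empirical covariance operator built from the uniformly bounded quantities $T^{Q_i}_{\hQ_\rho(\bar X)} - I_d$ (Assumption \ref{assumption: bdd_Q}), so $\sum_i \hat{\lambda}_i = O_P(1)$ and hence $\hat q_{1-\alpha} = O_P(1)$. Since $n^{2\alpha_2} \to \infty$, the ratio $\hat{\cT}_\rho / \hat q_{1-\alpha}$ diverges uniformly, proving the Frobenius case. The Wasserstein alternative reduces to it through the comparability $W^2(Q, S) \asymp \Fnorm{Q - S}^2$ on matrix sets with eigenvalues in a common compact subset of $(0,\infty)$, which holds for $Q^*(X)$ and $Q^*(\mu)$ by Lemma \ref{lem: Q*x}; this yields $\mathfrak{P}_W(a_n) \subseteq \mathfrak{P}_F(c a_n)$ for some universal $c > 0$.

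The main obstacle is ensuring every concentration statement is uniform over $\PP \in \mathfrak{P}_F(a_n)$ rather than pointwise at a single $\PP$. This is resolved because Assumptions \ref{assumption: X} through \ref{assumption: cond_ind} involve only universal constants, so the high-probability bounds in Theorem \ref{thm: unfm_conv}, the Markov-type control of $\tfrac{1}{n}\sum \Fnorm{S_i}^2$, and the trace bound behind $\hat{q}_{1-\alpha}$ can all be taken uniform over $\mathfrak{P}_F(a_n)$; meanwhile, the polynomial-growth tail bounds given by $\gamma_\Lambda$ and $\gamma_\lambda$ ensure that the truncation error beyond $\norm{X - \mu} \leq L_n$ is of strictly smaller order than the leading signal $a_n^2$.
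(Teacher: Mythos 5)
Your overall strategy mirrors the paper's: decompose $\hQ_{\rho}(X_i) - \hQ_{\rho}(\bar X)$ into the signal $S_i = Q^*(X_i) - Q^*(\mu)$ and the error $\epsilon_i$, use uniform convergence to make $\max_i\Fnorm{\epsilon_i} \lesssim \polylog(n)/\sqrt n$, lower-bound $\hat{\cT}_{\rho}$ by $\gtrsim n a_n^2$, show $\hq_{1-\alpha}$ is bounded, and reduce the Wasserstein alternative to the Frobenius one. However, there is a genuine gap in the concentration step for $\tfrac{1}{n}\sum_i\Fnorm{S_i}^2$. You invoke a ``Hoeffding-type concentration for bounded i.i.d. summands,'' but after truncation at $\norm{X_i - \mu} \leq L_n \asymp \sqrt{\log n}$ the summands $\Fnorm{S_i}^2$ are bounded only by $M_F \asymp \polylog(n)$. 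Hoeffding then yields a deviation of order $M_F\sqrt{(\log n)/n}$, and requiring this to be at most $\tfrac12\,\EE\Fnorm{Q^*(X)-Q^*(\mu)}^2 \geq a_n^2/2$ forces $a_n \gtrsim \polylog(n)\, n^{-1/4}$. This is strictly stronger than the hypothesis $a_n \gtrsim n^{-(1/2-\alpha_2)}$ for small $\alpha_2$, so the bound does not close. The resolution, as in the paper's Lemma \ref{lem: prf_power}, is to exploit the small variance: $\Var(\Fnorm{S_i}^2) \leq 2 M_F\,\EE\Fnorm{Q^*(X)-Q^*(\mu)}^2$, and Bernstein's inequality then gives a deviation $\lesssim M_F\log n + \sqrt{n a_n^2 M_F\log n}$, which is dominated by $n a_n^2/2$ precisely because $n a_n^2 \gtrsim n^{2\alpha_2}$.

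A secondary imprecision: the reduction from the Wasserstein to the Frobenius alternative does not yield a universal constant $c$ with $\mathfrak{P}_W(a_n) \subseteq \mathfrak{P}_F(c\,a_n)$. Lemma \ref{lem: diff}\ref{eqn: lem_diff_W_Frob} relates $W^2$ and $\Fnorm{\cdot}^2$ with a constant depending on the eigenvalue ranges of the two matrices, and $Q^*(X)$ has eigenvalues growing like $\gamma_\Lambda(\norm{X-\mu})$ on the truncation set $\norm{X-\mu}\leq L_n$, so the comparison constant is $1/\polylog(n)$, not universal. This still suffices for the conclusion since $n a_n^2/\polylog(n) \to \infty$, but the statement as written is not correct as a set inclusion with a fixed constant.
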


    \section{Algorithm and Numerical Experiments}
        \label{sec: numerics}
        In this section, we propose a Riemannian gradient descent algorithm for optimizing (\ref{eqn: frechet_estimate}) in Section \ref{subsec: algm}. We then present a series of numerical experiments in Section \ref{subsec: simulation} to validate our theoretical results on the central limit theorem (Theorem \ref{thm: estimation}), asymptotic null distribution (Theorem \ref{thm: test_dist}) and power (Theorem \ref{thm: power}). Additionally, in Section \ref{subsec: partial}, we conduct simulations under a setting where $Q$ is not directly observed but is estimated from data, to assess the deviation from the setting with perfect observations.

\subsection{Riemannian gradient descent algorithm}
\label{subsec: algm}

Motivated by the Bures-Wasserstein gradient descent algorithm \citep{che20_gd,alt21_averaging} for the vanilla Bures-Wasserstein barycenter, we propose a gradient descent algorithm to compute $\hQ_{\rho}(x)$ in (\ref{eqn: frechet_estimate}), which is given as Algorithm \ref{alg: gd}.

\begin{minipage}{0.9\textwidth}
    \centering
    \begin{algorithm}[H] 
        \caption{GD for Fr\'echet regression} \label{alg: gd}
        \begin{algorithmic}[1]
            \State{\textbf{Input:} predictors $\{X_i\}_{i=1}^n$, responses $\{Q_i\}_{i=1}^n$, $\rho \in \cbr{0,n^{-1}}$, predictor $x$, learning rate $\eta$, initialization $S_0$, maximum number of iterations $T$, threshold~$\eps$.
            }
            \State{Initialize $S \gets S_0$.}
            \For{$t=1,...,T$}
                \State{Set 
                    \begin{align}
                        G \gets  I_d + \eta \cdot \frac{1}{n} \sum_{i=1}^n w_{n,\rho}(x,X_i) (T_S^{Q_i} -I_d)
                        \label{eqn: alg_gd1}
                    \end{align}
                }
                \State{Set
                    \begin{align}
                        S \gets G S G
                        \label{eqn: alg_gd2}
                    \end{align}
                }
                \If{$\Fnorm{G} < \eps$ } 
                    \State \textbf{break}
                \EndIf
            \EndFor
            \State{\textbf{Output:} $S$.}
        \end{algorithmic}
    \end{algorithm}
\end{minipage}

\bigskip{}

Algorithm \ref{alg: gd} can be viewed as a Riemannian gradient descent algorithm (see Appendix \ref{sec: appx_bg} and \citet{zemel_vic_16,che20_gd,alt21_averaging}). Intuitively, $-\frac{1}{n} \sum_{i=1}^n w_{n,\rho}(x,X_i) (T_S^{Q_i} -I_d)$ is the derivative of the objective function $F_{n,\rho}(x,S)$ in (\ref{eqn: frechet_estimate}) in the tangent space \citep[Corollary 10.2.7]{ambrosio} and (\ref{eqn: alg_gd1}) corresponds to one gradient step in the tangent space with step size $\eta$. Then (\ref{eqn: alg_gd2}) is mapping the gradient step in the tangent space back to $\cS_d^{++}$ through the exponential map (Appendix \ref{sec: appx_bg})  

The algorithm terminates if the Frobenius norm of the gradient $G$ falls below the threshold $\eps$, indicating that the relative change in the gradient update is less than $\eps$. 
For the initialization $S_0$, optimization over the Euclidean space typically starts near the origin~\citep{chen_gradient_2019,du21}.
However, since the space of symmetric positive definite (SPD) matrices, $\cS_d^{++}$, is nonlinear, the natural counterpart of the origin in this space is the identity matrix $I_d$. Therefore, we initialize at $S_0 = I_d$. 
In Appendix \ref{sec: more_sim}, we use numerical simulations to compare this initialization with random initialization and initialization at the mean, and observe that it consistently performs at least as well as these alternatives. 
For the step size $\eta$, \citet{alt21_averaging} observed through numerical simulations that while the convergence rate of Euclidean gradient descent is highly sensitive to its step size, Riemannian gradient descent requires no tuning and works effectively with $\eta = 1$ when computing the Bures-Wasserstein barycenter. 
In our simulations, we also find that $\eta = 1$ performs at least as well as (and often better than) smaller step sizes. See Appendix \ref{sec: more_sim} for details.



\subsection{Simulation setup and results}
\label{subsec: simulation}
To validate our theory and demonstrate the practical applicability of our inferential procedures, we conduct a series of numerical experiments with $\rho = n^{-1}$. We start with two illustrative examples that follow the Fr\'echet regression model. In Example \ref{example_c}, the covariance matrices share a common eigenspace and commute, which effectively reduces the Fr\'echet regression model to a linear regression model on the square roots. In contrast, Example \ref{example_nc} considers the case where the covariance matrices do not commute.

\begin{example}
    \label{example_c}
    Let $\cbr{X_i = \begin{pmatrix} X_{i1} & \cdots & X_{ip} \end{pmatrix}, i \in [n]}$ be i.i.d. random covariates in $\RR^p$ with $X_i \sim \mathrm{Uniform}[-1,1]^p$. The response matrices $Q_1,\ldots,Q_n \in \RR^{d \times d}$ are generated as:
    \begin{align*}
        Q_i = U V_i f(X_i; \delta)^2 V_i U^\top
    \end{align*}
    where $U$ and $\cbr{X_i, V_i}_{i \in [n]}$ are independent, and:
    \begin{itemize}
        \item $f(\cdot; \delta): \RR^p \to \RR^{d \times d}$ is a mapping from $\RR^p$ to diagonal matrices defined by:
        \begin{align*}
            x = \begin{pmatrix} x_1 & \cdots & x_p \end{pmatrix} \mapsto f(x;\delta) = \left( f(x;\delta)_{kl} \right)_{k,l \in [d]},
        \end{align*}
        where
        \begin{align*}
            f(x;\delta)_{kk} = 1.5 + \frac{k}{2} + \delta \cdot \sum_{j=1}^p x_j,
        \end{align*}
        with $\delta \in (-2p^{-1}, 2p^{-1})$ being a parameter that quantifies the deviation of the model from the null hypothesis (\ref{eqn: testing_null}).

        \item $U \in \cO_d$ is a random orthogonal matrix following the Haar measure.
        \item $V_1,\ldots V_n\in \RR^d$ are random diagonal matrices with i.i.d. diagonal entries $V_{i,kk} \sim \mathrm{Uniform}[-0.1,0.1]$.
    \end{itemize}
    It can be verified that the pair $(X,Q)$ satisfies the Fr\'echet regression model with the conditional expectation $Q^*(\cdot)$ satisfying
    \begin{align*}
        Q^*(X_i) = U f(X_i)U^\top
    \end{align*}
\end{example}

\begin{example}
    \label{example_nc}
    Let $X = \begin{pmatrix} X_1 & \cdots & X_p \end{pmatrix}$ be a random covariate in $\RR^p$ with $X \sim \mathrm{Uniform}[-1,1]^p$. The response matrix $Q \in \mathbb{R}^{d \times d}$, where $d$ is an even number, is generated as:
    \begin{align*}
        Q = UV g(X; \delta)^2 V U^\top,
    \end{align*}
    where $U, V$ and $X$ are independent, and:
    \begin{itemize}
        \item \(g(\cdot;\delta): \mathbb{R}^p \to \mathbb{R}^{d \times d}\) is a mapping from $\RR^p$ to diagonal matrices defined by:
        \begin{align*}
            x = \begin{pmatrix} x_1 & \cdots & x_p \end{pmatrix} \mapsto g(x;\delta) = \left( g(x;\delta)_{kl} \right)_{k,l \in [d]},
        \end{align*}
        where
        \begin{align*}
            g(x;\delta)_{kk} = 1.5 + 0.5 \cdot \lceil k/2 \rceil + \delta \cdot \sum_{j=1}^p x_j,
        \end{align*}
        with $\ceil{\cdot}$ denoting the ceiling function, and $\delta \in (-2p^{-1}, 2p^{-1})$ being a parameter that quantifies the deviation of the model from the null hypothesis (\ref{eqn: testing_null}).
    
        \item $U \in \cO_d$ is a random orthogonal matrix with a block-diagonal structure
        \begin{align*}
            U = \operatorname{diag}(U^{(1)}, \ldots, U^{(\floor{d/2})})
        \end{align*}
        where $U^{(1)}, \ldots, U^{(\floor{d/2})}$ are i.i.d. random $2 \times 2$ orthogonal matrices following the Haar measure.
        
        \item $V \in \RR^d$ is a diagonal matrix with i.i.d. diagonal entries $V_{ii}\sim \mathrm{Uniform}[-0.1,0.1]$.
    \end{itemize}
    It can be verified that the pair $(X, Q)$ satisfies the Fr\'echet regression model with
    \begin{align*}
        Q^*(x) = g(x)^2
    \end{align*}
    
\end{example}

With Example \ref{example_c} and \ref{example_nc} in hand, 
we proceed to check the validity of the central limit theorem for $\hQ_{\rho}(x)$ as stated in Theorem \ref{thm: estimation}. 
To this end, we generate random predictor-response pairs $(X,Q)$ based on Example \ref{example_c} with parameters $d = 5$, $p = 5$ and $\delta = 0$; and Example \ref{example_nc} with parameters $d=6$, $p=5$ and $\delta = 0$.
For each trial, we generate $n=200$ samples of $(X_i,Q_i)$ and obtain the Fr\'echet regression estimate $\hQ_{\rho}(x)$ using Algorithm \ref{alg: gd} with $\rho = n^{-1}$. We then compute normalized error
\begin{align*}
    \sbr{\tQ(x)}_{ij} = \frac{\sqrt{n} \left[ \hQ_{\rho}(x) - Q^*(x) \right]_{ij}}{\sqrt{\hat{v}_{x,ij}}}
\end{align*}
where $\hat{v}_{x,ij}$ denotes the plug-in estimate for the asymptotic variance of the $(i,j)$-entry of $\sqrt{n} (\hQ_{\rho}(x) - Q^*(x))$, following Theorem \ref{thm: estimation} and (\ref{eqn: clt_vec}). Specifically, the covariance operator $\Xi_x$ in Theorem \ref{thm: estimation} is estimated as follows:
\begin{align*}
    \hat{\Xi}_x = \frac{1}{n} \sum_{i=1}^n V_{x,i} \otimes V_{x,i} 
\end{align*}
where
\begin{align*}
    V_{x,i} &= V_{x,1,i} + V_{x,2,i}\\
    V_{x,1,i} &= w_{n,\rho}(x,X_i) \rbr{ T^{Q_i}_{\hQ_{\rho}(x)} - I_d }\\
    V_{x,2,i} &= - \rbr{ \vec{x}^\top \hat{\vSigma}_{\rho}^{-1} (\vec{X}_i \vec{X}_i^\top - \hat{\vSigma})\hat{\vSigma}_{\rho}^{-1} \otimes I_d} \cdot \rbr{ \frac{1}{n}\sum_{j=1}^n \vec{X}_j \otimes (T^{Q_j}_{\hQ_{\rho}(x)} - I_d)}
\end{align*}
The estimated covariance $\hat{v}_{x,ij}$ can be extracted from $\hat{\Xi}_x$ by appropriately indexing its elements.

Figure \ref{fig: QQ_clt} presents the Q-Q (quantile-quantile) plots of $\tQ_{ij}(x)$ against the standard normal distribution, based on 200 Monte Carlo trials at $x=0$. According to Theorem \ref{thm: estimation}, $\tQ_{ij}(x)$ should asymptotically follow a standard normal distribution, which would be indicated by a Q-Q plot that exhibits a linear relationship with a slope of one and an intercept of zero.
As shown in Figure \ref{fig: QQ_clt}, the empirical quantiles of $\tQ_{ij}(0)$ align closely with the theoretical quantiles of $\cN(0,1)$, providing strong empirical support for the validity of Theorem \ref{thm: estimation}.
    
\begin{figure}[!htb]
    \centering
    \begin{subfigure}[t]{0.96\textwidth}
        \centering
        \includegraphics[width=\textwidth]{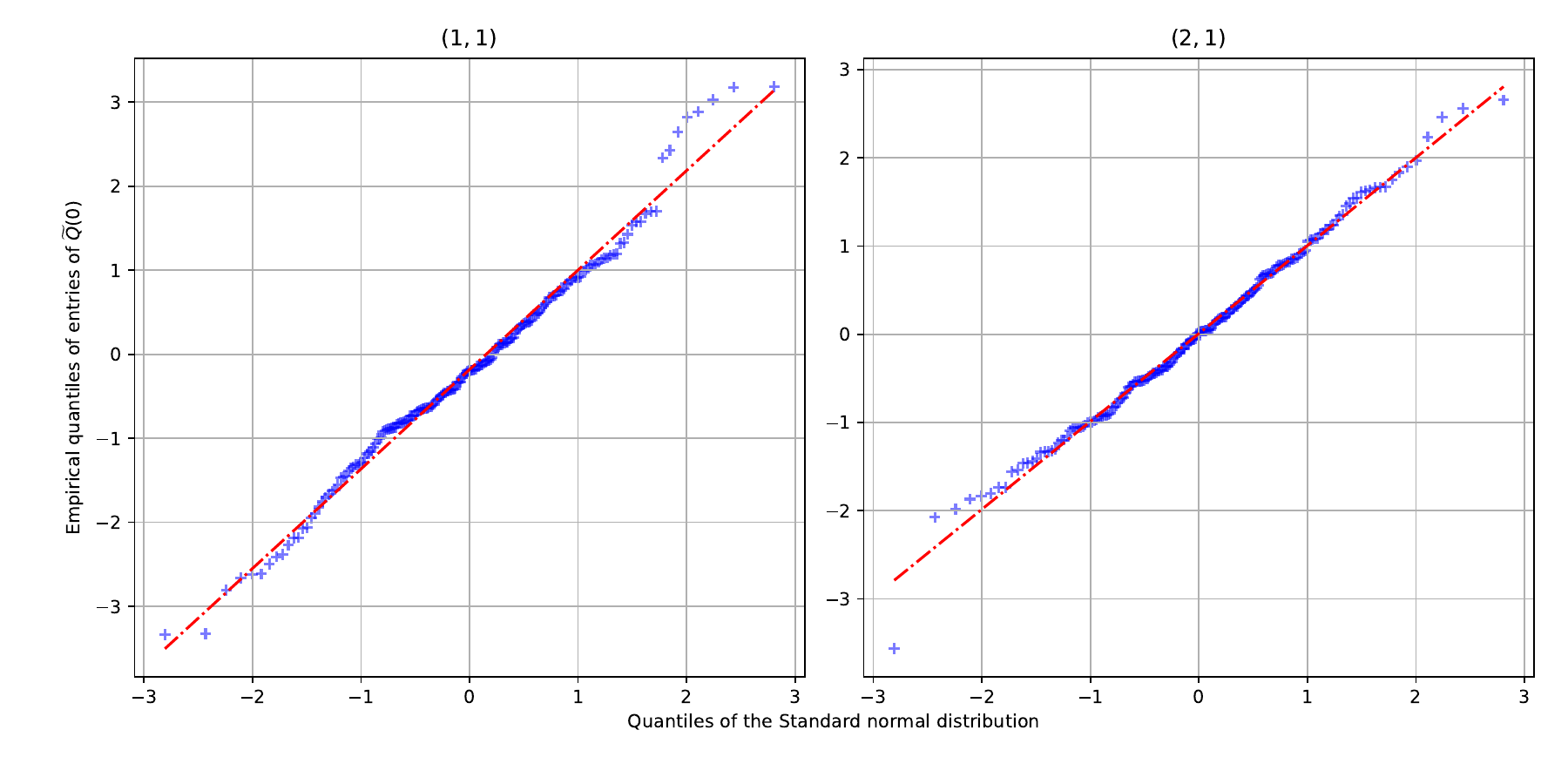}
        \caption{Example \ref{example_c}, $d=5$}
    \end{subfigure}
    \\
    \begin{subfigure}[t]{0.96\textwidth}
        \centering
        \includegraphics[width=\textwidth]{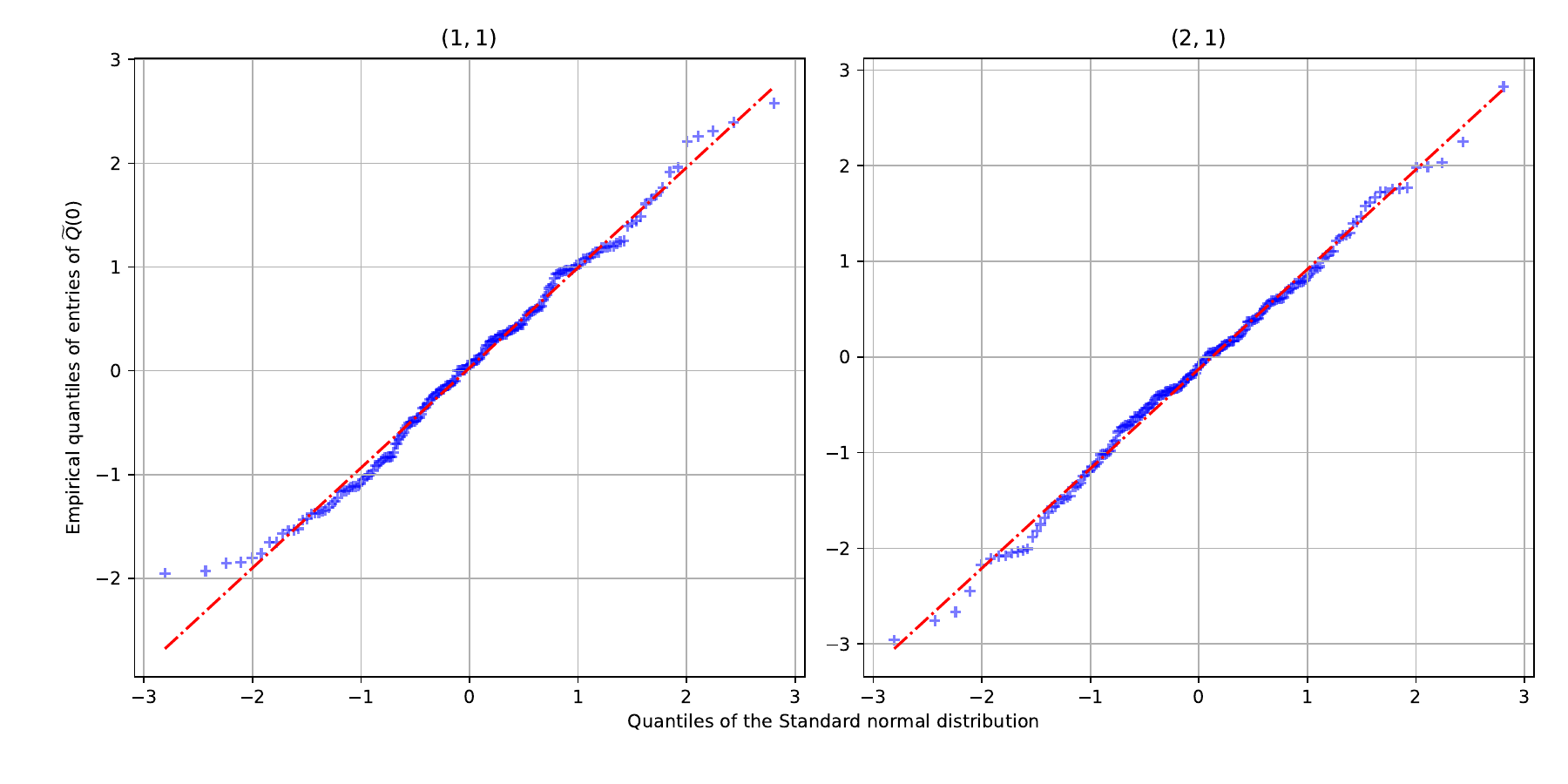}
        \caption{Example \ref{example_nc}, $d=6$}
    \end{subfigure}
    \caption{Q-Q plots of $\tQ_{11}(0)$ and $\tQ_{21}(0)$ with parameters $p=5, \delta = 0, n=200$. (a) Example~\ref{example_c}, $d=5$; (b) Example~\ref{example_nc}, $d=6$.}
    \label{fig: QQ_clt}
\end{figure}

We then turn our attention to Theorem \ref{thm: test_dist}, which characterizes the asymptotic null distribution of the test statistic $\hat{\cT}_{\rho}$, and Theorem \ref{thm: power}, which focuses on the power of the test. Figure \ref{fig: test_c} illustrates the results for Example \ref{example_c} with parameters $n=200$, $p\in \cbr{1,3,5}$, $d \in \cbr{5,10}$; Figure \ref{fig: test_nc} presents results for Example \ref{example_nc} with parameters $n=200$, $p\in \cbr{1,3,5}$, $d \in \cbr{6,10}$. The Q-Q plots demonstrate a linear fit with slope $1$ and intercept $0$.
Additionally, the figures indicate that as $p$ and $d$ increase, the test statistic also increases. The power of the test quickly approaches $1$ as the effect size $\delta$ grows.
We compare the power of our test with that of the distance covariance test ($\mathrm{dcov}$) \citep{dcov07} for testing independence, using the Python $\mathrm{dcor}$ package \citep{dcor_package}. Given that the distance covariance test is nonparametric, our test is expected to outperform $\mathrm{dcov}$ in this context.


\begin{figure}[!htb]
    \centering
    \begin{subfigure}[t]{0.9\textwidth}
        \centering
        \includegraphics[width=\textwidth]{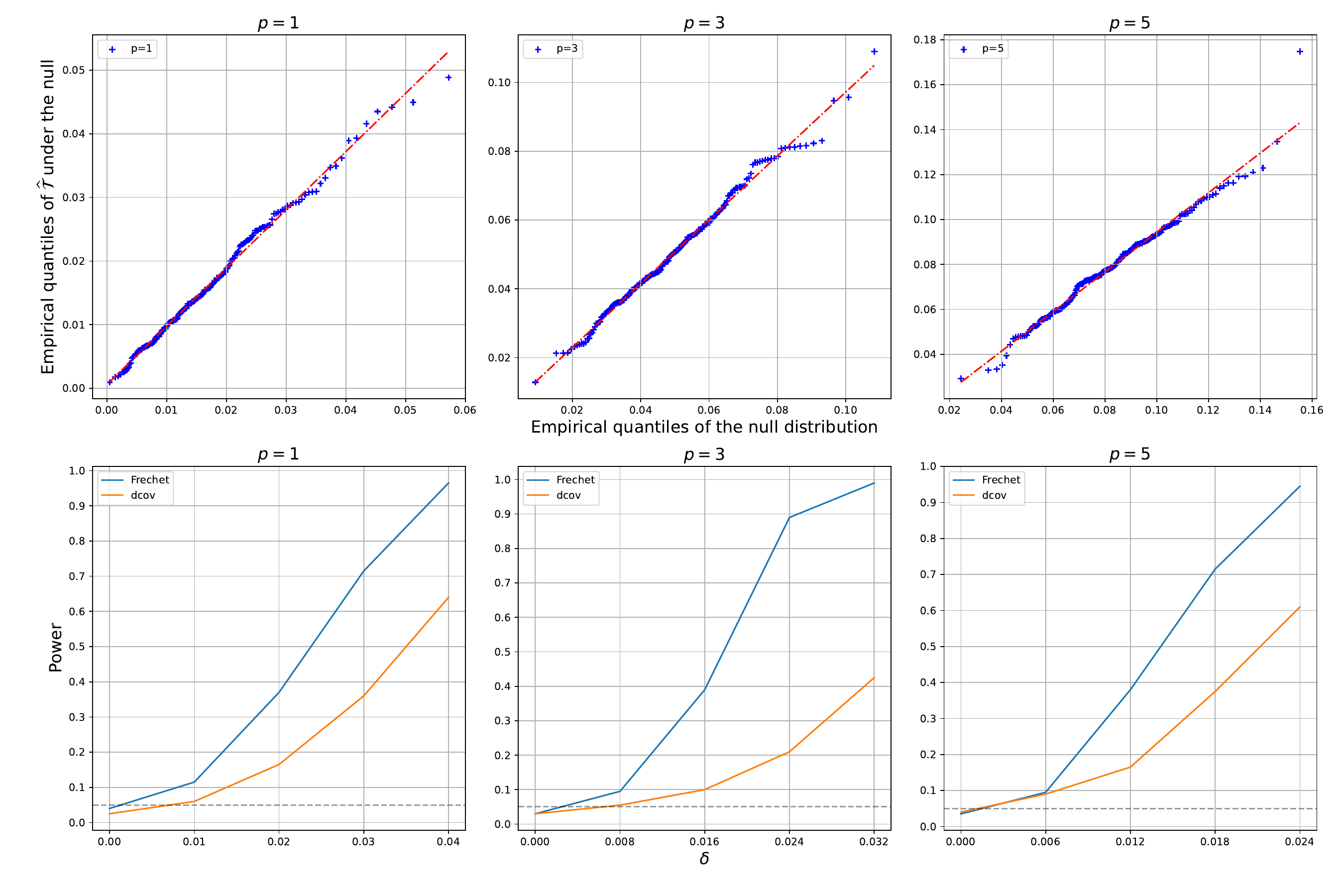}
        \caption{$d=5$}
    \end{subfigure}
    \\
    \begin{subfigure}[t]{0.9\textwidth}
        \centering
        \includegraphics[width=\textwidth]{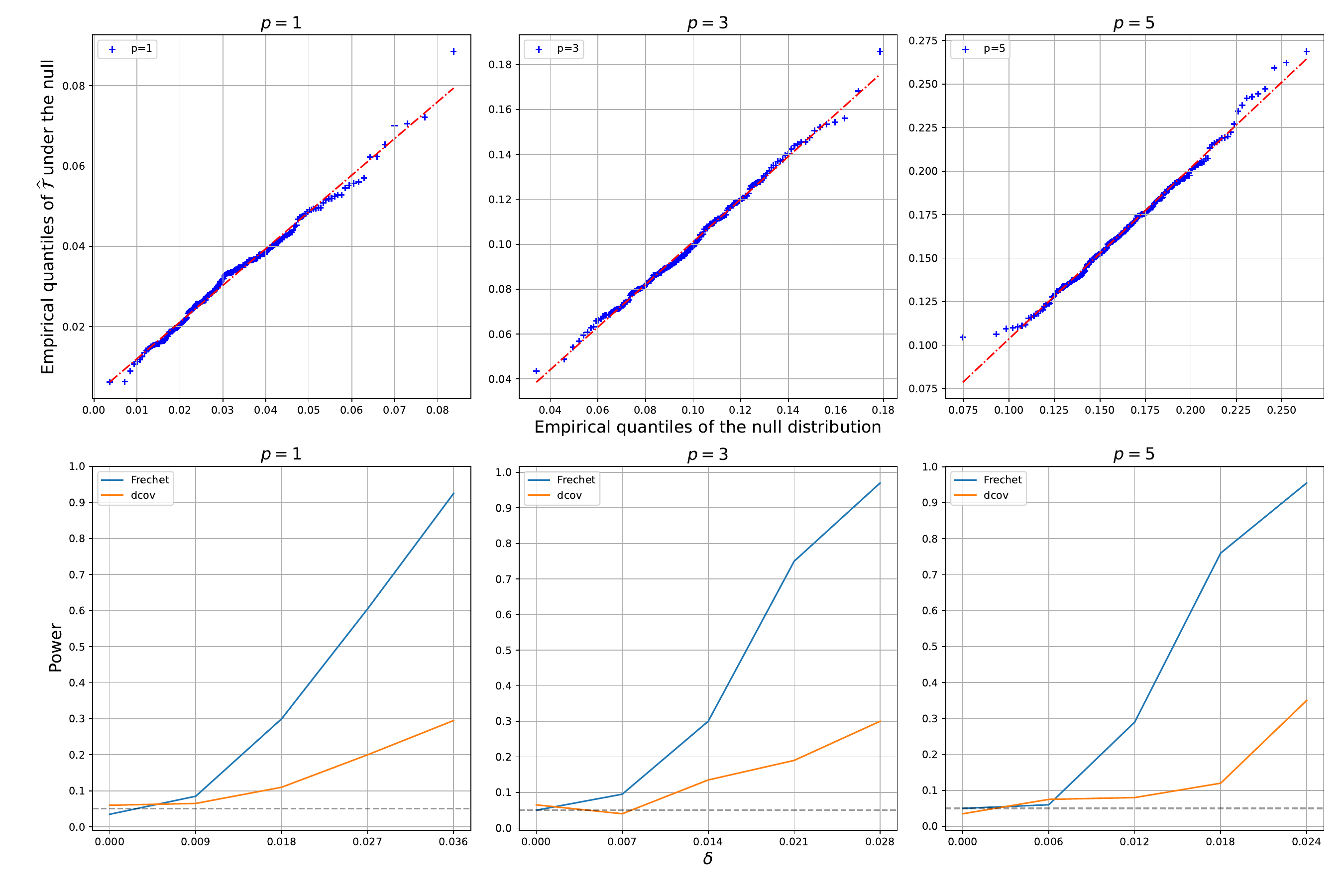}
        \caption{$d=10$}
    \end{subfigure}
    \caption{Q-Q plots of the test statistic $\hat{\cT}_{\rho}$ against its asymptotic null distribution and power curves compared to $\mathrm{dcov}$ as a function of the effect size $\delta$ for Example \ref{example_c}, with parameters $n=200$, $p \in \cbr{1,3,5}$. (a) $d=5$; (b) $d=10$.}
    \label{fig: test_c}
\end{figure}

\begin{figure}[!htb]
    \centering
    \begin{subfigure}[t]{0.9\textwidth}
        \centering
        \includegraphics[width=\textwidth]{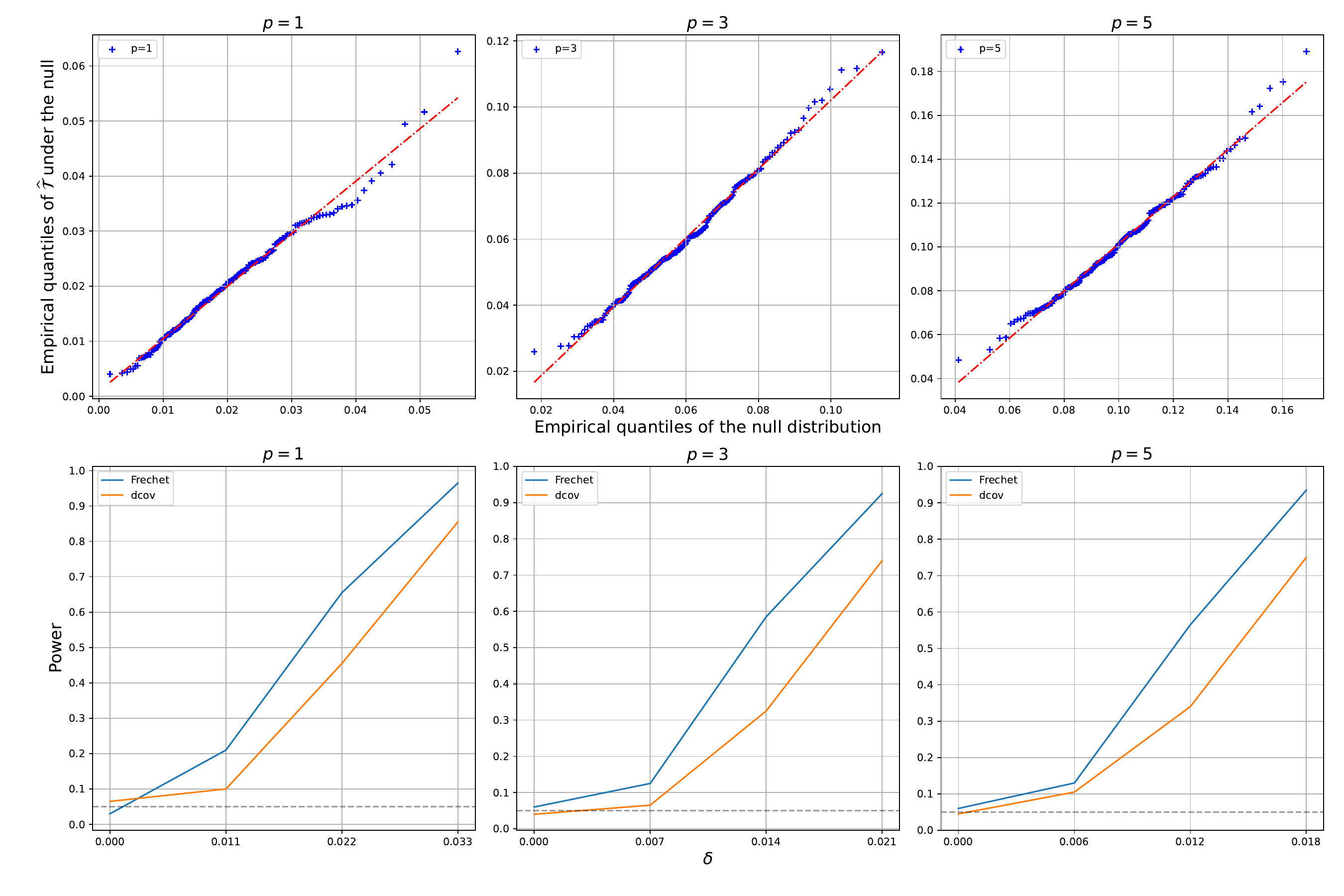}
        \caption{$d=6$}
    \end{subfigure}
    \\
    \begin{subfigure}[t]{0.9\textwidth}
        \centering
        \includegraphics[width=\textwidth]{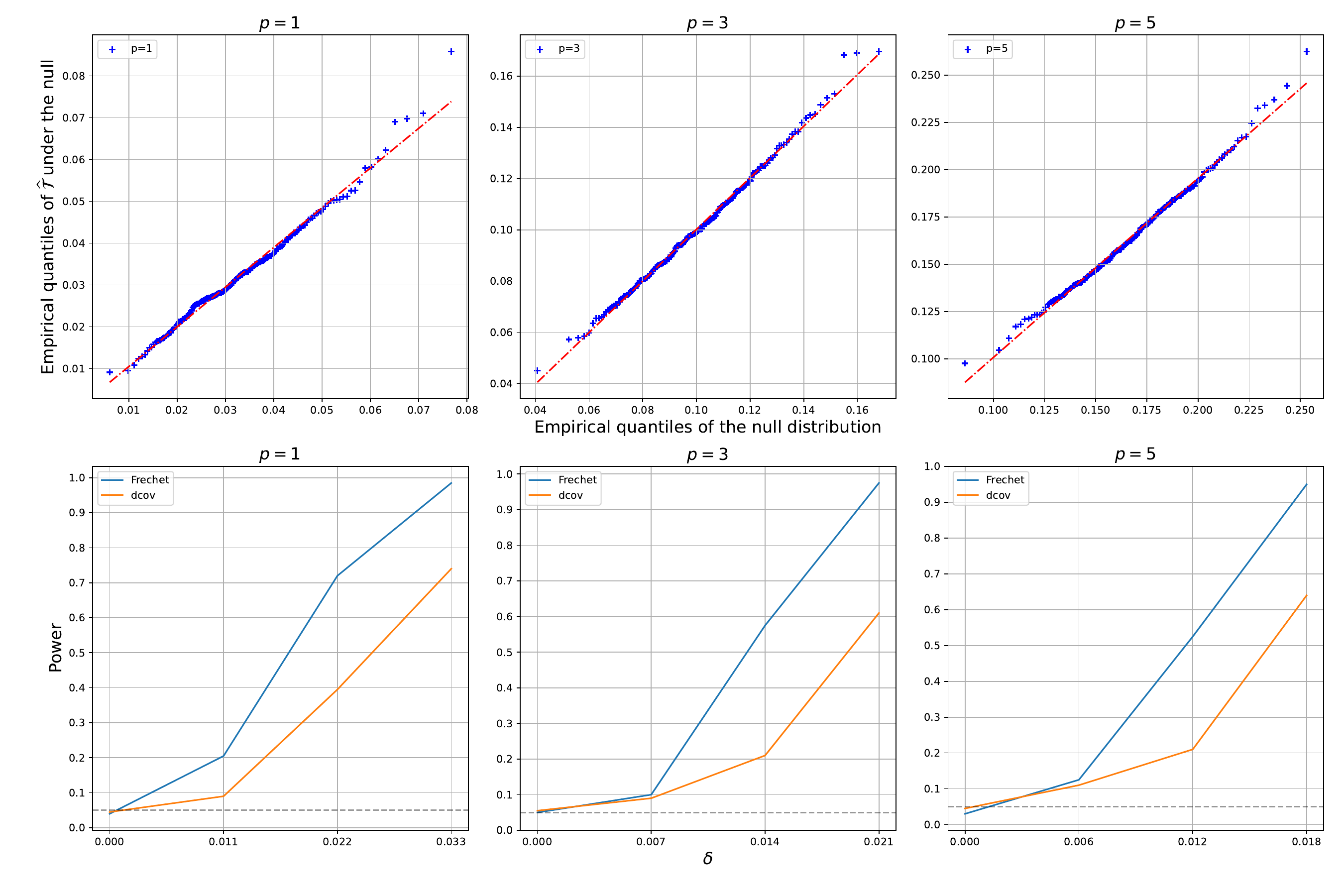}
        \caption{$d=10$}
    \end{subfigure}
    \caption{Q-Q plots of the test statistic $\hat{\cT}_{\rho}$ against its asymptotic null distribution and power curves compared to $\mathrm{dcov}$ as a function of the effect size $\delta$ for Example \ref{example_nc}, with parameters $n=200$, $p \in \cbr{1,3,5}$. (a) $d=6$; (b) $d=10$.}
    \label{fig: test_nc}
\end{figure}

\FloatBarrier

\subsection{Robustness of the results when covariance matrices are estimated}
\label{subsec: partial}
In this section, we examine the numerical performance of our results in a setting where $Q$ is not directly observed but is instead estimated from the data. Specifically, consider the case where we have access only to $\{(X_i;Z_{i1},Z_{i2},...,Z_{i\tn})\}_{i \in [n]}$ with $Z_{i1}, Z_{i2},..., Z_{i \tn} \overset{i.i.d.}{\sim} \cN(0,Q_i)$. In this scenario, a natural plug-in approach for Fr\'echet estimation and testing is to estimate $Q_i$ using the sample covariance, defined as $\bar{Q}_i := \tn^{-1}\sum_{j=1}^{\tn} Z_{i1} Z_{i1}^\top$. This estimate $\bar{Q}_i$ is then substituted for $Q_i$ in downstream estimators.
Specifically, we define the estimator $\hQ_{\rho,\tn}(x)$ and test statistic $\hat{\cT}_{\rho,\tn}$ with $\rho = n^{-1}$ as follows.
\begin{align*}
    \hQ_{\rho,\tn}(x) &= \argmin_{S \in \cS_d^{++}} \frac{1}{n} \sum_{i=1}^n w_{n,\rho}(x,X_i) W^2(S,\bar{Q}_i)\\
    \hat{\cT}_{\rho,\tn} &= \sum_{i=1}^n \Fnorm{\hat{H}_{\rho, \tn}\cdot \left( \hQ_{\rho,\tn}(X_i) - \hQ_{\rho,\tn}(\bar{X}) \right)}^2, \quad \text{where} \quad \hat{H}_{\rho, \tn}=-\frac{1}{n} \sum_{i=1}^n dT_{\hQ_{\rho,\tn}(\bar{X})}^{\bar{Q}_i}
\end{align*}
The asymptotic null distribution of $\hat{\cT}_{\rho,\tn}$ is computed as $\sum_{j} \hlambda_{\tn, j} w_i$ where $w_i$ are i.i.d. $\chi^2_p$ random variables and $\hlambda_{\tn,j}$ are the eigenvalues of the following estimated operator:
\begin{align*}
    \frac{1}{n} \sum_{i=1}^n \rbr{T_{\hQ_{\rho,\tn}(\bar{X})}^{\bar{Q}_i} - I_d} \otimes \rbr{T_{\hQ_{\rho,\tn}(\bar{X})}^{\bar{Q}_i} - I_d}
\end{align*}
We denote by $\hq_{1-\alpha,\tn}$ the $1-\alpha$ quantile of $\sum_{j} \hlambda_{\tn, j} w_i$. This quantile is then used to construct a test:
\begin{align*}
    \Phi_{\rho,\tn,\alpha}= \1 \rbr{\hat{\cT}_{\rho,\tn} > \hq_{1-\alpha,\tn}}
\end{align*}

Similarly, the covariance operator $\hat{\Xi}_{x,\tn}$ is computed as 
\begin{align}
    \hat{\Xi}_{x,\tn} = \frac{1}{n} \sum_{i=1}^n V_{x,\tn,i} \otimes V_{x,\tn,i} 
    \label{eqn: cov_nts}
\end{align}
where
\begin{align*}
    V_{x,i} &= V_{x,\tn, 1,i} + V_{x,\tn, 2,i}\\
    V_{x,\tn ,1,i} &= w_{n,\rho}(x,X_i) \rbr{ T^{\bar{Q}_i}_{\hQ_{\rho,\tn}(x)} - I_d }\\
    V_{x,\tn, 2,i} &= - \rbr{ \vec{x}^\top \hat{\vSigma}_{\rho}^{-1} (\vec{X}_i \vec{X}_i^\top - \hat{\vSigma})\hat{\vSigma}_{\rho}^{-1} \otimes I_d} \cdot \rbr{ \frac{1}{n}\sum_{j=1}^n \vec{X}_j \otimes (T^{\bar{Q}_j}_{\hQ_{\rho,\tn}(x)} - I_d)}
\end{align*}

First, we examine the central limit theorem (Theorem \ref{thm: estimation}) for $(X_i,Q_i)_{i \in [n]}$ satisfying the conditions in Example \ref{example_c} with parameters $\tn \in \cbr{50, 100, +\infty}$, $n=200$, $p=5$, $d=5$ and $ \delta=0$. Here $\tn=+\infty$ indicates $Q_i$ is directly observed.
Using the estimated covariance operator $\hat{\Xi}_{x,\tn}$ defined in (\ref{eqn: cov_nts}), we define the normalized error as:
\begin{align*}
    \sbr{\tQ_{\rho, \tn}(x)}_{i,j} := \frac{\sqrt{n} \left[ \hQ_{\rho, \tn}(x) - Q^*(x) \right]_{ij}}{\sqrt{\hat{v}_{x,\tn;ij}}}
\end{align*}
where $\hat{v}_{x,\tn;ij}$ is extracted from $\hat{\Xi}_{x,\tn}$ with appropriate indexing. Figure \ref{fig: partial_clt} shows the Q-Q plots of $[\tQ_{n,\tn}(x)]_{ij}$ against the standard normal distribution. Generally, as $\tn \to \infty$, the Q-Q plots for finite $\tn$ approach those of $\tn = \infty$, and can still be approximated by a straight line. This observation suggests that the central limit theorem for the entries still holds. Interestingly, the detailed behavior of the bias differs between diagonal and off-diagonal entries. For off-diagonal entries, Figure \ref{fig: partial_clt_offdiag} indicates that $[\hQ_{\rho,\tn}(0)]{ij}$ is unbiased. In contrast, for diagonal entries, Figure \ref{fig: partial_clt_diag} reveals that the Q-Q plots for finite $\tn$ are negatively shifted compared to that of $\tn = \infty$ (blue), suggesting that $[\hQ_{\rho,\tn}(0)]_{ii}$ is negatively biased. In this paper, we focus on the theoretical properties of the Fr\'echet regression estimator (\ref{eqn: frechet_estimate}) in the setting where covariance matrices are directly observed. We leave a detailed theoretical investigation of its refined behavior in the setting where covariance matrices are estimated for future work.

\begin{figure}[!htb]
    \centering
    \begin{subfigure}{\textwidth}
        \includegraphics[width=\textwidth]{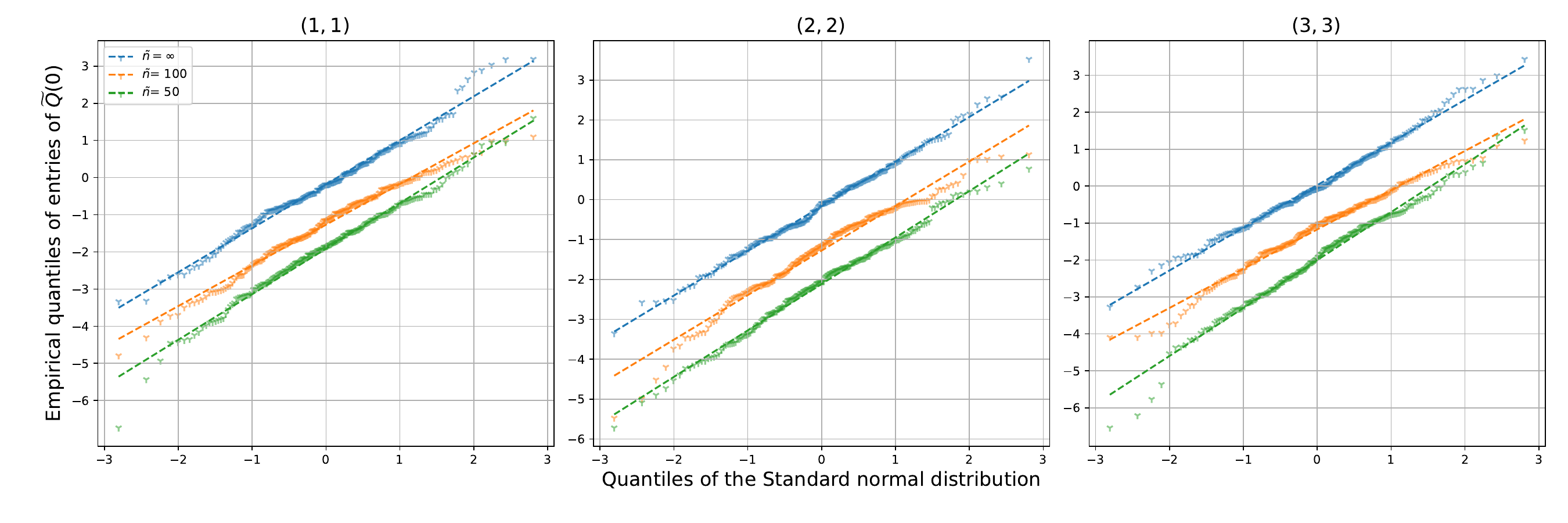}
        \caption{Diagonal entries: $(1,1), (2,2), (3,3)$}
        \label{fig: partial_clt_diag}
    \end{subfigure}
    \\
    \begin{subfigure}{\textwidth}
        \includegraphics[width=\textwidth]{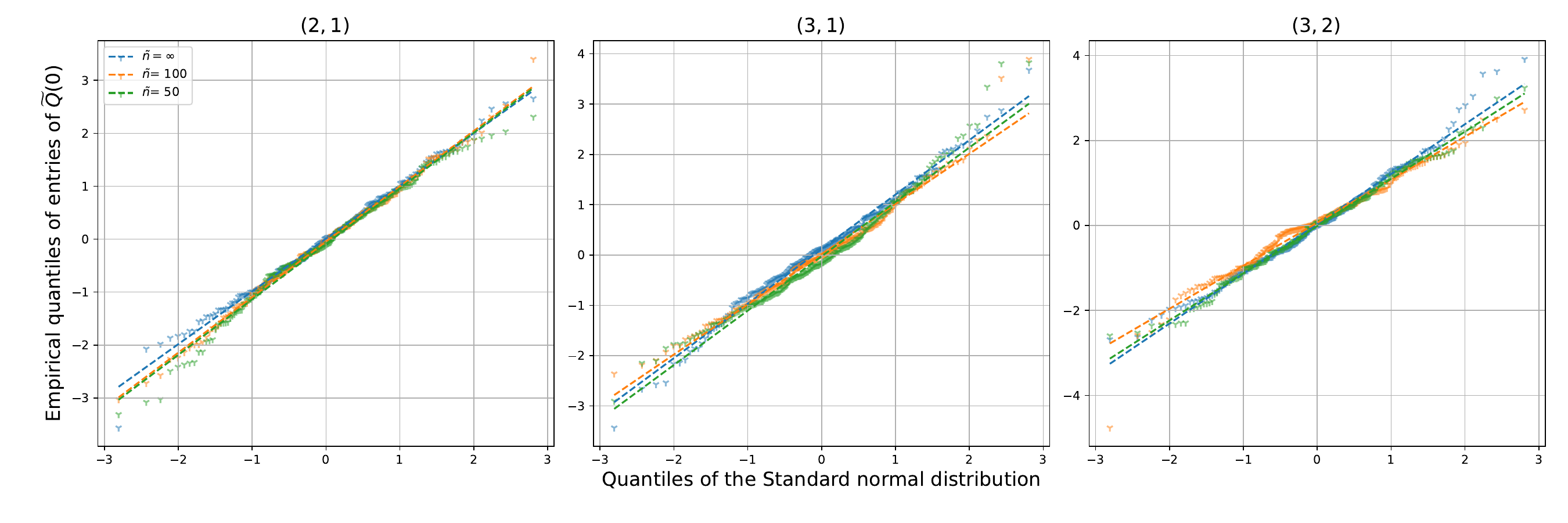}
        \caption{Off-diagonal entries: $(2,1), (3,1), (3,2)$}
        \label{fig: partial_clt_offdiag}
    \end{subfigure}
            
    \caption{Q-Q plots of diagonal and off-diagonal entries of $\tQ_{\rho,\tn}(0)$ with parameters $d=5, n=200$ and varying $\tn \in \{50, 100, \infty\}$.}
    \label{fig: partial_clt}
\end{figure}

Figure \ref{fig: test_nts} shows the Q-Q plot of $\hat{\cT}_{\rho,\tn}$ against $\sum_{j} \hlambda_{\tn, j} w_i$ and the power of the test~$\Phi_{\rho,\tn,\alpha}$ for $(X_i,Q_i)$ satisfying the conditions of Example \ref{example_c} with parameters $\tn \in \cbr{50,100,+\infty}$, $n =200$, $p=5$ and $d=5$. Notably, the Q-Q plots exhibit a slope of $1$ and an intercept of $0$, indicating that $\hat{\cT}_{\rho,\tn}$ follows the asymptotic null distribution $\sum_{j} \lambda_{\tn,j} w_i$, where $w_i$ are i.i.d. $\chi^2_p$ random variables and $\lambda_{\tn,j}$ are the eigenvalues of the following operator:
\begin{align*}
    \EE \sbr{\rbr{T_{Q^*(\mu)}^{\bar{Q}} - I_d} \otimes \rbr{T_{Q^*(\mu)}^{\bar{Q}} - I_d}}
\end{align*}
where $\bar{Q}= \Cov Z$.
This observation suggests that the plug-in method remains valid in this context. The power plot indicates that while the power of the test decreases when $Q_i$ is not directly observed, our test still demonstrates higher power compared to the distance covariance test ($\mathrm{dcov}$) in this setting.

\begin{figure}[!htb]
    \centering
    \includegraphics[width=.9\textwidth]{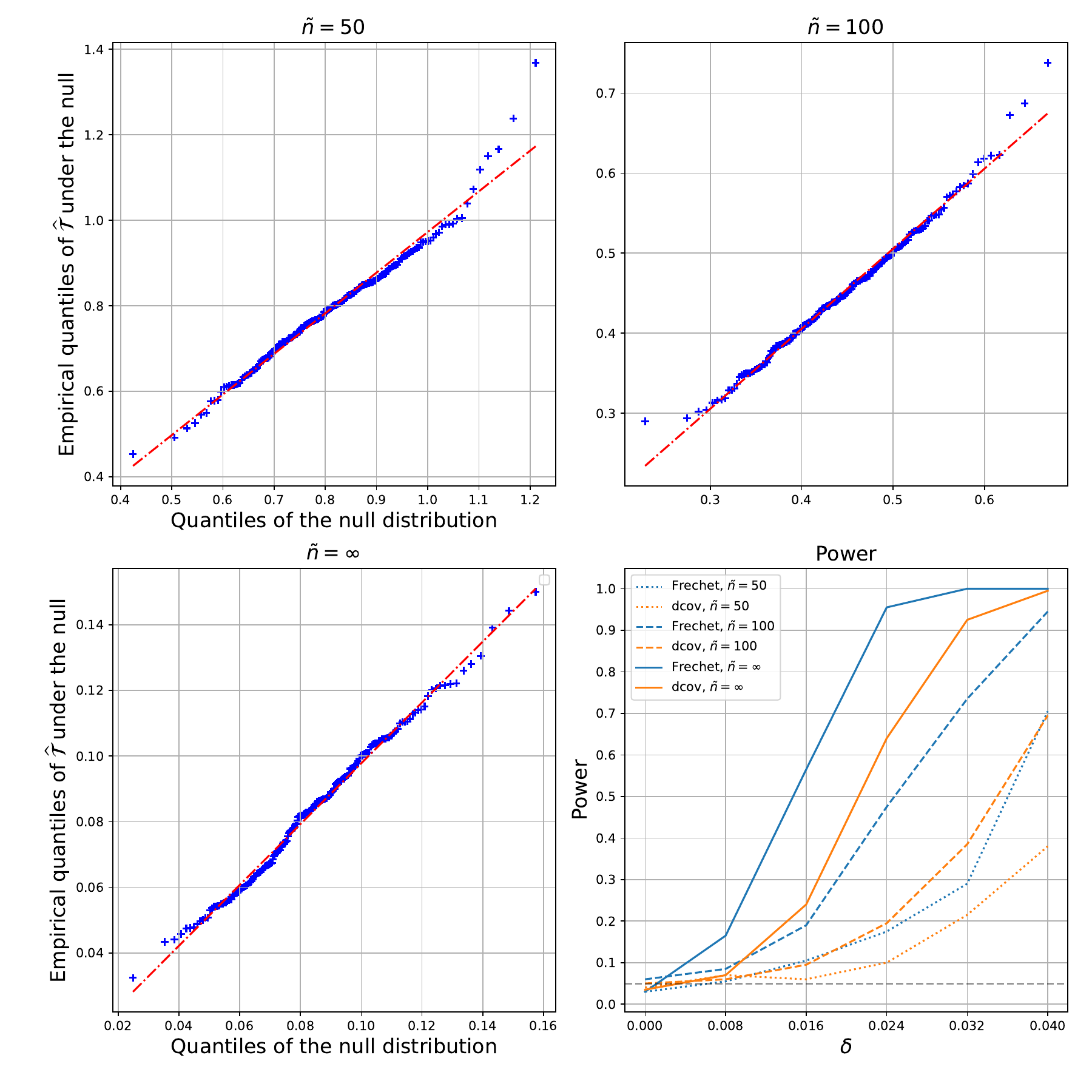}
    \caption{Q-Q plots of the test statistic $\hat{\cT}_{\rho,\tn}$ against the estimated plug-in null distribution $\sum_{j} \hlambda_{\tn, j} w_i$ and power curves of $\Phi_{\rho,\tn,\alpha}$ compared to $\mathrm{dcov}$ as a function of the effect size $\delta$.}

    \label{fig: test_nts}
\end{figure}

            
    

    \section{Application to Single-cell 
    Gene Co-expression Networks}
        \label{sec: single_cell}
        Aging is a complex process of accumulation of molecular, cellular, and organ damage, leading to loss of function and increased vulnerability to disease and death. Nutrient-sensing pathways, namely insulin/insulin-like growth factor signaling and target-of-rapamycin can substantially increase healthy life span of laboratory model organisms \citep{aging, aging1}. These nutrient signaling pathways are conserved in various organisms. We are interested in understanding the co-expression structure of 61 genes in this KEGG nutrient-sensing pathways based on the recently published population scale single cell RNA-seq data of human peripheral blood mononuclear cells (PBMCs) from blood samples of over 982 healthy individuals with ages ranging from 20 to 90 \citep{yazar2022single}.

We focus our analysis on CD4+ naive and central memory T (CD4NC) cells, which is the most common cell type observed in the data. Age-associated changes in CD4 T-cell functionality have been linked to chronic inflammation and decreased immunity \citep{aging2}.  There are a total of 51 genes that are expressed in this cell type.  Even though the Fr\'echet regression still makes sense when the covariance matrix is potentially degenerate (see the remarks after Example \ref{example_nc}), our theory relies on the strict positive definiteness. Hence, we retain only the genes that have nonzero variances at any age, resulting in a total of $37$ genes, see Figure \ref{fig: QQ1-1} for a concise overview of these covariance matrices  for individuals at different ages, showing difference across different ages.  

In genetics research, such covariance matrices represent individual-specific gene co-expression networks. We are interested in testing whether such networks are associated with ages by 
testing whether there is an age effect on the gene expression covariance matrices, i.e. $\cH_0: Q^*(t) \equiv Q^* $ for some $Q^*$. 
The test we propose yields a p-value of 0.00019. When the analysis is performed separately for males and females, we obtain a p-value of 0.00034 for the male group and 0.00012 for the female group, suggesting a strong age effect on the gene expression covariance matrices.

\begin{figure}[!htb]
    \centering
    \includegraphics[width=\textwidth]{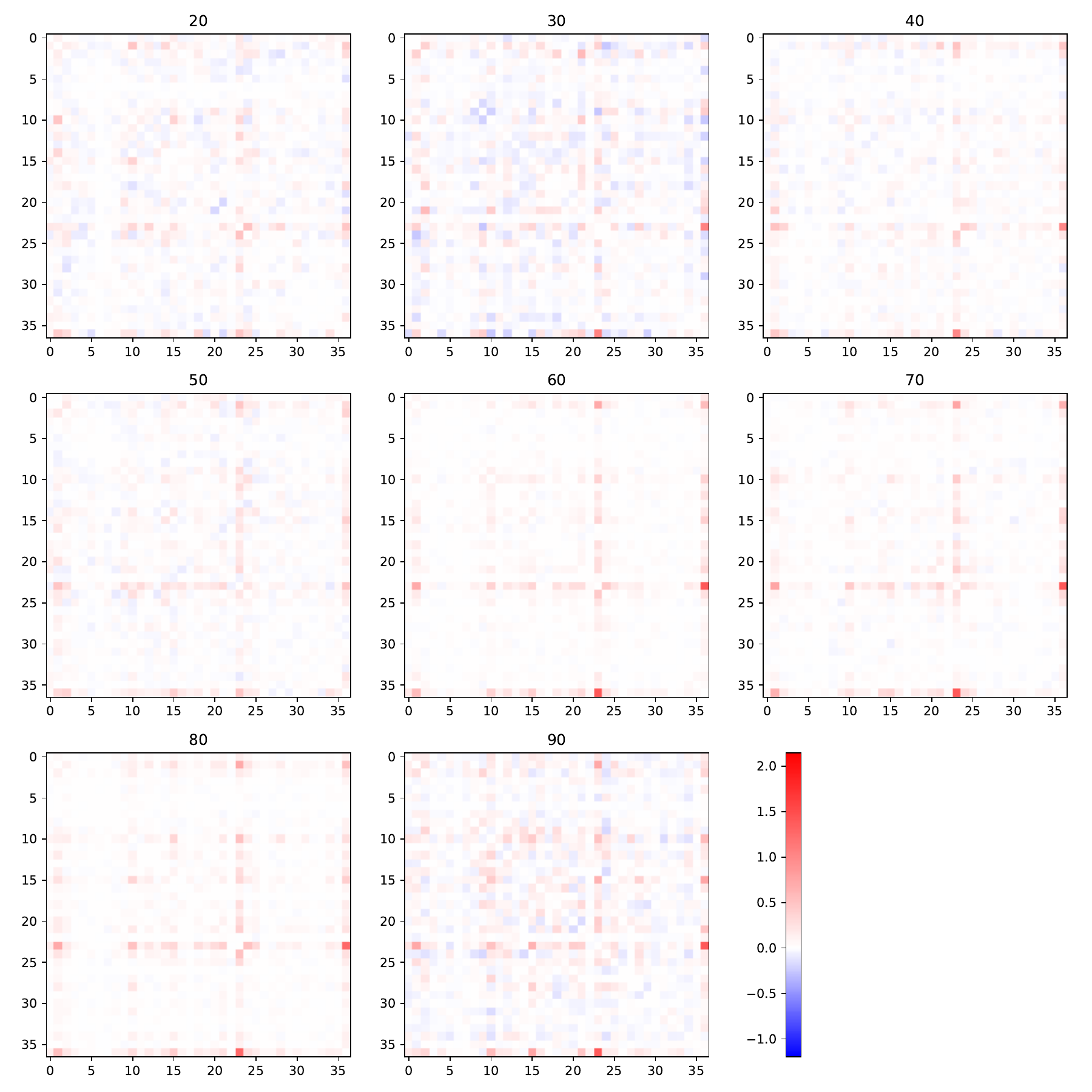}
    \caption{Heatmap of the gene expression covariance matrices with diagonal elements omitted for individual at age $20,30,\ldots,90$, respectively. }
    \label{fig: QQ1-1}
\end{figure}

    \section{Discussion}
        \label{sec: discuss}
        We have develop methods for statistical inference for the Fr\'echet regression on the Bures-Wasserstein manifold, where covariance matrix is treated as the outcome, including the uniform rate of convergence of the conditional Fr\'echet mean and the asymptotic distribution. Based on these results, we have further developed statistical test for testing the association between covariate outcome and Euclidean covariates. These results are further verified using simulations. We have demonstrated the methods by testing the association between gene co-expression and age, indicating the change of co-expressions among a set of genes in nutrient sensing pathway.  The proposed methods have other applications, including in neuroimaging data analysis, where
covariance matrices (or correlation matrices after standardization) of multiple brain regions are used to summarize  as functional connectivity matrices. The proposed methods can be used to identify the factors that are associated with such functional connectivity matrices. 

In this paper,  we assume that the outcome covariance matrices are observed and we only focus on the theoretical properties of the Fr\'echet regression estimator (\ref{eqn: frechet_estimate}) under the complete observation setting.  This is also the setting considered in \cite{muller16} and \cite{petersen21_WFtest}. 
An important direction for future work is to develop  the corresponding theoretical results for scenarios where the covariance matrices must be estimated from the data. Additionally, it would be valuable to relax the assumption on the eigenvalue lower bound and develop methods that can handle singular matrices.


    \vskip 0.2in
    \bibliographystyle{apalike}
    \bibliography{bib}

\newpage

\appendix

    \section{Background on optimal transport and functional calculus}
        \label{sec: appx_bg}
        
In this section, we collect relevant background about optimal transport and functional calculus to make the paper more self-contained.

\subsection{Geometry of optimal transport}
\label{subsec: appx_bg_ot}

We begin with the geometry of optimal transport, and then specialize the general concepts to the Bures–Wasserstein manifold.
For introductory expositions of optimal transport, we refer to \citet{villani03,santa15,invitation}. For a more comprehensive treatment, we refer to \citet{ambrosio,villani09}.

Given $\mu_0, \mu_1 \in \cP_{2, \mathrm{ac}}(\RR^d)$, the constant-speed geodesic $\rbr{\mu_{t}}_{t \in [0,1]}$ connecting $\mu_0$ to $\mu_1$ is characterized by
\begin{equation*}
    \mu_t = \sbr{\id + t \rbr{T_{\mu_0}^{\mu_1} - \id}}_{\char"0023} \mu_0, \qquad t \in [0,1]
\end{equation*} 
Here, $\char"0023$ denotes the pushforward operation defined by $T_{\char"0023} \mu (E) = \mu (T^{-1}(E))$ for any Borel set $E \subset \RR^d$.
Then define the tangent vector of the geodesic $(\mu_t)_{t \in [0,1]}$ at $t=0$ to be the mapping $T_{\mu_0}^{\mu_1}- \id$. The tangent space $T_{\mu_0} \cP_{2,\mathrm{ac}}(\RR^d)$ to $\cP_{2,\mathrm{ac}}(\RR^d)$ at $\mu_0$ is defined in \citealp[Thm 8.5.1]{ambrosio} as 
\begin{equation*}
    T_{\mu_0} \cP_{2, \mathrm{ac}}(\RR^d):={\overline{\cbr{\lambda(T_{\mu_0}^{\nu}-\id): \lambda>0, \nu \in \mathcal{P}_{2, \mathrm{ac}}(\RR^d)}}}^{L^2(\mu_0)}
\end{equation*}
Here the overline denotes closure with respect to the $L^2(\mu_0)$ measure.

Given two covariance matrices $Q, S \in \cS^{++}_d$, the constant-speed geodesic connecting the corresponding centered Gaussians is given by
\begin{equation*}
    \rbr{I_d + t \rbr{T_Q^S - I_d}}Q\rbr{I_d + t \rbr{T_Q^S - I_d}}, \quad t \in [0,1]
\end{equation*}
The tangent space $T_{Q}\cS^{++}_d$ can be identified with the space $\cS_d$ of symmetric $d \times d$ matrices. For any $\tS \in T_{Q}\cS^{++}_d$, its norm in the tangent space is given by
\begin{equation*}
    \norm{\tS}_{Q}:=\inner{\tS}{Q \tS}^{1/2}
\end{equation*}

\subsection{Functional calculus}
\label{subsec: appx_bg_func}

To consider higher order differentials of $T^S_Q$ on the Bures-Wasserstein manifold, we give a brief review of some key concepts in functional calculus that are essential for the development, and direct readers to \citet{dudley} for further details.

Let $Y$ and $Z$ be normed spaces with the norm on each denoted by $\norm{\cdot}$, and let $U$ be an open subset of $Y$. Let $L(Y,Z)$ denote the space of all bounded linear operators from $Y$ into $Z$. A function $\phi: U \to Z$ is called (Fr\'echet) differentiable at $u \in U$ if there exists an $L(u) \in L(Y,Z)$ such that for each $y \in Y$ with $u+Y \in Y$,
\begin{equation*}
    \lim_{y \to 0} \frac{\norm{\phi(u+y) - \phi(u) - L(u) y  }}{\norm{y}} = 0
\end{equation*}
The linear operator $L(u)$ is unique, is called the derivative of $\phi$ at $u$, and is denoted by $D\phi(u)$. If the function $\phi$ is differentiable at each $u \in U$, then we say that $\phi$ is differentiable on $U$.

To consider higher order derivatives of $\phi$, let $L^1(Y,Z):= L(Y,Z)$ with the usual operator norm, and let $L^{k+1}(Y,Z):=L(Y,L^k(Y,Z))$ with the operator norm recursively for $k=1,2,\ldots$. For $k\geq 2$, we say that $\phi$ is (Fr\'echet) differentiable of order $k$ at $u$ if $\phi$ has a $(k-1)$st derivative $D^{k-1}\phi(y)$ at each point $y$ of some neighborhood of $u$, and the mapping $D^{k-1}\phi$ is differentiable at $u$. Then $D^k \phi(u)$, the $k$th derivative of $\phi$ at $u$, is defined as the derivative of $D^{k-1}\phi$ at $u$. If $\phi$ is differentiable of order $k$ at $u$ for each $u \in U$ then we say that $\phi$ is differentiable of order $k$ on $U$. A mapping $\phi$ is called a $C^k$ function on $U \subset Y$ if the derivatives of $\phi$ through order $k$ all exist on $U$ and are continuous. A mapping $\phi$ is called a $C^{\infty}$ function on $U \subset X$ if it is a $C^k$ function for each $k$.

It would be convenient to introduce the notion of multilinear mappings in order study the properties of higher order derivatives $D^k \phi$. A function $A: Y^k \to Z$ is called $k$-linear if for each $j\in [k]$, $A(y_1,\ldots,y_k)$ is linear in $y_j$ for any fixed values of $y_i$, $i \neq j$. The function $A$ is called bounded if
\begin{equation}
    \norm{A}:= \sup\left\{ \norm{A(y_1,\ldots,y_k)}: \norm{y_j} \leq 1, j \in [k] \right\}<\infty
    \label{eqn: prelim_mlinear_norm}
\end{equation}
Let $\cM_k(Y,Z)$ be the set of all bounded $k$-linear maps from $Y^k$ to $Z$ with norm define by (\ref{eqn: prelim_mlinear_norm}). 

The space $L^k(Y,Z)$ can be identified with $\cM_k(Y,Z)$ through the natural isomorphism $\Phi_{(k)}: L^k(Y,Z) \to \cM_k(Y,Z)$ defined by 
\begin{equation*}
    \begin{aligned}
        \Phi_{(k)}(A)(y_1, \ldots, y_k) & :=A(y_1)(y_2) \cdots(y_k) \\
        & :=\sbr{\cdots \sbr{A(y_1)}(y_2) \cdots}(y_k)
    \end{aligned}
\end{equation*}
Then the $k$th differential at $u$ is defined by $d^k \phi(u):= \Phi_{(k)}(D^k \phi(u))$.

The $k$th differential $d^k \phi(u)$ is symmetric in the sense that
\begin{equation*}
    d^k\phi(u)(y_{\sigma(1)},\ldots,y_{\sigma(k)}) = d^k \phi(u)(y_1,\ldots,y_k)
\end{equation*}
for any permutation $\sigma$ of $[k]$. For simplicity, we also denote $d^k \phi(u)(y,\ldots,y) $ by $d^k \phi(u) \cdot y^{\otimes k}$.

Let $\cM_{k,s}(Y,Z)$ denote the subspace of all symmetric elements of $\cM_k(Y,Z)$. Then aside from the operator norm inherited from $\cM_{k}(Y,Z)$, one can define another norm $\vertiii{\cdot}$ on $\cM_{k,s}(Y,Z)$ by 
\begin{equation*}
    \vertiii{P} := \sup \left\{ \norm{P(y,y,\ldots,y)}: \norm{y} \leq 1 \right\} 
\end{equation*}

Properties of the high order differentials $d^k T_Q^S$ are investigated in Appendix \ref{subsec: Techlem_dkT}.

    \section{Technical lemmas}
    In this section, we collect technical lemmas and relevant notations for the proof. First, properties of differentials of $T^S_Q$ are collected in Appendix \ref{subsec: Techlem_dkT}. Then various concentration results are given in Appendix \ref{subsec: Techlem_conc}. Finally, properties of $F(\cdot,\cdot)$ and $Q^*(\cdot)$ are summarized in Appendix \ref{subsec: Techlem_QF}.    

        \subsection{Differentials of optimal transport maps}
        \label{subsec: Techlem_dkT}
        Recall that the optimal transport map between two centered Gaussian distributions $\cN(0,Q)$ and $\cN(0,S)$ has the closed-form expression $T_Q^S = S^{1/2}(S^{1/2}QS^{1/2})^{-1/2} S^{1/2}$. In all relevant analysis, we need to consider differentials of $T_Q^S$ when viewed as a function of $Q$ for fixed $S$, which we denote as $d^kT^S_Q$ for $k \geq 1$.
It is shown in \citet{aap21} that $dT^S_Q$ can be defined as follows.  For any $H \in \cS_d$,
\begin{equation*}
    dT_Q^S(H) :=-S^{1 / 2} U^\top \Lambda^{-1 / 2} \delta \Lambda^{-1 / 2} U S^{1 / 2}
\end{equation*}
where $U^\top \Lambda U$ is an eigenvalue decomposition of $S^{ 1/2} Q S^{ 1/2}$ with $UU^\top = U^\top U = I$ and $\delta=\left(\delta_{i j}\right)_{i, j=1}^d$ with 
\begin{equation*}
    \delta_{i j}= 
    \begin{cases}
        \frac{\Delta_{i j}}{\sqrt{\lambda_i}+\sqrt{\lambda_j}} & 
        i, j \leq \operatorname{rank}(S) \\ 
        0 & \text { otherwise }
    \end{cases}, \qquad
    \Delta=U S^{1 / 2} H S^{1 / 2} U^\top
\end{equation*}
To consider higher order differentials of $T^S_Q$, we start with the map  $Q\mapsto Q^{1/2}$ in Lemma \ref{lem: matrix_sqrt}, next move on to $Q\mapsto Q^{-1/2}$ in Lemma \ref{lem: mtx_inv_sqrt} which then leads to $T_Q^S$ in Lemma \ref{lem: diff}. Connections between $T_Q^S$ and $W^2(Q,S)$ are also investigated in Lemma \ref{lem: diff}.

\begin{lemma}
    The square root functional $\phi: Q \in \cS_d^{++} \to \phi(Q) = Q^{1/2} \in \cS_d^{++}$ is Fr\'echet differentiable at any order on $\cS_d^{++}$. Moreover, for any $Q \in \cS_d^{++}$ and  any $n\geq 0$, we have the estimates
    \begin{equation*}
        \begin{aligned}
            \opnorm{d^{n+1} \phi(Q)}  &\leq C_{d,n} \lambda_{\min}(Q)^{-(n+1/2)}
        \end{aligned}
    \end{equation*}
    Here $C_{d,n}=d^{n/2}\cdot n!\binom{2n}{n} \cdot 2^{-(2n+1)}$.
    \label{lem: matrix_sqrt}
\end{lemma}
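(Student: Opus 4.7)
The plan is to prove Lemma~\ref{lem: matrix_sqrt} via the integral representation
\begin{equation*}
    Q^{1/2} \;=\; \frac{1}{\pi}\int_0^{\infty} s^{-1/2}\, Q(sI+Q)^{-1}\, ds,
\end{equation*}
which holds on $\cS_d^{++}$ since the scalar identity $\sqrt{\lambda}=\tfrac{1}{\pi}\int_0^{\infty} s^{-1/2}\lambda(s+\lambda)^{-1}\,ds$ extends eigenvalue-wise by spectral calculus. Because $Q(sI+Q)^{-1}=I-sR_s(Q)$ with $R_s(Q):=(sI+Q)^{-1}$, all $Q$-dependence sits in the resolvent. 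Starting from $dR_s(Q)(H)=-R_s H R_s$, a direct induction on $k$ yields
$$
d^k R_s(Q)(H_1,\dots,H_k) \;=\; (-1)^k\sum_{\sigma} R_s H_{\sigma(1)} R_s H_{\sigma(2)} \cdots R_s H_{\sigma(k)} R_s,
$$
summed over permutations of $\{1,\ldots,k\}$. In particular $R_s$ is $C^\infty$ on $\cS_d^{++}$.

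To transfer smoothness from $R_s$ to $\phi$, I would justify differentiating under the integral by dominated convergence. On a small ball around any fixed $Q_0\in\cS_d^{++}$ one has $\Onorm{R_s(Q)}\le(s+c)^{-1}$ uniformly for some $c>0$, so the $k$-th differential of $s^{-1/2}(I-sR_s)$ is majorized by a constant multiple of $s^{1/2}(s+c)^{-(k+1)}$, an $L^1(ds)$ function for every $k\ge 1$. Iterating DCT then produces $\phi\in C^{\infty}(\cS_d^{++})$ together with termwise differentiability. Specializing all $H_i$ to a common $H$ collapses the permutation sum with multiplicity $(n+1)!$, giving
$$
d^{n+1}\phi(Q)\cdot H^{\otimes (n+1)} \;=\; \frac{(-1)^n(n+1)!}{\pi}\int_0^{\infty} s^{1/2}\,(R_s H)^{n+1}R_s\,ds.
$$

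For the size estimate I would apply $\Fnorm{M}\le\sqrt{d}\,\Onorm{M}$ together with submultiplicativity of $\Onorm{\cdot}$ to obtain $\Fnorm{(R_sH)^{n+1}R_s}\le\sqrt{d}\,(s+\lambda_{\min}(Q))^{-(n+2)}\Fnorm{H}^{n+1}$, then evaluate the remaining scalar integral as a Beta function:
$$
\int_0^{\infty} \frac{s^{1/2}\,ds}{(s+\lambda_{\min})^{n+2}} \;=\; \lambda_{\min}^{-(n+1/2)}\,B\!\bigl(\tfrac{3}{2},n+\tfrac{1}{2}\bigr) \;=\; \frac{\pi(2n)!}{2\cdot 4^n\,n!\,(n+1)!}\,\lambda_{\min}^{-(n+1/2)}.
$$
Multiplying through reproduces exactly the constant $n!\binom{2n}{n}2^{-(2n+1)}$, with the stray $\sqrt{d}$ absorbed into the declared $d^{n/2}$ for $n\ge 1$. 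The endpoint $n=0$ is tight by a short Sylvester-equation argument: writing $Q=U\Lambda U^{\top}$ and $\widetilde H=U^{\top}HU$, the relation $dP(H)P+P\,dP(H)=H$ with $P=Q^{1/2}$ gives $dP(H)_{ij}=\widetilde H_{ij}/(\sqrt{\lambda_i}+\sqrt{\lambda_j})$, whence $\Fnorm{dP(H)}\le \Fnorm{H}/(2\sqrt{\lambda_{\min}})$, matching $C_{d,0}=1/2$.

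The main obstacle I expect is executing the DCT step cleanly so that $C^{\infty}$-smoothness is established at every order uniformly over a neighborhood of $Q$, together with tracking the combinatorial factor $(n+1)!$ against the Beta-function constant; the rest is bookkeeping once the integral representation and resolvent-derivative formula are in hand.
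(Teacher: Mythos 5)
Your proposal gives a correct, self-contained proof, whereas the paper simply cites \citet{18_mtrx_sqrt}, Theorem 1.1, without reproducing the argument. The integral representation $Q^{1/2}=\frac{1}{\pi}\int_0^\infty s^{-1/2}Q(sI+Q)^{-1}\,ds$, the resolvent differential formula $d^kR_s(H_1,\ldots,H_k)=(-1)^k\sum_\sigma R_sH_{\sigma(1)}R_s\cdots H_{\sigma(k)}R_s$, and the collapse to $(n+1)!(R_sH)^{n+1}R_s$ when all directions coincide are all right; your Beta-function evaluation $B(3/2,n+1/2)=\frac{\pi(2n)!}{2\cdot 4^n\,n!\,(n+1)!}$ is correct; and the bookkeeping $\frac{(n+1)!}{\pi}\cdot\sqrt{d}\cdot B(3/2,n+1/2)=\sqrt{d}\,n!\binom{2n}{n}2^{-(2n+1)}$ does land on the stated constant for $n\ge 1$ since $\sqrt d\le d^{n/2}$, while your separate Sylvester-equation argument recovers $C_{d,0}=1/2$ exactly. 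Two things worth flagging. First, the $\sqrt{d}$ you introduce via $\Fnorm{M}\le\sqrt d\,\Onorm{M}$ is actually avoidable — peeling one $H$ off with $\Fnorm{AB}\le\Onorm{A}\Fnorm{B}$ gives $\Fnorm{(R_sH)^{n+1}R_s}\le\Onorm{R_s}^{n+2}\Onorm{H}^n\Fnorm{H}\le(s+\lambda_{\min})^{-(n+2)}\Fnorm{H}^{n+1}$ directly, which removes the dimension factor entirely and sharpens the lemma's constant to $n!\binom{2n}{n}2^{-(2n+1)}$ for all $n$; this also makes the $n=0$ case follow from the general argument, so the Sylvester step becomes optional rather than load-bearing. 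Second, to make the DCT step airtight you should state the domination on a compact neighborhood of $Q$ where $\lambda_{\min}$ is bounded below by a fixed $c>0$, and note that this uniform majorization is what lets you take the limit defining Fréchet differentiability inside the integral at every order; you gesture at this but it is the one place a referee would push.
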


\begin{proof}
    See \citet{18_mtrx_sqrt} Theorem 1.1.
    
\end{proof}

\begin{lemma}
    \label{lem: mtx_inv_sqrt}
    The inverse square root functional $\varphi: Q \in \cS_d^{++} \mapsto \varphi(Q) = Q^{-1/2} \in \cS_d^{++}$ is Fr\'echet differentiable at any order on $\cS_d^{++}$. Moreover, for any $A \in \cS_d^{++}, H \in \SS^d$, the following holds.
    \begin{equation*}
        \begin{aligned}
            d \varphi(Q) \cdot H &=- Q^{-1/2} \left( d \phi(Q) \cdot H \right) Q^{-1/2}\\
            d^2 \varphi(Q) \cdot H^{\otimes 2} &=- Q^{-1/2} \left( d^{2} \phi(Q) \cdot H^{\otimes 2} \right) Q^{-1/2} + 2 Q^{-1/2} (d \phi(Q) \cdot H) Q^{-1/2} (d \phi(Q) \cdot H) Q^{-1/2}\\
            d^3 \varphi(Q)\cdot H^{\otimes 3} &= -Q^{-1/2}\left( d^3 \phi(Q) \cdot H^{\otimes 3} \right)Q^{-1/2} + 3 Q^{-1/2}\left( d^2 \phi(Q) \cdot H^{\otimes 2} \right) Q^{-1/2} \left( d \phi(Q) \cdot H\right)Q^{-1/2}\\
            &\quad + 3 Q^{-1/2}\left( d \phi(Q) \cdot H\right) Q^{-1/2}\left( d^2 \phi(Q) \cdot H^{\otimes 2} \right) Q^{-1/2}
            \label{eqn: proof_lem_sqrt_inv}\\
            &\quad -6Q^{-1/2}\left( d \phi(Q) \cdot H\right)Q^{-1/2}\left( d \phi(Q) \cdot H\right)Q^{-1/2}\left( d \phi(Q) \cdot H\right)Q^{-1/2}
        \end{aligned}
    \end{equation*}
    with
    \begin{equation*}
        \begin{aligned}
            \opnorm{d\varphi(Q)} &\leq C_{d,0} \lambda_{\min}(Q)^{-3/2}\\
            \opnorm{d^2 \varphi(Q)} &\leq (C_{d,1} + 2C_{d,0}^2)  \cdot \left(  \lambda_{\min}(Q) \right)^{-5/2}\\
            \opnorm{d^3 \varphi(Q)} &\leq \left( C_{d,2}+ 6 C_{d,1}C_{d,0} + 6 C_{d,0}^3 \right)\lambda_{\min}(Q)^{-7/2}
        \end{aligned}
    \end{equation*}
    where $C_{d,n}$ is defined in Lemma \ref{lem: matrix_sqrt}.

\end{lemma}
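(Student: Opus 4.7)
\textbf{Proposal for Lemma \ref{lem: mtx_inv_sqrt}.} The plan is to bootstrap off Lemma \ref{lem: matrix_sqrt} via the identity $\varphi(Q)\phi(Q)=I$. Since $\phi$ is $C^\infty$ on $\cS_d^{++}$ and matrix inversion is $C^\infty$ on $\mathrm{GL}(\mathbb{R}^d)$, the composition $\varphi = \iota\circ\phi$ (where $\iota(A)=A^{-1}$) is $C^\infty$ on $\cS_d^{++}$, which settles Fréchet differentiability to all orders. The explicit formulas will be obtained not by invoking Faà di Bruno, but by repeatedly differentiating the defining relation $\varphi(Q)\phi(Q)=I$, which is cleaner and directly produces the ``telescoping'' structure written in the statement.

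For the first derivative, apply the product rule to $\varphi(Q)\phi(Q)=I$ and solve for $d\varphi(Q)\cdot H$, obtaining $d\varphi(Q)\cdot H = -Q^{-1/2}(d\phi(Q)\cdot H)Q^{-1/2}$. Differentiating this expression once more in the direction $H$ (using the product rule on three factors and substituting the first-order formula back in for each occurrence of $d\varphi$) gives the stated second-order formula, with the two copies of the $(d\phi\cdot H)$-term collapsing into a single coefficient $2$. Applying the same procedure a third time — differentiating each factor in the five-factor product $Q^{-1/2}(d\phi\cdot H)Q^{-1/2}(d\phi\cdot H)Q^{-1/2}$ and the three-factor product $Q^{-1/2}(d^2\phi\cdot H^{\otimes2})Q^{-1/2}$ term by term — and again substituting the first-order formula for $d\varphi$ yields the four-term expression for $d^3\varphi(Q)\cdot H^{\otimes 3}$. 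The combinatorial coefficients $1,3,3,6$ emerge from counting how many of the $Q^{-1/2}$ factors get differentiated versus how many of the $(d\phi\cdot H)$ factors do.

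For the operator-norm bounds, I use submultiplicativity in the form $\|ABC\|_F\le\|A\|_{\mathrm{op}}\|B\|_F\|C\|_{\mathrm{op}}$ together with the identity $\|Q^{-1/2}\|_{\mathrm{op}}=\lambda_{\min}(Q)^{-1/2}$, and feed in the bounds $\opnorm{d^{k+1}\phi(Q)}\le C_{d,k}\lambda_{\min}(Q)^{-(k+1/2)}$ from Lemma \ref{lem: matrix_sqrt}. A product of the form $Q^{-1/2}(d^{k_1}\phi\cdot H^{\otimes k_1})Q^{-1/2}\cdots Q^{-1/2}(d^{k_m}\phi\cdot H^{\otimes k_m})Q^{-1/2}$ with $k_1+\cdots+k_m=n$ is thus controlled by $\prod_j C_{d,k_j-1}\cdot\lambda_{\min}(Q)^{-(n+m/2+1/2)}$ times $\|H\|_F^n$, and in each of our formulas the exponent on $\lambda_{\min}(Q)$ works out to $n+1/2$ as claimed. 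Summing term by term then gives the three stated bounds.

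The main obstacle is not conceptual — it is purely bookkeeping for the third derivative, since naive product-rule expansion produces roughly ten terms before the substitution $d\varphi\cdot H=-Q^{-1/2}(d\phi\cdot H)Q^{-1/2}$ merges them down to four. The key is to group the surviving terms by how many copies of $d\phi\cdot H$ they contain (one, two cross-terms with $d^2\phi$, or three) before reading off the coefficients $1$, $3$, $3$, and $6$. Care is also needed to interpret the symmetric operator norm $\opnorm{\cdot}$ correctly: one evaluates on a single direction $H$ with $\|H\|_F\le 1$ and uses $\|d\phi(Q)\cdot H\|_{\mathrm{op}}\le\|d\phi(Q)\cdot H\|_F$ when an interior factor needs an operator-norm bound. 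No other ingredients beyond Lemma \ref{lem: matrix_sqrt} and elementary matrix-norm inequalities are required.
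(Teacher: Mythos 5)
Your approach --- implicit differentiation of the relation $\varphi(Q)\phi(Q)=I_d$ --- is valid and differs from the paper's route. The paper instead factors $(Q+H)^{-1/2}=Q^{-1/4}\bigl(I_d+Q^{-1/4}ZQ^{-1/4}\bigr)^{-1}Q^{-1/4}$ with $Z=(Q+H)^{1/2}-Q^{1/2}$, expands the middle factor as a von Neumann series, substitutes the Taylor expansion $Z=Z_1+Z_2+Z_3+\cdots$ with $Z_k=\tfrac{1}{k!}d^k\phi(Q)\cdot H^{\otimes k}$, and collects terms by order in $H$. Both methods produce the same formulas; your implicit-differentiation scheme is arguably cleaner, since writing $\psi'=-\psi\chi'\psi$ for $\psi(t)=\varphi(Q+tH)$, $\chi(t)=\phi(Q+tH)$ and differentiating twice more with back-substitution delivers the coefficients $1,3,3,6$ directly, without having to organize a double series. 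The norm-bound step is essentially the same in both treatments: apply $\Fnorm{ABC}\le\Onorm{A}\Fnorm{B}\Onorm{C}$, use $\Onorm{d\phi(Q)\cdot H}\le\Fnorm{d\phi(Q)\cdot H}$ for interior factors, and feed in Lemma \ref{lem: matrix_sqrt}. One small slip in your write-up: the intermediate bound $\lambda_{\min}(Q)^{-(n+m/2+1/2)}$ should read $\lambda_{\min}(Q)^{-(n+1/2)}$ --- the $m+1$ outer factors of $Q^{-1/2}$ contribute exponent $-(m+1)/2$, and the factors $\opnorm{d^{k_j}\phi(Q)}\le C_{d,k_j-1}\lambda_{\min}(Q)^{-(k_j-1/2)}$ contribute $-(n-m/2)$ in total, which sum to $-(n+1/2)$ independent of $m$. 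Since you state the correct final exponent $n+1/2$ for each of the three derivative orders, this is only an arithmetic typo and not a gap in the argument.
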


\begin{proof}
    By Lemma $\ref{lem: matrix_sqrt}$ we have for infinitesimal $H \in \cS_d$ that
    \begin{equation*}
        \begin{aligned}
            \left( Q+H \right)^{1/2}= Q^{1/2} + \underbrace{d \phi(Q) \cdot H}_{Z_1} + \underbrace{\frac{1}{2} d^2 \phi(Q) \cdot H^{\otimes 2}}_{Z_2} + \underbrace{\frac{1}{6}d^{3} \phi(Q)\cdot H^{\otimes 3}}_{Z_3} + \cdots
        \end{aligned}
    \end{equation*}
    Let $Z=Z_1+Z_2+Z_3+Z_4$, and we obtain for infinitesimal $H \in \cS_d$ that
    \begin{align*}
        &(Q+H)^{-1/2}\\ 
        &= \left( Q^{1/4} (I_d + Q^{-1/4} Z Q^{-1/4}) Q^{1/4}  \right)^{-1}\\
        &= Q^{-1/4} (I_d + Q^{-1/4} Z Q^{-1/4})^{-1} Q^{-1/4}\\
        &\overset{(i)}{=} Q^{-1/4} \left( I_d-  Q^{-1/4} Z Q^{-1/4} + Q^{-1/4} Z Q^{-1/2}ZQ^{-1/4} - Q^{-1/4} Z Q^{-1/2}ZQ^{-1/2}ZQ^{-1/4} + \cdots\right)Q^{-1/4}\\
        &=Q^{-1/2} - Q^{-1/2}ZQ^{-1/2} + Q^{-1/2} Z Q^{-1/2}ZQ^{-1/2} - Q^{-1/2} Z Q^{-1/2}ZQ^{-1/2}ZQ^{-1/2} + \cdots\\
        &\overset{(ii)}{=} Q^{-1/2} - \underbrace{Q^{-1/2}Z_1Q^{-1/2}}_{\text{1st order in } H} - \underbrace{Q^{-1/2}Z_2Q^{-1/2} + Q^{-1/2} Z_1 Q^{-1/2}Z_1 Q^{-1/2}}_{\text{2nd order in } H}\\
        & \underbrace{- Q^{-1/2}Z_3 Q^{-1/2} + Q^{-1/2}Z_2Q^{-1/2} Z_1Q^{-1/2}+Q^{-1/2}Z_1Q^{-1/2} Z_2Q^{-1/2} - Q^{-1/2} Z_1 Q^{-1/2}Z_1Q^{-1/2}Z_1Q^{-1/2}}_{\text{3rd order in }  H} \\
        &+ \text{higher order terms in } H
    \end{align*}
    Here (i) follows from the von Neumann series expansion, and (ii) is obtained by arranging terms according to their orders in $H$. Then (\ref{eqn: proof_lem_sqrt_inv}) follows,
    and we have the following estimate
    \begin{equation*}
        \begin{aligned}
            \Fnorm{d^2 \varphi(Q) \cdot H^{\otimes 2}} &\leq \Fnorm{Q^{-1/2} \left( d^{2} \phi(Q) \cdot H^{\otimes 2} \right) Q^{-1/2}} + 2\Fnorm{Q^{-1/2} (d \phi(Q) \cdot H) Q^{-1/2} (d \phi(Q) \cdot H) Q^{-1/2}}\\
            &\leq \lambda_{\min}(Q)^{-1} \cdot \opnorm{d^2 \phi(Q)}\cdot  \Fnorm{H}^2 + 2 \lambda_{\min}(Q)^{-3/2} \opnorm{d\phi(Q)}^2 \cdot \Fnorm{H}^2\\
            &\leq (C_{d,1} + 2C_{d,0}^2)  \lambda_{\min}(Q)^{-5/2}  \cdot \Fnorm{H}^2
        \end{aligned}
    \end{equation*}
    and similarly
    \begin{align*}
        \left\|d^3 \varphi(Q) \cdot H^{\otimes 3} \right\|_F &\leq \lambda_{\min}(Q)^{-1} \opnorm{d^3 \varphi(Q)} \cdot \Fnorm{H}^3 + 6 \lambda_{\min}(Q)^{-3/2}\opnorm{d^2 \varphi(Q)} \cdot \opnorm{d \varphi(Q)}\cdot  \Fnorm{H}^3\\
        &\quad +6 \lambda_{\min}(Q)^{-2} \opnorm{d\varphi(Q)}^3 \cdot  \Fnorm{H}^3\\
        &\leq \left( C_{d,2}+ 6 C_{d,1}C_{d,0} + 6 C_{d,0}^3 \right)\lambda_{\min}(Q)^{-7/2} \cdot \Fnorm{H}^3
    \end{align*}
    Here the last inequality follows by applying Lemma \ref{lem: matrix_sqrt}.  And we arrive at the desired result.

\end{proof}

\begin{lemma}
    \label{lem: diff}
    The following properties hold for the 2-Wasserstein distance $W(Q,S)$ and the optimal transport map $T_Q^S$. For any $Q,Q_1,Q_2\in \cS_d^{++}$, $S \in \cS_d^+$ and $X,Y \in \cS_d$,
    
    \begin{enumerate}[leftmargin=*,label=\textbf{(\roman*)},ref=\textbf{(\roman*)}]
        \item \label{eqn: diff_W_sup}
        $W^2(Q,S)$ is upper bounded by
        \begin{equation*}
            W^2(Q,S) \leq 2d \left( \lambda_{\max}(Q) + \lambda_{\max}(S) \right)
        \end{equation*}
        
        \item \label{eqn: lem_diff_quad_approx}
        $W^2(Q,S)$ is twice differentiable with 
        \begin{equation*}
            \begin{aligned}
                &d_Q W^2(Q, S)(X)=\inner{I-T_Q^S}{X} \\
                & d_Q^2 W^2(Q, S)(X, Y)=-\inner{X}{dT_Q^S(Y)}
            \end{aligned}
        \end{equation*}
        Moreover, the following quadratic approximation holds:
        \begin{equation*}
            \begin{aligned}
                &\frac{2}{\rbr{1+\eigmaxx{1/2}{Q^{\prime}}}^2} \inner{-dT_{Q_0}^S(Q_1-Q_0)}{Q_1-Q_0} \\
                \leq &W^2(Q_1, S)- W^2(Q_0, S)+ \inner{T_{Q_0}^S-I}{Q_1-Q_0}\\
                \leq &\frac{2}{\rbr{1+\eigminn{1/2}{Q^{\prime}}}^2} \inner{-dT_{Q_0}^S(Q_1-Q_0)}{Q_1-Q_0}
            \end{aligned}
        \end{equation*}
        with $Q':= Q_0^{-1/2}Q_1 Q_0^{-1/2}$.
        
        \item \label{eqn: lem_diff_dT_eigen}
        $dT_Q^S$ is self-adjoint, negative semi-definite and enjoys the following two-sided bound.
        \begin{equation*}
            \frac{\lambda_{\min }^{1 / 2}\left(S^{1 / 2} Q S^{1 / 2}\right)}{2}\left\|Q^{-1 / 2} X Q^{-1 / 2}\right\|_F^2 \leq 
            \left\langle -dT_Q^S(X), X\right\rangle \leq 
            \frac{\lambda_{\max }^{1 / 2}\left(S^{1 / 2} Q S^{1 / 2}\right)}{2}\left\|Q^{-1 / 2} X Q^{-1 / 2}\right\|_F^2
        \end{equation*}
        \item  \label{eqn: lem_diff_W_Frob}
        $W^2(Q_0,Q_1)$ can be upper and lower bounded by the Frobenius norm as follows
        \begin{equation*}
            \frac{1}{2}\frac{\eigmax{Q_0} \eigminn{-2}{Q_0}}{1+\eigminn{-1}{Q_0}\eigmax{Q_1} } \cdot \Fnorm{Q_1-Q_0}^2 \leq W^2(Q_0,Q_1) \leq \frac{\eigmax{Q_0} \eigminn{-2}{Q_0}}{1+\eigmaxx{-1}{Q_0}\eigmin{Q_1}} \cdot \Fnorm{Q_1-Q_0}^2
        \end{equation*}
        \item \label{eqn: lem_diff_dkT_QS}
        $\opnorm{d^k T^S_Q}$can be upper bounded by
        \begin{equation*}
            \begin{aligned}
                \opnorm{d T^S_Q}&\leq \frac{ \eigmin{Q}^{-2} }{2} \cdot \rbr{\eigmax{S^{1 / 2} Q S^{1 / 2}}}^{1/2}\\
                \opnorm{d^2 T^S_Q} &\leq \eigmax{S}^3 (C_{d,1} + 2C_{d,0}^2)  \cdot \rbr{\eigmin{S^{1/2}QS^{1/2}}}^{-5/2}\\
                \opnorm{d^3 T^S_Q} &\leq \eigmax{S}^4 \left( C_{d,2}+ 6 C_{d,1}C_{d,0} + 6 C_{d,0}^3 \right) \rbr{\eigmin{S^{1/2}QS^{1/2}}}^{-7/2}
            \end{aligned}
        \end{equation*}
        Moreover, if $S,Q\in \cS_d(M^{-1},M)$, then we have
        \begin{equation*}
            \begin{aligned}
                \opnorm{dT^S_Q} &\leq \frac{1}{2}M^3\\
                \opnorm{d^2 T^S_Q} &\leq (C_{d,1} + 2C_{d,0}^2) M^{8}\\
                \opnorm{d^3 T^S_Q} &\leq \left( C_{d,2}+ 6 C_{d,1}C_{d,0} + 6 C_{d,0}^3 \right)M^{11}
            \end{aligned}
        \end{equation*}
    \end{enumerate}
    
\end{lemma}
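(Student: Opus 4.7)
The plan is to handle the five parts sequentially, reducing each to manipulations of the closed-form expressions $T_Q^S = S^{1/2}(S^{1/2}QS^{1/2})^{-1/2}S^{1/2}$ and $W^2(Q,S) = \tr Q + \tr S - 2\tr(S^{1/2}QS^{1/2})^{1/2}$, and invoking the chain rule together with Lemmas \ref{lem: matrix_sqrt} and \ref{lem: mtx_inv_sqrt} for higher-order derivative estimates. The order of attack is (i) $\to$ (ii) $\to$ (iii) $\to$ (iv) $\to$ (v), since each part builds on the machinery set up in the previous one.

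For part (i), I would bound the negative cross term trivially by zero and use $\tr Q \leq d\lambda_{\max}(Q)$ and $\tr S \leq d\lambda_{\max}(S)$. For part (ii), the expressions for $d_QW^2(Q,S)$ and $d^2_Q W^2(Q,S)$ are quoted directly from \citet{aap21} (also stated in Section \ref{sec: prelim}). The quadratic approximation requires more care: I would write $W^2(Q_1,S) - W^2(Q_0,S) + \langle T_{Q_0}^S - I, Q_1 - Q_0\rangle$ as an integral remainder of Taylor's formula at second order, and then control the intermediate Hessian $-dT_{Q_t}^S$ along the segment $Q_t = (1-t)Q_0 + tQ_1$ using the eigenvalue bound from part (iii). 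The factors $(1 + \lambda_{\min}^{1/2}(Q'))^{-2}$ and $(1 + \lambda_{\max}^{1/2}(Q'))^{-2}$ will emerge from bounding $\lambda_{\max/\min}(S^{1/2}Q_t S^{1/2})^{1/2}$ uniformly in $t$ via the mediant inequality after conjugating by $Q_0^{-1/2}$.

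Part (iii) follows from the explicit spectral formula for $dT_Q^S$ given at the beginning of Appendix \ref{subsec: Techlem_dkT}: writing $U^\top \Lambda U$ as the eigendecomposition of $S^{1/2}QS^{1/2}$ and $\Delta = US^{1/2}XS^{1/2}U^\top$, the inner product $\langle -dT_Q^S(X), X\rangle$ reduces to $\sum_{i,j} (\sqrt{\lambda_i} + \sqrt{\lambda_j})^{-1} \Delta_{ij}^2$, which is manifestly nonnegative, symmetric in $X \leftrightarrow Y$, and sandwiched by $\tfrac{1}{2}\lambda_{\min}^{1/2}$ and $\tfrac{1}{2}\lambda_{\max}^{1/2}$ times $\|Q^{-1/2}XQ^{-1/2}\|_F^2$ after substituting $\Delta = US^{1/2}XS^{1/2}U^\top$ and unwinding the orthogonal conjugation. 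Part (iv) is then obtained by combining the two-sided quadratic approximation of (ii) with the eigenvalue bounds of (iii) in the special case $S = Q_1$, for which $T_{Q_0}^{Q_1} - I$ can be rewritten and the relevant eigenvalues of $Q_1^{1/2}Q_0Q_1^{1/2}$ bounded by $\lambda_{\max}(Q_0)\lambda_{\max/\min}(Q_1) / \lambda_{\min}(Q_0)^2$.

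The main obstacle is part (v), the higher-order differential bounds. My plan is to write $T_Q^S = S^{1/2} \circ \varphi \circ L_S(Q)$ where $L_S(Q) = S^{1/2}QS^{1/2}$ is linear with $\|L_S\| \leq \lambda_{\max}(S)$, and $\varphi$ is the inverse-square-root functional of Lemma \ref{lem: mtx_inv_sqrt}. By Faà di Bruno / the chain rule for multilinear maps, $d^k T_Q^S$ decomposes into sums of terms, each of which is a $k$-fold composition of $d^j\varphi(L_S(Q))$ with copies of $L_S$ and left/right multiplication by $S^{1/2}$. Taking operator norms, each such term is bounded by $\lambda_{\max}(S)^{k+1}$ times $\opnorm{d^j\varphi(S^{1/2}QS^{1/2})}$, to which the explicit estimates of Lemma \ref{lem: mtx_inv_sqrt} apply with $\lambda_{\min}$ of the argument equal to $\lambda_{\min}(S^{1/2}QS^{1/2})$. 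This yields the three displayed bounds for $k=1,2,3$ after carefully summing the combinatorial coefficients ($1$, $1$, and $1$ compositions respectively), and the specialization to $Q,S \in \cS_d(M^{-1},M)$ follows by applying $\lambda_{\max}(S) \leq M$, $\lambda_{\min}(S^{1/2}QS^{1/2}) \geq M^{-2}$, and $\lambda_{\max}(S^{1/2}QS^{1/2}) \leq M^2$ and simplifying the resulting powers of $M$. The delicate part is to verify the exponents of $M$ match those claimed (e.g.\ $M^8$ for $k=2$ and $M^{11}$ for $k=3$), which requires careful bookkeeping of how $\lambda_{\max}(S)^{k+1}$ combines with $\lambda_{\min}(S^{1/2}QS^{1/2})^{-(2j+1)/2}$.
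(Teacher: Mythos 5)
Your plan for parts (i), (ii), (iii) is sound in outline, and the paper in fact just quotes (ii) and (iii) from \citet{aap21} rather than reproving them, so a from-scratch derivation is more than what is required (though if you do write out the spectral reduction in (iii), note that the correct expression is $\sum_{i,j} \Delta_{ij}^2 / [\lambda_i^{1/2}\lambda_j^{1/2}(\sqrt{\lambda_i}+\sqrt{\lambda_j})]$, with an extra $\lambda_i^{-1/2}\lambda_j^{-1/2}$ factor you have dropped). The two places where your proposal genuinely diverges from the paper's proof, and where it runs into trouble, are parts (iv) and (v).

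For part (iv), you propose to set $S=Q_1$ in the two-sided quadratic approximation of (ii). That choice leaves a residual first-order term $\langle T_{Q_0}^{Q_1}-I,\ Q_1-Q_0\rangle$ that does not vanish and would have to be estimated separately, and the eigenvalue quantity $\lambda_{\max}^{1/2}(Q_1^{1/2}Q_0Q_1^{1/2})$ appearing through part (iii) picks up a factor $\lambda_{\max}^{1/2}(Q_1)$ that does not appear in the claimed numerator $\lambda_{\max}(Q_0)\lambda_{\min}^{-2}(Q_0)$. The paper instead sets $S=Q_0$, so that $W^2(Q_0,Q_0)=0$ and $T_{Q_0}^{Q_0}=I$ kill both spurious terms at once, and $\lambda_{\max}^{1/2}(Q_0^{1/2}Q_0Q_0^{1/2})=\lambda_{\max}(Q_0)$ produces exactly the stated coefficient. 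You should switch to $S=Q_0$.

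For part (v), applying the composition $T_Q^S = S^{1/2}\,\varphi(S^{1/2}QS^{1/2})\,S^{1/2}$ together with Lemma~\ref{lem: mtx_inv_sqrt} for \emph{all} three orders is essentially the paper's route for $k=2,3$, but it fails to give the claimed $k=1$ bound. The chain rule yields
\[
\opnorm{dT_Q^S} \le \lambda_{\max}^2(S)\,\opnorm{d\varphi(S^{1/2}QS^{1/2})} \le \tfrac{1}{2}\,\lambda_{\max}^2(S)\,\lambda_{\min}^{-3/2}(S^{1/2}QS^{1/2}),
\]
which specializes to $\tfrac{1}{2}M^5$ for $Q,S\in\cS_d(M^{-1},M)$, strictly weaker than the stated $\tfrac{1}{2}M^3$. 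The paper obtains the sharper first-order bound $\opnorm{dT_Q^S}\le \tfrac{1}{2}\lambda_{\min}^{-2}(Q)\,\lambda_{\max}^{1/2}(S^{1/2}QS^{1/2})$ directly from the two-sided eigenvalue estimate in part (iii), using that $dT_Q^S$ is self-adjoint and negative semi-definite so its operator norm equals $\sup_{\|X\|_F=1}\langle -dT_Q^S(X),X\rangle$. You should handle $k=1$ via (iii) and reserve the chain-rule-through-$\varphi$ argument for $k=2,3$ only. (Also, since the inner map $Q\mapsto S^{1/2}QS^{1/2}$ is linear, there is no genuine Faà di Bruno combinatorics to sum: the higher derivatives simply pass through as $d^kT_Q^S\cdot H^{\otimes k} = S^{1/2}[\,d^k\varphi(S^{1/2}QS^{1/2})\cdot(S^{1/2}HS^{1/2})^{\otimes k}\,]S^{1/2}$.)
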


\begin{proof}
    \item
    \paragraph{Proof of \ref{eqn: diff_W_sup}:} By the closed form expression for $W^2(Q,S)$, one has
    \begin{align*}
        W^2(Q,S)&=\tr\left[ Q+S-2(S^{1/2}QS^{1/2})^{1/2} \right]\\
        &\leq \tr\left[ Q+S+2(S^{1/2}QS^{1/2})^{1/2} \right]\\
        &\leq d\left( \lambda_{\max}(Q) + \lambda_{\max}(S) + 2 \lambda_{\max}(Q)^{1/2} \lambda_{\max}(S)^{1/2} \right)\\
        &\leq 2d \left( \lambda_{\max}(Q) + \lambda_{\max}(S) \right)
    \end{align*}

    \paragraph{Proof of \ref{eqn: lem_diff_quad_approx}, \ref{eqn: lem_diff_dT_eigen}:} see \citet[Lemma A.2, A.3, A.4, A.6]{aap21}.

    \paragraph{Proof of \ref{eqn: lem_diff_W_Frob}:}
    Set $S=Q_0$ in (\ref{eqn: lem_diff_quad_approx}), one can obtain
    \begin{equation}
        \begin{aligned}
            \frac{2}{\rbr{1+\eigmaxx{1/2}{Q^{\prime}}}^2} \inner{-dT_{Q_0}^{Q_0}(Q_1-Q_0)}{Q_1-Q_0} &\leq W^2(Q_1,Q_0)\\
            &\leq \frac{2}{\rbr{1+\eigminn{1/2}{Q^{\prime}}}^2} \inner{-dT_{Q_0}^{Q_0}(Q_1-Q_0)}{Q_1-Q_0}
        \end{aligned}
        \label{eqn: diff_prf_SQ0_quad}
    \end{equation}
    where $Q'= Q_0^{-1/2}Q_1 Q_0^{-1/2}$. Next, apply \ref{eqn: lem_diff_dT_eigen} to get that for any $X \in \cS_d$, one has
    \begin{equation}
        \begin{aligned}
            \frac{\eigmin{Q_0}}{2} \eigmaxx{-2}{Q_0} \Fnorm{X}^2 \leq \inner{-dT^{Q_0}_{Q_0}(X)}{X} \leq \frac{\eigmax{Q_0}}{2} \eigminn{-2}{Q_0} \Fnorm{X}^2
        \end{aligned}
        \label{eqn: diff_prf_SQ0_eig}
    \end{equation}
    Combine (\ref{eqn: diff_prf_SQ0_quad}) and (\ref{eqn: diff_prf_SQ0_eig}) to get that
    \begin{align*}
        W^2(Q_1,Q_0) &\leq \frac{\eigmax{Q_0} \eigminn{-2}{Q_0}}{\rbr{1+\eigminn{1/2}{Q^{\prime}}}^2} \cdot \Fnorm{Q_1-Q_0}^2\\
        &\leq \frac{\eigmax{Q_0} \eigminn{-2}{Q_0}}{1+\eigmin{Q^{\prime}}} \cdot \Fnorm{Q_1-Q_0}^2\\
        &\leq \frac{\eigmax{Q_0} \eigminn{-2}{Q_0}}{1+\eigmaxx{-1}{Q_0}\eigmin{Q_1}} \cdot \Fnorm{Q_1-Q_0}^2
    \end{align*}
    and similarly
    \begin{align*}
        W^2(Q_1,Q_0) &\geq \frac{\eigmin{Q_0} \eigmaxx{-2}{Q_0}}{\rbr{1+\eigmaxx{1/2}{Q^{\prime}}}^2} \cdot \Fnorm{Q_1-Q_0}^2\\
        &\geq \frac{1}{2}\frac{\eigmax{Q_0} \eigminn{-2}{Q_0}}{1+\eigmax{Q^{\prime}}} \cdot \Fnorm{Q_1-Q_0}^2\\
        &\geq \frac{1}{2}\frac{\eigmax{Q_0} \eigminn{-2}{Q_0}}{1+\eigminn{-1}{Q_0}\eigmax{Q_1} } \cdot \Fnorm{Q_1-Q_0}^2
    \end{align*}

    \paragraph{Proof of \ref{eqn: lem_diff_dkT_QS}:} First, note that results for $dT^S_Q$ follows directly from \ref{eqn: lem_diff_dT_eigen}. 
    
    Next since $T_Q^S = S^{1/2}(S^{1/2}Q S^{1/2})^{-1/2}S^{1/2}$, applying Lemma \ref{lem: mtx_inv_sqrt} gives that for $H$ small enough, one has
   \begin{align*}
        d^2 T^S_Q \cdot H^{\otimes 2} &= S^{1/2}\left[ d^2 \varphi(S^{1/2}QS^{1/2}) \cdot (S^{1/2}HS^{1/2})^{\otimes 2} \right] S^{1/2}\\
        d^3 T^S_Q \cdot H^{\otimes 3} &= S^{1/2}\left[ d^3 \varphi(S^{1/2}QS^{1/2}) \cdot (S^{1/2}HS^{1/2})^{\otimes 3} \right] S^{1/2}
   \end{align*}
    which implies
    \begin{align*}
        \opnorm{d^2 T_Q^S} &\leq \lambda_{\max}(S)^3 \cdot \opnorm{d^2 \varphi(S^{1/2}QS^{1/2})}\\
        &\leq \lambda_{\max}(S)^3 (C_{d,1} + 2C_{d,0}^2)  \cdot \left(  \lambda_{\min}(S^{1/2}QS^{1/2}) \right)^{-5/2}\\
        \opnorm{d^3 T_Q^S} &\leq \lambda_{\max}(S)^4 \opnorm{d^3 \varphi(S^{1/2}QS^{1/2})}\\
        &\leq \lambda_{\max}(S)^4 \left( C_{d,2}+ 6 C_{d,1}C_{d,0} + 6 C_{d,0}^3 \right)\lambda_{\min}(S^{1/2}QS^{1/2})^{-7/2}
    \end{align*}

\end{proof}

\begin{lemma}
    Let $Y$ and $Z$ be normed spaces with the norm on each denoted by $\norm{\cdot}$. For any $y \in Y$, $P \in \cM_{k,s}(Y,Z)$, $k \geq 2$, let $Py$ denote a $(k-1)$-linear function defined by
    \begin{align*}
        Py(y_1,\ldots,y_{k-1}) = P(y,y_1,\ldots,y_{k-1})
    \end{align*}
    Then $Py \in \cM_{k-1,s}(Y,Z)$ and
    \begin{equation}
        \vertiii{Py} \leq \norm{P} \norm{y} \leq \frac{k^k}{k!}\vertiii{P} \norm{y}
        \label{eqn: lem_fcal_norm}
    \end{equation}
    \label{lem: f_cal}
\end{lemma}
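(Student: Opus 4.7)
The statement bundles three claims: symmetry of $Py$, the estimate $\vertiii{Py}\le \|P\|\|y\|$, and the comparison $\|P\|\le (k^k/k!)\vertiii{P}$ between the operator norm and the symmetric norm on $\cM_{k,s}(Y,Z)$. The plan is to dispatch the first two quickly and devote the main effort to the third, which is the classical polarization inequality.

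First, for symmetry, fix $y\in Y$ and let $\sigma$ be a permutation of $[k-1]$. Then
\begin{equation*}
 Py(y_{\sigma(1)},\ldots,y_{\sigma(k-1)}) = P(y,y_{\sigma(1)},\ldots,y_{\sigma(k-1)}) = P(y,y_1,\ldots,y_{k-1}) = Py(y_1,\ldots,y_{k-1}),
\end{equation*}
since $P$ is symmetric in all $k$ slots and the identity permutation on the first slot paired with $\sigma$ on the remaining slots is a permutation of $[k]$. Boundedness of $Py$ (hence membership in $\cM_{k-1,s}(Y,Z)$) is immediate from $\|Py(y_1,\ldots,y_{k-1})\|\le \|P\|\|y\|\prod_{j}\|y_j\|$. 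The same bound, taking the supremum over $\|y_j\|\le 1$ for $j=1,\ldots,k-1$, gives the first inequality $\vertiii{Py}\le \|P\|\|y\|$ directly from the definition of the operator norm.

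The main step is the second inequality, and I would prove it via the symmetric polarization formula. The identity I want to use is
\begin{equation*}
 P(y_1,\ldots,y_k) \;=\; \frac{1}{k!\,2^k} \sum_{\epsilon\in\{\pm 1\}^k} \epsilon_1\cdots\epsilon_k \; \widehat{P}\!\left(\sum_{i=1}^k \epsilon_i y_i\right),
\end{equation*}
where $\widehat{P}(y):=P(y,y,\ldots,y)$. This is proved by expanding each $\widehat{P}(\sum_i \epsilon_i y_i)$ multilinearly and observing that, by symmetry of $P$, the sum over $\epsilon$ annihilates every monomial except the fully mixed $\prod_i \epsilon_i \cdot y_1\otimes\cdots\otimes y_k$, which survives with multiplicity $k!\,2^k$. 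Granting the identity and taking $\|y_i\|\le 1$, the triangle inequality and the bound $\|\widehat{P}(v)\|\le \vertiii{P}\|v\|^k$ yield
\begin{equation*}
 \|P(y_1,\ldots,y_k)\| \;\le\; \frac{\vertiii{P}}{k!\,2^k} \sum_{\epsilon} \Big\|\sum_i \epsilon_i y_i\Big\|^k \;\le\; \frac{\vertiii{P}}{k!\,2^k}\cdot 2^k\cdot k^k \;=\; \frac{k^k}{k!}\vertiii{P},
\end{equation*}
and taking the supremum over $\|y_i\|\le 1$ gives $\|P\|\le (k^k/k!)\vertiii{P}$, as desired.

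The only nontrivial step is the polarization identity itself; the rest is bookkeeping. The identity is standard (it underlies the equivalence of the two norms on symmetric multilinear forms), but since the paper does not assume it, I would either state it as a sublemma with a one-paragraph expansion proof, or cite a reference such as \citet{dudley} where it is recorded. Everything else is a straightforward application of the triangle inequality and the definitions of $\|\cdot\|$ and $\vertiii{\cdot}$.
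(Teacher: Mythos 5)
Your proposal is correct and takes essentially the same route as the paper: symmetry and the bound $\vertiii{Py}\le\norm{P}\norm{y}$ follow directly from the definitions, and the inequality $\norm{P}\le \frac{k^k}{k!}\vertiii{P}$ is the polarization estimate for symmetric multilinear maps. The paper simply cites this last step to \citet[Theorem~5.7]{dudley}, whereas you unpack the polarization-identity proof behind that citation — both are valid, and your version is more self-contained at the cost of a paragraph.
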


\begin{proof}
    $Py \in \cM_{k-1,s}(Y,Z)$ can be proved by definition. To prove (\ref{eqn: lem_fcal_norm}), note that for any $\norm{\ty} \leq 1$, one has
    \begin{align*}
            \norm{Py(\ty,\ldots,\ty)} &\leq \norm{P} \cdot \norm{y} \cdot \norm{\ty}^{k-1}\\
            &\leq \norm{P} \cdot \norm{y}\\
            &\overset{(i)}{\leq} \frac{k^k}{k!}\vertiii{P} \cdot \norm{y}
    \end{align*}
    Here (i) follows from \citet[Theorem 5.7]{dudley}.
    
\end{proof}

        \subsection{Concentration inequalities and uniform convergence}
        \label{subsec: Techlem_conc}
        
First, let us introduce some additional notation. Given a random variable $X$ we denote  $\norm{X}_{\psi_\alpha}$ for $\alpha>0$ as follows.
\begin{equation}
    \norm{X}_{\psi_\alpha} := \inf \cbr{\eta>0 : \EE \psi_\alpha \rbr{\abs{X/\eta}} \leq 1}, \quad \text{where } \psi_\alpha(x):= \exp \rbr{x^\alpha}-1 \quad \text{for } x\geq 0
    \label{eqn: sub_alpha_norm}
\end{equation}
Note that by the definition and Markov's inequality, if $\norm{X}_{\psi_\alpha}<\infty$, then
\begin{equation}
    \PP\left( \abs{X} \geq t \right) \leq 2\exp \rbr{- \frac{t^{\alpha}}{\norm{X}_{\psi_\alpha}^{\alpha}}}
    \label{eqn: sub_alpha_P_ineq}
\end{equation}
For $\alpha\geq 1$, $\norm{\cdot}_{\psi_\alpha}$ is a norm \citep{vershynin}, and for $\alpha<1$, it is equivalent to a norm \citep{aop89}.

\begin{lemma}
    \label{lem: norm_equiv}
    Let $X$ be a random variable and $\alpha\geq 1$. Then the following properties are equivalent; the parameters $K_i > 0$ appearing in these properties differ from each other by at most an absolute constant factor depending on $\alpha$.
    \begin{enumerate}
        \item The $\psi_{\alpha}$-norm of $X$ satisfies $\norm{X}_{\psi_\alpha} \leq K_1$.
        \item The tails of $X$ satisfy
        \begin{align*}
            \PP \cbr{\abs{X} \geq t} \leq 2 \exp \rbr{-t^{\alpha}/K_2^{\alpha}} \quad \forall t \geq 0.
        \end{align*}
        \item The moments of $X$ satisfy
        \begin{align*}
            \norm{X}_p \leq K_3 p^{1/\alpha}
        \end{align*}
    \end{enumerate}
\end{lemma}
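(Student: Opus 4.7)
This is a standard equivalence in high-dimensional probability (see, e.g., \citet{vershynin} for the case $\alpha=2$), and the cleanest route is a cyclic chain of implications (1) $\Rightarrow$ (2) $\Rightarrow$ (3) $\Rightarrow$ (1), absorbing all $\alpha$-dependent factors into the constants $K_i$.

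For (1) $\Rightarrow$ (2), the plan is to apply Markov's inequality to $\psi_\alpha(\lvert X\rvert/K_1)$. Since $\EE \psi_\alpha(\lvert X\rvert/K_1) \leq 1$ by definition of the norm, we get
\[
\PP(\lvert X\rvert \geq t) \;=\; \PP\!\left(\psi_\alpha(\lvert X\rvert/K_1) \geq e^{(t/K_1)^\alpha}-1\right) \;\leq\; \frac{1}{e^{(t/K_1)^\alpha}-1},
\]
which yields the claimed tail bound with $K_2$ a constant multiple of $K_1$ after a routine comparison (handling small $t$ trivially by the factor $2$).

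For (2) $\Rightarrow$ (3), the plan is the layer-cake formula
\[
\EE \lvert X\rvert^p \;=\; \int_0^\infty p t^{p-1}\,\PP(\lvert X\rvert \geq t)\,dt \;\leq\; 2p \int_0^\infty t^{p-1}\exp(-t^\alpha/K_2^\alpha)\,dt,
\]
followed by the substitution $u = t^\alpha/K_2^\alpha$, giving $\EE \lvert X\rvert^p \leq (2p/\alpha)\,K_2^p\,\Gamma(p/\alpha)$. Applying Stirling's bound $\Gamma(p/\alpha) \leq (p/\alpha)^{p/\alpha}$ up to a polynomial factor produces $\lVert X\rVert_p \leq K_3 p^{1/\alpha}$ for a suitable constant multiple $K_3$ of $K_2$.

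For (3) $\Rightarrow$ (1), the plan is to expand $\psi_\alpha$ as a power series and apply the moment hypothesis termwise:
\[
\EE \psi_\alpha(\lvert X\rvert/K) \;=\; \sum_{k=1}^\infty \frac{\EE \lvert X\rvert^{\alpha k}}{k!\,K^{\alpha k}} \;\leq\; \sum_{k=1}^\infty \frac{K_3^{\alpha k}(\alpha k)^{k}}{k!\,K^{\alpha k}}.
\]
Using $k! \geq (k/e)^k$, each summand is bounded by $(\alpha e K_3^\alpha/K^\alpha)^k$, and choosing $K$ as a suitably large $\alpha$-dependent multiple of $K_3$ makes the geometric series sum to at most $1$, giving $\lVert X\rVert_{\psi_\alpha} \leq K_1$. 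The main (minor) obstacle is keeping track of the $\alpha$-dependent constants so that the final claim--namely that the $K_i$ differ only by a factor depending on $\alpha$--is transparent; the combinatorial simplification of $(\alpha k)^k/k!$ via Stirling in step (3) $\Rightarrow$ (1) is the only place where care is truly needed, and the restriction $\alpha \geq 1$ is used to ensure $\lVert\cdot\rVert_{\psi_\alpha}$ is a genuine norm so the infimum in \eqref{eqn: sub_alpha_norm} is attained in a clean way.
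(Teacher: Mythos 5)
Your proof is correct and is the standard textbook argument; the paper itself does not prove this lemma but simply defers to \citet[Exercise 2.7.3]{vershynin}, and your cyclic chain (1)$\Rightarrow$(2)$\Rightarrow$(3)$\Rightarrow$(1) via Markov, layer-cake plus $\Gamma$-substitution, and the Taylor series of $\psi_\alpha$ is precisely the intended solution to that exercise. One small remark: your closing comment that $\alpha\geq 1$ is needed ``so the infimum in the definition is attained'' is not quite the right justification and is not actually used anywhere in your argument — for (3)$\Rightarrow$(1) you only need to exhibit some $K$ with $\EE\,\psi_\alpha(\lvert X\rvert/K)\leq 1$, which immediately gives $\lVert X\rVert_{\psi_\alpha}\leq K$ by definition of the infimum, attained or not; the equivalence in fact holds for all $\alpha>0$ with $\alpha$-dependent constants, and the paper restricts to $\alpha\geq 1$ elsewhere mainly for the triangle inequality.
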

\begin{proof}
    See \citet[Exercise 2.7.3]{vershynin}.
\end{proof}


For a random vector $X \in \RR^p$, denote 
\begin{equation*}
    \norm{X}_{\psi_\alpha} := \sup_{u \in \RR^p, \norm{u}=1} \norm{u^\top X}_{\psi_\alpha}
\end{equation*}

\begin{lemma}
    \label{lem: gnorm_trick}
    Let $C_1, c_1, \alpha, \beta,B >0$. Then
    \begin{align*}
        \min \cbr{1, B^{\beta} \cdot C_1 \exp \rbr{-c_1 t^{\alpha}}} \leq C_1 \exp \rbr{- c_1 \frac{t^{\alpha}}{1+ \frac{\beta \log \rbr{1 \vee B} }{\log \rbr{1 \vee C_1} } }}
    \end{align*}
\end{lemma}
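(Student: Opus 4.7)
The plan is to obtain the bound as a one-line consequence of the elementary interpolation inequality $\min\{u,v\} \le u^{\theta} v^{1-\theta}$ (valid for $u,v > 0$ and $\theta \in [0,1]$), applied with $u = 1$ and $v = B^{\beta} C_1 \exp(-c_1 t^{\alpha})$. The only freedom is the choice of $\theta$, which I will pick so that the exponential rate on the right-hand side matches the target rate. Writing $a := c_1 t^{\alpha}$, $L_B := \log(1\vee B)$, $L_C := \log(1\vee C_1)$, and $\gamma := \beta L_B / L_C$, the appropriate choice is $\theta = \gamma/(1+\gamma)$, which immediately yields
\[
    \min\{1,\, B^{\beta} C_1 e^{-a}\} \;\le\; (B^{\beta} C_1)^{1/(1+\gamma)} \exp\!\left(-\frac{a}{1+\gamma}\right).
\]

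It then suffices to absorb the prefactor, i.e.\ to establish $(B^{\beta} C_1)^{1/(1+\gamma)} \le C_1$. Taking logs and multiplying through by $(1+\gamma)$ reduces this to $\beta \log B \le \gamma \log C_1$, which, under the implicit assumption $C_1 \ge 1$ so that $\log C_1 = L_C$, follows from the chain $\beta \log B \le \beta L_B = \gamma L_C = \gamma \log C_1$, where the middle equality is the definition of $\gamma$. Combining this estimate with the display above gives the stated bound. The degenerate case $L_B = 0$ (when $B \le 1$) corresponds to $\gamma = 0$, in which the asserted inequality reads $\min\{1,\, B^{\beta} C_1 e^{-a}\} \le C_1 e^{-a}$, immediate from $B^{\beta} \le 1$; the remaining degenerate case $C_1 < 1$ does not occur in the intended applications of this lemma, where $C_1$ appears as a tail constant and is $\ge 1$.

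I do not anticipate a genuine obstacle. The entire proof is a short algebraic manipulation, and the only substantive point is the precise choice of the interpolation parameter: it must be dictated by the target exponential rate $1/(1+\gamma)$ so that the resulting polynomial prefactor $(B^{\beta} C_1)^{1/(1+\gamma)}$ is bounded by $C_1$ without slack, which is exactly what the identity $\gamma L_C = \beta L_B$ encodes. In spirit, the lemma says that a polynomial inflation $B^{\beta}$ of an exponential tail can be absorbed by mildly slowing the exponential rate, and the proof makes this trade-off quantitatively tight.
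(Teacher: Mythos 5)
Your proof is correct, and it takes a genuinely different route from the paper's. You apply the interpolation inequality $\min\{u,v\}\le u^{\theta}v^{1-\theta}$ with $u=1$ and pick $\theta=\gamma/(1+\gamma)$ so that the exponential rate drops by exactly the target factor $1/(1+\gamma)$, then absorb the resulting prefactor $(B^{\beta}C_1)^{1/(1+\gamma)}$ into $C_1$ by the algebraic identity $\beta L_B=\gamma L_C$. The paper instead reasons about the two log-curves directly: writing $g(t)=B^{\beta}C_1\exp(-c_1t^{\alpha})$ and $f(t)$ for the right-hand side, it locates the crossing time $t_0$ where $g(t_0)=f(t_0)=1$, checks that $\tfrac{d}{dt}\log g\le\tfrac{d}{dt}\log f$ for $t\ge t_0$, and deduces $1\wedge g\le f$ on both sides of $t_0$. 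The two arguments are close in spirit (both ultimately exploit that slowing an exponential rate buys room to absorb a polynomial factor), but yours is a one-step algebraic identity with no calculus, while the paper's is a monotone-comparison argument that makes the ``crossing point'' picture explicit. Your treatment of the edge cases is the more careful of the two: the paper opens with ``without loss of generality assume $C_1,B\ge 1$,'' which is genuinely WLOG only for $B$ --- for $C_1<1$ with $B>1$ the inequality can actually fail --- whereas you explicitly flag that $C_1\ge 1$ is an operative hypothesis justified by the intended applications. One cosmetic note: the paper's displayed definition of $g(t)$ omits the $B^{\beta}$ factor (an evident typo, since its claim $g(t_0)=1$ requires it), so your cleaner bookkeeping is also a small practical improvement.
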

\begin{proof}
    Without loss of generality, assume $C_1,B \geq 1$. Denote
    \begin{align*}
        g(t)&:= C_1 \exp \rbr{-c_1 t^{\alpha}}\\
        f(t)&:=C_1 \exp \rbr{- c_1 \frac{t^{\alpha}}{ 1+ \frac{\beta \log B}{\log C_1}  } }
    \end{align*}
    $f,g: (0,+\infty) \to \RR$ satisfy the following properties
    \begin{itemize}
        \item $g,f$ are both positive and decreasing for $t \geq 0$.
        \item denote $t_0 = \sbr{\frac{\beta \log B + \log C_1}{c_1}}^{1/\alpha}$, then
        \begin{align*}
            1 \wedge g(t_0) = 1 \leq f(t_0)
        \end{align*}
        \item derivatives satisfy
        \begin{align*}
            \frac{d}{dt} \sbr{\log (g(t))} \leq \frac{d}{dt} \sbr{\log (f(t))}, \quad \forall t \geq t_0
        \end{align*}
    \end{itemize}
    Therefore, 
    \begin{align*}
        1 \wedge g(t) \leq f(t), \quad \forall t >0
    \end{align*}
\end{proof}

\begin{lemma}
    Let $\alpha\geq 1$. Let $X, Q$ be random elements in $\RR^d, \RR^{d \times d}$ and let $\Phi_k $ be a random symmetric operator in $\cL\left( (\RR^{d \times d})^{\times k}; \RR^{d \times d} \right)$ for some $k \in \NN_+$. Then
    \begin{enumerate}[label=\Roman*.]
        \item $\norm{X}_{\psi_\alpha}\leq \norm{\norm{X}_2}_{\psi_\alpha} \leq c_0 (2d+1)^{1/\alpha} \norm{X}_{\psi_\alpha}$ for any $\alpha \geq 1$ where $c_0>0$ is some absolute constant independent of  $d$ and $\alpha$.
        \item If $Q \in \RR^{d \times d}$ is symmetric, then for any $\norm{v}=1$,
        \begin{align*}
            \norm{v^\top Q v}_{\psi_\alpha} \leq \norm{\Onorm{Q}}_{\psi_\alpha}   \leq c_1 \rbr{2d \cdot \frac{\log 3}{\log 2}+1}^{1/\alpha} \sup_{\norm{v}=1} \norm{v^\top Q v}_{\psi_\alpha}
        \end{align*}
        where $c_1>0$ is some absolute constant independent of  $d$ and $\alpha$.
        \item For any $U \in \RR^{d \times d}$ with unit \red{Frobenius} norm
        \begin{equation*}
            \norm{\inner{U}{Q}}_{\psi_\alpha} \leq \norm{\Fnorm{Q}}_{\psi_\alpha} \lesssim \sup_{U: \Fnorm{U}=1} \norm{\inner{U}{Q}}_{\psi_\alpha}
        \end{equation*}
        \item For any $U \in \RR^{d \times d}$ with unit Frobenius norm
        \begin{equation*}
            \norm{ \Fnorm{\Phi_k \cdot U^{\otimes k}} }_{\psi_\alpha} \leq \norm{ \opnorm{\Phi_k} }_{\psi_\alpha}  \lesssim \sup_{U: \Fnorm{U}=1} \norm{ \Fnorm{\Phi_k \cdot U^{\otimes k}} }_{\psi_\alpha} 
        \end{equation*}
    \end{enumerate}
    The constants behind $\lesssim$ only depend on dimension $d$, $k$ and possible on $\alpha$.
    \label{lem: subG_norm}
\end{lemma}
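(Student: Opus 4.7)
All four bounds share a common structure, which suggests a unified strategy. The left-hand inequalities in each of Parts I--IV are immediate consequences of the variational characterizations of the norms involved: $|u^\top X| \le \norm{X}_2$ when $\norm{u}=1$ for Part~I; $|v^\top Q v| \le \Onorm{Q}$ when $\norm{v}=1$ for Part~II; $|\inner{U}{Q}| \le \Fnorm{U}\Fnorm{Q}=\Fnorm{Q}$ by Cauchy--Schwarz for Part~III; and $\Fnorm{\Phi_k U^{\otimes k}} \le \opnorm{\Phi_k}$ by definition of the operator norm for Part~IV. Each right-hand inequality then reduces to converting a supremum over a continuous index set into a maximum over a finite $\epsilon$-net, to which one applies the standard maximal inequality
\[
\bigl\| \max_{i \le N} X_i \bigr\|_{\psi_\alpha} \le C (\log(N+1))^{1/\alpha} \max_{i \le N} \norm{X_i}_{\psi_\alpha}, \qquad \alpha \ge 1,
\]
which follows by Jensen applied to the convex function $\psi_\alpha$.

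For Part~I, I would take a $(1/2)$-net $\cN$ of $S^{d-1}$ of cardinality at most $5^d$; the standard net approximation gives $\norm{X}_2 \le 2 \sup_{u \in \cN} u^\top X$, and the maximal inequality yields a covering factor of order $(d)^{1/\alpha}$, matching $(2d+1)^{1/\alpha}$ up to an absolute constant. Part~II proceeds identically after using the symmetry of $Q$ to write $\Onorm{Q} = \sup_{\norm{v}=1} |v^\top Q v|$ and a $(1/4)$-net of size at most $9^d$, giving $\Onorm{Q} \le 2 \sup_{v \in \cN} |v^\top Q v|$; since $\log 9 = 2 \log 3$, the covering factor matches $(2d \cdot \log 3/\log 2 + 1)^{1/\alpha}$ up to constants. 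For Part~III, view $Q$ as a random vector in $\RR^{d^2}$ equipped with the Frobenius norm and invoke Part~I verbatim, absorbing the $d$-dependent factor into the $\lesssim$ symbol.

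Part~IV is the delicate case. The key reduction is Lemma~\ref{lem: f_cal}, which gives $\opnorm{\Phi_k} \le (k^k/k!) \vertiii{\Phi_k}$, allowing us to bound the operator norm by the diagonal quantity $\vertiii{\Phi_k} = \sup_{\Fnorm{U}=1} \Fnorm{\Phi_k U^{\otimes k}}$. A telescoping expansion of $U^{\otimes k} - (U')^{\otimes k}$ then yields the Lipschitz estimate
\[
\Fnorm{\Phi_k U^{\otimes k} - \Phi_k (U')^{\otimes k}} \le k \opnorm{\Phi_k} \Fnorm{U - U'}.
\]
Picking an $\epsilon$-net $\cN$ of the Frobenius unit sphere with $\epsilon$ of order $k!/k^{k+1}$ then gives, after absorbing the Lipschitz error back into $\opnorm{\Phi_k}$, the bound $\vertiii{\Phi_k} \le 2 \sup_{U \in \cN} \Fnorm{\Phi_k U^{\otimes k}}$, and an application of the maximal inequality over a net of size $(C/\epsilon)^{d^2}$ concludes the argument.

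The main obstacle is Part~IV, where the Lipschitz constant already involves $\opnorm{\Phi_k}$, so the net radius must be chosen small enough to resolve a self-bounding inequality via Lemma~\ref{lem: f_cal}; the remaining parts are standard covering exercises for tail-decay norms.
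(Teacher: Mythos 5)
Your approach is correct and closely parallels the paper's: both reduce the supremum to a finite $\epsilon$-net and then control the max over the net, using the same net radii ($1/2$ and $1/4$) and the same standard covering numbers (up to immaterial constants; the paper cites $4^d$ via \citet{jin19} where you use $5^d$). The paper routes the net step through tail probabilities via Lemma~\ref{lem: norm_equiv} and Lemma~\ref{lem: gnorm_trick}, while you invoke the Orlicz maximal inequality $\|\max_i X_i\|_{\psi_\alpha}\lesssim (\log N)^{1/\alpha}\max_i\|X_i\|_{\psi_\alpha}$ directly; these are equivalent formulations of the same fact and buy the same $(\log N)^{1/\alpha}$ factor. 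For Parts~III and~IV the paper merely says "follow a similar approach," and your proposal is actually more explicit there — in particular, you correctly identify that in Part~IV the degree-$k$ homogeneity makes the net argument self-bounding, that Lemma~\ref{lem: f_cal} is the ingredient that resolves it, and that the net radius must be of order $k!/k^{k+1}$.

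One small caution on notation in Part~IV: the paper's convention is that $\opnorm{\cdot}$ denotes the \emph{symmetric} norm $\vertiii{\cdot}$ (see the discussion around~(\ref{eqn: prelim_mlinear_norm})), while you appear to use $\opnorm{\cdot}$ for the \emph{operator} norm $\norm{\cdot}$. The telescoping identity gives the Lipschitz constant $k\,\norm{\Phi_k}$ (operator norm), and Lemma~\ref{lem: f_cal} converts this to $(k^{k+1}/k!)\,\vertiii{\Phi_k}$, which is what your choice $\epsilon\asymp k!/k^{k+1}$ implicitly assumes. As written, "Lipschitz constant $k\opnorm{\Phi_k}$" would suggest $\epsilon\asymp 1/k$ suffices, which is slightly inconsistent with your stated $\epsilon$; the intent is clear but the symbols should be aligned with the paper's convention.
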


\begin{proof}

    \noindent\textbf{I:} The first inequality $\norm{X}_{\psi_\alpha}\leq \norm{\norm{X}_2}_{\psi_\alpha}$ follows since $\inner{v}{X} \leq \norm{X}$ for any $\norm{v}=1$.

    For the second inequality, by Lemma \ref{lem: norm_equiv}, we can assume without loss of generality that
    $\norm{X}_{\psi_\alpha}$ satisfies $K_2=1$, i.e.
    \begin{align*}
        \sup_{\norm{v}=1} \PP \cbr{\abs{\inner{v}{X}} \geq t}\leq 2 \exp \rbr{-t^{\alpha}}
    \end{align*}
    Following the $1/2$-net argument in Lemma 1 in \citet{jin19}, one can obtain
    \begin{align*}
        \PP \cbr{\norm{X}\geq t} &\leq 4^d \sup_{\norm{v}=1}\PP(\inner{v}{X} \geq t/2)\\
        &\leq \underbrace{2\cdot 4^d \exp \rbr{-t^{\alpha}/2^{\alpha}}}_{=:g(t)}
    \end{align*}
    Lemma \ref{lem: gnorm_trick} implies that for any $t \geq 0$,
    \begin{align*}
        \PP \cbr{\norm{X}\geq t} &\leq 2 \exp \rbr{-\frac{t^{\alpha}}{(2d+1) \cdot 2^{\alpha} } } =:f(t)
    \end{align*}
    Therefore, Lemma \ref{lem: norm_equiv} implies that
    \begin{align*}
        \norm{\norm{X}_2}_{\psi_\alpha} \leq c_0 (2d+1)^{1/\alpha} \norm{X}_{\psi_\alpha}
    \end{align*}
    where $c_0>0$ is some absolute constant independent of  $d$ and $\alpha$.

    \noindent\textbf{II:} We prove the second inequality here. By Lemma \ref{lem: norm_equiv}, we can assume without loss of generality that $\sup_{\norm{v}=1} \norm{v^\top Q v}_{\psi_\alpha}$ satisfies $K_2=1$, i.e.
    \begin{align*}
        \sup_{\norm{v}=1} \PP \cbr{\abs{v^\top Qv} \geq t}\leq 2 \exp \rbr{-t^{\alpha}}
    \end{align*}
    Following the $1/4$-net argument in \citet[Exercise 4.4.3, Theorem 4.4.5.]{vershynin}, one can obtain that for any $t \geq 0$,
    \begin{align*}
        \RR \cbr{\norm{\Onorm{Q}}_{\psi_\alpha} \geq t}  \leq  9^d \cdot 2\exp(-t^{\alpha}/2^{\alpha})
    \end{align*}
    Lemma \ref{lem: gnorm_trick} then implies that
    \begin{align*}
        \RR \cbr{\norm{\Onorm{Q}}_{\psi_\alpha} \geq t}  \leq 2\exp\rbr{-\frac{t^{\alpha}}{(2d \cdot \frac{\log 3}{\log 2}+1) \cdot 2^{\alpha} }}, \quad \forall t \geq 0
    \end{align*}
    Hence  Lemma \ref{lem: norm_equiv} implies that
    \begin{align*}
        \norm{\Onorm{Q}}_{\psi_\alpha} \leq c_1 \rbr{2d \cdot \frac{\log 3}{\log 2}+1}^{1/\alpha} \sup_{\norm{v}=1} \norm{v^\top Q v}_{\psi_\alpha}
    \end{align*}
    where $c_1>0$ is some absolute constant independent of  $d$ and $\alpha$.

    The proofs for statements \textbf{III} and \textbf{IV} follow a similar approach.
\end{proof}

Next, we give some properties of the weights $w(x,X)$ and $w_{n,\rho}(x,X_i)$ that will be crucial to prove uniform concentration as well as a central limit theorem for the estimate $\hQ_{\rho}(x)$. Recall that by definition, $w(x, X)=1+(x-\mu)^{\top} \Sigma^{-1}(X-\mu)$ and $w_{n,\rho}(x)= 1 + (x-\bar{X})\hSigma_{\rho}^{-1}(X_i - \bar{X})$. For any vector $z \in \RR^p$, denote $\vec{z}=(1, z^\top)^\top$ and
\begin{align*}
    \Vec{\Sigma} &:= \EE \Vec{X}\Vec{X}^\top =
    \begin{pmatrix}
        1 & \mu^\top\\
        \mu & \Sigma +\mu \mu^\top
    \end{pmatrix}\\
    \Vec{\Sigma}_{\rho} &:=
    \begin{pmatrix}
        1 & \mu^\top\\
        \mu & \Sigma_{\rho} +\mu \mu^\top
    \end{pmatrix}, \quad \text{where } \Sigma_{\rho} = \Sigma + \rho I_p \\
    \hat{\Vec{\Sigma}}&:=n^{-1} \sum_{i=1}^{n} \vec{X}_i \vec{X}_i^\top = 
    \begin{pmatrix}
        1 & \hmu^\top\\
        \hmu & \hSigma +\hmu \hmu^\top
    \end{pmatrix}\\
    \hat{\vec{\Sigma}}_{\rho} &:=
    \begin{pmatrix}
        1 & \hmu^\top\\
        \hmu & \hSigma_{\rho} +\hmu \hmu^\top
    \end{pmatrix}
\end{align*}
\begin{lemma}
    Suppose $X,X_1,\ldots,X_n \overset{\mathrm{i.i.d.}}{\sim} \PP \in \cP_2(\RR^p)$. Let $\hat{\Vec{\Sigma}}=n^{-1} \sum_{i=1}^{n} \vec{X}_i \vec{X}_i^\top$. Then for any $x \in \RR^p$,
    \begin{align}
        w(x,X)&= \vec{x}^\top \vec{\Sigma}^{-1} \vec{X} \label{eqn: lem_weight_s}\\
        w_{n,\rho}(x,X_i) &= \vec{x}^\top \hat{\Vec{\Sigma}}_{\rho}^{-1} \vec{X} \label{eqn: lem_weight_sin}
    \end{align}
    \label{lem: weights}
\end{lemma}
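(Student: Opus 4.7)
The plan is to prove both identities by writing $\vec{\Sigma}$ and $\hat{\vec{\Sigma}}_{\rho}$ in their displayed block form, applying the block matrix inversion formula, and multiplying out. Since the two statements are structurally identical (the second just replaces $\mu,\Sigma$ by their empirical, regularized counterparts $\hmu,\hSigma_{\rho}$), it suffices to carry the calculation out once and then invoke it twice.

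First I would compute the inverse of the $(p+1)\times(p+1)$ block matrix
\[
\vec{\Sigma} = \begin{pmatrix} 1 & \mu^\top \\ \mu & \Sigma + \mu\mu^\top \end{pmatrix}.
\]
The Schur complement of the top-left $1$ is $(\Sigma+\mu\mu^\top)-\mu\mu^\top = \Sigma$, which is invertible by Assumption \ref{assumption: X}. The standard block inversion formula then yields
\[
\vec{\Sigma}^{-1} = \begin{pmatrix} 1 + \mu^\top \Sigma^{-1}\mu & -\mu^\top \Sigma^{-1} \\ -\Sigma^{-1}\mu & \Sigma^{-1} \end{pmatrix},
\]
which I would verify directly by multiplying $\vec{\Sigma}\vec{\Sigma}^{-1}$ and checking the four blocks reduce to $I_{p+1}$.

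Next I would compute $\vec{x}^\top \vec{\Sigma}^{-1} \vec{X}$ block by block. Expanding gives
\[
\vec{x}^\top \vec{\Sigma}^{-1} \vec{X} = \bigl(1 + \mu^\top\Sigma^{-1}\mu\bigr) - x^\top\Sigma^{-1}\mu - \mu^\top\Sigma^{-1}X + x^\top\Sigma^{-1}X,
\]
and rearranging yields $1+(x-\mu)^\top\Sigma^{-1}(X-\mu) = w(x,X)$, proving \eqref{eqn: lem_weight_s}.

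For \eqref{eqn: lem_weight_sin}, I would repeat the identical computation with $\hat{\vec{\Sigma}}_{\rho}$, whose Schur complement is $(\hSigma_{\rho}+\hmu\hmu^\top)-\hmu\hmu^\top = \hSigma_{\rho}$. The regularization $\rho I_p$ added to $\hSigma$ ensures $\hSigma_{\rho}$ is invertible with probability one, even on events where $\hSigma$ itself is singular. The same algebra then gives $\vec{x}^\top \hat{\vec{\Sigma}}_{\rho}^{-1}\vec{X}_i = 1 + (x-\hmu)^\top \hSigma_{\rho}^{-1}(X_i - \hmu) = w_{n,\rho}(x,X_i)$. There is no substantive obstacle here; this is a purely algebraic identity, and the only thing to be careful about is that the Schur complement argument requires invertibility of $\Sigma$ and $\hSigma_{\rho}$, both of which hold under the stated assumptions.
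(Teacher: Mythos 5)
Your proposal is correct and follows essentially the same route as the paper: write out the block form of $\vec{\Sigma}$ (resp.\ $\hat{\vec{\Sigma}}_\rho$), invert via the Schur complement to get $\vec{\Sigma}^{-1} = \begin{pmatrix} 1 + \mu^\top\Sigma^{-1}\mu & -\mu^\top\Sigma^{-1} \\ -\Sigma^{-1}\mu & \Sigma^{-1}\end{pmatrix}$, and multiply out $\vec{x}^\top\vec{\Sigma}^{-1}\vec{X}$ to recover $1 + (x-\mu)^\top\Sigma^{-1}(X-\mu)$. You spell out the expansion more explicitly than the paper does, and you also correctly write $\vec{X}_i$ on the right-hand side of the empirical identity (the paper's statement has a minor typo, writing $\vec{X}$ there), but otherwise the argument is identical.
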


\begin{proof}
    For (\ref{eqn: lem_weight_s}), by definition one has
    \begin{align*}
        \vec{\Sigma} = 
        \begin{pmatrix}
            1 & \mu^\top\\
            \mu & \Sigma +\mu \mu^\top
        \end{pmatrix}
    \end{align*}
    Computing the inverse of the block matrix $\vec{\Sigma}$ then gives
    \begin{align*}
        \vec{\Sigma}^{-1}=
        \begin{pmatrix}
            1+\mu^\top \Sigma^- \mu & -\mu^\top \Sigma^- \\
            -\Sigma^-\mu & \Sigma^-
        \end{pmatrix}
    \end{align*}
    Finally we arrive at (\ref{eqn: lem_weight_s}) by computing $\vec{x}^\top \Lambda^{-1} \vec{X}$. (\ref{eqn: lem_weight_sin}) follows from similar arguments.    
\end{proof}

Denote $L_{\tau}=C_{\psi_2} \sqrt{(1+\tau) \log n}$.
\begin{lemma}
    \label{lem: s_sin}
    Suppose Assumption \ref{assumption: X}  holds. Set
    \begin{align*}
        W_{0n} &:= W_{0n}(x) = -\bar{X}^\top \hSigma_{\rho}^{-1}(x-\bar{X}) + \mu^\top \hSigma_{\rho}^{-1} (x-\mu)\\
        W_{1n} &:= W_{1n}(x) = \Sigma^{-1}(x-\mu) - \hSigma_{\rho}^{-1} (x-\bar{X})
    \end{align*}
    one has
    \begin{equation}
        w_{n,\rho}(x,X_i)-w(x,X_i) = W_{0n} + W_{1n}^\top X_i 
        \label{eqn: lem_s_sin_1}
    \end{equation}
    Moreover, for any $\rho \in \cbr{0,n^{-1}}$, $L \geq 1$ and $\tau \geq 0$, there exists constant $C>0$ independent of $n$ such that 
    \begin{align}
        &\PP\cbr{\sup_{x\in B_{\mu}(L)}\abs{W_{0n}(x)} > \frac{L}{\sqrt{n}} C(1+\tau)\log n }\lesssim n^{-(1+\tau)}  \label{eqn: lem_s_sin_W0}\\ 
        &\PP\cbr{\sup_{x\in B_{\mu}(L)}\norm{W_{1n}(x)} > \frac{L}{\sqrt{n}} C(1+\tau)\log n}
        \lesssim n^{-(1+\tau)} \label{eqn: lem_s_sin_W1}
    \end{align}
\end{lemma}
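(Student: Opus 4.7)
The identity (\ref{eqn: lem_s_sin_1}) is a direct algebraic consequence of expanding the two quadratic forms. Specifically, starting from
\begin{equation*}
w_{n,\rho}(x,X_i) - w(x,X_i) = (x-\bar X)^\top \hSigma_\rho^{-1}(X_i - \bar X) - (x-\mu)^\top \Sigma^{-1}(X_i - \mu),
\end{equation*}
I would collect the terms depending on $X_i$ as a linear form, yielding $W_{1n}^\top$ as the coefficient of $X_i$, with the remaining $X_i$-free terms assembling into $W_{0n}$.

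For the concentration bounds (\ref{eqn: lem_s_sin_W0}) and (\ref{eqn: lem_s_sin_W1}), the key observation is that $W_{0n}$ and $W_{1n}$ depend on the data only through $\bar X$ and $\hSigma_\rho$. The plan is to reduce both sup-bounds to two ingredients, on a common high-probability event:
\begin{enumerate}
    \item[(a)] $\|\bar X - \mu\| \lesssim \sqrt{(1+\tau)(\log n)/n}$ with probability at least $1 - n^{-(1+\tau)}$;
    \item[(b)] $\Onorm{\hSigma_\rho^{-1} - \Sigma^{-1}} \lesssim \sqrt{(1+\tau)(\log n)/n}$ with probability at least $1 - n^{-(1+\tau)}$.
\end{enumerate}
Item (a) follows from a standard sub-Gaussian tail bound for the centered sample mean, using Assumption~\ref{assumption: X} together with the vector-norm comparison in Lemma~\ref{lem: subG_norm}~(I). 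For item (b), I would first invoke the sub-exponential matrix concentration for sample covariances of sub-Gaussian vectors (e.g.\ Vershynin, Theorem~4.7.1) to deduce $\Onorm{\hSigma - \Sigma} \lesssim \sqrt{(1+\tau)(\log n)/n}$, and then apply the resolvent identity
\begin{equation*}
\hSigma_\rho^{-1} - \Sigma^{-1} = \hSigma_\rho^{-1}\bigl(\Sigma - \hSigma_\rho\bigr)\Sigma^{-1}
\end{equation*}
together with $\lambda_{\min}(\Sigma)\geq c_\Sigma$ and $\rho\leq n^{-1}$ to transfer the bound to the inverse; this also shows $\Onorm{\hSigma_\rho^{-1}} \lesssim 1$ on the same event.

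With (a) and (b) in hand, write
\begin{equation*}
W_{1n}(x) \;=\; (\Sigma^{-1} - \hSigma_\rho^{-1})(x-\mu) \;+\; \hSigma_\rho^{-1}(\bar X - \mu),
\end{equation*}
so the triangle inequality gives $\sup_{x\in B_\mu(L)} \|W_{1n}(x)\| \lesssim L\sqrt{(1+\tau)(\log n)/n}$, which is majorized by $L\,C(1+\tau)(\log n)/\sqrt n$. For $W_{0n}$, I would use the identity
\begin{equation*}
W_{0n}(x) \;=\; (\mu - \bar X)^\top \hSigma_\rho^{-1}(x-\bar X) \;+\; \mu^\top W_{1n}(x),
\end{equation*}
so the bound reduces to combining (a), the control of $W_{1n}$ above, and the boundedness of $\|\mu\|$ inherited from Assumption~\ref{assumption: X}. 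Critically, both $W_{0n}(x)$ and $W_{1n}(x)$ are affine in $x$, so the supremum over the ball $B_\mu(L)$ is attained on the boundary and no chaining is needed. A final union bound over the events in (a) and (b) delivers the desired probability $1 - O(n^{-(1+\tau)})$.

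The main obstacle, though a mild one, is ensuring that the sub-exponential tail for entries of $\hSigma - \Sigma$ delivers the $\sqrt{(\log n)/n}$ rate at the stated confidence level; one must check that Bernstein's inequality operates in its sub-Gaussian regime for deviations of size $\sqrt{(\log n)/n}$, which is automatic once $n$ is large enough that this quantity is $\ll 1$. The remainder of the argument is bookkeeping with the triangle inequality and the resolvent identity.
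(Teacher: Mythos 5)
Your proof is correct, and it takes a cleaner, more direct route than the paper's. You work on a single high-probability event where $\Onorm{\hSigma_\rho - \Sigma}$ and $\norm{\bar X - \mu}$ are both $\lesssim\sqrt{(1+\tau)(\log n)/n}$, use a resolvent identity to transfer the covariance bound to the inverse and simultaneously to control $\Onorm{\hSigma_\rho^{-1}}$, and then exploit affine-in-$x$ structure to pass through the supremum without chaining. The paper instead carries out a three-way case split on the magnitude of $\Onorm{\hSigma_\rho^{-1}}$ (namely $\leq 2$, between $2$ and $n$, and $>n$), which forces separate bookkeeping for $\rho = 0$ versus $\rho = n^{-1}$; the reason is that the paper proves the stronger statement that $\PP\{\sup_x |W_{jn}(x)| > Ls/\sqrt n\} \lesssim e^{-cs}$ \emph{uniformly} in $1 \leq s \leq \sqrt n$, and only then specializes to $s \asymp (1+\tau)\log n$. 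For $\rho = n^{-1}$ the deterministic bound $\Onorm{\hSigma_\rho^{-1}} \leq n$ is what handles the extreme regime $s > \sqrt n$, which is what makes the extra cases necessary in that more general form. For proving exactly the stated lemma, your single-event argument suffices, since the bad event $\{\Onorm{\hSigma - \Sigma} > c_\Sigma/2\}$ has probability $\lesssim e^{-cn} = o(n^{-(1+\tau)})$ regardless of whether $\hSigma$ might be singular at $\rho = 0$, so it absorbs both the degeneracy concern and the $\rho = n^{-1}$ case without further work. Two minor remarks: (i) be careful that Vershynin's sub-Gaussian covariance bound is typically stated for the uncentered version $n^{-1}\sum X_iX_i^\top$, so the centered estimator needs the standard small correction by $\Onorm{\bar X\bar X^\top}$; (ii) your derivation actually gives the sharper rate $L\sqrt{(1+\tau)(\log n)/n}$, which is dominated by the claimed $L(1+\tau)(\log n)/\sqrt n$ for $n \geq e$, so no loss there.
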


\begin{proof}
    Under Assumption \ref{assumption: X}, we can assume without loss of generality that $\mu = 0$ and $\Sigma = I_p$. Then
    \begin{align*}
        w_{n,\rho}(x,X_i) &= 1 + (x-\bar{X})^\top \hSigma_{\rho}^{-1} (X_i - \bar{X})\\
        w(x,X) &= 1+ x^\top X_i
    \end{align*}

    \noindent\textit{Proof of (\ref{eqn: lem_s_sin_1}):} Direct computation gives
    \begin{align*}
        w_{n,\rho}(x,X_i) - w(x,X_i) &= (x-\bar{X})^\top \hSigma_{\rho}^{-1} (X_i - \bar{X}) - x^\top X_i\\
        &= \underbrace{-x^\top \hSigma_{\rho}^{-1}\bar{X} + \bar{X}^\top \hSigma_{\rho}^{-1}\bar{X}   }_{=W_{0n}(x)} + 
        \rbr{\underbrace{(\hSigma_{\rho}^{-1}-I_p)x + \hSigma_{\rho}^{-1} \bar{X} }_{=W_{1n}(x)}}^\top  X_i
    \end{align*}

    \noindent\textit{Proof of (\ref{eqn: lem_s_sin_W0}), (\ref{eqn: lem_s_sin_W1}):} 
        For $W_{0n}(x)$:
        \begin{align*}
            \PP\cbr{\sup_{x\in B_{\mu}(L)}\abs{W_{0n}(x)} > t} &\leq \PP\rbr{ \cbr{\Onorm{\hSigma_{\rho}^{-1}}\leq 2} \cap \cbr{L\norm{\bar{X}}+\norm{\bar{X}}^2 \geq t/2} }\\
            &+\PP\rbr{ \cbr{2<\Onorm{\hSigma_{\rho}^{-1}}\leq n} \cap \cbr{L\norm{\bar{X}}+\norm{\bar{X}}^2 \geq t/n} }\\
            &+ \PP\cbr{ \Onorm{\hSigma_{\rho}^{-1}}> n}\\
            &\leq \underbrace{\PP \rbr{ \cbr{L\norm{\bar{X}}+\norm{\bar{X}}^2 \geq t/2} }}_{\mathbf{(I)}}\\
            &+\underbrace{\PP\rbr{ \cbr{2<\norm{\hSigma_{\rho}^{-1}}\leq n}} \wedge \PP\rbr{  \cbr{L\norm{\bar{X}}+\norm{\bar{X}}^2 \geq t/n} }}_{\mathbf{(II)}}\\
            &+ \underbrace{\PP\cbr{ \norm{\hSigma_{\rho}^{-1}}> n}}_{\mathbf{(III)}}
        \end{align*}
        Note that for any $s \geq 0$, one has 
        \begin{align*}
            \PP \cbr{L\norm{\bar{X}}+\norm{\bar{X}}^2 \geq s} &\leq \PP \cbr{ L\norm{\bar{X}} \geq \frac{\sqrt{n}L}{\sqrt{n}L+1} s} + \PP \cbr{\norm{\bar{X}}^2 \geq \frac{1}{\sqrt{n}L+1} s}\\
            &= \PP\cbr{\norm{\bar{X}} \geq \frac{\sqrt{n}}{\sqrt{n}L+1} s} + \PP \cbr{\norm{\bar{X}}^2 \geq \frac{1}{\sqrt{n}L+1} s}\\
            &\lesssim \exp (-cns^2/L^2) + \exp (-c\sqrt{n}s/L)
        \end{align*}
        \begin{itemize}
            \item If $\rho=0$, then
            \begin{align*}
                \mathbf{(II)} + \mathbf{(III)} &\leq \PP \rbr{\Onorm{\hSigma_{\rho}^{-1}} >2}\\
                &\leq \PP \rbr{\Onorm{\hSigma_{\rho} - I_p} > 1/2}\\
                &\lesssim \exp(-cn)
            \end{align*}
            Hence
            \begin{align*}
                \PP\cbr{\sup_{x\in B_{\mu}(L)}\abs{W_{0n}(x)} > t} &\lesssim  \exp (-cnt^2/L^2) + \exp (-c\sqrt{n}t/L)  + \exp(-cn)\\
                &\lesssim 
                \begin{cases}
                    \exp (-c\sqrt{n}t/L) + \exp(-cn), & t\geq L/\sqrt{n} \\
                    \exp (-cnt^2/L^2) + \exp(-cn), & t < L /\sqrt{n}
                \end{cases}
            \end{align*}
            or equivalently for any $s > 0 $,
            \begin{align*}
                \PP\cbr{\sup_{x\in B_{\mu}(L)}\abs{W_{0n}(x)} > \frac{L}{\sqrt{n}}s } &  \lesssim 
                \begin{cases}
                    \exp(-cs^2) & s<1\\
                    \exp(-cs), &1\leq s\leq n\\
                    \exp(-cn), & s >n
                \end{cases}
            \end{align*}
            \item If $\rho = 1/n$, then $\mathbf{III}=0$. As a result, for any $ s \geq 1$,
            \begin{align*}
                &\quad \ \PP\cbr{\sup_{x\in B_{\mu}(L)}\abs{W_{0n}(x)} > s\frac{L}{\sqrt{n}}}\\
                &\lesssim \exp(-cs) + \exp(-cn) \wedge 
                \begin{cases}
                    \exp(-cs^2/n^2), & 1 \leq s\leq n\\
                    \exp(-cs/n), & s > n
                \end{cases}
                \\
                &= \exp(-cs) + 
                \begin{cases}
                    \exp(-cn), & 1 \leq s \leq n^2\\
                    \exp(-cs/n), & s > n^2
                \end{cases}\\
                &\lesssim
                \begin{cases}
                    \exp(-cs), & 1 \leq s \leq n\\
                    \exp(-cn), & n< s \leq n^2\\
                    \exp(-cs/n), & s >n^2
                \end{cases}
            \end{align*}
        \end{itemize}

        \noindent\textit{For $W_{1n}(x)$:} note that $\hSigma_{\rho}^{-1} -I_p = -\hSigma_{\rho}^{-1} \rbr{\hSigma_{\rho}-I_p}$, then one has
        \begin{align*}
            &\quad \ \PP\cbr{\sup_{x\in B_{\mu}(L)}\norm{W_{1n}(x)} > t}\\
            &\leq \PP \cbr{ \Onorm{\hSigma_{\rho}^{-1}} \Onorm{\hSigma_{\rho}-I_p} L + \Onorm{\hSigma_{\rho}^{-1}} \norm{\bar{X}} \geq t }\\
            &\leq \PP \cbr{ \Onorm{\hSigma_{\rho}^{-1}} \Onorm{\hSigma_{\rho}-I_p} L \geq \frac{L}{L+1}t} + \PP \cbr{\Onorm{\hSigma_{\rho}^{-1}} \norm{\bar{X}} \geq \frac{1}{L+1}t}
        \end{align*}
        Hence for $L \geq 1$, one has
        \begin{align*}
            &\quad \ \PP\cbr{\sup_{x\in B_{\mu}(L)}\norm{W_{1n}(x)} > t}\\
            &\leq \PP \cbr{ \Onorm{\hSigma_{\rho}^{-1}} \Onorm{\hSigma_{\rho}-I_p}  \geq \frac{t}{2L}} + \PP \cbr{\Onorm{\hSigma_{\rho}^{-1}} \norm{\bar{X}} \geq \frac{t}{2L}}\\
            &\leq \PP \rbr{\cbr{\Onorm{\hSigma_{\rho}^{-1}}\leq 2} \cap \cbr{\Onorm{\hSigma_{\rho} - I_p} \geq t/(4L)}}\\
            &+ \PP \rbr{\cbr{2 < \Onorm{\hSigma_{\rho}^{-1}}\leq n} \cap \cbr{\Onorm{\hSigma_{\rho} - I_p} \geq t/(2nL)}}\\
            &+ \PP \rbr{\cbr{\Onorm{\hSigma_{\rho}^{-1}}> n} } \\
            &+ \PP \rbr{\cbr{\Onorm{\hSigma_{\rho}^{-1}}\leq 2} \cap \cbr{\norm{\bar{X}} \geq t/(4L)}}\\
            &+ \PP \rbr{\cbr{2 < \Onorm{\hSigma_{\rho}^{-1}}\leq n} \cap \cbr{\norm{\bar{X}} \geq t/(2nL)}}\\
            &+ \PP \rbr{\cbr{\Onorm{\hSigma_{\rho}^{-1}}> n} } \\
            &\lesssim \underbrace{\PP \cbr{\Onorm{\hSigma_{\rho} - I_p} \geq \frac{t}{4L}} +\PP \cbr{\norm{\bar{X}}\geq \frac{t}{4L}}}_{\mathbf{(i)}}\\
            &+ \underbrace{\PP \cbr{2< \Onorm{\hSigma_{\rho}^{-1}}\leq n} \wedge \sbr{\PP \cbr{\Onorm{\hSigma_{\rho} - I_p} \geq t/(2nL)}+ \PP \cbr{\norm{\bar{X}}\geq t/(2nL)}} }_{\mathbf{(ii)}}\\
            &+ \underbrace{\PP \rbr{\cbr{\Onorm{\hSigma_{\rho}^{-1}}> n} }}_{\mathbf{(iii)}}
        \end{align*}
        \begin{itemize}
            \item If $\rho = 0$, then
            \begin{align*}
                \mathbf{(ii)} + \mathbf{(iii)} \leq \PP \cbr{\Onorm{\hSigma_{\rho}^{-1}}> 2} \lesssim \exp(-cn)
            \end{align*}
            As a result, for any $L \geq 1$,
            \begin{align*}
                \PP\cbr{\sup_{x\in B_{\mu}(L)}\norm{W_{1n}(x)} > t} &\lesssim \exp(-cnt/L) \vee \exp(-cnt^2/L^2) + \exp(-cnt^2/L^2) + \exp(-cn)\\
                &= 
                \begin{cases}
                    \exp(-cnt^2/L^2), & t<L\\
                    \exp(-cn), & t \geq L
                \end{cases}
            \end{align*}
            Therefore, for any $s> 0$, $L \geq 1$,
            \begin{align*}
                \PP\cbr{\sup_{x\in B_{\mu}(L)}\norm{W_{1n}(x)} > \frac{L}{\sqrt{n}}s } \lesssim 
                \begin{cases}
                    \exp(-cs^2), & s<\sqrt{n}\\
                    \exp(-cn), & s>\sqrt{n}
                \end{cases}
            \end{align*}
            \item If $\rho = n^{-1}$, then $\mathbf{(iii)}=0$. As a result, for any $t = sL/\sqrt{n}$ and $L \geq 1$,
            \begin{align*}
                &\quad \ \PP\cbr{\sup_{x\in B_{\mu}(L)}\norm{W_{1n}(x)} > \frac{L}{\sqrt{n}}s }\\
                &\lesssim 
                \begin{cases}
                    \exp(-cs^2), & 0< s < \sqrt{n}\\
                    \exp(-cn), & \sqrt{n} \leq s < n^{3/2}\\
                    \exp(-cs/\sqrt{n}), &   s \geq n^{3/2}
                \end{cases}
            \end{align*}
        \end{itemize}
        Combining results for $W_{0n}(x)$ and $W_{1n}(x)$, one can obtain that for any $\rho \in \cbr{0,n^{-1}}$, $L \geq 1$, $\tau \geq 0$ and $1\leq s \leq \sqrt{n}$,
        \begin{align*}
            \PP\cbr{\sup_{x\in B_{\mu}(L)}\abs{W_{0n}(x)} > \frac{L}{\sqrt{n}}s}  \vee 
            \PP\cbr{\sup_{x\in B_{\mu}(L)}\norm{W_{1n}(x)} > \frac{L}{\sqrt{n}}s}
            \lesssim \exp(-cs)
        \end{align*}
        By taking $s = O \rbr{(1+\tau) \log n}$, the proof is then complete.
\end{proof}

\begin{lemma}
    \label{lem: sf_snf}
    Let $\cbr{(X_i,Q_i)}_{i=1}^n$ be i.i.d. samples satisfying Assumption \ref{assumption: X}-\ref{assumption: bdd_Q}. Let 
    $(\cV,\norm{\cdot})$ be a normed vector space and $\psi: \cS_d^{++} \times \Theta \to \cV$ is a  mapping parametrized by $\theta \in \Theta$. Suppose there exists an event $\cE_0$ under which
    \begin{itemize}
        \item $\norm{X_i-\mu} \leq L_n$ for $i=1,...,n$.
        \item $\norm{\psi(Q_i, \theta)} \leq K_n$ for $i=1,...,n$ uniformly for $\theta \in \Theta$.
    \end{itemize}
    Denote
    \begin{align*}
        G(L, \Theta; n)&:=\sup_{x \in B_\mu(L)} \sup_{\theta \in \Theta} \norm{\frac{1}{n}\sum_{i=1}^{n} \sbr{w_{n,\rho}(x,X_i) - w(x,X_i)} \psi(Q_i; \theta) }
    \end{align*}
    Then for any $\rho \in \cbr{0,n^{-1}}$, $L \geq 1$ and any $\tau \geq 0$, there exists constant $C$ indpendent of $n$ such that
    \begin{align*}
        \PP \cbr{ G(L, \Theta; n) \geq C K_n(L_n + 1) \frac{L}{\sqrt{n}} (1+\tau)\log n  }  
        \leq &O \rbr{ n^{-(1+\tau)}} + \PP(\cE_0^c)
    \end{align*}
\end{lemma}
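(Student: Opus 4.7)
The strategy is to reduce the claim to the two tail bounds on $W_{0n}$ and $W_{1n}$ already furnished by Lemma \ref{lem: s_sin}, while handling the supremum over $\theta$ through a simple pointwise bound that exploits the event $\cE_0$.

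First, I would invoke the decomposition (\ref{eqn: lem_s_sin_1}), which writes $w_{n,\rho}(x,X_i)-w(x,X_i)=W_{0n}(x)+W_{1n}(x)^\top X_i$ with $W_{0n}(x), W_{1n}(x)$ independent of $\theta$. Substituting into the sum defining $G(L,\Theta;n)$ and splitting $X_i=\mu+(X_i-\mu)$ in the second term gives
\begin{equation*}
    \frac{1}{n}\sum_{i=1}^{n} \sbr{w_{n,\rho}(x,X_i)-w(x,X_i)}\psi(Q_i;\theta) = \rbr{W_{0n}(x)+W_{1n}(x)^\top \mu}\bar{\psi}_n(\theta) + W_{1n}(x)^\top \bar{\Delta\psi}_n(\theta),
\end{equation*}
where $\bar\psi_n(\theta):=n^{-1}\sum_{i}\psi(Q_i;\theta)$ and $\bar{\Delta\psi}_n(\theta):=n^{-1}\sum_{i}(X_i-\mu)\psi(Q_i;\theta)$. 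The key point is that only $W_{0n},W_{1n}$ depend on $x$, and only $\bar\psi_n,\bar{\Delta\psi}_n$ depend on $\theta$, which decouples the two suprema.

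Second, on the event $\cE_0$ one has $\norm{\psi(Q_i;\theta)}\leq K_n$ and $\norm{X_i-\mu}\leq L_n$ uniformly in $\theta$ and $i$, so
\begin{equation*}
    \sup_{\theta\in\Theta}\norm{\bar\psi_n(\theta)} \leq K_n, \qquad \sup_{\theta\in\Theta}\norm{\bar{\Delta\psi}_n(\theta)} \leq K_n L_n.
\end{equation*}
Combined with Cauchy-Schwarz for the term $W_{1n}(x)^\top\mu$ (noting that Assumption \ref{assumption: X} forces $\norm{\mu}\lesssim C_{\psi_2}$, hence $\norm{\mu}\lesssim 1$), this yields on $\cE_0$
\begin{equation*}
    G(L,\Theta;n) \;\leq\; K_n\sup_{x\in B_\mu(L)}\abs{W_{0n}(x)} \;+\; K_n(L_n+\norm{\mu})\sup_{x\in B_\mu(L)}\norm{W_{1n}(x)}.
\end{equation*}

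Third, I would apply (\ref{eqn: lem_s_sin_W0}) and (\ref{eqn: lem_s_sin_W1}) from Lemma \ref{lem: s_sin} with the given $\tau$ and take a union bound: off an event of probability $O(n^{-(1+\tau)})$, both suprema are at most $C L n^{-1/2}(1+\tau)\log n$. Combining with the bound on $\cE_0$ and using $\norm{\mu}\lesssim 1$ to absorb the constant into $L_n+1$, I obtain
\begin{equation*}
    G(L,\Theta;n) \leq C\,K_n(L_n+1)\,\frac{L}{\sqrt{n}}(1+\tau)\log n
\end{equation*}
with failure probability at most $O(n^{-(1+\tau)})+\PP(\cE_0^{c})$, as claimed. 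The argument is essentially a deterministic decoupling once Lemma \ref{lem: s_sin} is in hand; I do not anticipate any significant obstacle beyond verifying that the constants combine cleanly and that Assumption \ref{assumption: X} indeed controls $\norm{\mu}$ uniformly so that $L_n+\norm{\mu}\lesssim L_n+1$.
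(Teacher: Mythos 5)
Your proposal is correct and follows essentially the same route as the paper: decompose $w_{n,\rho}-w$ via $W_{0n}(x)+W_{1n}(x)^\top X_i$ from Lemma \ref{lem: s_sin}, bound the $\theta$-dependent averages deterministically by $K_n$ and $K_nL_n$ on $\cE_0$, and finish with the tail bounds (\ref{eqn: lem_s_sin_W0})--(\ref{eqn: lem_s_sin_W1}) plus a union bound. The only cosmetic difference is that you explicitly recenter $X_i=\mu+(X_i-\mu)$ and control $\norm{\mu}$ via Assumption \ref{assumption: X}, whereas the paper absorbs this by the without-loss-of-generality normalization $\mu=0$ inherited from Lemma \ref{lem: s_sin}.
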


\begin{proof}
    Given $s\geq 1$, define event $\cE_1(s)$
    \begin{align*}
        \cE_1(s) := \sup_{x \in B_\mu(L)} \sup_{\theta \in \Theta} \norm{\frac{1}{n}\sum_{i=1}^{n} \sbr{w_{n,\rho}(x,X_i) - w(x,X_i)} \psi(Q_i; \theta) } \geq 2K_n(L_n + 1)\cdot \frac{L}{\sqrt{n}} s
    \end{align*}
    Then
    \begin{align*}
        \PP \rbr{\cE_1(s)} \leq \PP \rbr{\cE_1(s) \cap \cE_0} + \PP \rbr{\cE_0^c}
    \end{align*}
    By (\ref{eqn: lem_s_sin_1}) in Lemma \ref{lem: s_sin}, one can obtain for any $x$ and $\theta \in \Theta$,
    \begin{align*}
        &\quad \norm{\frac{1}{n}\sum_{i=1}^{n} w_{n,\rho}(x,X_i) \psi(Q_i;\theta) - \frac{1}{n}\sum_{i=1}^{n} w(x,X_i) \psi(Q_i;\theta)} \\
        &= \norm{W_{0n}(x) \frac{1}{n}\sum_{i=1}^{n} \psi(Q_i;\theta) + \frac{1}{n} \sum_{i=1}^{n} W_{1n}(x)^\top X_i \psi(Q_i;\theta)}\\
        &\leq \abs{W_{0n}(x)}\cdot  \frac{1}{n}\sum_{i=1}^n \norm{\psi(Q_i;\theta)} + \norm{W_{1n}(x)}_2 \frac{1}{n} \sum_{i=1}^{n} \norm{X_i}_2 \norm{\psi(Q_i;\theta)}
    \end{align*}
    Hence
    \begin{align*}
        \PP(\cE_1(s) \cap \cE_0) 
        &\leq \PP \cbr{\sup_{x\in B_{\mu}(L)} \sbr{\abs{W_{0n}(x)} \cdot K_n + \twonorm{W_{1n}(x)} \cdot L_n K_n} \geq 2K_n(L_n + 1)\cdot \frac{L}{\sqrt{n}} s }\\
        &= \PP\cbr{\sup_{x\in B_{\mu}(L)} \sbr{\abs{W_{0n}(x)} + \twonorm{W_{1n}(x)}}  \geq 2\cdot \frac{L}{\sqrt{n}} s }\\
        &\leq \PP\cbr{\sup_{x\in B_{\mu}(L)}\abs{W_{0n}(x)}  \geq  \frac{L}{\sqrt{n}} s }+ \PP\cbr{\sup_{x\in B_{\mu}(L)}\twonorm{W_{1n}(x)}  \geq \frac{L}{\sqrt{n}} s }
    \end{align*}
    Applying (\ref{eqn: lem_s_sin_W0}) and (\ref{eqn: lem_s_sin_W0}) in Lemma \ref{lem: s_sin} gives the desired result.
\end{proof}

We will need uniform upper bounds for various quantities of the form
\begin{align*}
    \sup_{\theta \in \Theta} f(Z_1^n;\theta)
\end{align*}
where $Z_1^n$ denotes $(Z_1,...,Z_n)$.
To this end, we decompose the above quantity as follows.
\begin{align*}
    \underbrace{\sup_{\theta \in \Theta} f(Z_1^n;\theta) - \EE \sup_{\theta \in \Theta} f(Z_1^n;\theta)}_{\mathrm{perturbation}} + \underbrace{\EE \sup_{\theta \in \Theta} f(Z_1^n;\theta)}_{\mathrm{expectation}}
\end{align*}
The expectation term is bounded in Lemma \ref{lem: Rademacher} below with a chaining argument, while the perturbation term is shown to concentrate in Lemma \ref{lem: bdd_diff} by exploiting its bounded difference property. The proof of Lemma \ref{lem: Rademacher} is deferred to Appendix \ref{subsubsec: prf_lem_Rademacher}. 
\begin{lemma}
    \label{lem: Rademacher}
    Given  a function class
    \begin{equation*}
        \cF(\Theta)=\left\{ f(\cdot;\theta): \RR^p \to \RR: \theta \in \Theta \right\}
    \end{equation*}
    where $f$ is continuous jointly in $(z,\theta)$ and $(\Theta,d)$ is a separable metric space with finite diameter $D := \sup_{\theta_1, \theta_2 \in \Theta} d(\theta_1,\theta_2) \gtrsim 1$. 
    Suppose a random vector $Z \in \RR^p$ satisfies the following inequality
    \begin{equation}
        \left\|f(Z;\theta_1)-f(Z;\theta_2) \right\|_{\psi_2} \leq  \tau\left( d(\theta_1,\theta_2) \right), \quad \forall \theta_1, \theta_2 \in \Theta
        \label{eqn: lem_rad_lip}
    \end{equation}
    for some increasing function $\tau:\RR^+ \to \RR^+$ that satisfies $\tau(0)=0, \tau(+\infty)= + \infty$. Then
    \begin{itemize}
        \item the following inequality holds
        \begin{equation}
            \begin{aligned}
                \EE \sup_{\theta \in \Theta} \left| f(Z;\theta) \right| \leq  C\int_0^{\tau(D)} \sqrt{\log N(\tau^{-1}(t);\Theta)} \cdot dt + \sup_{\theta \in \Theta} \EE \left| f(Z;\theta) \right|
            \end{aligned}
            \label{eqn: lem_rad_chain}
        \end{equation}
        where $C>0$ is a fixed absolute constant and $N(\epsilon;\Theta)$ is the $\epsilon$-covering number of $\Theta$.
        \item Specifically, if $\tau$ has the form
        \begin{equation*}
            \tau(\epsilon) = \frac{K}{\sqrt{n}} \left( \epsilon \vee \epsilon^{\alpha_0}\right)
        \end{equation*}
        for some constant $\alpha_0 >0$, then for any $D \gtrsim 1$, one has
        \begin{equation}
            \int_0^{\tau(D)} \sqrt{\log N(\tau^{-1}(t);\Theta)} \cdot dt \lesssim  \frac{KD^{1 \vee \alpha_0}}{\sqrt{n}} \sqrt{\log^+ D} 
            \label{eqn: lem_rad_chain_holder}
        \end{equation}
        \item Specifically, if $Z_1,\ldots,Z_n \in \RR^p$ are i.i.d. and $f:(\RR^p)^k \times \Theta \to \RR$ is equal to 
        \begin{equation*}
            \begin{aligned}
                f(Z_1^n;\theta) = \opnorm{\frac{1}{n} \sum_{i=1}^{n} \psi(Z_i;\theta) - \EE \frac{1}{n} \sum_{i=1}^{n} \psi(Z_i;\theta)} , \quad Z_1^n:= (Z_1,\ldots,Z_n)
            \end{aligned}
        \end{equation*}
        where $\psi(z;\theta) \in \cM_{k,s}(\RR^m; \RR^m)$ is a symmetric $k$-linear operator for any $z \in \RR^p,\theta \in \Theta$. Then the following inequality
        \begin{equation}
            \gnorm{f(Z_1^n;\theta)-f(Z_1^n;\ttheta)} \lesssim \frac{1}{\sqrt{n}} \gnorm{\opnorm{\psi(Z_1;\theta) - \psi(Z_1;\ttheta) }}
            \label{eqn: lem_rad_lip_avg}
        \end{equation}
        holds.
    \end{itemize}
\end{lemma}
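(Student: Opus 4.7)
The lemma splits into three conceptually independent claims, so my plan is to handle each in turn. For (\ref{eqn: lem_rad_chain}), I would run the classical chaining (Dudley entropy) argument. Fix a base point $\theta_0 \in \Theta$ and decompose $\sup_\theta |f(Z;\theta)| \leq |f(Z;\theta_0)| + \sup_\theta |f(Z;\theta) - f(Z;\theta_0)|$; taking expectations, the first piece contributes the additive $\sup_\theta \EE |f(Z;\theta)|$ term on the right of (\ref{eqn: lem_rad_chain}). The hypothesis (\ref{eqn: lem_rad_lip}) says the centered process $\theta \mapsto f(Z;\theta) - f(Z;\theta_0)$ has sub-Gaussian increments in the pseudo-metric $d'(\theta_1,\theta_2) := \tau(d(\theta_1,\theta_2))$, so Dudley's inequality yields $\EE \sup_\theta |f(Z;\theta) - f(Z;\theta_0)| \lesssim \int_0^{\tau(D)} \sqrt{\log N(\epsilon;\Theta,d')}\, d\epsilon$. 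Since $\tau$ is an increasing bijection from $[0,\infty)$ onto itself, $N(t;\Theta,d') = N(\tau^{-1}(t);\Theta,d)$, reproducing the stated chaining integral. Separability of $\Theta$ lets us restrict to a countable dense subset so that the chaining sums are legitimate.

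For (\ref{eqn: lem_rad_chain_holder}) I would change variables $\epsilon = \tau^{-1}(t)$, $dt = \tau'(\epsilon)\, d\epsilon$, turning the chaining integral into $\int_0^D \sqrt{\log N(\epsilon;\Theta)}\, \tau'(\epsilon)\, d\epsilon$. For the specific form $\tau(\epsilon) = K(\epsilon \vee \epsilon^{\alpha_0})/\sqrt{n}$, one has $\tau'(\epsilon) \asymp K \epsilon^{(\alpha_0 \wedge 1)-1}/\sqrt{n}$ for $\epsilon<1$ and $\tau'(\epsilon) \asymp K\epsilon^{(\alpha_0 \vee 1)-1}/\sqrt{n}$ for $\epsilon>1$. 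Using the implicit polynomial-covering estimate $\log N(\epsilon;\Theta) \lesssim 1 + \log^+(D/\epsilon)$ that holds for the finite-dimensional parameter sets actually used in the paper, I would split the integral at $\epsilon=1$: the small-$\epsilon$ piece is a convergent integral contributing $O(K\sqrt{\log^+ D}/\sqrt{n})$ because $\alpha_0 \wedge 1 > 0$ keeps $\epsilon^{(\alpha_0\wedge 1)-1}$ integrable near $0$, while the large-$\epsilon$ piece is bounded by $\sqrt{\log^+ D}\int_1^D \epsilon^{(\alpha_0 \vee 1)-1}\, d\epsilon \asymp D^{1 \vee \alpha_0}\sqrt{\log^+ D}$, which dominates once $D \gtrsim 1$ and yields (\ref{eqn: lem_rad_chain_holder}).

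For (\ref{eqn: lem_rad_lip_avg}) the key reduction is the reverse triangle inequality for the operator norm: $|f(Z_1^n;\theta) - f(Z_1^n;\ttheta)| \leq \opnorm{n^{-1}\sum_{i=1}^n (U_i - \EE U_i)}$ with $U_i := \psi(Z_i;\theta) - \psi(Z_i;\ttheta)$, after which, by monotonicity of $\gnorm{\cdot}$, it suffices to bound the sub-Gaussian norm of this nonnegative scalar. I would then pass to a fixed $1/2$-net of the unit ball in the relevant finite-dimensional ambient space (the size of such a net depends only on $d$ and $k$), reducing the operator norm up to a universal constant to a maximum over finitely many centered i.i.d. scalar averages of the form $n^{-1}\sum_i \inner{u}{(U_i - \EE U_i) v^{\otimes (k-1)}}$. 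Each such scalar sum has $\psi_2$ norm $\lesssim \gnorm{\opnorm{U_1}}/\sqrt{n}$ by the classical Hoeffding bound for centered sub-Gaussians, and the max-of-sub-Gaussians estimate absorbs the finite net-size into the implicit constant, delivering (\ref{eqn: lem_rad_lip_avg}). I expect this third part to be the main technical obstacle: one has to verify that the symmetric $k$-linear norm $\opnorm{\cdot}$ of Appendix \ref{subsec: appx_bg_func} truly reduces to a sup over a fixed-dimensional sphere with constants depending only on $d,k$, and to preserve the $n^{-1/2}$ rate through the sub-Gaussianity bookkeeping rather than losing it to a naive triangle bound on $\opnorm{n^{-1}\sum U_i}$.
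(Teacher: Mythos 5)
Your proposal is correct and follows essentially the same route as the paper's proof: a dyadic chaining argument in the pseudo-metric $\tau\circ d$ (restricting to a countable dense subset by separability) for (\ref{eqn: lem_rad_chain}), a split of the entropy integral at $\epsilon=1$ combined with the polynomial covering bound $\log N(\epsilon;\Theta)\lesssim \log^+(D/\epsilon)$ for (\ref{eqn: lem_rad_chain_holder}), and a reverse triangle inequality followed by a fixed-dimensional net reduction to centered scalar sub-Gaussian averages for (\ref{eqn: lem_rad_lip_avg}). The only blemish is the index slip $v^{\otimes(k-1)}$, which should be $v^{\otimes k}$ so that $\inner{u}{(U_i-\EE U_i)\cdot v^{\otimes k}}$ is a scalar; also note that the covering estimate you flag as implicit is likewise left unstated in the paper's own proof, so you are on the same footing there.
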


\begin{remark}
    (\ref{eqn: lem_rad_lip_avg}) will be applied in three ways as follows.
    \begin{itemize}
        \item $m=1, k=1$: $\psi(z;\theta) \in \RR$ and $\opnorm{\psi(z;\theta)}$ reduces to the absolute value of $\psi(z;\theta)$. Examples include $\psi(X,Q;x,S) = w(x,X)W^2(Q,S)$ in Lemma \ref{lem: prf_conv_lem_slow}
        \item $m=d, k=1$: $\psi(z;\theta) \in \cL(\RR^m; \RR^m)$ can be viewed as a matrix. Moreover, $\opnorm{\psi(z;\theta)}$ reduces to the matrix operator norm. Examples include $\psi(X,Q;x) = w(x,X)(T^Q_{Q^*(x)} - I_d)$ in Lemma \ref{lem: conc_tld_exp}.
        \item $m = d\times d, k=1$: this is a special case of the previous one by identifying $d \times d$ matrices as a vector in $\RR^{d^2}$. Examples include $\psi(X,Q;x) = -w(x,X)dT^Q_{Q^*(x)}$ in Lemma \ref{lem: conc_tld_exp}.
        \item $m= d \times d, k =2$: Examples include $\psi(X,Q;x,S) = w(x,X)d^2 T^Q_S$ in Lemma \ref{lem: conc_tld_exp}.
    \end{itemize}
\end{remark}

\begin{lemma}
    \label{lem: bdd_diff}
     Let $\psi: \RR^p \times \Theta \to \RR^q$ be a class of functions indexed by $\theta \in \Theta$. Suppose $\psi$ is uniformly bounded in the sense that there exists a finite constant $K$ such that $\norm{\psi(x;\theta)}_2 \leq K$ for any $x \in \RR^p$ and $\theta \in \Theta$. For any fixed $y_0 \in \cW$, define $f: (\RR^p)^n \times \Theta \to \RR^q$ as 
    \begin{equation*}
        f(x_1^n):=\sup_{\theta \in \Theta}\norm{\frac{1}{n} \sum_{i=1}^{n} \psi(x_i;\theta) - y_0}_2
    \end{equation*}
    Then $f$ satisfies the bounded difference property with parameter $2K/n$, i.e. for any $x_1^n,\tx_1^n \in (\RR^p)^n$ such that $ \sum_{i=1}^n \II(x_i \neq \tx_i) \leq 1$,
    \begin{equation}
        \norm{f(x_1^n) - f(\tx_1^n)}_2 \leq \frac{2K}{n}, 
        \label{eqn: bdd_diff_2}
    \end{equation}
    Moreover, suppose $X_1,\ldots,X_n$ are i.i.d. random element in $\RR^p$, then
    \begin{equation}
        \gnorm{f(X_1^n) - \EE f(X_1^n) } \lesssim \frac{2K}{\sqrt{n}}
        \label{eqn: bdd_diff_3}
    \end{equation}
\end{lemma}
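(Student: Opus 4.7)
The proof naturally splits along the two displayed claims, and both are short. My plan is to handle the bounded-difference statement \eqref{eqn: bdd_diff_2} by a direct triangle-inequality computation, and then deduce the sub-Gaussian concentration \eqref{eqn: bdd_diff_3} from McDiarmid's inequality.

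For \eqref{eqn: bdd_diff_2}, fix $x_1^n, \tx_1^n \in (\RR^p)^n$ that agree except possibly in coordinate $i_0$. Since $\sup_\theta \norm{a(\theta)}_2 - \sup_\theta \norm{b(\theta)}_2 \leq \sup_\theta \norm{a(\theta) - b(\theta)}_2$ by the reverse triangle inequality applied pointwise in $\theta$, we have
\begin{equation*}
    \abs{f(x_1^n) - f(\tx_1^n)} \leq \sup_{\theta \in \Theta} \norm{\frac{1}{n}\sum_{i=1}^n \rbr{\psi(x_i;\theta) - \psi(\tx_i;\theta)}}_2 = \sup_{\theta \in \Theta} \frac{1}{n}\norm{\psi(x_{i_0};\theta) - \psi(\tx_{i_0};\theta)}_2.
\end{equation*}
Bounding each summand by $2K$ using the uniform bound $\norm{\psi(\cdot;\theta)}_2 \leq K$ and the triangle inequality gives the desired $2K/n$. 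Strictly speaking, $f$ is $\RR^q$-valued only if we interpret it as the scalar $\sup_\theta \norm{\cdot}_2$; either way, the estimate is componentwise, so the same bound holds after taking norms on the left.

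For \eqref{eqn: bdd_diff_3}, the bounded-difference property allows us to invoke McDiarmid's inequality with $c_i = 2K/n$ for all $i$, giving
\begin{equation*}
    \PP \cbr{\abs{f(X_1^n) - \EE f(X_1^n)} \geq t} \leq 2 \exp \rbr{- \frac{2t^2}{\sum_{i=1}^n c_i^2}} = 2 \exp \rbr{- \frac{n t^2}{2K^2}}, \quad \forall t \geq 0.
\end{equation*}
By the equivalence of tail decay and $\psi_2$-norm (Lemma \ref{lem: norm_equiv}), this is equivalent to $\gnorm{f(X_1^n) - \EE f(X_1^n)} \lesssim K/\sqrt{n}$, which matches the claimed bound up to a constant.

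I do not expect any genuine obstacle here: the key subtlety is simply that the sup over $\theta$ commutes with the triangle inequality in the right direction, which is what makes the bounded-difference constant independent of $\Theta$. All the heavy lifting (e.g., approximating $\sup_\theta$ by a supremum over a countable dense subset using separability of $\Theta$, so that measurability of $f(X_1^n)$ is assured) is standard and does not affect the final estimate.
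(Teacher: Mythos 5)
Your proof is correct and follows essentially the same route as the paper: a reverse-triangle-inequality argument in $\theta$ reduces the bounded-difference constant to the single perturbed summand, and the concentration claim then follows from the bounded differences inequality (the paper cites Corollary 2.21 of Wainwright where you invoke McDiarmid plus the tail/$\psi_2$ equivalence, which is the same thing). Your remark that $f$ is really scalar-valued despite the statement's $\RR^q$ codomain correctly identifies a typo in the lemma statement and does not affect the argument.
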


\begin{proof}
    Without loss of generality, assume $x_1 \neq \tx_1$ and $x_j= \tx_j$ for $j=2,\ldots,n$. Then one has
    \begin{align*}
        &\norm{\frac{1}{n} \sum_{i=1}^{n} \psi(x_i;\theta) - y_0}_2 - \sup_{\ttheta \in \Theta}\norm{\frac{1}{n} \sum_{i=1}^{n} \psi(\tx_i;\ttheta) - y_0}_2\\
        \leq & \norm{\frac{1}{n} \sum_{i=1}^{n} \psi(x_i;\theta) - y_0}_2 - \norm{\frac{1}{n} \sum_{i=1}^{n} \psi(\tx_i;\theta) - y_0}_2\\
        \leq & \norm{\frac{1}{n} \sum_{i=1}^{n} \psi(x_i;\theta) - y_0 - \frac{1}{n} \sum_{i=1}^{n} \psi(\tx_i;\theta) + y_0}_2\\
        = & \norm{\frac{1}{n} \psi(x_1;\theta) - \psi(\tx_1;\theta)}_2\\
        \leq & \frac{2K}{n}
    \end{align*}
    Taking supremum over $\theta$ then gives 
    \begin{equation*}
        f(x_1^n) - f(\tx_1^n) \leq \frac{2K}{n}
    \end{equation*}
    The other direction can be obtained with the role $x_1^n$ and $\tx_1^n$ reversed. 
    Therefore we get (\ref{eqn: bdd_diff_2}). By Corollary 2.21 in \citet{wainwright}, we get (\ref{eqn: bdd_diff_3}). The proof is then complete.
    
\end{proof}

Finally, even though Theorem \ref{thm: unfm_conv} shows fast convergence $\hQ_{\rho}(x)\to Q^*(x)$, one needs $\hQ_{\rho}(\bar{X})\to Q^*(\mu)$ when considering power in Theorem \ref{thm: power}, and this is stated in Lemma \ref{lem: barycenter} below. A crucial observation here is that $\hQ_{\rho}(\bar{X})$ and $Q^*(\mu)$ are the empirical and population Fr\'echet mean respectively. The proof is built upon \citet{rig22_fast_alex} and \citet{alt21_averaging}. 
\begin{lemma}
    \label{lem: barycenter}
    Assume Assumption \ref{assumption: X} and \ref{assumption: bdd_Q} hold. Then there exists an event $\tE$ with probability at least $1-n^{-100}$ under which the following holds
    \begin{equation}
        W^2 \rbr{\hQ_{\rho}(\bar{X}), Q^*(\mu)} \lesssim \frac{\polylog{n}}{\sqrt{n}}
        \label{eqn: lem_bary_W}
    \end{equation}
    As a result, under $\tE$, one has
    \begin{equation}
        \Fnorm{\hQ_{\rho}(\bar{X}) - Q^*(\mu)} \lesssim \frac{\polylog{n}}{\sqrt{n}}
        \label{eqn: lem_bary_F}
    \end{equation}
\end{lemma}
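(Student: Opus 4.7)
The plan is to exploit the observation that $\hQ_{\rho}(\bar{X})$ is the empirical Bures-Wasserstein barycenter of $Q_1,\ldots,Q_n$, since $w_{n,\rho}(\bar{X},X_i) = 1 + (\bar{X}-\bar{X})^\top \hSigma_{\rho}^{-1}(X_i - \bar{X}) \equiv 1$, while $Q^*(\mu)$ is the population barycenter because $w(\mu,X) = 1 + (\mu-\mu)^\top \Sigma^{-1}(X-\mu) = 1$. The problem thus reduces to bounding the deviation between the empirical and population Fr\'echet means of $Q$, following the template of \citet{alt21_averaging} and \citet{rig22_fast_alex}.

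Set $F(S) = \EE W^2(S,Q)$ and $F_n(S) = n^{-1}\sum_i W^2(S,Q_i)$, so that $Q^*(\mu)=\argmin F$ and $\hQ_{\rho}(\bar{X})=\argmin F_n$. First, I would show that on an event $\tE$ of probability at least $1-n^{-100}$, the empirical barycenter $\hQ_{\rho}(\bar{X})$ admits a two-sided eigenvalue envelope. Assumption \ref{assumption: bdd_Q} places $Q_i \in \cS_d(\gamma_{\Lambda}(\norm{X_i-\mu})^{-1}, \gamma_{\Lambda}(\norm{X_i-\mu}))$, and the sub-Gaussianity of $X$ (Assumption \ref{assumption: X}) gives $\max_i \norm{X_i-\mu} \lesssim \sqrt{\log n}$, so all the $Q_i$ are sandwiched between $\polylog{n}^{-1}I_d$ and $\polylog{n}\, I_d$. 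The eigenvalues of the empirical barycenter can then be controlled in terms of those of the $Q_i$, mirroring the argument of \citet{agueh11_Wbary} that is already adapted in the first step of the proof of Theorem \ref{thm: unfm_conv}.

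Next, I would establish a local variance inequality around $Q^*(\mu)$. By Assumption \ref{assumption: minimizer_local} at $x=\mu$, the second differential of $F$ at $Q^*(\mu)$ is self-adjoint with minimum eigenvalue at least $1/c_{\lambda}$. Combining this with Lemma~\ref{lem: diff}\ref{eqn: lem_diff_quad_approx} and the eigenvalue envelope produces, for $S$ in a small Frobenius ball around $Q^*(\mu)$,
\begin{equation*}
F(S) - F(Q^*(\mu)) \gtrsim \Fnorm{S-Q^*(\mu)}^2.
\end{equation*}
Outside this ball, Assumption \ref{assumption: minimizer_global} at $x=\mu$ yields a strictly positive separation gap, which together with uniform concentration of $F_n-F$ ensures that the empirical minimizer in fact falls inside the local ball. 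Using the optimality $F_n(\hQ_{\rho}(\bar{X})) \leq F_n(Q^*(\mu))$, the standard M-estimation chain
\begin{equation*}
F(\hQ_{\rho}(\bar{X})) - F(Q^*(\mu)) \leq 2\sup_{S\in\cK}\abs{F_n(S)-F(S)}
\end{equation*}
holds for a polylog-bounded compact $\cK \subset \cS_d^{++}$ containing both minimizers. The right-hand side is controlled by Lemmas~\ref{lem: Rademacher} and \ref{lem: bdd_diff} through the H\"older continuity of $W^2(\cdot,Q)$ on $\cK$ and bounded-difference concentration, yielding a bound of order $\polylog{n}/\sqrt{n}$. The bound \eqref{eqn: lem_bary_W} follows, and \eqref{eqn: lem_bary_F} is obtained by converting back to Frobenius distance via Lemma~\ref{lem: diff}\ref{eqn: lem_diff_W_Frob}.

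The main obstacle is the consistency bootstrap: the local variance inequality only operates inside a small Frobenius ball around $Q^*(\mu)$, so one first needs a preliminary slow-rate consistency result to certify that $\hQ_{\rho}(\bar{X})$ lies in this ball. This requires the global-to-local chaining argument built from Assumption~\ref{assumption: minimizer_global}, and is precisely what is carried out in the proof of Theorem~\ref{thm: unfm_conv} (in particular Lemma~\ref{lem: prf_conv_lem_slow}). The conclusion for Lemma~\ref{lem: barycenter} should then follow by specializing that machinery to the single point $x=\mu$ and replacing the weight $w_{n,\rho}(\mu,X_i)$ by the trivial weight $1$ throughout.
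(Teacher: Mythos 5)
Your observation that $\hQ_{\rho}(\bar{X})$ is the empirical Bures--Wasserstein barycenter of the $Q_i$ and $Q^*(\mu)$ is the population barycenter is exactly the paper's starting point, but from there the two arguments diverge, and there is a genuine gap in your route.

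The paper does \emph{not} run an M-estimation chain; instead it directly invokes the fast-rate result for empirical Fr\'echet means in Alexandrov/Wasserstein spaces, namely (3.8)--(3.9) of \citet{rig22_fast_alex}, the variance inequality of \citet[Theorem~6]{che20_gd}, and \citet{alt21_averaging}. Under the two-sided eigenvalue envelope coming from Assumptions~\ref{assumption: X} and \ref{assumption: bdd_Q} alone, these give
\begin{equation*}
W^2\bigl(\hQ_{\rho}(\bar{X}), Q^*(\mu)\bigr) \lesssim \Bigl\|\tfrac{1}{n}\textstyle\sum_i \bigl(T^{Q_i}_{Q^*(\mu)}-I_d\bigr)\Bigr\|_{Q^*(\mu)},
\end{equation*}
and the right side concentrates at rate $n^{-1/2}$ because the population optimality condition makes the summands mean-zero. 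Crucially, this needs only Assumptions~\ref{assumption: X} and \ref{assumption: bdd_Q}, matching the lemma's hypotheses.

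Your middle paragraph --- the chain $F(\hQ_{\rho}(\bar{X})) - F(Q^*(\mu)) \leq 2\sup_{S\in\cK}|F_n(S)-F(S)| \lesssim n^{-1/2}$, then quadratic growth --- only yields $\Fnorm{\hQ_{\rho}(\bar{X})-Q^*(\mu)}^2 \lesssim n^{-1/2}$, i.e., a slow $n^{-1/4}$ Frobenius rate, not the $n^{-1/2}$ rate claimed in \eqref{eqn: lem_bary_F}. Converting $W^2\lesssim n^{-1/2}$ through Lemma~\ref{lem: diff}\ref{eqn: lem_diff_W_Frob} compares $W^2$ to $\Fnorm{\cdot}^2$, so this conversion also lands you at $n^{-1/4}$. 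To recover $n^{-1/2}$ you would need the second-order bootstrap (Lemma~\ref{lem: fast}, not merely Lemma~\ref{lem: prf_conv_lem_slow}) from the proof of Theorem~\ref{thm: unfm_conv}, but invoking that machinery requires Assumptions~\ref{assumption: frechet}, \ref{assumption: minimizer_global}, and \ref{assumption: minimizer_local}, which Lemma~\ref{lem: barycenter} explicitly does not assume. You invoke Assumptions~\ref{assumption: minimizer_global} and \ref{assumption: minimizer_local} at $x=\mu$, but these are hypotheses on the weighted Fr\'echet objective and are not available here; the paper avoids this by observing that for the \emph{unweighted} barycenter the needed curvature (variance inequality) is a consequence of the eigenvalue bounds alone, with no extra assumption.

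In short: your first step is right, but the generic M-estimation route cannot reach \eqref{eqn: lem_bary_F} without either importing assumptions absent from the statement or implicitly appealing to the very barycenter-specific curvature structure that the paper cites explicitly and economically.
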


\begin{proof}
    Note that $\hQ_{\rho}(\bar{X})$ and $Q^*(\mu)$ are equal to the empirical and population barycenter respectively. For simplicity, we write $Q^*$ for $Q^*(\mu)$.
    
    \paragraph{Proof of (\ref{eqn: lem_bary_W}): }
     From (3.8)-(3.9) in \citet{rig22_fast_alex}, the variance inequality in \citet[Theorem 6]{che20_gd} and results in \citet{alt21_averaging}, one can obtain the following
    \begin{equation}
         W^2 \rbr{\hQ_{\rho}(\bar{X}), Q^*} \leq C_b \norm{\frac{1}{n}\sum_{i=1}^{n} \rbr{T^{Q_i}_{Q^*} - I_d}}_{Q^*}
        \label{eqn: prf_bary_fast}
    \end{equation}
    where $C_b > 0$ is a constant independent of $n$ and $\norm{A}_{Q^*(\mu)}:= \inner{A}{Q^* A}$. Moreover, Lemma \ref{lem: Q*x} implies that $ c_1^{-1} I_d \preceq Q^* \preceq c_1 I_d$; see Assumption \ref{assumption: bdd_Q} for the definition of $c_1$. As a result, one can obtain
    \begin{align}
        \norm{\frac{1}{n}\sum_{i=1}^{n} \rbr{T^{Q_i}_{Q^*} - I_d}}_{Q^*} 
        &\leq \eigmax{Q^* } \cdot \Fnorm{\frac{1}{n}\sum_{i=1}^{n} \rbr{T^{Q_i}_{Q^*} - I_d} }\nonumber\\
        &\leq c_{\Lambda} \Fnorm{\frac{1}{n}\sum_{i=1}^{n} \rbr{T^{Q_i}_{Q^*} - I_d} }
        \label{eqn: prf_bary_Q*2Frob}
    \end{align}
    Note that by the optimality condition for barycenter, one has $\EE \rbr{T^{Q}_{Q^*} - I_d} = 0$. Then one can apply Lemma \ref{lem: subG_norm} to get that
        \begin{align}
            \gnorm{\Fnorm{\frac{1}{n}\sum_{i=1}^{n} \rbr{T^{Q_i}_{Q^*} - I_d}}} 
            &\lesssim \sup_{\Fnorm{U}\leq 1} \gnorm{ \inner{U}{\frac{1}{n}\sum_{i=1}^{n} \rbr{T^{Q_i}_{Q^*} - I_d}}} \nonumber \\
            &\lesssim \sup_{\Fnorm{U}\leq 1} \frac{1}{\sqrt{n}} \gnorm{\inner{U}{T^Q_{Q^*}-I_d}} \nonumber \\
            &\leq \frac{1}{\sqrt{n}} \gnorm{\Fnorm{T^Q_{Q^*}-I_d }}
            \label{eqn: prf_bary_conc}
        \end{align}
    Recall that $T^Q_S = S^{-1/2} \rbr{S^{1/2}QS^{1/2}}^{1/2} S^{-1/2}$, then one has
    \begin{align}
        \gnorm{\Fnorm{T^Q_{Q^* }-I_d }} 
        &\overset{(i)}{\lesssim}  \gnorm{\Onorm{T^Q_{Q^*}-I_d }}\nonumber\\
        &\leq \gnorm{1 + c_{\Lambda}^{3/2} \eigmax{Q}^{1/2}}\nonumber\\
        &\overset{(ii)}{\leq} \gnorm{1 + c_{\Lambda}^{3/2} c_{\Lambda}^{1/2} \norm{X-\mu}^{1/2}}\nonumber\\
        &\leq \gnorm{1 + c_{\Lambda}^{3/2} c_{\Lambda}^{1/2} \rbr{1 \vee \norm{X-\mu}}} \nonumber\\
        &\leq \gnorm{1 + c_{\Lambda}^2 \rbr{1 + \norm{X-\mu}}} \nonumber\\
        &\lesssim 1
        \label{eqn: prf_bary_compute}
    \end{align}
    Here (i) follows since dimension $d$ is fixed and absorbed into the constant factor independent of $n$, (ii) is a result of Assumption \ref{assumption: bdd_Q}.

    Finally, combining (\ref{eqn: prf_bary_fast}) (\ref{eqn: prf_bary_Q*2Frob}) (\ref{eqn: prf_bary_conc}) and (\ref{eqn: prf_bary_compute}) gives
    \begin{equation}
        \gnorm{W^2 \rbr{\hQ_{\rho}(\bar{X}), Q^*(\mu)}} \lesssim \frac{1}{\sqrt{n}}
    \end{equation}
    which implies (\ref{eqn: lem_bary_W}). 

    \paragraph{Proof of (\ref{eqn: lem_bary_F}):} Apply (\ref{eqn: lem_diff_W_Frob}) in Lemma \ref{lem: diff} to get that under the event $\tE$,
    \begin{equation}
        \begin{aligned}
            \Fnorm{\hQ_{\rho}(\bar{X}) - Q^*} 
            &\leq W^2 \rbr{\hQ_{\rho}(\bar{X}), Q^*} \cdot \frac{1 + \eigminn{-1}{Q^*} \eigmax{\hQ_{\rho}(\bar{X})}}{\eigmax{Q^*} \cdot \eigminn{-2}{Q^*}}\\
            &\lesssim W^2 \rbr{\hQ_{\rho}(\bar{X}), Q^*} \cdot \rbr{1 + \eigmax{\hQ_{\rho}(\bar{X})}}\\
            &\leq C \frac{\sqrt{\log n}}{\sqrt{n}}\cdot \rbr{1 + \eigmax{\hQ_{\rho}(\bar{X})}}
        \end{aligned}
        \label{eqn: prf_bary_F_ineq}
    \end{equation}
    for constant $C>0$ large enough. Note that this implies
    \begin{equation*}
        \begin{aligned}
            \eigmax{\hQ_{\rho}(\bar{X})} &\leq \eigmax{Q^*} + \Fnorm{\hQ_{\rho}(\bar{X}) - Q^*}\\
            &\leq c_{\Lambda} + C \frac{\sqrt{\log n}}{\sqrt{n}}\cdot \rbr{1 + \eigmax{\hQ_{\rho}(\bar{X})}}
        \end{aligned}
    \end{equation*} 
    Solving for $\eigmax{\hQ_{\rho}(\bar{X})}$ gives
    \begin{equation}
        \begin{aligned}
            \eigmax{\hQ_{\rho}(\bar{X})} &\leq \frac{c_{\Lambda} + C \frac{\sqrt{\log n}}{\sqrt{n}}}{1- C \frac{\sqrt{\log n}}{\sqrt{n}}}\\
            &\lesssim 1 \qquad \text{for } n \text{ large enough}
        \end{aligned}
        \label{eqn: prf_bary_const}
    \end{equation}
    Finally, plugging (\ref{eqn: prf_bary_const}) back into (\ref{eqn: prf_bary_F_ineq}) gives (\ref{eqn: lem_bary_F}).

\end{proof}

\subsubsection{Proof of Lemma \ref{lem: Rademacher}}
\label{subsubsec: prf_lem_Rademacher}

\paragraph{Proof of (\ref{eqn: lem_rad_chain}):}
Since $f(z,\theta)$ is continuous in $\theta$ and $\Theta$ is separable, we have for any countable, dense subset $\UU \subset \Theta$, the following equality
\begin{equation*}
    \EE \sup_{\theta \in \Theta} \abs{f(Z,\theta)} = \EE \sup_{S \in \UU} \abs{f(Z,\theta) }
\end{equation*}
holds. By the monotone convergence theorem,  it suffices to assume that $\UU$ is finite and get an upper bound that is independent of the cardinality of $\UU$.

For each $k \in \ZZ$, let $\UU_k \subset \UU$ be a minimal $\epsilon_k$-covering set of $\UU$ where $\epsilon_k$ is defined by 
\begin{equation}
    \epsilon_k = \tau^{-1}(2^{-k})
    \label{eqn: prf_lem_unfm_eps_def}
\end{equation}
Let $N(\epsilon_k,\UU)$ be the cardinality of the minimal $\epsilon_k$-covering set $\UU_k$ of $\UU$. Since $\UU$ is a subset of $\Theta$, $N(\epsilon_k,\UU)$ can be upper bounded by
\begin{equation*}
    \log N(\epsilon_k; \UU) = \log \abs{\UU_k}\leq \log N(\epsilon_k;\Theta)
\end{equation*}
Since $\UU$ is finite, there is a largest $\eta \in \ZZ$ and a smallest integer $H \in \ZZ$ such that
\begin{equation*}
    \UU_{\eta} = \left\{ \theta_0 \right\} \text{ for some } \theta_0 \in \UU, \qquad \UU_H = \UU
\end{equation*}
For each $k \in \ZZ$, define the mapping $\pi_k: \Theta \to \UU_k $ via
\begin{equation*}
    \pi_k(\theta) = \argmin_{\ttheta \in \UU_k} d(\ttheta , \theta)
\end{equation*}
so that $\pi_k(S)$ is the best approximation of $\theta \in \Theta$ from the set $\UU_k$.

For any $\theta \in \UU$, apply the triangle inequality to see that
\begin{equation}
    \begin{aligned}
        \EE \abs{\max_{\theta \in \UU} f(Z,\theta)} 
        &\leq  \EE \abs{\max_{\theta \in \UU} \rbr{f(Z,\theta)-  f(Z,\theta_0)}} + \EE \abs{f(Z;\theta_0)}\\
        &\leq \EE \max_{\theta \in \UU} \abs{f(Z,\theta)-  f(Z,\theta_0)} + \EE \abs{f(Z;\theta_0)}\\
        &\overset{(i)}{=} \EE \max_{\theta \in \UU} \abs{\sum_{k=\eta+1}^{H} \rbr{f(Z, \pi_{k}(\theta)) - f(Z, \pi_{k-1}(\theta))}} + \EE \abs{f(Z;\theta_0)}\\
        &\leq \EE \max_{\theta \in \UU} \sum_{k=\eta+1}^{H} \abs{f(Z, \pi_{k}(\theta)) - f(Z, \pi_{k-1}(\theta))}+ \EE \abs{f(Z;\theta_0)}\\
        &\leq \sum_{k=\eta+1}^{H} \EE \max_{\theta \in \UU} \left| f(Z,\pi_k(\theta)) - f(Z,\pi_{k-1}(\theta)) \right|+ \EE \left| f(Z;\theta_0) \right|
    \end{aligned}
    \label{eqn: prf_lem_unfm_chn}
\end{equation}
Here (i) is a consequence of decomposing $f(Z,\theta)-f(Z,\theta_0)$ as a telescoping sum
\begin{equation*}
    f(Z,\theta)-f(Z,\theta_0)=\sum_{k=\eta+1}^{H} \rbr{f(Z, \pi_{k}(\theta)) - f(Z, \pi_{k-1}(\theta))}
\end{equation*}

For any fixed $\theta \in \Theta$, one has
\begin{align*}
    \left\|f(\cdot,\pi_k(\theta)) - f(\cdot,\pi_{k-1}(\theta))  \right\|_{\psi_2} 
    &\leq \left\|f(\cdot,\pi_k(\theta)) - f(\cdot,\theta)  \right\|_{\psi_2} + \left\|f(\cdot,\theta) - f(\cdot,\pi_{k-1}(\theta))  \right\|_{\psi_2}\\
    &\overset{(I)}{\leq} \tau(\epsilon_{k}) + \tau(\epsilon_{k-1})\\
    &\leq 2 \tau(\epsilon_{k-1})\\
    &\overset{(II)}{=}2 \cdot 2^{-(k-1)}
\end{align*}
Here (I) is a result of (\ref{eqn: lem_rad_lip}), and (II) follows from (\ref{eqn: prf_lem_unfm_eps_def}). Then, one can obtain
\begin{equation}
    \begin{aligned}
        \EE \max_{\theta \in \UU} \abs{f(Z,\pi_k(\theta)) - f(Z,\pi_{k-1}(\theta))}
        &\overset{(i)}{\lesssim} \sqrt{\abs{\UU_{k-1}}\cdot \abs{\UU_k}} \cdot 2 \cdot 2^{-(k-1)}\\
        &\lesssim \sqrt{\log N(\epsilon_k;\Theta)} \cdot 2^{-(k-1)}
    \end{aligned}
    \label{eqn: prf_lem_unfm_chn_k}
\end{equation}
Here (i) follows from the properties of the maximum of finitely many sub-Gaussian random variables and the fact that the maximum is taken over at most $\abs{\UU_{k-1}}\cdot \abs{\UU_k} \leq N(\epsilon_k;\Theta)^2$ random variables. Therefore, 
\begin{equation}
    \begin{aligned}
        \sum_{k=\eta+1}^{H} \EE \max_{\theta \in \UU} \left| f(Z,\pi_k(\theta)) - f(Z,\pi_{k-1}(\theta)) \right| 
        &\lesssim \sum_{k=\eta+1}^{H} \sqrt{\log N(\epsilon_k;\Theta)} \cdot 2^{-(k-1)}\\
        &= \sum_{k=\eta+1}^{H} \sqrt{\log N(\tau^{-1}(2^{-k});\Theta)} \cdot 2^{-(k-1)}\\
        &\lesssim \int_0^{\infty} \sqrt{\log N(\tau^{-1}(t);\Theta)} \cdot dt\\
        &\overset{(I)}{=} \int_0^{\tau(D)} \sqrt{\log N(\tau^{-1}(t);\Theta)} \cdot dt
    \end{aligned}
    \label{eqn: prf_lem_unfm_chn_int}
\end{equation}
Here (I) follows by noticing that $\log N(\tau^{-1}(t);\Theta)=0$ for any $t \geq \tau(D)$.

Combine (\ref{eqn: prf_lem_unfm_chn}) and (\ref{eqn: prf_lem_unfm_chn_int}) to see that
\begin{align*}
    \EE \sup_{\theta \in \Theta} \abs{f(Z;\theta)} \leq C\int_0^{\tau(D)} \sqrt{\log N(\tau^{-1}(t);\Theta)} \cdot dt + \sup_{\theta \in \Theta} \EE \abs{f(Z;\theta)}
\end{align*}
for some constant $C>0$.

\paragraph{Proof of (\ref{eqn: lem_rad_chain_holder}):} Consider two cases $\alpha_0 \in (0,1]$ and $\alpha_0 >1$.

\paragraph{Case I $\alpha_0 \in (0,1]$:}

Note that since $\alpha_0 \in (0,1]$, one has

\begin{align}
    \tau^{-1}(t) &= \left( \frac{\sqrt{n} t}{K} \right) \wedge \left( \frac{\sqrt{n} t}{K} \right)^{1/\alpha_0} \nonumber\\
    &=
    \begin{cases}
        \frac{\sqrt{n} t}{K}     & \sqrt{n}t \geq K\\
        \left( \frac{\sqrt{n} t}{K} \right)^{1/\alpha_0}    & \sqrt{n}t \leq K
    \end{cases}
    \label{eqn: prf_lem_rad_tau_inv}
\end{align}
As a result, one has
\begin{equation}
    \begin{aligned}
        &\quad \int_0^{\tau(D)} \sqrt{\log N(\tau^{-1}(t);\Theta)} \cdot dt\\
        &\overset{(i)}{=} \frac{1}{\sqrt{n}}\int_0^{\sqrt{n}\tau(D)} \sqrt{\log N\left( \tau^{-1}\left( \frac{t}{\sqrt{n}} \right);\Theta \right)} \cdot dt\\
        &\overset{(ii)}{=} \frac{1}{\sqrt{n}}\int_0^{K} \sqrt{\log N\left( (t/K)^{1/\alpha_0};\Theta \right)} \cdot dt + \frac{1}{\sqrt{n}} \int_K^{\sqrt{n}\tau(D)} \sqrt{\log N\left( t/K;\Theta \right)} \cdot dt\\
        &\lesssim  \frac{1}{\sqrt{n}}\int_0^{K} \sqrt{\log^+ \left( \frac{D}{(t/K)^{1/\alpha_0}} \right)} \cdot dt + \frac{1}{\sqrt{n}}\int_K^{\sqrt{n}\tau(D)} \sqrt{\log^+ \left( \frac{D}{(t/K)} \right)} \cdot dt\\
        &\overset{(iii)}{=}\frac{K}{\sqrt{n}} \underbrace{\int_0^{1} \sqrt{\log^+ \left( \frac{D}{t^{1/\alpha_0}} \right)} \cdot dt}_{a_1} + \frac{K}{\sqrt{n}} \underbrace{\int_1^{\sqrt{n}\tau(D)/K} \sqrt{\log^+ \left( \frac{D}{t} \right)} \cdot dt}_{a_2}
    \end{aligned}
    \label{eqn: prf_rad_holder_decomp}
\end{equation}
Here (i), (iii) follows from a change of variables, and (ii) follows from (\ref{eqn: prf_lem_rad_tau_inv}).

\begin{itemize}
    \item $a_1$: one has
    \begin{equation}
        \begin{aligned}
            a_1 &\overset{(i)}{\leq} \int_0^1 \sqrt{\log^+ \left( \frac{1}{t^{1/\alpha_0}} \right)}dt + \int_0^1 \sqrt{\log^+ D} dt\\
            &\overset{(ii)}{\lesssim }\sqrt{\log^+ D}
        \end{aligned}
        \label{eqn: prf_rad_holder_a1}
    \end{equation}
    Here (i) follows from the inequality $\sqrt{s+t} \leq \sqrt{s} + \sqrt{t}$ for $s,t \geq 0$, and (ii) follows from the assumption that $D \gtrsim 1$.
    \item $a_2$: for $D \gtrsim 1$,
    \begin{equation}
        \begin{aligned}
            a_2 &\overset{(i)}{\leq }\int_1^{\sqrt{n} \tau(D)/K} \sqrt{\log^+ D}dt\\
            &\overset{(ii)}{\leq} \frac{\sqrt{n} \tau(D)}{K} \cdot \sqrt{\log^+ D}\\
            &= \rbr{D \vee D^{\alpha_0}} \sqrt{\log^+ D}\\
            &\overset{(iii)}{\lesssim} D \sqrt{\log^+ D}
        \end{aligned}
        \label{eqn: prf_rad_holder_a2}
    \end{equation}
    Here (i) a consequence of the fact that $D/t \leq D$ for $t\geq 1$, (ii) results from substituting the definition of $\tau(\cdot)$ and (iii) follows from the assumption $\alpha_0 \in (0,1]$.
\end{itemize}
Combine (\ref{eqn: prf_rad_holder_decomp}), (\ref{eqn: prf_rad_holder_a1}) and (\ref{eqn: prf_rad_holder_a2}) to see that
\begin{equation*}
    \int_0^{\tau(D)} \sqrt{\log N(\tau^{-1}(t);\Theta)} \cdot dt \lesssim \frac{K}{\sqrt{n}}  D \sqrt{\log^+ D}
\end{equation*}
which proves (\ref{eqn: lem_rad_chain_holder}) for $\alpha_0 \in (0,1] $.

\paragraph{Case II $\alpha_0 > 1$:} for $\alpha_0 >1$, one has
\begin{equation}
    \begin{aligned}
        \tau^{-1}(t) &= \left( \frac{\sqrt{n} t}{K} \right) \wedge \left( \frac{\sqrt{n} t}{K} \right)^{1/\alpha_0}\\
        &=
        \begin{cases}
            \frac{\sqrt{n} t}{K}     & \sqrt{n}t \leq K\\
            \left( \frac{\sqrt{n} t}{K} \right)^{1/\alpha_0}    & \sqrt{n}t > K
        \end{cases}
    \end{aligned}
    \label{eqn: prf_lem_rad_tau_inv2}
\end{equation}
As a result, one can obtain
\begin{equation}
    \begin{aligned}
        &\quad \int_0^{\tau(D)} \sqrt{\log N(\tau^{-1}(t);\Theta)} \cdot dt\\
        &\overset{(i)}{=} \frac{1}{\sqrt{n}}\int_0^{\sqrt{n}\tau(D)} \sqrt{\log N\left( \tau^{-1}\left( \frac{t}{\sqrt{n}} \right);\Theta \right)} \cdot dt\\
        &\overset{(ii)}{=} \frac{1}{\sqrt{n}}\int_0^{K} \sqrt{\log N\left( t/K;\Theta \right)} \cdot dt + \frac{1}{\sqrt{n}} \int_K^{\sqrt{n}\tau(D)} \sqrt{\log N\left( (t/K)^{1/\alpha_0};\Theta \right)} \cdot dt\\
        &\lesssim  \frac{1}{\sqrt{n}}\int_0^{K} \sqrt{\log^+ \left( \frac{D}{t/K} \right)} \cdot dt + \frac{1}{\sqrt{n}}\int_K^{\sqrt{n}\tau(D)} \sqrt{\log^+ \left( \frac{D}{(t/K)^{1/\alpha_0}} \right)} \cdot dt\\
        &\overset{(iii)}{=}\frac{K}{\sqrt{n}} \underbrace{\int_0^{1} \sqrt{\log^+ \left( \frac{D}{t} \right)} \cdot dt}_{a_3} + \frac{K}{\sqrt{n}} \underbrace{\int_1^{\sqrt{n}\tau(D)/K} \sqrt{\log^+ \left( \frac{D}{t^{1/\alpha_0}} \right)} \cdot dt}_{a_4}
    \end{aligned}
    \label{eqn: prf_rad_holder_decomp2}
\end{equation}
Here (i), (iii) follows from a change of variables, and (ii) follows from (\ref{eqn: prf_lem_rad_tau_inv2}).

\begin{itemize}
    \item $a_3$: one has
    \begin{equation}
        \begin{aligned}
            a_3 &\overset{(i)}{\leq} \int_0^1 \sqrt{\log^+ \left( \frac{1}{t } \right)}dt + \int_0^1 \sqrt{\log^+ D} dt\\
            &\overset{(ii)}{\lesssim }\sqrt{\log^+ D}
        \end{aligned}
        \label{eqn: prf_rad_holder_a3}
    \end{equation}
    Here (i) follows from the inequality $\sqrt{s+t} \leq \sqrt{s} + \sqrt{t}$ for $s,t \geq 0$, and (ii) follows from the assumption that $D \gtrsim 1$.
    \item $a_4$: for $D \gtrsim 1$,
    \begin{equation}
        \begin{aligned}
            a_4 &\overset{(i)}{\leq }\int_1^{\sqrt{n} \tau(D)/K} \sqrt{\log^+ D}dt\\
            &\overset{(ii)}{\leq} \frac{\sqrt{n} \tau(D)}{K} \cdot \sqrt{\log^+ D}\\
            &= \rbr{D \vee D^{\alpha_0}} \sqrt{\log^+ D}\\
            &\overset{(iii)}{\lesssim} D^{\alpha_0} \sqrt{\log^+ D}
        \end{aligned}
        \label{eqn: prf_rad_holder_a4}
    \end{equation}
    Here (i) a consequence of the fact that $D/\rbr{t^{1/\alpha_0}} \leq D$ for $t\geq 1$, (ii) results from substituting the definition of $\tau(\cdot)$ and (iii) follows from the assumption $\alpha_0 >1 $.
\end{itemize}
Combine (\ref{eqn: prf_rad_holder_decomp2}), (\ref{eqn: prf_rad_holder_a3}) and (\ref{eqn: prf_rad_holder_a4}) to see that
\begin{equation*}
    \int_0^{\tau(D)} \sqrt{\log N(\tau^{-1}(t);\Theta)} \cdot dt \lesssim \frac{K}{\sqrt{n}}  D^{\alpha_0} \sqrt{\log^+ D}
\end{equation*}
which proves (\ref{eqn: lem_rad_chain_holder}) for $\alpha_0 >1$.

\paragraph{Proof of (\ref{eqn: lem_rad_lip_avg}):} Let $\Psi_n(\theta) := \frac{1}{n}\sum_{i=1}^{n} \psi(Z_i;\theta)$, then one has
\begin{equation*}
    \begin{aligned}
        &\quad \gnorm{f(Z_1^n;\theta) - f(Z_1^n;\ttheta)} \\
        &= \gnorm{ \opnorm{\Psi_n(\theta) - \EE \Psi_n(\theta)} - \opnorm{\Psi_n(\ttheta) - \EE \Psi_n(\ttheta)}}\\
        &\leq \gnorm{\opnorm{\Psi_n(\theta) - \EE \Psi_n(\theta)-\Psi_n(\ttheta) + \EE \Psi_n(\ttheta)}}\\
        &\overset{(i)}{\lesssim} \sup_{u,v \in \SS^{m-1}} \gnorm{\inner{u}{ \sbr{\tPsi_n(\theta) - \EE \tPsi_n(\theta)} - \sbr{\tPsi_n(\ttheta) + \tPsi_n(\ttheta)} \cdot v^{\otimes k}}} \\
        &\overset{(ii)}{\lesssim} \frac{1}{\sqrt{n}} \sup_{u,v \in \SS^{m-1}}
        \gnorm{\inner{u}{ \left[ \psi(Z;\theta)  - \psi(Z;\ttheta)\right] - \EE \sbr{\psi(Z;\theta) - \psi(Z;\ttheta)} \cdot v^{\otimes k}} }\\
        &\overset{(iii)}{\lesssim} \frac{1}{\sqrt{n}} \sup_{u,v \in \SS^{m-1}} \gnorm{\inner{u}{\left[ \psi(Z;\theta) - \psi(Z;\ttheta) \right] \cdot v^{\otimes k}}}\\
        &\leq \frac{1}{\sqrt{n}} \sup_{v \in \SS^{m-1}} \gnorm{\Fnorm{\sbr{\psi(Z;\theta) - \psi(Z;\ttheta)} \cdot v^{\otimes k}}}\\
        &\overset{(iv)}{\leq} \frac{1}{\sqrt{n}} \sup_{v \in \SS^{m-1}} \gnorm{\opnorm{\psi(Z;\theta) - \psi(Z;\ttheta)} \cdot \Fnorm{v}^k}\\
        &\leq \frac{1}{\sqrt{n}} \gnorm{\opnorm{\psi(Z;\theta) - \psi(Z;\ttheta) }}
    \end{aligned}
\end{equation*}
Here (i) is a result of Lemma \ref{lem: subG_norm}, (ii) arises due to independence, (iii) follows from Lemma 2.6.8 in \citet{vershynin}, and (iv) is derived from the definition of $\opnorm{\cdot}$. The proof is then complete.

        \subsection{Properties of $F$ and $Q^*$}
        \label{subsec: Techlem_QF}
        This section collects properties of $F(\cdot,\cdot)$ and $Q^*(\cdot)$ that are needed later in the proof.

\begin{lemma}
    \label{lem: Q*x}
    Assume Assumption \ref{assumption: X} and \ref{assumption: bdd_Q} holds. Then the following inequalities
    \begin{align}
        \gamma_{\Lambda}(\norm{x-\mu})^{-1} I_d \preceq &Q^*(x) \preceq  \gamma_{\Lambda}(\norm{x-\mu}) I_d, \qquad \forall x \in \supp X 
        \label{eqn: lem_Q*_x1}
    \end{align}
    Moreover, there exist constant $C_0^*, C_1^* >0$ which only depend on $c_{\Sigma}, C_{\psi_2}$ from Assumption \ref{assumption: X} and $c_{\Lambda},C_1$ (from Assumption \ref{assumption: bdd_Q}) such that
    \begin{align}
        Q^*(x) \preceq \sbr{C_0^* + C_1^* \norm{x-\mu}}^2 I_d
        \label{eqn: lem_Q^*2}
    \end{align}
\end{lemma}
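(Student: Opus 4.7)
The plan is to establish the two inequalities separately, using in each case the appropriate fixed-point characterization of $Q^*(x)$. Throughout, I would write $L = \norm{x-\mu}$ and $S^* = Q^*(x)$.

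For the first inequality, I would use that by Assumption \ref{assumption: frechet} together with the Fr\'echet regression model, $S^*$ coincides with the conditional Fr\'echet mean $\EE_{\text{Fr\'echet}}[Q \mid X=x]$, i.e.\ the unweighted Bures--Wasserstein barycenter of the law of $Q$ given $X = x$. This barycenter satisfies the Agueh--Carlier fixed-point equation
\begin{equation*}
    S^* = \EE\sbr{\rbr{S^{*1/2} Q S^{*1/2}}^{1/2} \,\Big|\, X = x}.
\end{equation*}
Assumption \ref{assumption: bdd_Q} gives $Q \preceq \gamma_\Lambda(L) I_d$ almost surely conditional on $X=x$, so $S^{*1/2} Q S^{*1/2} \preceq \gamma_\Lambda(L) S^*$. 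Applying the operator-monotone square root and then the conditional expectation yields $S^* \preceq \gamma_\Lambda(L)^{1/2} S^{*1/2}$, from which multiplying by $S^{*-1/2}$ on both sides gives $S^{*1/2} \preceq \gamma_\Lambda(L)^{1/2} I_d$, i.e.\ $S^* \preceq \gamma_\Lambda(L) I_d$. The lower bound $S^* \succeq \gamma_\Lambda(L)^{-1} I_d$ follows by the symmetric argument applied to $Q \succeq \gamma_\Lambda(L)^{-1} I_d$.

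For the second (quadratic) inequality, the first bound is too weak when $C_\Lambda > 2$, so I would work with the \emph{weighted} first-order optimality condition for (\ref{eqn: frechet_assumption}),
\begin{equation*}
    S^* = \EE\sbr{w(x,X)\rbr{S^{*1/2} Q S^{*1/2}}^{1/2}}.
\end{equation*}
Taking the trace, noting that the integrand is PSD so the inequality $\tr(A) \le \sqrt{d\,\tr(A^2)}$ applies pointwise, and then applying Cauchy--Schwarz on the expectation (after passing to $|w|$, which is legitimate since $\tr\bigl((S^{*1/2} Q S^{*1/2})^{1/2}\bigr) \ge 0$), I would obtain
\begin{equation*}
    \tr(S^*) \le \sqrt{d\,\EE[w(x,X)^2]\,\tr(S^* \EE Q)} \le \sqrt{d\,\EE[w(x,X)^2]\,\norm{\EE Q}_{op}\,\tr(S^*)}.
\end{equation*}
Two elementary computations finish the argument: $\EE[w(x,X)^2] = 1 + (x-\mu)^\top \Sigma^{-1}(x-\mu) \le 1 + c_\Sigma^{-1} L^2$ by Assumption \ref{assumption: X}, and $\norm{\EE Q}_{op} \le \EE\, \gamma_\Lambda(\norm{X-\mu}) \le C$, which is finite because $X$ is sub-Gaussian (Assumption \ref{assumption: X}) and $\gamma_\Lambda$ is polynomial (Assumption \ref{assumption: bdd_Q}). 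Cancelling $\sqrt{\tr(S^*)}$ gives $\tr(S^*) \le C'(1 + L^2)$, and since $\norm{S^*}_{op} \le \tr(S^*)$ and $C'(1+L^2) \le (\sqrt{C'} + \sqrt{C'}\,L)^2$, choosing $C_0^* = C_1^* = \sqrt{C'}$ completes the proof.

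The main obstacle is the sign-indefiniteness of the weights $w(x,X)$: the weighted fixed-point equation cannot be interpreted as a barycenter condition, so the operator-monotone square-root trick used for the first inequality is unavailable. Replacing $w$ by $|w|$ inside the trace identity is the key maneuver that makes Cauchy--Schwarz go through and recovers the desired quadratic control of $\tr(S^*)$ in $L$.
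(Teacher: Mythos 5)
Your proof is correct, and it takes a genuinely different route from the paper's — particularly for the second inequality. Both you and the paper start from the same two fixed-point identities for $Q^*(x)$: the unweighted conditional-barycenter equation $Q^*(x) = \EE\bigl[(Q^*(x)^{1/2} Q\,Q^*(x)^{1/2})^{1/2}\,\big|\,X=x\bigr]$ for the two-sided bound (\ref{eqn: lem_Q*_x1}), and the weighted optimality identity $Q^*(x) = \EE\bigl[w(x,X)\,(Q^*(x)^{1/2} Q\,Q^*(x)^{1/2})^{1/2}\bigr]$ for the quadratic bound (\ref{eqn: lem_Q^*2}). For the first, the paper applies Jensen's inequality to the convex (resp.\ concave) functional $\lambda_{\max}$ (resp.\ $\lambda_{\min}$), whereas you exploit operator monotonicity of the square root, a.s.\ domination $Q \preceq \gamma_\Lambda(L) I_d$, and the fact that $S^*$ and $S^{*1/2}$ share an eigenbasis — same content, different packaging. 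For the second, the paper takes $\lambda_{\max}$ of both sides, bounds $\lambda_{\max}\bigl((S^{*1/2}QS^{*1/2})^{1/2}\bigr) \leq \lambda_{\max}(Q)^{1/2}\lambda_1(x)^{1/2}$, and solves for $\lambda_1(x)$ directly, getting $\lambda_1(x) \leq \bigl(\EE[|w|\,\lambda_{\max}(Q)^{1/2}]\bigr)^2$. You instead take the trace, use $\tr(A) \leq \sqrt{d\,\tr(A^2)}$ and Cauchy--Schwarz on the expectation, which introduces a factor of $d$ but buys you a cleaner split: $\EE[w(x,X)^2] = 1 + (x-\mu)^\top\Sigma^{-1}(x-\mu)$ has a closed form, and the $Q$-dependent piece localizes in $\norm{\EE Q}_{op}$, making the dependence of $C_0^*, C_1^*$ on $c_\Sigma, C_{\psi_2}, c_\Lambda, C_\Lambda$ fully explicit without the WLOG-to-$\mu=0,\Sigma=I_p$ normalization that the paper invokes. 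The paper's eigenvalue route is dimension-free; your trace route is more mechanical and separates the weight moment from the $Q$ moment at the cost of the dimensional constant. Both are valid, and your key observation — that the sign-indefiniteness of $w$ blocks the barycenter/operator-monotone argument and must be handled by replacing $w$ with $|w|$ inside a trace — is exactly the obstruction the paper is also navigating, just via a different inequality.
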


\begin{proof}
    Without loss of generality, assume $X$ have expectation $\mu=0$ and covariance $\Sigma = I_p$.  Denote $\lambda_{1}(x)= \eigmax{Q^*(x)}$ and $\lambda_{d}(x)= \eigmin{Q^*(x)}$

    \noindent\textit{Proof of (\ref{eqn: lem_Q*_x1}):}
    By the optimality condition of $Q^*(x) = \argmin_S \EE \sbr{W^2(S,Q)\big| X=x}$. By the differential properties of $W^2$ in Lemma \ref{lem: diff}, one can obtain
    \begin{align*}
        I_d = \EE \sbr{T_{Q^*(x)}^{Q} \big| X=x}
    \end{align*}
    Recall that $T_{Q^*(x)}^Q = {Q^*(x)}^{-1/2} \left( Q^*(x)^{1/2}Q Q^*(x)^{1/2}\right)^{1/2} Q^*(x)^{-1/2}$, multiplying both sides of the equation with $Q^*(x)^{1/2}$ then gives
    \begin{equation*}
        Q^*(x)=\EE \sbr{\rbr{Q^*(x)^{1/2} Q Q^*(x)^{1/2}}^{1/2} \big| X=x}
    \end{equation*}
    Then one can obtain
    \begin{equation*}
        \begin{aligned}
            \lambda_{1}(x) &= \eigmax{\EE \sbr{\rbr{Q^*(x)^{1/2} Q Q^*(x)^{1/2}}^{1/2} \mid X=x} } \\
            &\leq \EE \left[ \lambda_{\max} \left( \left( Q^*(x)^{1/2} Q Q^*(x)^{1/2}\right)^{1/2}  \right)\mid X=x \right]\\
            &\leq \gamma_{\Lambda}(\norm{x-\mu})^{1/2} \lambda_{1}(x)^{1/2}
        \end{aligned}
    \end{equation*}
    Here the second line follows from the convexity of the largest eigenvalue function $\lambda_{\max}\left( \cdot \right)$ over PSD matrices. The last inequality follows from Assumption \ref{assumption: bdd_Q}. Therefore, one can readily obtain $\lambda_{1}(x) \leq \gamma_{\Lambda}(\norm{x-\mu})$. The lower bound $\lambda_{d}(x) \geq \gamma_{\Lambda}(\norm{x-\mu})^{-1}$ can be derived similarly by the concavity of the function $\lambda_{\min} \left( \cdot \right)$.
    
    \noindent\textit{Proof of (\ref{eqn: lem_Q^*2}):}
    By the optimality condition of 
    $Q^*(x)=\argmin_S \EE w(x,X)W^2(S,Q)$ , one has
    \begin{align*}
        0 = \EE w(x,X) \rbr{T_{Q^*(x)}^Q - I_d}
    \end{align*}
    Note that $\EE w(x,X) = 1$ for any $x$, hence one has
    \begin{align*}
        I_d = \EE w(x,X) T_{Q^*(x)}^{Q}
    \end{align*}
    Recall that $T_{Q^*(x)}^Q = {Q^*(x)}^{-1/2} \left( Q^*(x)^{1/2}Q Q^*(x)^{1/2}\right)^{1/2} Q^*(x)^{-1/2}$, multiplying both sides of the equation with $Q^*(x)^{1/2}$ then gives
    \begin{align*}
        Q^*(x) &= \EE w(x,X) \left( Q^*(x)^{1/2}Q_i Q^*(x)^{1/2}\right)^{1/2}
    \end{align*}
    Taking the largest eigenvalue on both sides, one has
    \begin{align*}
        \lambda_{1}(x)  &\leq \EE \abs{w(x,X)} \cdot \Onorm{\left( Q^*(x)^{1/2}Q Q^*(x)^{1/2}\right)^{1/2}}\\
        &\leq \EE \abs{w(x,X)} \cdot \lambda_{\max}(Q)^{1/2} \lambda_{1}^{1/2}(x)
    \end{align*}
    which implies
    \begin{align*}
        \lambda_{1}(x)
        &\leq \sbr{\EE \rbr{1+\norm{x} \cdot \norm{X}} \lambda_{\max}(Q)^{1/2} }^2 
    \end{align*}
    Under Assumption \ref{assumption: X} and Assumption \ref{assumption: bdd_Q}, we have $\lambda_{\max}(Q_i) \leq \gamma_{\Lambda}(\norm{X_i})$, which implies that
    \begin{align*}
        \lambda_1(x) &\leq \sbr{\EE \gamma_{\Lambda}(\norm{X})^{1/2} + \norm{x} \EE \norm{X} \gamma_{\Lambda}(\norm{X})^{1/2} }^2\\
        &\leq \sbr{C_0^* + C_1^* \norm{x}}^2
    \end{align*}
    where $C_0^*, C_1^* >0$ are finite constants only depend on $c_{\Sigma}, C_{\psi_2}$ from Assumption \ref{assumption: X} and $c_{\Lambda},C_\Lambda$ (from Assumption \ref{assumption: bdd_Q}).
\end{proof}

\begin{lemma}
    \label{lem: F_ind}
    Assume Assumption \ref{assumption: X} and \ref{assumption: bdd_Q} hold. If $X$ and $Q$ are independent, then the following holds.
    \begin{itemize}
        \item $Q^*(x)\equiv Q^*(\mu)$
        \item Assumption \ref{assumption: minimizer_local} holds for $C_\lambda=0$ and $c_\lambda$ large enough.
        \item Assumption \ref{assumption: minimizer_global} holds for $\delta_F=1$, $\alpha_F=2$, $C_F=1$ and some constant $c_F$ large enough.
    \end{itemize}
\end{lemma}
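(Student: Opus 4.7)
The strategy is to exploit two consequences of $X \perp Q$: the weight $w(x,X) = 1 + (x-\mu)^\top\Sigma^{-1}(X-\mu)$ satisfies $\EE w(x,X) = 1$ for every $x$, and $w(x,X)$ factors out of any expectation involving $Q$ alone. Since $W^2(S,Q)$ depends only on $Q$, these two facts give
\[
F(x,S) \;=\; \EE w(x,X)\cdot \EE W^2(S,Q) \;=\; \EE W^2(S,Q),
\]
which is independent of $x$. Its minimizer is therefore constant and must equal $Q^*(\mu)$; this proves the first bullet. Write $Q^* := Q^*(\mu)$ hereafter.

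The second bullet follows from the same factorization: $\EE[-w(x,X)\, dT^Q_{Q^*}] = \EE[-dT^Q_{Q^*}]$ is a fixed operator independent of $x$, so it suffices to produce a constant lower bound on its minimum eigenvalue. Lemma \ref{lem: Q*x} gives $c_\Lambda^{-1} I_d \preceq Q^* \preceq c_\Lambda I_d$, and under independence Assumption \ref{assumption: bdd_Q} evaluated at any fixed $x \in \supp X$ becomes an unconditional a.s.\ bound on the eigenvalues of $Q$. Plugging these into the eigenvalue lower bound for $-dT^Q_{Q^*}$ in Lemma \ref{lem: diff}\ref{eqn: lem_diff_dT_eigen},
\[
\langle -dT^Q_{Q^*}(Y),\, Y\rangle \;\geq\; \tfrac{1}{2}\,\lambda_{\min}^{1/2}\!\bigl({Q^*}^{1/2} Q {Q^*}^{1/2}\bigr)\,\|{Q^*}^{-1/2} Y {Q^*}^{-1/2}\|_F^2,
\]
yields a pointwise bound $\langle -dT^Q_{Q^*}(Y), Y\rangle \geq c\,\|Y\|_F^2$ almost surely, for a constant $c$ depending only on $c_\Lambda$. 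Taking expectation preserves this, giving $\lambda_{\min}(\EE[-dT^Q_{Q^*}]) \geq c$ and hence the claim with $C_\lambda = 0$.

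The third bullet uses the quadratic-approximation inequality in Lemma \ref{lem: diff}\ref{eqn: lem_diff_quad_approx} with $Q_0 = Q^*$, $Q_1 = S$, and reference $S$-argument equal to $Q$. Taking expectation over $Q$ annihilates the linear cross-term $\langle \EE T^Q_{Q^*} - I,\, S - Q^*\rangle$ by first-order optimality of $Q^*$, leaving
\[
F(x,S) - F(x,Q^*) \;\geq\; \frac{2\,\lambda_{\min}(H)}{\bigl(1 + \lambda_{\max}^{1/2}(Q')\bigr)^2}\,\|S - Q^*\|_F^2,
\]
with $H := \EE[-dT^Q_{Q^*}]$ and $Q' := {Q^*}^{-1/2} S {Q^*}^{-1/2}$. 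On $\|S - Q^*\|_F \leq \Delta$, the bound $\lambda_{\max}(Q') \leq c_\Lambda(c_\Lambda + \Delta)$ makes the denominator $\lesssim \Delta$ for $\Delta \geq \delta_F = 1$, and combining with $\|S - Q^*\|_F \geq \delta$ and the constant lower bound on $\lambda_{\min}(H)$ from the second bullet yields the claim with $\alpha_F = 2$, $C_F = 1$, and $c_F$ large enough. The only subtle point throughout is extracting a clean unconditional a.s.\ bound on the eigenvalues of $Q$ from Assumption \ref{assumption: bdd_Q} under independence, for which one observes that the conditional support of $Q$ given $X=x$ equals the unconditional support; once that is in place the rest is routine manipulation of the inequalities already in Lemma \ref{lem: diff}.
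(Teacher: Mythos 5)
Your proof is correct and follows essentially the same route as the paper's for all three bullets. The one notable difference is in the second bullet: the paper bounds $\lambda_{\min}\bigl(\EE[-dT^Q_{Q^*}]\bigr) \geq \EE\,\lambda_{\min}\bigl(-dT^Q_{Q^*}\bigr) \gtrsim \EE\,\lambda_{\min}^{1/2}(Q) \gtrsim \EE\,\gamma_{\Lambda}(\|X-\mu\|)^{-1/2}$ and then argues this last expectation is $\gtrsim 1$ using the sub-Gaussianity of $X$, whereas you instead note that under independence the conditional law of $Q$ given $X=x$ is the marginal law of $Q$ for every $x \in \supp X$, so Assumption \ref{assumption: bdd_Q} at any one fixed $x_0 \in \supp X$ already yields a deterministic almost-sure bound $Q \in \cS_d(M^{-1},M)$ with $M = \gamma_{\Lambda}(\|x_0-\mu\|)$, and then the pointwise quadratic-form bound from Lemma \ref{lem: diff}\ref{eqn: lem_diff_dT_eigen} immediately gives a constant lower eigenvalue bound. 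Both routes are valid; yours avoids the integration against the law of $X$ but requires the side observation about conditional supports, and the constant $c$ you obtain depends on $C_\Lambda$ and on the choice of $x_0$ (equivalently on $\inf_{x\in\supp X}\|x-\mu\|$), not only on $c_\Lambda$ as stated — this is a minor imprecision that does not affect the conclusion. For the third bullet, you pull the $Q$-independent prefactor $\bigl(1+\lambda_{\max}^{1/2}(Q')\bigr)^{-2}$ out of the expectation and invoke $\lambda_{\min}(H)>0$ directly, while the paper keeps the expectation on the full product and re-derives positivity via $\EE\lambda_{\min}^{1/2}(Q)\gtrsim 1$; these are the same computation organized slightly differently.
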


\begin{proof}
    Independence between $X$ and $Q$ implies that
    \begin{align*}
        F(x,S) &= \EE w(x,X)W^2(S,Q)\\
        &=\EE w(x,X) \cdot \EE W^2(S,Q)\\
        &=\EE W^2(S,Q)
    \end{align*}
    The existence and uniqueness of the minimizer of $F(x,\cdot)$ follow from \citet[Proposition 3.2.3, Proposition 3.2.7]{invitation}. Moreover, since $\EE W^2(S,Q)$ does not depend on $x$, we have $Q^*(x)\equiv Q^*(\mu)$. We denote $Q^*:=Q^*(\mu)$ from now on. Note that Lemma \ref{lem: Q*x} implies that $Q^* \in \cS_d(c_{\Lambda}^{-1},c_{\Lambda})$
    \paragraph{Verification of Assumption \ref{assumption: minimizer_local}:} Again by independence, one has
    \begin{align*}
         \EE \sbr{-w(x,X)dT_{Q^*}^{Q}}&= \EE \sbr{-dT_{Q^*}^{Q}}
    \end{align*}
    Therefore, one can obtain
    \begin{align*}
        \eigmin{\EE \sbr{-w(x,X)dT_{Q^*}^{Q}}} &= \eigmin{\EE \sbr{-dT_{Q^*}^{Q}}}\\
        &\overset{(i)}{\geq} \EE \frac{\lambda_{\min}^{1/2}( Q^{1/2}Q^*Q^{1/2} )}{2} \cdot \lambda_{\min}^2(Q^{*-1})\\
        &\gtrsim \EE \lambda_{\min}^{1/2}(Q)\\
        &\overset{(ii)}{\gtrsim} \EE \sbr{\gamma_{\Lambda}(\norm{X-\mu})}^{-1/2}\\
        &\gtrsim \EE \frac{1}{1\vee \norm{X-\mu}^{C_\Lambda/2}}\\
        &\overset{(iii)}{\geq} c_{\mathrm{eig}}
    \end{align*}
    for some constant $c_{\mathrm{eig}}$ small enough. Here (i) follows form Lemma \ref{lem: diff}, (ii) is a result of Assumption \ref{assumption: bdd_Q} and (iii) is due to Assumption \ref{assumption: X}. The proof is then complete.
    
    \paragraph{Verification of Assumption \ref{assumption: minimizer_global}:}
    
    In order to verify (\ref{eqn: assmptn_min_sep}). Let us denote $Q'=Q^{*-1/2} S Q^{*-1/2}$. Then (\ref{eqn: lem_diff_quad_approx}) and (\ref{eqn: lem_diff_dT_eigen}) in Lemma \ref{lem: diff} imply the following lower bound.
    \begin{equation}
        \begin{aligned}
            &F(x,S) - F(x,Q^*)\\
            =&\EE \sbr{W^2(S,Q) -W^2(Q^*,Q)} \\
            \geq &\EE \inner{I_d-T^Q_{Q^*} }{ S-Q^*} + \EE \frac{2}{\left( 1+\lambda_{\max}^{1/2}(Q'(x)) \right)^2} \inner{-dT_{Q^*}^{Q}(S- Q^*)}{ S-Q^*}\\
            \overset{(i)}{\geq} &\EE \frac{2}{\left( 1+\lambda_{\max}^{1/2}(Q'(x)) \right)^2} \cdot\frac{\lambda_{\min}^{1/2} \rbr{Q^{1/2}Q^*Q^{1/2}}}{2}\cdot \Fnorm{Q^{*-1/2}(S- Q^*)Q^{*-1/2}}^2
        \end{aligned}
        \label{eqn: F_ind_quad1}
    \end{equation}
    Here (i) follows from the fact that $\EE(T^Q_{Q^*}-I_d)=0$ which is a consequence of the optimality condition and independence.

    Triangle inequality implies that
    \begin{align*}
        \cbr{S \in \cS_d^{++}: \delta \leq \Fnorm{S-Q^*}\leq \Delta} \subset  \cbr{S\in \cS_d^{++}: \Onorm{S} \leq c_{\Lambda} +\Delta}
    \end{align*}
    which combined with (\ref{eqn: F_ind_quad1}) implies for any $\Delta \geq \delta$,
    \begin{align*}
        &\inf_{\delta \leq \Fnorm{S-Q^*}\leq \Delta} F(x,S) - F(x,Q^*)\\
        \geq & \inf_{\delta \leq \Fnorm{S-Q^*}\leq \Delta} \EE \frac{2}{\left( 1+\lambda_{\max}^{1/2}(Q') \right)^2} \cdot\frac{\lambda_{\min}^{1/2} \rbr{Q^{1/2}Q^* Q^{1/2}}}{2}\cdot \Fnorm{Q^{*-1/2}(S- Q^*)Q^{*-1/2}}^2\\
        \overset{(i)}{\gtrsim} &\inf_{\delta \leq \Fnorm{S-Q^*}\leq \Delta} \frac{1}{\rbr{1+\Delta^{1/2}}^2}  \cdot \Fnorm{S-Q^*}^2  \\
        \overset{(ii)}{\gtrsim} &\inf_{\delta \leq \Fnorm{S-Q^*}\leq \Delta} \frac{1}{1+\Delta} \cdot \Onorm{S-Q^*}^2\\
        \geq &c_{\mathrm{sep}} \frac{1}{\Delta} \delta^2
    \end{align*}
    where $c_{\mathrm{sep}}>0$ is a constant independent of $n$ that is small enough. Here (i) follows from the inequalities that $\eigmax{Q'} \lesssim \eigmax{S}\lesssim \Delta$, $\eigmin{Q^{1/2}Q^*Q^{1/2}} \gtrsim \eigmin{Q}$ and $\EE \eigmin{Q}^{1/2} \gtrsim 1$; (ii) holds by choosing $\delta_F=1$ so that $\Delta \geq \delta_F = 1$.

    Combining results above, Assumption \ref{assumption: minimizer_global} holds with with $\delta_F=1$, $\alpha_F = 2$, $C_F = 1$ and $c_F = 1/c_{\mathrm{sep}}$. The proof is then  complete.

\end{proof}

Recall the definition of $\gamma_{\Lambda}(\cdot)$ from Assumption \ref{assumption: bdd_Q} and the definition of $\alpha_F,\delta_F, C_F$ from Assumption \ref{assumption: minimizer_global}. Denote $M_L:= \gamma_{\Lambda}(L)$ for any $L>0$. With these notations in place, we now present the continuity theorem for $Q^*(\cdot)$ as follows.
\begin{lemma}[H\"older continuity of $Q^*(\cdot)$]
    \label{lem: holder}
    Suppose Assumption \ref{assumption: X}-\ref{assumption: minimizer_local} hold. Then for any $L\geq e$ that satisfies $M_L>\delta_F$, any $x,\tx \in B_\mu(L)$,
    \begin{equation}
        \norm{Q^*(x) -Q^*(\tx)} \leq C_H \max \cbr{\rbr{L^{C_F}M_L^{C_F+1} }^{1/\alpha_F}\norm{x-\tx}^{1/\alpha_F} , L^{C_F}M_L^{C_F+1} \norm{x-\tx}}
        \label{eqn: lem_holder}
    \end{equation}
    holds as long as the constant $C_H$ is large enough. Here $C_H$ is independent of $n$ but depends on $d$.
\end{lemma}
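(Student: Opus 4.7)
The plan is to sandwich $F(x,Q^*(\tx))-F(x,Q^*(x))$ between an upper bound derived from the fact that $Q^*(\tx)$ minimizes $F(\tx,\cdot)$ and a lower bound from the well-separation Assumption~\ref{assumption: minimizer_global}, then invert to get a bound on $\delta := \norm{Q^*(x)-Q^*(\tx)}$.

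For the upper bound, since $F(\tx,Q^*(\tx)) \leq F(\tx,Q^*(x))$ one has the standard ``swap'' estimate
\begin{align*}
0 \leq F(x,Q^*(\tx)) - F(x,Q^*(x))
&\leq [F(x,Q^*(\tx))-F(\tx,Q^*(\tx))] - [F(x,Q^*(x))-F(\tx,Q^*(x))]\\
&= (x-\tx)^\top \Sigma^{-1}\,\EE\bigl[(X-\mu)\bigl(W^2(Q^*(\tx),Q)-W^2(Q^*(x),Q)\bigr)\bigr].
\end{align*}
Then Cauchy--Schwarz combined with the crude bound $W^2(S,Q) \leq 2d(\lambda_{\max}(S)+\lambda_{\max}(Q))$ from Lemma~\ref{lem: diff}\ref{eqn: diff_W_sup}, the uniform bound $\lambda_{\max}(Q^*(z)) \leq M_L$ on $B_\mu(L)$ from Lemma~\ref{lem: Q*x}, and integrability of $\lambda_{\max}(Q)^2$ via Assumptions~\ref{assumption: X}--\ref{assumption: bdd_Q} yield $F(x,Q^*(\tx)) - F(x,Q^*(x)) \lesssim M_L \norm{x-\tx}$, with constants depending only on $d$ and the problem parameters.

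For the lower bound, Lemma~\ref{lem: Q*x} gives $Q^*(x), Q^*(\tx) \in \cS_d(M_L^{-1}, M_L)$, so $\delta \leq 2\sqrt{d}\,M_L$, and $\Delta := 2\sqrt{d}\,M_L \geq \delta_F$ is an admissible radius in Assumption~\ref{assumption: minimizer_global}. I would then split on whether $\delta \leq \delta_F$ or $\delta > \delta_F$. When $\delta \leq \delta_F$, invoking the assumption with $(\delta,\Delta)$ gives $F(x,Q^*(\tx))-F(x,Q^*(x)) \gtrsim \delta^{\alpha_F}/[L^{C_F} M_L^{C_F}]$; combined with the upper bound this produces $\delta^{\alpha_F} \lesssim L^{C_F} M_L^{C_F+1}\norm{x-\tx}$, matching the first term of the maximum. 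When $\delta > \delta_F$, invoking the assumption with $\delta_F$ in place of $\delta$ and the same $\Delta$ produces a saturated lower bound of order $\delta_F^{\alpha_F}/[L^{C_F} M_L^{C_F}]$, which forces $L^{C_F} M_L^{C_F+1}\norm{x-\tx} \gtrsim \delta_F^{\alpha_F}$; paired with the a priori bound $\delta \leq 2\sqrt{d}\,M_L$ this delivers the linear second term of the maximum.

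The principal technical hurdle is the $\delta > \delta_F$ regime, where the well-separation bound no longer sees the actual magnitude of $\delta$, so one must reconcile the trivial $\delta \lesssim M_L$ with the target $L^{C_F} M_L^{C_F+1}\norm{x-\tx}$ using a constant $C_H$ that is uniform in $L$. I anticipate this is handled by replacing the Cauchy--Schwarz estimate with a mean-value bound on $\abs{W^2(Q^*(\tx),Q)-W^2(Q^*(x),Q)}$ along the Euclidean segment $\{(1-t)Q^*(x)+tQ^*(\tx): t\in[0,1]\}$, using the differential formula $dW^2(S,Q)(H) = \inner{I-T_S^Q}{H}$ from Lemma~\ref{lem: diff}\ref{eqn: lem_diff_quad_approx} and the spectral control of $T_S^Q$ from Lemma~\ref{lem: diff}\ref{eqn: lem_diff_dkT_QS}, to expose an additional factor of $\delta$ that makes the resulting polynomial inequality cleanly solvable.
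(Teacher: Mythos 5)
Your treatment of the $\delta \leq \delta_F$ regime is correct and matches the paper's local step: the swap estimate gives the upper bound $F(x,Q^*(\tx))-F(x,Q^*(x)) \lesssim M_L \norm{x-\tx}$, and Assumption \ref{assumption: minimizer_global} with $\Delta \asymp \sqrt{d}\,M_L$ gives the matching lower bound, yielding $\delta^{\alpha_F} \lesssim L^{C_F}M_L^{C_F+1}\norm{x-\tx}$.

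The gap is in the $\delta > \delta_F$ regime, and your anticipated fix would not close it. Trace the constants: the saturated lower bound forces $\norm{x-\tx} \gtrsim \delta_F^{\alpha_F}/(L^{C_F}M_L^{C_F+1})$, and you then invoke $\delta \leq 2\sqrt{d}\,M_L$. Writing $\delta \leq 2\sqrt{d}\,M_L \cdot \frac{\norm{x-\tx}}{\norm{x-\tx}}$ and substituting the lower bound on $\norm{x-\tx}$ gives $\delta \lesssim L^{C_F}M_L^{C_F+2}\norm{x-\tx}$, not $L^{C_F}M_L^{C_F+1}\norm{x-\tx}$. Equivalently, making the inequality $\delta \leq C_H L^{C_F}M_L^{C_F+1}\norm{x-\tx}$ hold with the trivial bound $\delta \leq 2\sqrt{d}\,M_L$ requires $C_H \gtrsim M_L/\delta_F^{\alpha_F}$, which grows with $L$ (since $M_L = \gamma_\Lambda(L) = c_\Lambda(L\vee 1)^{C_\Lambda}$) and is therefore not independent of $n$. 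Your proposed mean-value refinement of the upper bound — exposing an extra factor of $\delta$ so that $F(x,Q^*(\tx))-F(x,Q^*(x)) \lesssim M_L\,\delta\,\norm{x-\tx}$ — does not rescue this: in the $\delta > \delta_F$ case the well-separation bound is saturated at $\delta_F^{\alpha_F}/\gamma_F$, and combining it with the refined upper bound yields $\delta \gtrsim \delta_F^{\alpha_F}/(M_L\gamma_F\norm{x-\tx})$, a \emph{lower} bound on $\delta$, which is of no use.

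What the paper does instead is chain: partition the segment from $x$ to $\tx$ into $K = \lceil\norm{x-\tx}/t_0\rceil$ pieces of length at most $t_0 := \delta_F^{\alpha_F}/(CM_L\gamma_F(L,\sqrt{d}M_L))$. By a contradiction argument, each consecutive pair $(x_k,x_{k+1})$ falls into the local ($\delta \leq \delta_F$) regime, so $\Fnorm{Q^*(x_{k+1})-Q^*(x_k)} \leq \delta_F$. Summing gives $\Fnorm{Q^*(x)-Q^*(\tx)} \leq K\delta_F \leq 2\delta_F\norm{x-\tx}/t_0 \lesssim L^{C_F}M_L^{C_F+1}\norm{x-\tx}$, recovering the stated linear term with a constant uniform in $L$. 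The key point that your single-shot argument misses is that the chaining caps each increment at the \emph{constant} $\delta_F$, whereas the trivial bound $\delta \leq 2\sqrt{d}\,M_L$ pays an $L$-dependent price.
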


\begin{proof}
    Note that $M_L \geq 1$ by the properties of $\gamma_{\Lambda}$ in Assumption \ref{assumption: bdd_Q}. The proof is divided into 3 steps.
    \begin{enumerate}
        \item First, we should that $F(x,S)$ is Lipschitz in $x$ for any fixed $S \in \cS_d(M_L^{-1},M_L)$.
        \item Next, the Lipschitzness in step 1 and Assumption \ref{assumption: minimizer_global} imply the local H\"older continuity of $Q^*(x)$ with respect to $x$.
        \item Then, an argument based on linear interpolation between $x$ and $\tx$ implies Lipschitz continuity for $x,\tx$ separated far apart.
    \end{enumerate}
    Combining the 3 step above finally gives (\ref{eqn: lem_holder}).

    To start with, note that Lemma \ref{lem: Q*x} implies 
    \begin{equation*}
        Q^*(x) \in \cS_d(M_L^{-1},M_L), \quad \text{for any } x \in B(\mu,L)
    \end{equation*}
    Recall that $F(x,S)=\EE \sbr{w(x,X)W^2(S,Q)}$.

\paragraph*{1. $F$ is Lipschitz in $x$:} for any $x,\tx \in B_\mu(L)$, one can obtain

\begin{align}
    \abs{F(x,S) - F(\tx, S)} &= \abs{\EE (x-\tx)\Sigma^{-1}(X-\mu)W^2(S,Q)} \nonumber\\
    &\lesssim \norm{x-\tx} \cdot \EE \norm{(X-\mu)W^2(S,Q)} \nonumber\\
    &\overset{(i)}{\lesssim} \norm{x-\tx} \cdot \EE \norm{X-\mu} \left( \eigmax{S}+ \eigmax{Q} \right) \nonumber\\
    &=\norm{x-\tx} \cdot \sbr{\EE \norm{X-\mu}\eigmax{S} + \EE \norm{X-\mu} \eigmax{Q}}\nonumber\\
    &\overset{(ii)}{\lesssim} \norm{x-\tx} \left(\eigmax{S} + 1 \right) \label{eqn: prf_lem_holder_lip_x}
\end{align}
    
Here (i) follows from (\ref{eqn: diff_W_sup}) in Lemma \ref{lem: diff}, and (ii) follows from the concentration of $X$ (Assumption \ref{assumption: X}) and boundedness of $Q$ given $X$ (Assumption \ref{assumption: bdd_Q}).

\paragraph*{2. Local H\"older continuity:} for any $x,\tx \in B_\mu(L)$, one has
\begin{equation}
    \begin{aligned}
        0\leq &F(x,Q^*(\tx)) - F(x,Q^*(x))\\
        = &\rbr{F(x,Q^*(\tx)) - F(\tx, Q^*(\tx))} +  \underbrace{\rbr{F(\tx,Q^*(\tx)) - F( \tx, Q^*(x))}}_{\leq 0} +  \rbr{F\left( \tx, Q^*(x) \right) - F(x,Q^*(x))}\\
        \leq &\abs{F(x,Q^*(\tx)) - F(\tx, Q^*(\tx))} + \abs{F\left( \tx, Q^*(x) \right) - F(x,Q^*(x))}\\
        \overset{(i)}{\lesssim}& \norm{x-\tx}\cdot \rbr{\eigmax{Q^*(x)} + \eigmax{Q^*(\tx)} + 2 }\\
        \overset{(ii)}{\leq} &C M_L \norm{x-\tx}
    \end{aligned}
    \label{eqn: prf_lem_holder_lip_x_*}
\end{equation}
provided that constant $C>0$ is large enough. Here (i) follows from (\ref{eqn: prf_lem_holder_lip_x}) and (ii) follows since $Q^*(x), Q^*(\tx) \in \cS_d(M_L^{-1},M_L)$ for $x, \tx \in B_\mu(L)$ by Lemma \ref{lem: Q*x}.

Note also that for any $x,\tx \in B_\mu(L)$, one has $\norm{Q^*(x) - Q^*(\tx)}_F \leq \sqrt{d} M_L$.
Recall the definition of $\delta_F, \alpha_F, \gamma_2(\cdot,\cdot)$ from Assumption \ref{assumption: minimizer_global}.
Then for any $x,\tx \in B_\mu(L)$ that satisfy
\begin{align}
    \norm{x- \tx}\leq \frac{\delta_F^{\alpha_F}}{C M_L\gamma_2(L,\sqrt{d} M_L)}=:t_0
    \label{eqn: prf_lem_holder_local_x}
\end{align}
one can obtain
\begin{equation}
    \norm{Q^*(x)-Q^*(\tx)}_F \leq  \rbr{C M_L \gamma_2(L,\sqrt{d}M_L)}^{1/\alpha_F} \cdot  \norm{x-\tx}^{1/\alpha_F}
    \label{eqn: prf_lem_holder_local_holder}
\end{equation}
To see this, note that first we have $\norm{Q^*(x) - Q^*(\tx)}_F < \delta_F$ since otherwise Assumption \ref{assumption: minimizer_global} then implies that
\begin{equation*}
    F(x,Q^*(\tx)) - F(x,Q^*(x)) > \frac{\delta_F^{\alpha_F}}{\gamma_2(L,\sqrt{d}M_L)} \overset{(i)}{\geq} C M_L \norm{x-\tx}
\end{equation*}
Here $(i)$ results from (\ref{eqn: prf_lem_holder_local_x}).
This is a contradiction with (\ref{eqn: prf_lem_holder_lip_x_*}). Next applying Assumption \ref{assumption: minimizer_global} with $\delta = \norm{Q^*(x)-Q^*(\tx)}_F$ and $\Delta = \sqrt{d} M_L$ gives
\begin{align*}
    F(x,Q^*(\tx)) - F(x,Q^*(x)) \geq \frac{\norm{Q^*(x)-Q^*(\tx)}^{\alpha_F}}{\gamma_2(L, \sqrt{d}M_L)}
\end{align*}
which combined with (\ref{eqn: prf_lem_holder_lip_x}) implies (\ref{eqn: prf_lem_holder_local_holder}).

\paragraph*{3. $x,\tx$ far apart:} for any $x,\tx \in B_\mu(L)$ such that $\norm{x-\tx}>t_0$, let $K:=  \ceil{\norm{x-\tx}/t_0}$ and 
\begin{equation*}
    x_k := \rbr{1-\frac{k}{K}} x + \frac{k}{K} \tx, \quad k=0,\ldots,K
\end{equation*}
Then $\norm{x_{k+1}-x_k} \leq t_0$, and one can apply (\ref{eqn: prf_lem_holder_local_holder}) to see that
\begin{equation}
    \begin{aligned}
        \norm{Q^*(x)-Q^*(\tx)} &\leq \sum_{k=0}^{K-1} \norm{Q^*(x_{k+1})- Q^*(x_k)}\\
        &\leq \sum_{k=0}^{K-1} \rbr{C M_L\gamma_2(L,\sqrt{d} M_L)}^{1/\alpha_F} \cdot  \norm{x_{k+1}-x_k}^{1/\alpha_F}\\
        &\leq \sum_{k=0}^{K-1} \rbr{C M_L\gamma_2(L, \sqrt{d}M_L)}^{1/\alpha_F} \cdot  t_0^{1/\alpha_F}\\
        &= K \delta_F\\
        &\leq 2 \frac{\norm{x-\tx}}{t_0} \delta_F\\
        &\lesssim M_L \gamma_2(L, \sqrt{d}M_L) \cdot \norm{x-\tx}
    \end{aligned}
    \label{eqn: prf_lem_holder_global_lip}
\end{equation}
Here the last line follows since $\delta_F$ is a constant defined in Assumption \ref{assumption: minimizer_global}.

Finally, combine (\ref{eqn: prf_lem_holder_local_holder}) and (\ref{eqn: prf_lem_holder_global_lip}) to see that for any $x,\tx \in B_{\mu}(L)$, one has
\begin{equation*}
    \norm{Q^*(x)- Q^*(\tx)} \leq C_H \cdot \max \cbr{\rbr{M_L \gamma_2(L,M_L)}^{1/\alpha_F}\norm{x-\tx}^{1/\alpha_F} , M_L \gamma_2(L,M_L) \norm{x-\tx}}
\end{equation*}
for some constant $C_H$ independent of $n$ (but depends on $d$).
\end{proof}

    \section{Proof of Theorem \ref{thm: unfm_conv} and Corollary \ref{cor: W_moment}}
    \label{sec: prf_unfm_conv}
    
\subsection{Proof outline for uniform convergence}

We start by introducing some notations. Define 
\begin{equation}
    \begin{aligned}
        \hF_{\rho}(x,S) &= \frac{1}{n} \sum_{i=1}^{n} w_{n,\rho}(x,X_i)W^2(S,Q_i)\\
        \hA_{\rho} &:= \frac{1}{n} \sum_{i=1}^{n} w_{n,\rho}(x,X_i)\left( T^{Q_i}_{Q^*(x)} - I_d \right)\\
        \hPhi_{\rho}(x) &:= \frac{1}{n} \sum_{i=1}^{n} w_{n,\rho}(x,X_i)  dT^{Q_i}_{Q^*(x)}\\
        \hPsi_{\rho}\rbr{x, S} &:= \frac{1}{n} \sum_{i=1}^{n} w_{n,\rho}(x,X_i)  d^2 T^{Q_i}_{S}
    \end{aligned}
\end{equation}
and 
\begin{equation}
    \begin{aligned}
        \tF_n(x,S)&:= \frac{1}{n} \sum_{i=1}^{n} w(x,X_i)W^2(S,Q_i) \\
        \tA_n(x)& := \frac{1}{n} \sum_{i=1}^{n} w(x,X_i) \left( T_{Q^*(x)}^{Q_i} - I_d \right)\\
        \tPhi_n(x)&:=\frac{1}{n} \sum_{i=1}^{n} w(x,X_i)  dT^{Q_i}_{Q^*(x)}\\
        \tPsi_n(x, S)&:=\frac{1}{n} \sum_{i=1}^{n} w(x,X_i)  d^2 T^{Q_i}_{S} 
    \end{aligned}
    \label{eqn: prf_conv_def_til}
\end{equation}
For any $t \geq 1$, define the event $ E_0(t) $ as
\begin{align*}
    E_0(t) &:= \cbr{\norm{X_i - \mu}\leq t, Q_i \in \cS_d(M_t^{-1}, M_t), i \in [n]}, \quad
    \text{where }   M_t := \gamma_\Lambda(t) \nonumber
\end{align*}
Specifically, let $L_{\tau}= \sqrt{(1+\tau)\log n}$ for $\tau \geq 0$ and define
\begin{align}
    E_0 = E_0 \rbr{C_{\psi_2}L_{\tau}}, \label{eqn: prf_conv_E0}
\end{align}
Under Assumption \ref{assumption: X} and \ref{assumption: bdd_Q}, we have $\PP(E_0^c)\leq 2n^{-(1+\tau)}$. 

We now give an outline of the proof strategy for our uniform convergence theory, namely, Theorems \ref{thm: unfm_conv} and Corollary \ref{cor: W_moment}.
\begin{itemize}
    \item Lemma \ref{lem: bdd} derives a tail probability bound for the largest eigenvalue of $\hQ_{\rho}(x)$.
    \item Lemma \ref{lem: F_conc} obtains uniform concentration of $\hF_{\rho}$.
    \item Lemma \ref{lem: prf_conv_lem_slow} shows the uniform consistency of $\hQ_{\rho}(x)$ with a potentially slow rate;
    \item Lemma \ref{lem: conc_tld_exp}-\ref{lem: conv_smmry} deliver upper bounds for various quantities related to $\hA_{\rho}(x)$, $\hPhi_{\rho}(x)$ and $\hPsi_{\rho}(x;S)$.
    \item Lemma \ref{lem: fast} arrives at the non-asymptotic $\sqrt{n}$-uniform convergence in the Frobenius norm.
    \item Lemma \ref{lem: F_moment} combines Lemma \ref{lem: bdd} and \ref{lem: fast} to get a $\sqrt{n}$-moment bound for the uniform Frobenius error.
    \item Lemma \ref{lem: W_moment} extends the above moment of Frobenius error result to Wasserstein distance.
\end{itemize}
Theorem \ref{thm: unfm_conv} and Corollary \ref{cor: W_moment} then follow immediately by combining Lemmas \ref{lem: bdd}-\ref{lem: W_moment}.

\begin{lemma}
    \label{lem: bdd}
    Suppose Assumption \ref{assumption: X}-\ref{assumption: bdd_Q} hold. If $n \geq 3$, $1 \leq L \leq \sqrt{n}$, then there exist constant $C_b, c_b, c_e >0$ independent of $n$ such that with $t_0 = C_b L$, the following holds:
    \begin{itemize}
        \item when $\rho = 0$, for any $t \geq t_0$, the tail probability satisfies
        \begin{align*}
            \PP \rbr{\sup_{x \in B_{\mu}(L)}\hlambda_1(x) \geq t^2} \lesssim  h_0(t) + \exp(-c_e n)
        \end{align*}
        \item when $\rho = n^{-1}$, the tail probability satisfies
        \begin{align*}
            \PP \rbr{\sup_{x \in B_{\mu}(L)}\hlambda_1(x) \geq t^2} \lesssim
            \begin{cases}
                h_{0}(t) + \exp(-c_e n), & t \in \sbr{t_0, nt_0 }\\
                h_{0}(t) + \exp(-c_e n) \wedge h_0(t/n), & t \in \rbr{nt_0, +\infty}
            \end{cases}
        \end{align*}
    \end{itemize}
    where we denote
    \begin{align*}
        h_0(t):= \exp \rbr{-\sqrt{\frac{t}{c_b L}}}
        +n \cdot\exp  \sbr{- \rbr{\frac{t}{c_b L}}^{2/(C_{\Lambda}+2)}}
    \end{align*}    
    As a result, denote
    \begin{align}
        \tM_L&:= \rbr{c_b \vee C_b}^2 L^2  L_{\tau}^{4+\rbr{4 \vee 2C_1}}\\
        \tE_0&:= \cbr{ \sup_{x \in B_\mu(L)} \lambda_{\max}\left( \hQ_{\rho}(x) \right)  <  \tM_L} 
        \label{eqn: prf_conv_lem_slow1}
    \end{align}
    then
    \begin{align}
        \PP(\tE_0^c) \leq C_{\lambda,\tau} n^{-\tau} \label{eqn: bdd_E0c}
    \end{align}
    for some constant $C_{\lambda, \tau}$ independent of~$n$.
\end{lemma}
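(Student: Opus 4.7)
The plan is to exploit the first-order optimality condition for $\hQ_{\rho}(x)$ in the spirit of \citet{agueh11_Wbary}, then convert it into an inequality that bounds $\lambda_{\max}(\hQ_{\rho}(x))$ from above by a data functional whose tails can be controlled. Writing the optimality condition as $\tfrac{1}{n}\sum_i w_{n,\rho}(x,X_i) T^{Q_i}_{\hQ_\rho(x)}=I_d$ (using that the weights sum to $1$) and multiplying by $\hQ_\rho(x)^{1/2}$ on each side yields the fixed-point identity $\hQ_\rho(x) = \tfrac{1}{n}\sum_i w_{n,\rho}(x,X_i)\bigl(\hQ_\rho(x)^{1/2} Q_i \hQ_\rho(x)^{1/2}\bigr)^{1/2}$. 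Taking the operator norm, using $\Onorm{(A^{1/2}BA^{1/2})^{1/2}}\leq \Onorm{A}^{1/2}\Onorm{B}^{1/2}$, and canceling one factor of $\hlambda_1(x)^{1/2}$ gives the scalar inequality
\begin{equation*}
    \hlambda_1(x)^{1/2} \;\leq\; \frac{1}{n}\sum_{i=1}^n \abs{w_{n,\rho}(x,X_i)}\, \lambda_{\max}(Q_i)^{1/2},
\end{equation*}
which is the key deterministic estimate; notably, the right-hand side is independent of $x$ once one bounds $|w_{n,\rho}(x,X_i)|\leq 1+(L+\norm{\bar X-\mu})\Onorm{\hSigma_\rho^{-1}}\norm{X_i-\bar X}$ for $x\in B_\mu(L)$, which is how the supremum in $x$ will be handled uniformly.

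Next, I would isolate the two sources of randomness in the bound on $|w_{n,\rho}|$: the sample-mean deviation $\norm{\bar X-\mu}$ and the regularized inverse covariance $\hSigma_\rho^{-1}$. Define the event $\cE_\Sigma := \{\norm{\bar X-\mu}\leq 1,\; \Onorm{\hSigma_\rho^{-1}}\leq M_\rho\}$, where $M_0\asymp \eigminn{-1}{\Sigma}$ (achieved with probability $1-e^{-c_e n}$ when $\rho=0$ via a standard sub-Gaussian matrix concentration) and $M_{n^{-1}}\leq n$ (deterministic when $\rho=n^{-1}$). Invoking Assumption~\ref{assumption: bdd_Q} in the form $\lambda_{\max}(Q_i)^{1/2}\leq \sqrt{c_\Lambda}(1\vee\norm{X_i-\mu})^{C_\Lambda/2}$, one obtains on $\cE_\Sigma$
\begin{equation*}
    \sup_{x\in B_\mu(L)} \hlambda_1(x)^{1/2} \;\lesssim\; L\,M_\rho \cdot \frac{1}{n}\sum_{i=1}^n (1\vee\norm{X_i-\mu})^{1+C_\Lambda/2},
\end{equation*}
so that tail control for $\sup_x \hlambda_1(x)$ reduces to tail control of this sub-Weibull sum plus the failure probability of $\cE_\Sigma$. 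For $\rho=n^{-1}$ the worst-case $M_\rho=n$ is what produces the extra $h_0(t/n)$ bound in the second regime.

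For the tail of the data functional, the random variables $Z_i:=(1\vee\norm{X_i-\mu})^{1+C_\Lambda/2}$ have sub-Weibull tails (since $X$ is sub-Gaussian by Assumption~\ref{assumption: X}), so using the elementary bound $\tfrac{1}{n}\sum_i Z_i \leq \max_i Z_i$ together with a union bound over $i$ translates sub-Weibull concentration of a single $Z_i$ into the piece $n\exp\bigl[-(t/(c_b L))^{2/(C_\Lambda+2)}\bigr]$ in $h_0(t)$; the exponent arises because squaring to pass from the bound on $\hlambda_1^{1/2}$ to a bound on $\hlambda_1$ halves the sub-Weibull index. The companion piece $\exp(-\sqrt{t/(c_b L)})$ captures the sub-Weibull-of-index-$1/2$ contribution coming from the product-of-sub-Gaussians fluctuations of $\norm{\bar X-\mu}\cdot\norm{X_i-\bar X}$ and of $\Onorm{\hSigma_\rho-\Sigma}$ (each sub-exponential, hence sub-Weibull $1/2$ after the squaring above). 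Combining these two tail terms with $\PP(\cE_\Sigma^c)\leq e^{-c_e n}$ yields the claimed bound on $\PP(\sup_x \hlambda_1(x)\geq t^2)$ in both regimes.

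The main obstacle is the careful sub-Weibull bookkeeping through the product $\norm{X_i-\bar X}\,\lambda_{\max}(Q_i)^{1/2}$ and the squaring step from $\hlambda_1^{1/2}$ to $\hlambda_1$, so that the two distinct decay rates in $h_0$ appear with precisely the exponents $1/2$ and $2/(C_\Lambda+2)$. Once the tail bound is established, the conclusion $\PP(\tE_0^c)\leq C_{\lambda,\tau}n^{-\tau}$ is a direct calibration: setting $t=\sqrt{\tM_L}$ with $\tM_L\asymp L^2 L_\tau^{4+(4\vee 2C_\Lambda)}$ makes $\sqrt{t/(c_b L)}\geq L_\tau^2 = (1+\tau)\log n$ and $(t/(c_b L))^{2/(C_\Lambda+2)}\geq L_\tau^2$ simultaneously, so both summands of $h_0(\sqrt{\tM_L})$ are at most $O(n^{-\tau})$, which dominates the $e^{-c_e n}$ term for any fixed $\tau$.
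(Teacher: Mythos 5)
Your proposal is correct and follows essentially the same route as the paper: you use the fixed-point identity from the first-order optimality condition, peel off one factor of $\hlambda_1(x)^{1/2}$ via the submultiplicativity of the operator norm, bound $|w_{n,\rho}|$ through $\Onorm{\hSigma_\rho^{-1}}$ and sub-Gaussian concentration of $\bar X$, and then appeal to the polynomial-in-$\|X\|$ upper bound on $\lambda_{\max}(Q)$ from Assumption~\ref{assumption: bdd_Q} together with a union bound over $i$ to obtain the sub-Weibull tail $h_0$. The paper's Claim~\ref{claim: bdd_1} carries out the same three-way split on $\Onorm{\hSigma_\rho^{-1}}$ (at thresholds $2$ and $n$) that your $\cE_\Sigma$ event encodes implicitly; the only place worth being a bit more explicit is that for $\rho=n^{-1}$ one needs both the high-probability bound $\Onorm{\hSigma_\rho^{-1}}\le 2$ (yielding $h_0(t)$ plus $e^{-c_en}$) and the deterministic bound $\Onorm{\hSigma_\rho^{-1}}\le n$ (yielding $h_0(t/n)$), and then take the minimum of the two—otherwise always using the worst-case $M_\rho=n$ gives a bound $h_0(t/n)$ that is not even defined for $t/n<t_0$ and would not recover the stated tail in the regime $t\in[t_0,nt_0]$.
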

\begin{proof}
    See Appendix \ref{subsec: prf_lem_bdd}.
\end{proof}

Denote $F(x,S) = \EE \tF_n(x,S)$.
\begin{lemma}
    \label{lem: F_conc}
    Instate the notations and assumptions in Theorem \ref{thm: unfm_conv} and Lemma \ref{lem: bdd}. Then for any $\tau \geq 0$, 
    \begin{align*}
        \sup_{\genfrac{}{}{0pt}{2}{x\in B_\mu(L)}{S \in \cS_d( 0,\tM_L)} } \abs{\hF_{\rho}(x,S) - F(x,S)} = C_{F,\tau} \frac{L \tM_L^3}{\sqrt{n}} 
    \end{align*}
    with probability greater than $1- c_{F,\tau}n^{-\tau}$ for constants $C_{F,\tau}, c_{F,\tau}$ independent of $n$.
\end{lemma}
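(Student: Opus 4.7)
The plan is to decompose
\[
\hF_{\rho}(x,S) - F(x,S) \;=\; \underbrace{\sbr{\hF_{\rho}(x,S) - \tF_n(x,S)}}_{I_1(x,S)} + \underbrace{\sbr{\tF_n(x,S) - F(x,S)}}_{I_2(x,S)},
\]
and control the two pieces separately after conditioning on the event $E_0$ of \eqref{eqn: prf_conv_E0}, on which $\norm{X_i - \mu} \lesssim L_\tau$ and $Q_i \in \cS_d(M_{L_\tau}^{-1}, M_{L_\tau})$ for every $i$. Part (i) of Lemma \ref{lem: diff} then yields the crude envelope $W^2(S,Q_i) \lesssim \tM_L$ uniformly in $S \in \cS_d(0,\tM_L)$ and $i \in [n]$, which will serve throughout.

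The first piece captures the perturbation from replacing the population weights $w(x,\cdot)$ by the empirical weights $w_{n,\rho}(x,\cdot)$, and Lemma \ref{lem: sf_snf} is tailored to exactly this situation. Applying it with $\psi(Q_i; S) := W^2(S, Q_i)$ and $K_n \asymp \tM_L$ yields
\[
\sup_{(x,S) \in B_\mu(L) \times \cS_d(0, \tM_L)} \abs{I_1(x,S)} \;\lesssim\; \tM_L \, L_\tau \, \frac{L}{\sqrt{n}} \, \log n
\]
with probability at least $1 - O(n^{-\tau})$, which is comfortably within the target rate.

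The second piece is a more standard empirical-process term, for which I would apply the chaining machinery of Lemma \ref{lem: Rademacher} to the parameter space $\Theta = B_\mu(L) \times \cS_d(0,\tM_L)$ equipped with the product Frobenius metric. Its diameter is $D \lesssim \tM_L$ and its metric entropy obeys $\log N(\eta;\Theta) \lesssim (p+d^2)\log(D/\eta)$. The critical input is a sub-Gaussian increment bound for $w(x,X)W^2(S,Q)$ supplied through \eqref{eqn: lem_rad_lip_avg}. In the $x$-direction the integrand is Lipschitz because $w(x,\cdot)$ is linear; in the $S$-direction it is H\"older of exponent $\tfrac12$, via
\[
\abs{W^2(S,Q) - W^2(S',Q)} \leq W(S,S')\rbr{W(S,Q)+W(S',Q)} \lesssim \tM_L^{1/2}\,\Fnorm{S-S'}^{1/2},
\]
together with the classical estimate $W(S,S') \leq \Fnorm{S^{1/2}-S'^{1/2}} \lesssim \Fnorm{S-S'}^{1/2}$ arising from operator monotonicity of the square root. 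Feeding the resulting $\tau(\epsilon) = Kn^{-1/2}(\epsilon \vee \epsilon^{1/2})$ with $K$ polynomial in $(L,\tM_L)$ into the Dudley-type bound \eqref{eqn: lem_rad_chain_holder} produces $\EE\sup_{\Theta} \abs{I_2} \lesssim L\tM_L^3/\sqrt{n}$ up to polylogarithmic factors, and concentration around this expectation follows from the bounded-difference estimate of Lemma \ref{lem: bdd_diff} with oscillation parameter $\lesssim L\tM_L^2/n$ on $E_0$.

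The main obstacle will be carefully tracking the polynomial dependence on $\tM_L$ (itself $\operatorname{polylog}(n)$) through the H\"older chaining integral so that the final exponent matches the stated $\tM_L^3$ rather than a larger power. A secondary but essentially routine task is absorbing the conditioning cost $\PP(E_0^c) \lesssim n^{-(1+\tau)}$ into the target failure probability $c_{F,\tau} n^{-\tau}$ stated in the lemma.
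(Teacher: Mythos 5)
Your decomposition and overall strategy match the paper's exactly: split $\hF_{\rho} - F$ into the empirical-weight perturbation $\hF_{\rho} - \tF_n$ (handled by Lemma \ref{lem: sf_snf}) and the empirical-process term $\tF_n - F$ (handled by chaining via Lemma \ref{lem: Rademacher} using the $\tfrac{1}{2}$-H\"older continuity of $W^2(\cdot,Q)$ in the Frobenius metric plus bounded differences via Lemma \ref{lem: bdd_diff}). The increment bound you give for the $S$-direction is correct and is a legitimate alternative to the paper's route through $g(Q;S) = (Q^{1/2}SQ^{1/2})^{1/2}$.

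There is, however, a genuine gap in how you handle the unbounded support of $X$ in the second piece $I_2$. You propose to ``condition on $E_0$'' and then apply Lemma \ref{lem: Rademacher} and Lemma \ref{lem: bdd_diff}, treating the residual $\PP(E_0^c) \lesssim n^{-(1+\tau)}$ as a routine union-bound correction. But both of these lemmas are \emph{unconditional} tools: Lemma \ref{lem: bdd_diff} requires a deterministic a.s.\ bound $\|\psi(x;\theta)\| \leq K$, and Lemma \ref{lem: Rademacher} requires an unconditional $\psi_2$-norm increment bound. Your envelope $W^2(S,Q_i) \lesssim \tM_L$ and your H\"older increment estimate only hold on $E_0$, not almost surely, so neither lemma applies as stated. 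Moreover, under conditioning the natural centering becomes $\EE[\tF_n(\theta)\,|\,E_0]$ rather than the target $F(\theta) = \EE \tF_n(\theta)$, introducing a bias term $\EE[\tF_n(\theta)\1(E_0^c)]$ that involves an \emph{unbounded} integrand weighted by a small-probability event — controlling it requires an integral tail estimate, not just a probability bound. The paper resolves exactly this issue by explicitly truncating $X$ to $\tX = X\1(\|X\| \leq \tL)$ inside the empirical process, which makes the function class uniformly bounded and Lipschitz/H\"older \emph{almost surely}, and then decomposes $\tF_n - F = \alpha_1 + \alpha_2 + \alpha_3$ where $\alpha_1$ is the truncated main term (to which chaining and bounded differences cleanly apply) and $\alpha_2, \alpha_3$ collect the truncation residuals, bounded via the tail-integral estimates of Claim \ref{claim: truncation}. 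Your proposal should be repaired by replacing the conditioning on $E_0$ with this explicit truncation.
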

\begin{proof}
    See Appendix \ref{subsec: F_conc}.
\end{proof}

\begin{lemma}
    \label{lem: prf_conv_lem_slow}
    Instate the notations and assumptions in Theorem \ref{thm: unfm_conv} and Lemma \ref{lem: F_conc}. Let $1 \leq L = O(\sqrt{\log n})$. Let $\rho \in \cbr{0,1/n}$. Then for any $\tau$, there exist an event $\tE_1 \subset \tE_0$  such that
    \begin{itemize}
        \item $\PP(\tE_1^c)\leq c_{\delta,\tau} n^{-\tau}$
        \item under $\tE_1$,
        \begin{align}
            \sup_{x \in B_\mu(L)} \Fnorm{\hQ_{\rho}(x)-Q^*(x)} =  C_{\delta,\tau} \frac{ \polylog{n}}{n^{1/(2\alpha_F)}}
            \label{eqn: prf_conv_lem_slow2}
        \end{align}
    for constants $C_{\delta,\tau}, c_{\delta,\tau}$ independent of $n$.
    \end{itemize}
    As a result, under $\tE_1$,
    \begin{align}
        \hQ_{\rho}(x) \in \cS_d \left(  (C_{\mathrm{slow}} M_{L})^{-1}, C_{\mathrm{slow}} M_{L}\right),  \quad \forall x \in B_\mu(L)
        \label{eqn: prf_slow_3}
    \end{align}
    for some constant $C_{\mathrm{slow}}>0$ independent of $n$.
\end{lemma}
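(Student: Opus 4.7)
The plan is to follow the classical M-estimation recipe: combine the uniform concentration of the empirical objective (Lemma \ref{lem: F_conc}) with the global well-separation Assumption \ref{assumption: minimizer_global}, made uniform in $x \in B_\mu(L)$. Define $\tE_1 := \tE_0 \cap \cE_F$ where $\cE_F$ is the high-probability event from Lemma \ref{lem: F_conc}; a union bound with \eqref{eqn: bdd_E0c} then gives $\PP(\tE_1^c) \lesssim n^{-\tau}$. On $\tE_0$, Lemma \ref{lem: bdd} ensures $\hQ_\rho(x) \in \cS_d(0,\tM_L)$ uniformly over $x \in B_\mu(L)$, while Lemma \ref{lem: Q*x} gives $Q^*(x) \in \cS_d(M_L^{-1},M_L) \subset \cS_d(0,\tM_L)$, so both live in a common Frobenius ball of radius $\Delta := 2\sqrt{d}\,\tM_L$.

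The execution proceeds in one shot. The optimality of $\hQ_\rho(x)$ yields $\hF_\rho(x,\hQ_\rho(x)) \leq \hF_\rho(x,Q^*(x))$, and combining with the uniform concentration bound of Lemma \ref{lem: F_conc}, namely $\sup|\hF_\rho - F| \leq C_{F,\tau}\, L \tM_L^3 / \sqrt{n} =: \epsilon_n$, one obtains
\[
\sup_{x \in B_\mu(L)} \bigl( F(x,\hQ_\rho(x)) - F(x,Q^*(x)) \bigr) \leq 2 \epsilon_n .
\]
Now invoke Assumption \ref{assumption: minimizer_global} twice. First, take $\delta = \delta_F$ and $\Delta$ as above: if $\|\hQ_\rho(x)-Q^*(x)\|_F \geq \delta_F$ for some $x$, the well-separation would force $2\epsilon_n \geq \delta_F^{\alpha_F}/\gamma_F(L,\Delta)$, contradicting $2\epsilon_n = O(\polylog(n)/\sqrt{n})$ for $n$ large. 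Hence $\|\hQ_\rho(x)-Q^*(x)\|_F < \delta_F$ uniformly, at which point the assumption applies with $\delta = \|\hQ_\rho(x)-Q^*(x)\|_F$, giving
\[
\|\hQ_\rho(x)-Q^*(x)\|_F^{\alpha_F} \leq 2 \epsilon_n \cdot \gamma_F(L,\Delta) \lesssim \frac{\polylog(n)}{\sqrt{n}} ,
\]
which is \eqref{eqn: prf_conv_lem_slow2} after taking the $\alpha_F$-th root.

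For the two-sided eigenvalue bound \eqref{eqn: prf_slow_3}, use Weyl: $\lambda_{\max}(\hQ_\rho(x)) \leq \lambda_{\max}(Q^*(x)) + \|\hQ_\rho(x)-Q^*(x)\|_{\mathrm{op}} \leq M_L + o(1)$, and $\lambda_{\min}(\hQ_\rho(x)) \geq M_L^{-1} - \|\hQ_\rho(x)-Q^*(x)\|_F$. Since $M_L = \gamma_\Lambda(L) \lesssim (\log n)^{C_\Lambda/2}$ is polylogarithmic while the Frobenius error decays polynomially at rate $n^{-1/(2\alpha_F)}$ up to polylog factors, both remainders are absorbed into a factor $C_{\mathrm{slow}} \geq 2$ for $n$ large, delivering \eqref{eqn: prf_slow_3}. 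The main technical obstacle will be the bookkeeping: tracking the polylog dependencies through $\tM_L$, $\gamma_F(L,\Delta)$ and $\epsilon_n$, and verifying that the polynomial decay $n^{-1/(2\alpha_F)}$ dominates these polylog blow-ups so that the lower eigenvalue $M_L^{-1} = \Omega(1/\polylog(n))$ is not erased by the estimation error. A secondary subtlety is that Assumption \ref{assumption: minimizer_global} only applies for $\delta \leq \delta_F$, which is why the two-step invocation above (first ruling out $\|\hQ_\rho - Q^*\|_F \geq \delta_F$, then bounding the error) is necessary rather than a direct application.
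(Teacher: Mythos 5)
Your proof is correct and follows essentially the same route as the paper's: both rest on the triad of boundedness of $\hQ_{\rho}$ from Lemma \ref{lem: bdd}, uniform concentration of $\hF_{\rho}$ from Lemma \ref{lem: F_conc}, and the well-separation Assumption \ref{assumption: minimizer_global}, yielding the rate $\delta_n \asymp [\epsilon_n\,\gamma_F(L,\sqrt{d}\tM_L)]^{1/\alpha_F}$. The paper packages the argument as a nested set inclusion while you apply the basic inequality $F(x,\hQ_{\rho}(x))-F(x,Q^*(x))\leq 2\epsilon_n$ directly and then invoke Assumption \ref{assumption: minimizer_global} in two steps (first ruling out error $\geq \delta_F$, then bounding the error), but this is the same underlying M-estimation argument.
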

\begin{proof}
    See Appendix \ref{subsec: prf_conv_lem_slow}.
\end{proof}

\begin{remark}
    We remind readers that in Lemma \ref{lem: prf_conv_lem_slow}, $\alpha_F \geq 1$ is defined in Assumption \ref{assumption: minimizer_global}. Therefore, Lemma \ref{lem: prf_conv_lem_slow} might only lead to a uniform convergence rate slower than $\sqrt{n}$. For example, it is shown in Lemma \ref{lem: F_ind} that $\alpha_F=2$ when $X$ and $Q$ are independent, which results by (\ref{eqn: prf_conv_lem_slow2}) in a uniform convergence rate of $n^{1/4}$. Such a show rate is not enough to derive the asymptotic null distribution of our test statistic in Theorem \ref{thm: test_dist}. Therefore, we apply a finer analysis in the following part of proof to further boost the uniform convergence rate to $n^{1/2}$.
\end{remark}

\begin{lemma}
    \label{lem: conc_tld_exp}
    Instate the notations and assumptions in Theorem \ref{thm: unfm_conv} and Lemma \ref{lem: prf_conv_lem_slow}. Then there exists an event $E_{2,1} \subset \tE_1$ with probability greater than $1- c_{\tau,1}n^{-\tau}$ under which
    \begin{align}
        \sup_{x \in B_\mu(L)} &\Fnorm{\tA_n(x)} \leq C_{\tau,1} \frac{\polylog{n}}{\sqrt{n}}  \label{eqn: conc_lem_unfmA}\\
        \sup_{x \in B_\mu(L)} &\opnorm{\tPhi_n(x) - \EE \tPhi_n(x)} \leq C_{\tau,1} \frac{\polylog{n}}{\sqrt{n}} \label{eqn: conc_lem_unfmPhi}\\
        \sup_{\genfrac{}{}{0pt}{2}{x \in B_\mu(L)}{S \in \cS_d \rbr{(C_{\mathrm{slow}}M_{L})^{-1},C_{\mathrm{slow}} M_{L}}}} &\opnorm{\tPsi_n(x;S)- \EE \tPsi_n(x;S)} \leq C_{\tau,1} \frac{\polylog{n}}{\sqrt{n}} \label{eqn: conc_lem_unfmPsi}\\
        \sup_{ \genfrac{}{}{0pt}{2}{x \in B_\mu(L)}{S \in \cS_d \rbr{(C_{\mathrm{slow}}M_{L})^{-1},C_{\mathrm{slow}}M_{L}}} } &\opnorm{\EE \tPsi_n(x;S)} \leq C_{\tau,1} \polylog{n} \label{eqn: conc_lem_E_Psi}
    \end{align}
    for constants $C_{\tau,1},c_{\tau,1}$ independent of $n$.
\end{lemma}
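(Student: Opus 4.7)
The plan is to prove each of the four bounds by a common template: first restrict to the high-probability event $\tE_1 \cap E_0$ from (\ref{eqn: prf_conv_E0}) and Lemma \ref{lem: prf_conv_lem_slow}, where eigenvalues of $Q_i$, norms of $X_i - \mu$, and ancillary quantities are all controlled by $\polylog n$; then combine the chaining bound (\ref{eqn: lem_rad_chain_holder}) of Lemma \ref{lem: Rademacher} with the bounded-differences inequality of Lemma \ref{lem: bdd_diff} to get uniform concentration of order $\polylog(n)/\sqrt{n}$ with the desired tail probability. A simple union bound of the three resulting events, intersected with $\tE_1$, defines $E_{2,1}$.

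For (\ref{eqn: conc_lem_unfmA}), I would use the optimality condition for $Q^*(x)$ to observe $\EE \tA_n(x) = 0$, so we are controlling a centered empirical average. On $E_0$, the summand $w(x,X_i)(T^{Q_i}_{Q^*(x)} - I_d)$ has $\|\cdot\|_F$ of order $\polylog n$ by Lemma \ref{lem: diff}\ref{eqn: lem_diff_dkT_QS}. For the chaining step, the key calculation is the sub-Gaussian modulus
\begin{equation*}
\bigl\| w(x,X)(T^{Q}_{Q^*(x)} - I_d) - w(\tx,X)(T^{Q}_{Q^*(\tx)} - I_d) \bigr\|_{\psi_2},
\end{equation*}
which I would bound by adding and subtracting $w(\tx, X)(T^Q_{Q^*(x)} - I_d)$, controlling the weight difference by $\|x-\tx\|$ against a sub-Gaussian factor via Assumption~\ref{assumption: X}, and controlling the transport-map difference by $\opnorm{dT^{Q}_{\cdot}} \cdot \|Q^*(x) - Q^*(\tx)\|_F$ using Lemma \ref{lem: diff}\ref{eqn: lem_diff_dkT_QS} together with the H\"older continuity of $Q^*(\cdot)$ from Lemma \ref{lem: holder}. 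This gives $\tau(\epsilon) = K n^{-1/2}(\epsilon \vee \epsilon^{1/\alpha_F})$ with $K = \polylog n$, so (\ref{eqn: lem_rad_chain_holder}) applied with $D = L = O(\sqrt{\log n})$ yields expectation of order $\polylog(n)/\sqrt{n}$; the concentration around this expectation comes from Lemma \ref{lem: bdd_diff}, contributing the required $n^{-\tau}$ tail after sending the bounded-differences parameter through Gaussian concentration and rescaling $s$ to $\sqrt{(1+\tau)\log n}$.

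Inequalities (\ref{eqn: conc_lem_unfmPhi}) and (\ref{eqn: conc_lem_unfmPsi}) follow the same template, with $dT^{Q_i}_{Q^*(x)}$ and $d^2 T^{Q_i}_S$ in place of $T^{Q_i}_{Q^*(x)} - I_d$. The analog of the H\"older modulus in $x$ is supplied by the bound on $\opnorm{d^2 T^Q_S}$ (for $\tPhi_n$) and, for $\tPsi_n$, by the bound on $\opnorm{d^3 T^Q_S}$ in Lemma \ref{lem: diff}\ref{eqn: lem_diff_dkT_QS}, the latter also furnishing the Lipschitz modulus in $S$. The ambient range $S \in \cS_d((C_{\mathrm{slow}} M_L)^{-1}, C_{\mathrm{slow}} M_L)$ is bounded uniformly by $\polylog n$ on $\tE_1$, so the resulting chaining integral over the product set $B_\mu(L) \times \cS_d((C_{\mathrm{slow}}M_L)^{-1}, C_{\mathrm{slow}}M_L)$ remains of order $\polylog(n)/\sqrt{n}$. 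For the pointwise bound (\ref{eqn: conc_lem_E_Psi}), no concentration is needed: Lemma \ref{lem: diff}\ref{eqn: lem_diff_dkT_QS} gives
\begin{equation*}
\opnorm{d^2 T^Q_S} \;\lesssim\; \eigmax{S}^3 \cdot \eigmin{S^{1/2} Q S^{1/2}}^{-5/2},
\end{equation*}
which is $\polylog n$ for $Q \in \cS_d(M_L^{-1}, M_L)$ and $S$ in the prescribed range; combining with $\EE |w(x,X)| \lesssim 1 + \|x-\mu\|$ by Assumption \ref{assumption: X} yields (\ref{eqn: conc_lem_E_Psi}).

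The main obstacle is verifying the sub-Gaussian Lipschitz moduli cleanly, especially for $\tPsi_n$, where perturbations in both $x$ and $S$ must be tracked at once and the derivative bound passes through $d^3 T^Q_S$; handling the unbounded weight $w(x,X)$ inside the $\psi_2$-norm by truncating to $E_0$ and absorbing the resulting $\sqrt{\log n}$ factors into the $\polylog n$ constants requires some careful accounting, and one must verify that the truncation error is absorbed into the $n^{-\tau}$ tail. These are technical but follow standard arguments once the differential estimates and H\"older continuity of $Q^*$ are in hand.
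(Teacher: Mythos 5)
Your proposal is correct and follows essentially the same route as the paper's proof: truncation to the high-probability event $E_0$, decomposition of each uniform supremum into an expectation piece (bounded via the chaining inequality of Lemma~\ref{lem: Rademacher} with a sub-Gaussian Hölder modulus assembled from the differential estimates in Lemma~\ref{lem: diff} and the Hölder continuity of $Q^*(\cdot)$ in Lemma~\ref{lem: holder}) plus a perturbation piece (bounded via the bounded-differences Lemma~\ref{lem: bdd_diff}), with the last inequality handled by a direct deterministic computation. Your identification of the key moduli — using $dT$, $d^2T$, $d^3T$ at the successive levels and $\EE\tA_n(x)=0$ from the optimality condition — matches the paper exactly, as does the choice $\tau(\epsilon)=\tfrac{K}{\sqrt{n}}(\epsilon\vee\epsilon^{1/\alpha_F})$ with $K=\polylog n$.
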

\begin{proof}
    See Appendix \ref{subsec: conc_tld_exp}.
\end{proof}

\begin{lemma}
    \label{lem: conc_s_sin}
    Instate the notations and assumptions in Theorem \ref{thm: unfm_conv} and Lemma \ref{lem: prf_conv_lem_slow}. Then there exists event $E_{2,2} \subset \tE_1$ with probability greater than $1- c_{\tau,2} n^{-\tau}$ under which
    \begin{align}
        \sup_{x \in B_\mu(L)} &\Fnorm{\hA_{\rho}(x) -  \tA_n(x)} \leq C_{\tau,2} \frac{\polylog{n}}{\sqrt{n}} \label{eqn: conc_lem_s_sin_A}\\
        \sup_{x \in B_\mu(L)} &\opnorm{\hPhi_{\rho}(x) - \tPhi_n(x)} \leq C_{\tau,2} \frac{\polylog{n}}{\sqrt{n}} \label{eqn: conc_lem_s_sin_Phi}\\
        \sup_{\genfrac{}{}{0pt}{2}{x \in B_\mu(L)}{S \in \cS_d \rbr{(C_{\mathrm{slow}}M_{L})^{-1},C_{\mathrm{slow}}M_{L}}}} &\opnorm{\hPsi_{\rho}(x;S)- \tPsi_n(x;S)} \leq  C_{\tau,2} \frac{\polylog{n}}{\sqrt{n}} \label{eqn: conc_lem_s_sin_Psi}
    \end{align}
    for constants $C_{\tau,2},c_{\tau,2}$ independent of $n$.
\end{lemma}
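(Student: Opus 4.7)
\textbf{Proof proposal for Lemma \ref{lem: conc_s_sin}.}

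The plan is to invoke Lemma \ref{lem: sf_snf} separately for each of the three differences, since each has the form
\[
\frac{1}{n}\sum_{i=1}^n [w_{n,\rho}(x,X_i)-w(x,X_i)]\,\psi(Q_i;\theta)
\]
with a suitable choice of $(\psi,\theta)$: for (\ref{eqn: conc_lem_s_sin_A}), $\theta=x$ and $\psi(Q;x)=T_{Q^*(x)}^{Q}-I_d$ (viewed as an element of a normed vector space under $\Fnorm{\cdot}$); for (\ref{eqn: conc_lem_s_sin_Phi}), $\theta=x$ and $\psi(Q;x)=dT_{Q^*(x)}^{Q}$ (under $\opnorm{\cdot}$); for (\ref{eqn: conc_lem_s_sin_Psi}), $\theta=(x,S)$ with $S\in\cS_d\bigl((C_{\mathrm{slow}} M_L)^{-1},C_{\mathrm{slow}} M_L\bigr)$ and $\psi(Q;x,S)=d^2 T_S^{Q}$ (again under $\opnorm{\cdot}$).

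First I would fix the good event. Take $\cE_0$ to be the intersection of $\tE_1$ (from Lemma \ref{lem: prf_conv_lem_slow}) with the event $E_0$ defined in (\ref{eqn: prf_conv_E0}), so that simultaneously $\norm{X_i-\mu}\le C_{\psi_2}L_\tau$ for all $i$ and $Q_i\in \cS_d(M_{L_\tau}^{-1},M_{L_\tau})$; by Assumption \ref{assumption: X}, Assumption \ref{assumption: bdd_Q}, and Lemma \ref{lem: prf_conv_lem_slow}, $\PP(\cE_0^c)\lesssim n^{-\tau}$. Under $\cE_0$ we may set $L_n=C_{\psi_2}L_\tau$ in the notation of Lemma \ref{lem: sf_snf}. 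It remains to verify the uniform bound $\norm{\psi(Q_i;\theta)}\le K_n$ for each of the three cases, where $K_n$ will be $\polylog n$.

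For this, use Lemma \ref{lem: Q*x} to get $Q^*(x)\in\cS_d(M_{L_\tau}^{-1},M_{L_\tau})$ for $x\in B_\mu(L)$ (since $L\le L_\tau$), and then apply Lemma \ref{lem: diff}\ref{eqn: lem_diff_dkT_QS} with $M=C_{\mathrm{slow}} M_{L_\tau}$ to bound
\[
\Fnorm{T_{Q^*(x)}^{Q_i}-I_d}\lesssim M^{3/2},\qquad \opnorm{dT_{Q^*(x)}^{Q_i}}\lesssim M^3,\qquad \opnorm{d^2 T_S^{Q_i}}\lesssim M^8,
\]
uniformly in $x\in B_\mu(L)$ (and in $S$ in the stated box for the last one). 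Since $M=\polylog n$, each case yields $K_n=\polylog n$. Substituting $K_n=\polylog n$ and $L_n=\polylog n$ into Lemma \ref{lem: sf_snf} produces, for each fixed choice of $\tau$,
\[
\sup_{x\in B_\mu(L)}\norm{\tfrac{1}{n}\textstyle\sum_{i}[w_{n,\rho}(x,X_i)-w(x,X_i)]\psi(Q_i;\theta)}\ \le\ C_\tau\frac{\polylog n}{\sqrt n}
\]
with probability at least $1-O(n^{-\tau})$. A union bound over the three instantiations gives $E_{2,2}$ and (\ref{eqn: conc_lem_s_sin_A})--(\ref{eqn: conc_lem_s_sin_Psi}).

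The only genuinely non-routine step is verifying the hypothesis of Lemma \ref{lem: sf_snf} in case (\ref{eqn: conc_lem_s_sin_Psi}), where $\theta=(x,S)$ includes the auxiliary $S$; but because Lemma \ref{lem: diff}\ref{eqn: lem_diff_dkT_QS} depends only on eigenvalue bounds for the base point, the bound $\opnorm{d^2 T_S^{Q_i}}\lesssim M^8$ is uniform over the admissible $S$-box on $\cE_0$, so the same framework applies. Everything else is a direct application of the already-established Lemmas \ref{lem: diff}, \ref{lem: Q*x}, \ref{lem: prf_conv_lem_slow} and \ref{lem: sf_snf}.
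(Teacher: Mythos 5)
Your proposal is correct and matches the paper's own proof in all essentials: both reduce to Lemma \ref{lem: sf_snf} after verifying the required almost-sure bound $\norm{\psi(Q_i;\theta)}\le K_n=\polylog n$ on the truncated event via Lemma \ref{lem: Q*x} and the operator-norm bounds of Lemma \ref{lem: diff}\ref{eqn: lem_diff_dkT_QS}. The paper phrases the boundedness as a "WLOG" truncation rather than explicitly constructing $\cE_0=\tE_1\cap E_0$, but these are the same argument.
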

\begin{proof}
    See Appendix \ref{subsec: prf_conc_lem_s_sin}.
\end{proof}
With Lemma \ref{lem: conc_tld_exp} and \ref{lem: conc_s_sin} in hand, we define event $\tE_2 := E_{2,1} \cap E_{2,2}$ under which $\hA_{\rho}(x)$, $\hPsi_{\rho}(x;S)$ and $\hPhi_{\rho}^{-1}(x)$ are uniformly bounded. This is summarized in Lemma \ref{lem: conv_smmry}.
\begin{lemma}
    \label{lem: conv_smmry}
    Instate the notations and assumptions in Theorem \ref{thm: unfm_conv}, Lemma \ref{lem: prf_conv_lem_slow}, \ref{lem: conc_tld_exp} and \ref{lem: conc_s_sin}. Then $\PP(\tE_2^c) \leq  \tc_{\tau,2} n^{-\tau}$ and under $\tE_2$,
    \begin{align}
        \sup_{x \in B_\mu(L)} \Fnorm{\hA_{\rho}(x)} &\leq \tC_{\tau,2} \frac{\polylog{n}}{\sqrt{n}} \label{eqn: lem_smmry_An}\\
        \sup_{\genfrac{}{}{0pt}{2}{x \in B_\mu(L)}{S \in \cS_d \rbr{(C_{\mathrm{slow}}M_{\tL})^{-1},C_{\mathrm{slow}}M_{\tL}}}} \opnorm{\hPsi_{\rho}(x,S)} &\leq \tC_{\tau,2} \polylog{n} \label{eqn: lem_smmry_Psi}\\
        \inf_{x \in B_{\mu(L)}} \lambda_{\min} \left( -\hPhi_{\rho}(x) \right) &\geq \frac{1}{\tC_{\tau,2} \polylog{n}} \label{eqn: lem_smmry_Phi}
    \end{align}
    As a result, under $\tE_2$ the operator $-\hPhi_{\rho}$ is invertible and
    \begin{equation}
        \sup_{x \in B_\mu(L)} \opnorm{-\hPhi_{\rho}^{-1}(x)} \leq \tC_{\tau,2}\polylog{n} \label{eqn: lem_smmry_Phi_inv}
    \end{equation}
\end{lemma}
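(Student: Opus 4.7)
The plan is to obtain all four displayed bounds by combining the ingredients of Lemma \ref{lem: conc_tld_exp} and Lemma \ref{lem: conc_s_sin} via the triangle inequality, together with Weyl's inequality for the eigenvalue statement. The probability bound $\PP(\tE_2^c)\le \tc_{\tau,2} n^{-\tau}$ is immediate from the union bound $\PP(\tE_2^c) \le \PP(E_{2,1}^c)+\PP(E_{2,2}^c)$ applied to the two ingredient lemmas.

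First, I would establish \eqref{eqn: lem_smmry_An} and \eqref{eqn: lem_smmry_Psi}. For $\hA_{\rho}(x)$, decompose
\begin{equation*}
    \Fnorm{\hA_{\rho}(x)} \le \Fnorm{\tA_n(x)} + \Fnorm{\hA_{\rho}(x)-\tA_n(x)}
\end{equation*}
and apply \eqref{eqn: conc_lem_unfmA} and \eqref{eqn: conc_lem_s_sin_A} uniformly in $x\in B_\mu(L)$. Similarly, for $\hPsi_{\rho}(x,S)$, insert $\tPsi_n(x,S)$ and $\EE\tPsi_n(x,S)$ and use
\begin{equation*}
    \opnorm{\hPsi_{\rho}(x,S)} \le \opnorm{\hPsi_{\rho}(x,S)-\tPsi_n(x,S)} + \opnorm{\tPsi_n(x,S)-\EE\tPsi_n(x,S)} + \opnorm{\EE\tPsi_n(x,S)}.
\end{equation*}
The first two terms are $O(\polylog{n}/\sqrt{n})$ by \eqref{eqn: conc_lem_s_sin_Psi} and \eqref{eqn: conc_lem_unfmPsi}, while the last is $O(\polylog{n})$ by \eqref{eqn: conc_lem_E_Psi}, so the total is $O(\polylog{n})$.

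The harder part is the eigenvalue lower bound \eqref{eqn: lem_smmry_Phi}. The idea is to write
\begin{equation*}
    -\hPhi_{\rho}(x) = \bigl(-\EE\tPhi_n(x)\bigr) + \bigl(\EE\tPhi_n(x)-\tPhi_n(x)\bigr) + \bigl(\tPhi_n(x)-\hPhi_{\rho}(x)\bigr).
\end{equation*}
By construction, $-\EE\tPhi_n(x) = \EE(-w(x,X)dT^Q_{Q^*(x)})$, which is exactly the symmetric operator appearing in Assumption \ref{assumption: minimizer_local}. Hence
\begin{equation*}
    \lambda_{\min}\bigl(-\EE\tPhi_n(x)\bigr) \;\ge\; \gamma_\lambda(\norm{x-\mu})^{-1} \;\ge\; \gamma_\lambda(L)^{-1} \;\gtrsim\; \frac{1}{\polylog{n}},
\end{equation*}
uniformly in $x\in B_\mu(L)$ since $L \lesssim \sqrt{\log n}$ and $\gamma_\lambda$ is polynomial. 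The remaining two perturbation terms have operator norm at most $C_{\tau,1}\polylog{n}/\sqrt{n} + C_{\tau,2}\polylog{n}/\sqrt{n}$ by \eqref{eqn: conc_lem_unfmPhi} and \eqref{eqn: conc_lem_s_sin_Phi}. Applying Weyl's inequality for self-adjoint operators (which is justified since $dT^{Q_i}_{Q^*(x)}$ is self-adjoint by Lemma \ref{lem: diff}\ref{eqn: lem_diff_dT_eigen}, and hence so are $\tPhi_n(x)$ and $\hPhi_{\rho}(x)$) yields, for all $n$ sufficiently large,
\begin{equation*}
    \lambda_{\min}\bigl(-\hPhi_{\rho}(x)\bigr) \;\ge\; \frac{1}{\gamma_\lambda(L)} - O\!\left(\frac{\polylog{n}}{\sqrt{n}}\right) \;\ge\; \frac{1}{2\gamma_\lambda(L)} \;\gtrsim\; \frac{1}{\polylog{n}}.
\end{equation*}
This establishes \eqref{eqn: lem_smmry_Phi}, and \eqref{eqn: lem_smmry_Phi_inv} follows since $\opnorm{-\hPhi_\rho^{-1}(x)} = 1/\lambda_{\min}(-\hPhi_\rho(x))$ for a self-adjoint positive operator.

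The main subtlety is ensuring that the uniform $\polylog{n}/\sqrt{n}$ fluctuation of $\hPhi_{\rho}$ around its population counterpart is dominated by the deterministic lower bound $\gamma_\lambda(L)^{-1} \gtrsim (\log n)^{-C_\lambda/2}$. Because both decay only polylogarithmically, the inequality $\polylog{n}/\sqrt{n} \ll 1/\polylog{n}$ holds for all $n$ large enough, so Weyl's bound is effective. Everything else reduces to routine triangle-inequality bookkeeping.
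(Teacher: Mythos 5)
Your proposal is correct and takes essentially the same route as the paper: the union bound for $\PP(\tE_2^c)$, triangle-inequality decompositions through $\tA_n$ and $\tPsi_n$ for the first two displays, and for the eigenvalue bound the same three-term split of $-\hPhi_\rho(x)$ around $-\EE\tPhi_n(x)$ followed by Weyl's inequality (the paper calls it the ``eigenvalue stability inequality'') combined with Assumption \ref{assumption: minimizer_local}. Your explicit observation that self-adjointness of $dT^{Q_i}_{Q^*(x)}$ justifies both Weyl's inequality and the identity $\opnorm{-\hPhi_\rho^{-1}} = 1/\lambda_{\min}(-\hPhi_\rho)$, and your remark that $\polylog{n}/\sqrt{n}$ is eventually dominated by $1/\polylog{n}$, are exactly the points the paper uses, just stated more explicitly.
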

\begin{proof}
    See Appendix \ref{subsec: prf_conv_lem_smmry}.
\end{proof}

\begin{lemma}
    \label{lem: fast}
    Instate the notations and assumptions in Theorem \ref{thm: unfm_conv}, Lemma \ref{lem: bdd} - \ref{lem: conc_s_sin}.  then under $\tE_2$,
    \begin{align*}
        \sup_{x \in B_\mu(L)} \Fnorm{\hQ_{\rho}(x)-Q^*(x)} \leq  C_{\mathrm{fast},\tau} \frac{ \polylog{n}}{\sqrt{n}}
    \end{align*}
    for constant $C_{\mathrm{fast},\tau}$ independent of $n$.
\end{lemma}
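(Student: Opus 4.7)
The plan is to upgrade the slow uniform rate from Lemma~\ref{lem: prf_conv_lem_slow} to the parametric rate by combining the first-order optimality condition for $\hQ_{\rho}(x)$ with a second-order Taylor expansion and a one-step bootstrap. Throughout we work on the event $\tE_2$, where every bound below is uniform over $x \in B_\mu(L)$.

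Set $\Delta(x) := \hQ_{\rho}(x) - Q^*(x)$. Since $\hQ_{\rho}(x)$ minimizes $\hF_{\rho}(x,\cdot)$ and $\frac{1}{n}\sum_i w_{n,\rho}(x,X_i) = 1$, the gradient formula from Lemma~\ref{lem: diff} gives the stationary equation $\frac{1}{n}\sum_i w_{n,\rho}(x,X_i)(T_{\hQ_{\rho}(x)}^{Q_i}-I_d) = 0$. Under $\tE_2 \subset \tE_1$, both $Q^*(x)$ (Lemma~\ref{lem: Q*x}) and $\hQ_{\rho}(x)$ (see (\ref{eqn: prf_slow_3})) lie in the spectral box $\cS_d((C_{\mathrm{slow}}M_L)^{-1}, C_{\mathrm{slow}}M_L)$, and since that box is convex the whole segment $\{Q^*(x) + t\Delta(x): t\in[0,1]\}$ is contained in it. Expanding $Q \mapsto T_Q^{Q_i}$ to second order at $Q^*(x)$ with integral remainder and averaging against $w_{n,\rho}(x,X_i)/n$ rearranges the optimality condition into
\[ \hA_{\rho}(x) + \hPhi_{\rho}(x)\Delta(x) + \mathcal{R}(x) = 0, \quad \mathcal{R}(x) := \tfrac{1}{n}\sum_i w_{n,\rho}(x,X_i)\!\int_0^1\!(1-t)\, d^2 T_{Q^*(x)+t\Delta(x)}^{Q_i}(\Delta(x),\Delta(x))\, dt. \]

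Now invert and bound. By (\ref{eqn: lem_smmry_Phi_inv}), $\opnorm{\hPhi_{\rho}^{-1}(x)} \lesssim \polylog n$; by (\ref{eqn: lem_smmry_An}), $\Fnorm{\hA_{\rho}(x)} \lesssim \polylog n/\sqrt n$. Pulling the operator norm under the integral, applying (\ref{eqn: lem_smmry_Psi}) at every $S = Q^*(x) + t\Delta(x)$ (uniformly in $t$, since all such $S$ stay in the spectral box), and invoking Lemma~\ref{lem: f_cal} to pass from the symmetric norm $\opnorm{\cdot}$ to the standard bilinear bound yields $\Fnorm{\mathcal{R}(x)} \lesssim \polylog n \cdot \Fnorm{\Delta(x)}^2$. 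Writing $r(x) := \Fnorm{\Delta(x)}$, $A_n := C\polylog n/\sqrt n$, $B_n := C\polylog n$, solving $\Delta(x) = -\hPhi_{\rho}^{-1}(x)(\hA_{\rho}(x)+\mathcal{R}(x))$ for its norm gives the quadratic inequality
\[ r(x) \leq A_n + B_n\, r(x)^2 \qquad \text{for all } x \in B_\mu(L). \]

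To close the bootstrap, invoke Lemma~\ref{lem: prf_conv_lem_slow}: under $\tE_2 \subset \tE_1$, $\sup_{x \in B_\mu(L)} r(x) \leq C_{\delta,\tau}\polylog n/n^{1/(2\alpha_F)}$. Since $\alpha_F \geq 1$, $B_n \sup_x r(x) = O(\polylog n / n^{1/(2\alpha_F)}) = o(1)$, so for $n$ large $B_n r(x) \leq 1/2$ pointwise, i.e.\ $B_n r(x)^2 \leq r(x)/2$. Substituting into the displayed inequality gives $r(x) \leq 2A_n = O(\polylog n / \sqrt n)$ uniformly in $x$. The main obstacle I anticipate is the bookkeeping around the Taylor remainder: one has to verify that $Q^*(x) + t\Delta(x)$ remains in the admissible spectral box uniformly in $t \in [0,1]$ and $x \in B_\mu(L)$, and that the $\polylog n$ bounds on $\hPsi_{\rho}(x,S)$ from Lemma~\ref{lem: conv_smmry} transfer to this moving argument. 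Convexity of $\cS_d(a,b)$ handles the first point and the built-in uniformity over $S$ in (\ref{eqn: lem_smmry_Psi}) handles the second, so the obstacle dissolves.
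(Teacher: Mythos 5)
Your proof is correct and follows essentially the same route as the paper: Taylor-expand the first-order optimality condition of $\hQ_{\rho}(x)$ around $Q^*(x)$ to second order, use the uniform bounds on $\hA_{\rho}$, $\hPhi_{\rho}^{-1}$, and $\hPsi_{\rho}$ from Lemma~\ref{lem: conv_smmry} to obtain a quadratic self-bounding inequality in $\Fnorm{\hQ_{\rho}(x)-Q^*(x)}$, and then invoke the slow rate of Lemma~\ref{lem: prf_conv_lem_slow} to close the bootstrap. The only cosmetic differences are that you write the remainder in integral form rather than the Lagrange mean-value form the paper uses, and you complete the bootstrap via $B_n r(x)^2 \le r(x)/2$ rather than by solving the quadratic and discarding the upper root; both routes deliver the identical conclusion.
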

\begin{proof}
    See Appendix \ref{subsec: prf_fast}.
\end{proof}

\begin{lemma}
    \label{lem: F_moment}
    Let $\rho=1/n$. Instate the notations and assumptions in Theorem \ref{thm: unfm_conv}, Lemma \ref{lem: bdd} - \ref{lem: fast}. Then
    \begin{align*}
        \EE \sup_{x \in B_\mu(L)} \Fnorm{\hQ_{\rho}(x)-Q^*(x)} \leq C_{E,F}\frac{\polylog{n}}{\sqrt{n}}
    \end{align*}
    for some constant $C_{E,F}$ independent of $n$.
\end{lemma}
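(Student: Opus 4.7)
\textbf{Proof proposal for Lemma \ref{lem: F_moment}.} The plan is to split the supremum according to the good event $\tE_2$ from Lemma \ref{lem: conv_smmry}, apply Lemma \ref{lem: fast} on the good event to get the parametric rate directly, and then use Cauchy–Schwarz on the complement together with the tail bound from Lemma \ref{lem: bdd} to show that the remainder is negligible.

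\medskip

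\emph{Step 1 (Decomposition).} Write
\begin{equation*}
    \EE \sup_{x \in B_\mu(L)} \Fnorm{\hQ_{\rho}(x)-Q^*(x)}
    = \EE \Bigl[ \sup_{x}\Fnorm{\hQ_\rho(x)-Q^*(x)} \,\mathbf{1}_{\tE_2}\Bigr] + \EE \Bigl[ \sup_{x}\Fnorm{\hQ_\rho(x)-Q^*(x)} \,\mathbf{1}_{\tE_2^c}\Bigr].
\end{equation*}
By Lemma \ref{lem: fast}, the first term is bounded above by $C_{\mathrm{fast},\tau}\polylog{n}/\sqrt{n}$ for any fixed $\tau$.

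\medskip

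\emph{Step 2 (Pointwise bound on the residual).} For the second term, I would use the deterministic inequality
\begin{equation*}
    \Fnorm{\hQ_\rho(x)-Q^*(x)} \leq \sqrt{d}\bigl(\eigmax{\hQ_\rho(x)}+\eigmax{Q^*(x)}\bigr),
\end{equation*}
together with Lemma \ref{lem: Q*x}, which gives $\sup_{x \in B_\mu(L)}\eigmax{Q^*(x)} \lesssim \polylog{n}$. The contribution from $\eigmax{Q^*(x)}$ to the residual is therefore at most $\polylog{n}\cdot \PP(\tE_2^c) \leq \polylog{n}\cdot n^{-\tau}$ by Lemma \ref{lem: conv_smmry}, which is already negligible for any $\tau \geq 1$.

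\medskip

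\emph{Step 3 (Controlling $\sup_x \eigmax{\hQ_\rho(x)}$ in $L^2$).} The genuine work is to bound
$\EE \bigl[\sup_{x \in B_\mu(L)} \eigmax{\hQ_\rho(x)}\,\mathbf{1}_{\tE_2^c}\bigr]$. By Cauchy–Schwarz,
\begin{equation*}
    \EE \Bigl[\sup_{x} \eigmax{\hQ_\rho(x)}\,\mathbf{1}_{\tE_2^c}\Bigr]
    \leq \sqrt{\EE \bigl[\sup_x \eigmax{\hQ_\rho(x)}^2\bigr]}\cdot \sqrt{\PP(\tE_2^c)}.
\end{equation*}
The second factor is at most $(c_{\tau,2}n^{-\tau})^{1/2}$ by Lemma \ref{lem: conv_smmry}. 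For the first factor, I would integrate the tail bound of Lemma \ref{lem: bdd}. Denoting $Y := \sup_{x \in B_\mu(L)}\eigmax{\hQ_\rho(x)}$ and using the substitution $u = t^4$,
\begin{equation*}
    \EE Y^2 = \int_0^\infty 4 t^3 \,\PP\bigl(Y \geq t^2\bigr)\,dt
    \leq t_0^4 + \int_{t_0}^\infty 4t^3 \bigl[h_0(t) + \exp(-c_e n)\wedge h_0(t/n)\bigr]\,dt.
\end{equation*}
Since $h_0(t) \lesssim n\exp\!\bigl(-(t/c_b L)^{2/(C_\Lambda+2)}\bigr) + \exp(-\sqrt{t/(c_bL)})$ has stretched-exponential decay, the change of variable $u = (t/(c_b L))^{2/(C_\Lambda+2)}$ converts $\int t^3 h_0(t)\,dt$ into a Gamma-type integral, yielding $\EE Y^2 \lesssim n\polylog{n}$ (the $\exp(-c_e n)$ piece is manifestly negligible after truncation at $t = n t_0$, since $(n t_0)^4 \exp(-c_e n) = n^{-\omega(1)}$; the $t > nt_0$ regime is handled similarly using $h_0(t/n)$).

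\medskip

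\emph{Step 4 (Combining).} Assembling the three pieces gives
\begin{equation*}
    \EE \sup_x \Fnorm{\hQ_\rho(x)-Q^*(x)}
    \leq C_{\mathrm{fast},\tau}\frac{\polylog{n}}{\sqrt{n}}
    + \polylog{n}\cdot n^{-\tau/2},
\end{equation*}
and choosing $\tau \geq 2$ yields the desired $\polylog{n}/\sqrt{n}$ rate. The main technical obstacle, which crucially requires $\rho = n^{-1}$ rather than $\rho = 0$, is Step 3: showing that $\sup_x \eigmax{\hQ_\rho(x)}$ has polynomially bounded second moment. Without the regularization $\rho = n^{-1}$, the tail of $\hSigma_\rho^{-1}$ is uncontrolled and the corresponding integral diverges; with $\rho = n^{-1}$, Lemma \ref{lem: bdd} provides exactly the stretched-exponential tail needed to close the integral.
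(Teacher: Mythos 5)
Your proposal is correct and follows the same overall strategy as the paper — split on the good event $\tE_2$, invoke Lemma \ref{lem: fast} there, and control the complement via the eigenvalue tail bound of Lemma \ref{lem: bdd} — but the handling of the bad event differs. The paper uses a three-way split, introducing an intermediate truncation event $\tE_\rho=\{\sup_x\Onorm{\hQ_\rho(x)}\le n^2\tM_L\}$: on $\tE_2^c\cap\tE_\rho$ it bounds the error crudely by $n^2\tM_L\cdot\PP(\tE_2^c)$, and on $\tE_\rho^c$ it integrates the first-moment tail $\int_{n^2\tM_L}^\infty t\,\PP(\sup_x\Onorm{\hQ_\rho(x)}\ge t)\,dt$ directly, requiring $\tau\ge 9/2$. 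You instead use a two-way split with Cauchy–Schwarz, which requires the second-moment bound $\EE\bigl[\sup_x\eigmax{\hQ_\rho(x)}^2\bigr]\lesssim n\,\polylog{n}$; that bound does follow from Lemma \ref{lem: bdd} by the stretched-exponential integration you sketch (the $t>nt_0$ regime is killed by the $\exp(-c_en)\wedge h_0(t/n)$ minimum after truncating at a polynomial level), and you only need $\tau\ge 2$. Your diagnosis of why $\rho=n^{-1}$ is essential is also the right one: for $\rho=0$ the tail bound carries a constant $\exp(-c_en)$ floor for all $t$, so the moment integral diverges. One small bookkeeping slip: the Cauchy–Schwarz term in Step 4 should read $\sqrt{n\,\polylog{n}}\cdot n^{-\tau/2}=\polylog{n}\cdot n^{(1-\tau)/2}$ rather than $\polylog{n}\cdot n^{-\tau/2}$; this does not affect the conclusion since $\tau\ge 2$ still yields the $n^{-1/2}$ rate.
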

\begin{proof}
    See Appendix \ref{subsec: prf_F_moment}.
\end{proof}

\begin{lemma}
    \label{lem: W_moment}
    Let $\rho=1/n$.
    Instate the notations and assumptions in Theorem \ref{thm: unfm_conv}, Lemma \ref{lem: bdd} - \ref{lem: fast}. Then
    \begin{align*}
        \EE \sup_{x \in B_\mu(L)} W\rbr{\hQ_{\rho}(x),Q^*(x)} \leq C_{E,W}\frac{\polylog{n}}{\sqrt{n}}
    \end{align*}
    for some constant $C_{E,W}$ independent of $n$.
\end{lemma}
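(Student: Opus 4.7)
The plan is to reduce the Wasserstein moment bound to the Frobenius moment bound already established in Lemma~\ref{lem: F_moment}, using the quantitative comparison between the two metrics given in Lemma~\ref{lem: diff}\ref{eqn: lem_diff_W_Frob}. I would decompose
\[
\EE \sup_{x \in B_\mu(L)} W(\hQ_{\rho}(x),Q^*(x))
= \EE\bigl[\mathbf{1}_{\tE_2}\sup_{x} W\bigr] + \EE\bigl[\mathbf{1}_{\tE_2^c}\sup_{x} W\bigr],
\]
where $\tE_2$ is the event from Lemma~\ref{lem: conv_smmry}.

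\textbf{On $\tE_2$.} By Lemma~\ref{lem: prf_conv_lem_slow} (equation \eqref{eqn: prf_slow_3}) we have $\hQ_{\rho}(x)\in \cS_d((C_{\mathrm{slow}}M_L)^{-1},C_{\mathrm{slow}}M_L)$ uniformly for $x\in B_\mu(L)$, and Lemma~\ref{lem: Q*x} gives $Q^*(x)\in\cS_d(M_L^{-1},M_L)$. Since $M_L=\gamma_\Lambda(L)\lesssim \mathrm{polylog}(n)$ for $L\lesssim\sqrt{\log n}$, applying Lemma~\ref{lem: diff}\ref{eqn: lem_diff_W_Frob} with $Q_0=Q^*(x)$ and $Q_1=\hQ_{\rho}(x)$ yields, uniformly in $x$,
\[
W(\hQ_{\rho}(x),Q^*(x))\ \leq\ K_n^{1/2}\,\Fnorm{\hQ_{\rho}(x)-Q^*(x)},
\]
with $K_n=\mathrm{polylog}(n)$. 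Taking the supremum over $x$ and expectation, then invoking Lemma~\ref{lem: F_moment}, gives $\EE[\mathbf{1}_{\tE_2}\sup_x W]\lesssim \mathrm{polylog}(n)/\sqrt{n}$.

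\textbf{On $\tE_2^c$.} By Cauchy--Schwarz,
\[
\EE\bigl[\mathbf{1}_{\tE_2^c}\sup_x W\bigr]\ \leq\ \bigl(\EE(\sup_x W)^2\bigr)^{1/2}\cdot\PP(\tE_2^c)^{1/2}.
\]
Lemma~\ref{lem: diff}\ref{eqn: diff_W_sup} gives the crude bound $W^2(\hQ_\rho(x),Q^*(x))\leq 2d(\eigmax{\hQ_\rho(x)}+\eigmax{Q^*(x)})$, and the tail bound in Lemma~\ref{lem: bdd} (together with $\eigmax{Q^*(x)}\lesssim\mathrm{polylog}(n)$ from Lemma~\ref{lem: Q*x}) shows that $\EE(\sup_x W)^2$ grows at most polynomially in $n$, by integrating the tail with scale $O(n)$ in the heavy regime $t>nt_0$. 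Meanwhile Lemma~\ref{lem: conv_smmry} provides $\PP(\tE_2^c)\leq \tilde c_{\tau,2}n^{-\tau}$ for any fixed $\tau$, so choosing $\tau$ sufficiently large (to dominate the polynomial growth of the $L^2$ bound) makes the contribution from $\tE_2^c$ negligible, certainly of order $\mathrm{polylog}(n)/\sqrt{n}$.

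The only real obstacle is ensuring that the prefactor from the Frobenius-to-Wasserstein conversion is only $\mathrm{polylog}(n)$, which requires the uniform upper-and-lower eigenvalue control of $\hQ_\rho(x)$; this control is precisely what the ``slow rate'' step (Lemma~\ref{lem: prf_conv_lem_slow}) already provides via \eqref{eqn: prf_slow_3}, so once we work on $\tE_2$, the comparison is immediate. Everything else is a standard tail-probability argument to discard the complement event.
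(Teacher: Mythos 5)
Your proof is correct and follows essentially the same route as the paper: convert Wasserstein to Frobenius via Lemma~\ref{lem: diff}\ref{eqn: lem_diff_W_Frob} on the good event (where the eigenvalue control from \eqref{eqn: prf_slow_3} and Lemma~\ref{lem: Q*x} keeps the conversion factor at $\polylog{n}$), then kill the complement using the tail bound of Lemma~\ref{lem: bdd} and the $n^{-\tau}$ bound on $\PP(\tE_2^c)$. The only cosmetic difference is that the paper splits $\tE_2^c$ further into $\tE_2^c\cap\tE_\rho$ and $\tE_\rho^c$ and integrates the tail directly, whereas you package the same tail integration into a Cauchy--Schwarz step with a polynomial second-moment bound; both are valid and rest on the same estimates.
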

\begin{proof}
    See Appendix \ref{subsec: prf_W_moment}.
\end{proof}

\subsection{Proof of Lemma \ref{lem: bdd}}
\label{subsec: prf_lem_bdd}
Without loss of generality, assume $\mu = 0$, $\Sigma = I_p$, $C_{\psi_2}=1$ and $c_{\Lambda}=1$ (defined in Assumption~\ref{assumption: bdd_Q}). Denote $\hlambda_{1}(x)= \eigmax{\hQ_{\rho}(x)}$.

    By the optimality condition for $\hQ_{\rho}(x)$, one has
    \begin{align*}
        0 = \frac{1}{n}\sum_{i=1}^n w_{n,\rho}(x,X_i) \rbr{T_{\hQ_{\rho}(x)}^{Q_i} - I_d}
    \end{align*}
    Note that $n^{-1}\sum_{i=1}^n w_{n,\rho}(x,X_i) = 1$ for any $x$, hence one has
    \begin{align*}
        I_d = \frac{1}{n}\sum_{i=1}^{n} w_{n,\rho}(x,X_i) T_{\hQ_{\rho}(x)}^{Q_i}
    \end{align*}
    Recall that $T_{\hQ_{\rho}(x)}^Q = {\hQ_{\rho}(x)}^{-1/2} \left( \hQ_{\rho}(x)^{1/2}Q \hQ_{\rho}(x)^{1/2}\right)^{1/2} \hQ_{\rho}(x)^{-1/2}$, multiplying both sides of the equation with $\hQ_{\rho}(x)^{1/2}$ then gives
    \begin{align*}
        \hQ_{\rho}(x) &= \frac{1}{n} \sum_{i=1}^{n} w_{n,\rho}(x,X_i) \left( \hQ_{\rho}(x)^{1/2}Q_i \hQ_{\rho}(x)^{1/2}\right)^{1/2}
    \end{align*}
    Taking the largest eigenvalue on both sides, one has
    \begin{align*}
        \hlambda_{1}(x)  &\leq \frac{1}{n} \sum_{i=1}^{n} \abs{w_{n,\rho}(x,X_i)} \cdot \Onorm{\left( \hQ_{\rho}(x)^{1/2}Q_i \hQ_{\rho}(x)^{1/2}\right)^{1/2}}\\
        &\leq \frac{1}{n} \sum_{i=1}^{n} \abs{w_{n,\rho}(x,X_i)} \cdot \lambda_{\max}(Q_i)^{1/2} \hlambda_{1}^{1/2}(x)
    \end{align*}
    which implies
    \begin{align*}
        \hlambda_{1}(x) &\leq \sbr{\frac{1}{n} \sum_{i=1}^{n} \abs{1+(x-\bar{X})^\top \hSigma^{-1}_{\rho} (X_i-\bar{X})} \cdot \lambda_{\max}(Q_i)^{1/2}}^2\\
        &\leq \sbr{\frac{1}{n}\sum_{i=1}^n \rbr{1+\norm{x-\bar{X}} \cdot \Onorm{\hSigma^{-1}_{\rho}} \cdot  \norm{X_i-\bar{X}}} \lambda_{\max}(Q_i)^{1/2} }^2 
    \end{align*}
    Hence
    \begin{align*}
        &\PP \rbr{\sup_{x \in B_{\mu}(L)}\hlambda_1(x) \geq t^2}\\
        \leq &\PP \rbr{\cbr{\Onorm{\hSigma^{-1}_{\rho}}\leq 2} \cap \cbr{\sup_{x \in B_{\mu}(L)}\frac{1}{n}\sum_{i=1}^n \rbr{1+\norm{x-\bar{X}} \cdot  \norm{X_i-\bar{X}}} \lambda_{\max}(Q_i)^{1/2} \geq t/2}}\\
        &+ \PP \rbr{\cbr{2< \Onorm{\hSigma^{-1}_{\rho}}\leq n} \cap \cbr{\sup_{x \in B_{\mu}(L)}\frac{1}{n}\sum_{i=1}^n \rbr{1+\norm{x-\bar{X}} \cdot  \norm{X_i-\bar{X}}} \lambda_{\max}(Q_i)^{1/2} \geq t/n}}\\
        &+ \PP\rbr{\Onorm{\hSigma^{-1}_{\rho}}> n}\\
        \leq &\underbrace{ \PP\rbr{\sup_{x \in B_{\mu}(L)}\frac{1}{n}\sum_{i=1}^n \rbr{1+\norm{x-\bar{X}} \cdot  \norm{X_i-\bar{X}}} \lambda_{\max}(Q_i)^{1/2} \geq t/2}}_{\mathbf{(I)}}\\
        &+ \underbrace{\PP \rbr{ 2< \Onorm{\hSigma^{-1}_{\rho}}\leq n } \wedge \PP \rbr{\sup_{x \in B_{\mu}(L)}\frac{1}{n}\sum_{i=1}^n \rbr{1+\norm{x-\bar{X}} \cdot  \norm{X_i-\bar{X}}} \lambda_{\max}(Q_i)^{1/2} \geq t/n}}_{\mathbf{(II)}}\\
        &+ \underbrace{\PP\rbr{\Onorm{\hSigma^{-1}_{\rho}}> n}}_{(\mathbf{III})}
    \end{align*}
    
    \begin{claim}
        \label{claim: bdd_1}
        Suppose Assumption \ref{assumption: X}-\ref{assumption: bdd_Q} hold with $\mu=0$, $\Sigma =I_p$, $C_{\psi_2}=1$ and $c_{\Lambda} =1$. If $n \geq 3$, $1 \leq L \leq \sqrt{n}$.
        then for any $ t \geq 6L$,
        \begin{align}
            &\PP \cbr{\Onorm{\hSigma_{\rho}^{-1}}> 2} \lesssim  \exp (- c_e n)\\
            &\PP \cbr{\sup_{x \in B_{\mu}(L)}\frac{1}{n}\sum_{i=1}^n \rbr{1+\norm{x-\bar{X}} \cdot  \norm{X_i-\bar{X}}} \lambda_{\max}(Q_i)^{1/2} \geq t} \lesssim h_0(t)
        \end{align}
        where 
        \begin{align*}
            h_0(t):= \exp \rbr{-\sqrt{\frac{t}{c_b L}}}
            +n \cdot\exp  \sbr{- \rbr{\frac{t}{c_b L}}^{2/(C_{\Lambda}+2)}}
        \end{align*}
        for some constants $c_b, c_e>0$ independent of $n$.
    \end{claim}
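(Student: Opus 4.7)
The first inequality is a standard matrix-concentration estimate. Under the normalized assumptions $\mu=0$, $\Sigma=I_p$, $C_{\psi_2}=1$, the $X_i$ are i.i.d.\ sub-Gaussian vectors with identity covariance in fixed dimension $p$; a standard sub-exponential bound for the sample covariance (e.g.\ \citet[Theorem 4.6.1]{vershynin}) yields $\PP\rbr{\Onorm{\hSigma - I_p} > 1/2} \lesssim \exp(-c_e n)$. Since $\hSigma_\rho = \hSigma + \rho I_p \succeq \hSigma$ for $\rho \in \{0, 1/n\}$, the event $\{\Onorm{\hSigma - I_p}\leq 1/2\}$ implies $\lambda_{\min}(\hSigma_\rho)\geq 1/2$, hence $\Onorm{\hSigma_\rho^{-1}}\leq 2$.

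For the second inequality, I would first reduce the $\sup_x$ expression to an $x$-free sum. Writing $\sup_{x \in B_\mu(L)}\norm{x - \bar X} \leq L + \norm{\bar X}$ and absorbing $\{\norm{\bar X} > 1\}$ into an $\exp(-cn)$ contribution via sub-Gaussian concentration of $\bar X$, we may work on $\{\norm{\bar X} \leq 1\}$, where $\sup_x \norm{x - \bar X}\leq 2L$. Combining this with $\norm{X_i - \bar X}\leq \norm{X_i}+1$ and with Assumption~\ref{assumption: bdd_Q} (under $c_\Lambda = 1$, $\mu = 0$, giving $\lambda_{\max}(Q_i)^{1/2}\leq (1\vee\norm{X_i})^{C_\Lambda/2}$), the quantity inside the probability is bounded, up to a universal constant, by $L \cdot \frac{1}{n}\sum_{i=1}^n Y_i$, where $Y_i := (1\vee\norm{X_i})^{s}$ with $s := (C_\Lambda+2)/2$; by Lemma~\ref{lem: subG_norm}, $\norm{Y_i}_{\psi_{2/s}}$ is bounded by an absolute constant.

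The two terms of $h_0(t)$ then arise from two complementary tail arguments for $\PP\rbr{L n^{-1}\sum_i Y_i \geq t}$. The first is a union bound on the maximum: $\PP(\max_i Y_i \geq u) \leq 2n\exp(-c u^{\alpha})$ for $\alpha = 2/(C_\Lambda+2)$ (a conservative form of the $\psi_{4/(C_\Lambda+2)}$ tail, sufficient for the downstream use), which combined with $n^{-1}\sum Y_i\leq \max_i Y_i$ and $u = t/(c_bL)$ yields the second term $n\exp(-(t/(c_bL))^{2/(C_\Lambda+2)})$. The second controls the average directly via Markov's inequality with polynomial moments: applying Jensen's inequality $\EE(n^{-1}\sum Y_i)^k \leq \EE Y_1^k$ together with the $\psi$-moment estimates $\EE Y_1^k \lesssim (Csk)^{sk/2}$, then optimizing over $k$, produces a tail of the form $\exp(-c u^{1/s})$; this is then simplified to the uniform form $\exp(-\sqrt{t/(c_b L)})$ (a legitimate weakening in the regime where this term dominates the second), giving the first term of $h_0(t)$.

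The main obstacle is that for $C_\Lambda > 2$ the index $2/s < 1$ means $Y_i$ lacks finite exponential moments, so Chernoff bounds are unavailable; one must therefore combine a polynomial-moment (Markov) argument on the sum with a union-bound maximum inequality, producing the two-term form of $h_0(t)$. The specific (relatively loose) exponents in $h_0$ are chosen so that, when evaluated at $t \asymp L L_\tau^{2+(2\vee C_\Lambda)}$ in the verification of $\tE_0$, each term contributes at most $n^{-\tau}$, as required for the conclusion of the lemma.
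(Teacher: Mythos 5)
Your proof of the first inequality ($\Onorm{\hSigma_\rho^{-1}} > 2$) matches the paper's approach: reduce to concentration of the sample covariance and observe that the regularizer $\rho I_p$ only helps the lower eigenvalue bound. That part is fine.

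For the second inequality there are two connected gaps. First, you truncate to the event $\{\norm{\bar X}\leq 1\}$ and charge the complement an $\exp(-c n)$ failure probability. But $\exp(-c n)$ has no dependence on $t$, while $h_0(t)\to 0$ as $t\to\infty$; for any fixed $n$, taking $t$ large enough makes $h_0(t)<\exp(-cn)$, so the claimed bound cannot hold uniformly for all $t\geq 6L$ with an $n$- and $t$-free constant. This range of $t$ is not academic: Claim~\ref{claim: bdd_1} feeds Lemma~\ref{lem: bdd}, which is then \emph{integrated in $t$ over an unbounded interval} in Lemma~\ref{lem: F_moment}, so the bound genuinely needs to decay in $t$ all the way out. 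Second, you attribute the $\exp(-\sqrt{t/(c_bL)})$ term of $h_0$ to a polynomial-moment Markov argument on $n^{-1}\sum_i Y_i$ ``simplified to the uniform form $\exp(-\sqrt{\cdot})$.'' That simplification is only a weakening when $4/(C_\Lambda+2)\geq 1/2$, i.e.\ $C_\Lambda\leq 6$; for $C_\Lambda>6$ it strengthens the bound and is not valid. More to the point, the optimized-Markov tail $\exp(-c(t/(c_bL))^{4/(C_\Lambda+2)})$ is already dominated by the union-bound term $n\exp(-(t/(c_bL))^{2/(C_\Lambda+2)})$ for $t\geq c_bL$, so the Markov argument contributes nothing new and never produces the $\exp(-\sqrt{\cdot})$ term at all.

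In the paper's proof the $\exp(-\sqrt{t/(c_b L)})$ term has a different origin: rather than truncating $\bar X$ at a fixed level, each $\bar X$-dependent factor is split against a $t$-dependent threshold. In particular the quadratic piece $\norm{\bar X}^2$ is bounded by $\PP\bigl\{\norm{\bar X}^2\geq \tfrac{1}{n}\sqrt{t/(6L)}\bigr\}\lesssim\exp\bigl(-c\sqrt{t/(6L)}\bigr)$, which is exactly where the square-root exponent comes from, and the $\norm{\bar X}$-linear piece similarly gives $\exp(-ct/L)$, which is absorbed by $\exp(-c\sqrt{t/L})$ for $t\geq L$. If you replace your fixed truncation $\{\norm{\bar X}\leq 1\}$ with these $t$-dependent splits, both gaps close: the $t$-dependence survives to arbitrarily large $t$ and the first term of $h_0$ emerges naturally. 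The remaining ingredient --- a union bound over $i$ combined with the sub-Gaussian tail of $\norm{X_i}$ and $\lambda_{\max}(Q_i)^{1/2}\leq(1\vee\norm{X_i})^{C_\Lambda/2}$ --- you already have; that correctly produces the second term of $h_0$.
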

    See Appendix \ref{subsubsec: claim_bdd1} for the proof.

    Combining above results, one can obtain that 
    \begin{itemize}
        \item If $\rho = 0$, then
        \begin{align*}
            \mathbf{(II)} + \mathbf{(III)} \leq \PP\rbr{\Onorm{\hSigma^{-1}_{\rho}}> 2}
        \end{align*}
        As a result, for any $t \geq 6L$, the tail probability satisfies
        \begin{align*}
            \PP \rbr{\sup_{x \in B_{\mu}(L)}\hlambda_1(x) \geq t^2} \lesssim  h_0(t) + \exp(-c_e n)
        \end{align*}
        \item If $\rho = n^{-1}$, then $\mathbf{(III)}=0$. As a results, the tail probability satisfies
        \begin{align*}
            \PP \rbr{\sup_{x \in B_{\mu}(L)}\hlambda_1(x) \geq t^2} \lesssim
            \begin{cases}
                h_{0}(t) + \exp(-c_e n), & t \in \sbr{t_0,n t_0}\\
                h_{0}(t) + \exp(-c_e n) \wedge h_0(t/n), & t \in \rbr{nt_0, +\infty}
            \end{cases}
        \end{align*}
        with $t_0 = 6L$.
    \end{itemize}
    Finally, set
    \begin{align*}
        \tilde{t}_L =  \rbr{C_b \vee c_b} L_{\tau}^{2+\rbr{2 \vee 2C_1}} L
    \end{align*}
    Note that $\tilde{t}_L \geq t_0$, hence 
    \begin{align*}
        \PP \rbr{\sup_{x \in B_{\mu}(L)}\hlambda_1(x) \geq \tilde{t}_L^2} &\lesssim h_0(\tilde{t}_L) + \exp(-c_e n)\\
        &\lesssim n^{-(1+\tau)} + \exp(-c_e n)\\
        &\leq  C_{\tau} n^{-(1+\tau)}
    \end{align*}
    for some constant $C_{\tau}$ independent of $n$ (but depends on $\tau$).
    
    This finishes the proof of Lemma \ref{lem: bdd}.

\subsubsection{Proof of Claim \ref{claim: bdd_1}}
\label{subsubsec: claim_bdd1}
The first inequality follows by noticing that $\hSigma_{\rho} = n^{-1}\sum_{i=1}^n X_i X_i^\top - \bar{X}\bar{X}^\top + \rho I_p$ with $\rho \in \cbr{0,1/n}$. Apply triangle inequality, one can obtain that for $n \geq 3$,
\begin{align*}
    \PP \cbr{\Onorm{\hSigma_{\rho}^{-1}} > 2} &
    \leq  \PP \cbr{\Onorm{\hSigma_{\rho}-I_p} > 1/2}\\
    &\leq \PP\rbr{\Onorm{n^{-1}\sum_{i=1}^n X_i X_i^\top - I_p} > 1/12} + \PP\rbr{\Onorm{\bar{X}\bar{X}^\top} > 1/12}
\end{align*}
Under the sub-Gaussian assumption in Assumption \ref{assumption: X}, one can obtain that 
\begin{align*}
    &\PP \cbr{\Onorm{n^{-1}\sum_{i=1}^n X_i X_i^\top - I_p} > 1/12} \lesssim \exp (- c n)\\
    &\PP \cbr{\Onorm{\bar{X}\bar{X}^\top} > 1/12} \lesssim  \exp (-c n)
\end{align*}

Now we move on to the second inequality. Note that
\begin{align*}
    &\quad \ \frac{1}{n}\sum_{i=1}^n \rbr{1+\norm{x-\bar{X}} \cdot  \norm{X_i-\bar{X}}} \lambda_{\max}(Q_i)^{1/2}\\
    &\leq\frac{1}{n}\sum_{i=1}^n \rbr{1+\norm{x} \cdot  \norm{X_i}} \lambda_{\max}(Q_i)^{1/2} 
    + \frac{1}{n}\sum_{i=1}^n \norm{\bar{X}} \cdot  \norm{X_i} \lambda_{\max}(Q_i)^{1/2}\\
    &+\frac{1}{n}\sum_{i=1}^n \norm{x} \cdot \norm{\bar{X}}  \lambda_{\max}(Q_i)^{1/2} 
    + \frac{1}{n}\sum_{i=1}^n \norm{\bar{X}}^2 \lambda_{\max}(Q_i)^{1/2}\\
    &= \rbr{1+ \norm{x}\norm{\bar{X}} + \norm{\bar{X}}^2}\cdot \frac{1}{n} \sum_{i=1}^n \lambda_{\max}(Q_i)^{1/2} + \rbr{\norm{\bar{X}} + \norm{x}}\cdot \frac{1}{n}\sum_{i=1}^n \norm{X_i} \lambda_{\max}(Q_i)^{1/2} 
\end{align*}
Therefore, for any $L \geq 1$ and $t>0$, one has
\begin{align*}
    &\quad \ \PP\cbr{\sup_{x \in B_{\mu}(L)}
    \frac{1}{n}\sum_{i=1}^n \rbr{1+\norm{x-\bar{X}} \cdot  \norm{X_i-\bar{X}}} \lambda_{\max}(Q_i)^{1/2} \geq t}\\
    &\leq \PP \cbr{\rbr{1+ L\norm{\bar{X}} + \norm{\bar{X}}^2}\cdot \frac{1}{n} \sum_{i=1}^n \lambda_{\max}(Q_i)^{1/2} \geq \frac{1}{1+L}t}\\
    &+ \PP \cbr{\rbr{\norm{\bar{X}} + \norm{x}}\cdot \frac{1}{n}\sum_{i=1}^n \norm{X_i} \lambda_{\max}(Q_i)^{1/2} \geq \frac{L}{1+L}t}\\
    &\leq \underbrace{\PP \cbr{\rbr{1+ L\norm{\bar{X}} + \norm{\bar{X}}^2}\cdot \frac{1}{n} \sum_{i=1}^n \lambda_{\max}(Q_i)^{1/2} \geq \frac{t}{2L}}}_{\mathbf{(A)}}\\
    &+ \underbrace{\PP \cbr{\rbr{\norm{\bar{X}} + L}\cdot \frac{1}{n}\sum_{i=1}^n \norm{X_i} \lambda_{\max}(Q_i)^{1/2} \geq \frac{t}{2}}}_{\mathbf{(B)}}\\
\end{align*}

\noindent\textbf{Analysis for (A):} for any $1 \leq L \leq \sqrt{n}$,
\begin{align*}
    \mathbf{(A)} &\leq \PP \cbr{\frac{1}{n} \sum_{i=1}^n \lambda_{\max}(Q_i)^{1/2} \geq \frac{1}{1+L/\sqrt{n} + 1/n} \cdot \frac{t}{2L}} \\
    &+ \PP \cbr{L \norm{\bar{X}} \frac{1}{n} \sum_{i=1}^n \lambda_{\max}(Q_i)^{1/2} \geq \frac{L/\sqrt{n}}{1+L/\sqrt{n} + 1/n} \cdot \frac{t}{2L}}\\
    &+ \PP \cbr{\norm{\bar{X}}^2 \frac{1}{n} \sum_{i=1}^n \lambda_{\max}(Q_i)^{1/2} \geq \frac{1/n}{1+L/\sqrt{n} + 1/n} \cdot \frac{t}{2L}}\\
    &\leq \PP \cbr{\frac{1}{n} \sum_{i=1}^n \lambda_{\max}(Q_i)^{1/2} \geq \frac{1}{3} \cdot \frac{t}{2L}} +\underbrace{ \PP \cbr{\norm{\bar{X}} \frac{1}{n} \sum_{i=1}^n \lambda_{\max}(Q_i)^{1/2} \geq \frac{1/\sqrt{n}}{3} \cdot \frac{t}{2L}}}_{\mathbf{A_1}}\\
    &+ \underbrace{\PP \cbr{\norm{\bar{X}}^2 \frac{1}{n} \sum_{i=1}^n \lambda_{\max}(Q_i)^{1/2} \geq \frac{1/n}{3} \cdot \frac{t}{2L}}}_{\mathbf{A_2}}
\end{align*}
\begin{itemize}
    \item $\mathbf{A_1}$:
    \begin{align*}
        \mathbf{A_1} \leq \PP \cbr{\norm{\bar{X}} \geq \frac{1}{\sqrt{n}} \cdot \sqrt{\frac{t}{6L}}}+
        \PP \cbr{\frac{1}{n} \sum_{i=1}^n \lambda_{\max}(Q_i)^{1/2} \geq \sqrt{\frac{t}{6L}}}
    \end{align*}
    \item $\mathbf{A_2}$:
    \begin{align*}
        \mathbf{A_2} \leq \PP \cbr{\norm{\bar{X}}^2 \geq \frac{1}{n} \cdot \sqrt{\frac{t}{6L}}} + 
        \PP \cbr{\frac{1}{n} \sum_{i=1}^n \lambda_{\max}(Q_i)^{1/2} \geq \sqrt{\frac{t}{6L}}}
    \end{align*}
\end{itemize}
Hence for $t \geq 6L$,
\begin{align*}
    \mathbf{(A)} &\lesssim \PP \cbr{\frac{1}{n} \sum_{i=1}^n \lambda_{\max}(Q_i)^{1/2} \geq \sqrt{\frac{t}{6L}}} + \PP \cbr{\norm{\bar{X}}^2 \geq \frac{1}{n} \cdot \sqrt{\frac{t}{6L}}}\\
    &\lesssim \PP \cbr{\frac{1}{n} \sum_{i=1}^n \lambda_{\max}(Q_i)^{1/2} \geq \sqrt{\frac{t}{6L}}}
    + \exp \rbr{-c\sqrt{\frac{t}{6L}}}
\end{align*}

\noindent\textbf{Analysis for (B):}
\begin{align*}
    \mathbf{(B)}&\leq \PP \cbr{L \cdot \frac{1}{n}\sum_{i=1}^n \norm{X_i} \lambda_{\max}(Q_i)^{1/2} \geq \frac{\sqrt{n}L}{\sqrt{n}L + 1} \frac{t}{2} }\\
    &+ \PP \cbr{\norm{\bar{X}} \cdot \frac{1}{n}\sum_{i=1}^n \norm{X_i} \lambda_{\max}(Q_i)^{1/2} \geq \frac{1}{\sqrt{n}L + 1} \frac{t}{2} }\\
    &\leq \PP \cbr{\frac{1}{n}\sum_{i=1}^n \norm{X_i} \lambda_{\max}(Q_i)^{1/2} \geq \frac{t}{4L} }
    + \PP \cbr{\norm{\bar{X}} \cdot \frac{1}{n}\sum_{i=1}^n \norm{X_i} \lambda_{\max}(Q_i)^{1/2} \geq \frac{t}{4\sqrt{n}L} }\\
    &\leq \PP \cbr{\frac{1}{n}\sum_{i=1}^n \norm{X_i} \lambda_{\max}(Q_i)^{1/2} \geq \frac{t}{4L} }\\
    &+ \PP \cbr{\norm{\bar{X}} \geq \sqrt{\frac{t}{4nL}}} + \PP \cbr{\frac{1}{n}\sum_{i=1}^n \norm{X_i} \lambda_{\max}(Q_i)^{1/2} \geq \sqrt{\frac{t}{4L}} }
\end{align*}
Hence for $t \geq 4L$,
\begin{align*}
    \mathbf{(B)} &\lesssim \PP \cbr{\norm{\bar{X}} \geq \sqrt{\frac{t}{4nL}}} + \PP \cbr{\frac{1}{n}\sum_{i=1}^n \norm{X_i} \lambda_{\max}(Q_i)^{1/2} \geq \sqrt{\frac{t}{4L}} }\\
    &\lesssim \exp(-ct/L)+ \PP \cbr{\frac{1}{n}\sum_{i=1}^n \norm{X_i} \lambda_{\max}(Q_i)^{1/2} \geq \sqrt{\frac{t}{4L}} }
\end{align*}
Combining results for $\mathbf{(A)}$ and $\mathbf{(B)}$, one can obtain that for $t\geq 6L$,
\begin{align*}
    &\quad \ \mathbf{(A)} + \mathbf{(B)}\\ 
    &\lesssim \exp \rbr{-c\sqrt{\frac{t}{6L}}} + \exp(-ct/L)+ \PP \cbr{\frac{1}{n} \sum_{i=1}^n \lambda_{\max}(Q_i)^{1/2} \geq \sqrt{\frac{t}{6L}}}\\
    & + \PP \cbr{\frac{1}{n}\sum_{i=1}^n \norm{X_i} \lambda_{\max}(Q_i)^{1/2} \geq \sqrt{\frac{t}{4L}} }\\
    &\lesssim \exp \rbr{-c\sqrt{\frac{t}{6L}}} +\PP \cbr{\frac{1}{n} \sum_{i=1}^n \lambda_{\max}^{1/2}(Q_i) \geq \sqrt{\frac{t}{6L}}}+ \PP \cbr{\frac{1}{n}\sum_{i=1}^n \norm{X_i} \lambda_{\max}^{1/2}(Q_i) \geq \sqrt{\frac{t}{4L}} }
\end{align*}
Note that for any $s \geq 1$,
\begin{align*}
    \PP \cbr{\frac{1}{n} \sum_{i=1}^n \lambda_{\max}(Q_i)^{1/2} \geq s} &\leq n \cdot \PP \cbr{\lambda_{\max}(Q_i)^{1/2} \geq s}\\
    &\leq n \cdot \PP \cbr{ (1 \vee \norm{X_i})^{C_{\Lambda}/2} \geq s }\\
    &= n \cdot \PP \cbr{  \norm{X_i} \geq s^{2/C_{\Lambda}} }\\
    &\lesssim n \cdot \exp \rbr{- cs^{4/C_{\Lambda}}}
\end{align*}
and 
\begin{align*}
    \PP \cbr{\frac{1}{n} \sum_{i=1}^n \norm{X_i}\lambda_{\max}(Q_i)^{1/2} \geq s} &\leq n \cdot \PP \cbr{\norm{X_i}\lambda_{\max}(Q_i)^{1/2} \geq s}\\
    &\leq n \cdot \PP \cbr{ \norm{X_i}(1 \vee \norm{X_i})^{C_{\Lambda}/2} \geq s }\\
    &= n \cdot \PP \cbr{  \norm{X_i}^{1+C_{\Lambda}/2} \geq s }\\
    &\lesssim n \cdot \exp \rbr{- cs^{4/(C_{\Lambda}+2)}}
\end{align*}
Therefore, for any $t \geq 6L$,
\begin{align*}
    \mathbf{(A)} + \mathbf{(B)} &\lesssim \exp \rbr{-c\sqrt{\frac{t}{L}}} + n \cdot\exp  \sbr{- c\rbr{t/L}^{2/C_{\Lambda}}}
    +n \cdot\exp  \sbr{- c\rbr{t/L}^{2/(C_{\Lambda}+2)}}\\
    &\lesssim \exp \rbr{-c\sqrt{\frac{t}{L}}}+n \cdot\exp  \sbr{- c\rbr{t/L}^{2/(C_{\Lambda}+2)}}
\end{align*}
The proof of Claim \ref{claim: bdd_1} is complete by adjusting constant $c$ here.

\subsection{Proof of Lemma \ref{lem: F_conc}}
\label{subsec: F_conc}
Without loss of generality, assume $\mu = 0$, $\Sigma = I_p$, $C_{\psi_2}=1$.
Recall notation $\cS_d(a,b):= \cbr{A \in \cS_d: a I_d \prec A \prec b I_d}$. Denote parameter $\theta=(x,S)$ and  parameter space $\Theta = B_\mu(L) \times \cS_d( 0,\tM_L)$. Let $F(\theta)$ denote $F(x,S)$ and similarly for other functions of $(x,S)$.
By triangle inequality, one can obtain that
\begin{equation}
    \sup_{\theta \in \Theta} \left| \hF_{\rho}(\theta) - F(\theta) \right| \leq \underbrace{\sup_{\theta \in \Theta} \left| \hF_{\rho}(\theta) - \tF_n(\theta) \right|}_{\zeta_1} + \underbrace{\sup_{\theta \in \Theta} \left| \tF_n(\theta) - F(\theta) \right| }_{\zeta_2}
    \label{eqn: prf_conv_lem_slow_E1}
\end{equation}

For any $t_1,t_2>0$, define $C_{\sup}(t_1,t_2) := \sup \cbr{W^2(Q,S):Q \in \cS_d(0,t_1), S \in \cS_d(0,t_2)}$. By Lemma \ref{lem: diff}, one has 
\begin{align}
    C_{\sup}(t_1,t_2) &\lesssim t_1+t_2
    \label{eqn: C_sup}
\end{align}
Here dimension $d$ is viewed as a fixed constant and absorbed into $\lesssim$.

\begin{itemize}
    \item $\zeta_1$:
    under event $E_0$ (defined in (\ref{eqn: prf_conv_E0})), one has
    \begin{itemize}
        \item $\norm{X_i -\mu} \leq L_{\tau} = \sqrt{(1+\tau)\log n}$ for $i=1,...,n$.
        \item $W^2(Q_i,S) \lesssim  M_{L_\tau} \vee \tM_{L} $ for any $S \in \cS_d(0, \tM_{L})$ and $i \in [n]$.
    \end{itemize}
    Lemma \ref{lem: sf_snf} then implies that 
    \begin{align}
        \zeta_1 & \lesssim \rbr{M_{L_\tau} \vee \tM_L} \frac{ L L_{\tau}^3  }{\sqrt{n}}
        \label{eqn: prf_conv_lem_slow_gamma1}
    \end{align}
    with probability greater than $1-O\rbr{n^{-(1+\tau)}}$.
        
    \item $\zeta_2$ can be bounded by truncation and the uniform concentration in Lemma \ref{lem: Rademacher}. Let $\tL \geq 1$ be a parameter to be specified later, and let $M_{\tL}:= \gamma_{\Lambda}(\tL)$. Then Assumption \ref{assumption: bdd_Q} implies that a.s.
        \begin{equation}
            X \in B_\mu(\tL) \implies Q \in \cS_d \rbr{M_{\tL}^{-1},M_{\tL}}
            \label{eqn: prf_conv_lem_slow_gamma2_tL_tM}
        \end{equation}
        
        Let $\tX$ be a truncated form of $X$ defined as
        \begin{equation}
            \tX := \tX(\tL) = X \1( \norm{X} \leq \tL)
            \label{eqn: prf_conv_lem_slow_trunc1}
        \end{equation}
        Let $\tX_{>}:= X - \tX$. By definition, one has $X=\tX + \tX_>$ , $\norm{\tX} \leq \tL$ and
        \begin{equation}
            \tX_>=
            \begin{cases}
                0,  & \norm{X}\leq \tL\\
                X, & \norm{X}> \tL
            \end{cases}
            \label{eqn: prf_conv_lem_slow_trunc2}
        \end{equation}  
        As a result, one can obtain  the following decomposition.
        \begin{equation*}
            \begin{aligned}
                &\quad \  w(x,X)W^2(S,Q)\\
                &= \rbr{1 + x^\top X} W^2(S,Q)\\
                &= \rbr{1+ x^\top \tX } W^2(S,Q) + x^\top \tX_> W^2(S,Q)\\
                &= \rbr{1+ x^\top\tX } W^2(S,Q) \1 \rbr{Q \in \cS \rbr{M_{\tL}^{-1},M_{\tL}}} \\
                &\quad + \rbr{1+ x^\top \tX } W^2(S,Q) \1 \rbr{Q \notin \cS \rbr{M_{\tL}^{-1},M_{\tL}}} + x^\top \tX_> W^2(S,Q)\\
                &\overset{(i)}{=}\rbr{1+ x^\top \tX } W^2(S,Q) \1 \rbr{Q \in \cS \rbr{M_{\tL}^{-1},M_{\tL}}} \\
                &\quad +  W^2(S,Q) \1 \rbr{Q \notin \cS \rbr{M_{\tL}^{-1},M_{\tL}}}+ x^\top \tX_> W^2(S,Q)\\
            \end{aligned}
        \end{equation*}
        Here (i) follows from (\ref{eqn: prf_conv_lem_slow_gamma2_tL_tM}). Indeed, if $Q \notin \cS \rbr{M_{\tL}^{-1},M_{\tL}}$, then as a result of (\ref{eqn: prf_conv_lem_slow_gamma2_tL_tM}), one has $\norm{X}> \tL$ which then implies $\tX  = 0$.

        As a consequence, one can obtain then following decomposition of $\tF_n(\theta) - F(\theta)$.
        \begin{equation*}
            \begin{aligned}
                \tF_n(\theta) - F(\theta) &= \alpha_1(\theta) + \alpha_2(\theta) + \alpha_3(\theta), \quad \text{where}\\
                \alpha_1(\theta)&:=\frac{1}{n} \sum_{i=1}^{n}  \rbr{1 + x^\top \tX_i} W^2(S,Q_i) \1 \rbr{Q_i \in \cS \rbr{M_{\tL}^{-1},M_{\tL}}} \\
                &\quad - \EE \left( 1+ x^\top \tX  \right)W^2(S,Q) \1\rbr{Q \in \cS \rbr{M_{\tL}^{-1},M_{\tL}}}\\
                \alpha_2(\theta)&:=\frac{1}{n} \sum_{i=1}^{n} x^\top \tX_{i,>} W^2(S,Q_i) - \EE x^\top \tX_{>} W^2(S,Q)\\
                \alpha_3(\theta)&:= \frac{1}{n}\sum_{i=1}^{n} W^2(S,Q_i) \1 \rbr{Q_i \notin \cS \rbr{M_{\tL}^{-1},M_{\tL}}}\\
                &\quad - \EE W^2(S,Q) \1 \rbr{Q \notin \cS \rbr{M_{\tL}^{-1},M_{\tL}}}
            \end{aligned}
        \end{equation*}
    
        Then triangle inequality gives
        \begin{equation}
            \zeta_2 \leq \sup_{\theta \in \Theta} \abs{\alpha_1(\theta)} + \sup_{\theta \in \Theta} \abs{\alpha_2(\theta)} + \sup_{\theta \in \Theta} \abs{\alpha_3(\theta)}
            \label{eqn: prf_conv_lem_slow_trunc_zeta2}
        \end{equation}
        \begin{itemize}
            \item $\alpha_2(\theta), \alpha_3(\theta)$: $\sup \abs{\alpha_2(\theta)}$ and $\sup \abs{\alpha_3(\theta)}$ can be upper bounded as in Claim \ref{claim: truncation}. The proof is deferred to Appendix \ref{subsubsec: prf_truncation}.
            
            \begin{claim}
                \label{claim: truncation}
                There exists constant $\tC$ independent of $n$ such that 
                by taking $\tL= \tC C_{\psi_2}\sqrt{(1+\tau)\log n}$, the following inequality
                \begin{align*}
                    \sup_{\theta \in \Theta} \abs{\alpha_2(\theta)} \vee \sup_{\theta \in \Theta} \abs{\alpha_3(\theta)} &=
                    O \rbr{ \frac{\tM_L \tL + \tL^{1+C_{\Lambda}}}{n^{1+\tau}} \cdot L}
                \end{align*}
                holds with probability greater than $1-O \rbr{n^{-\tau}}$.
            \end{claim}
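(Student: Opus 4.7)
The plan is to exploit sub-Gaussian truncation at scale $\tL$. By Assumption \ref{assumption: X} and a union bound, the event $\cE := \cbr{\norm{X_i-\mu}\le \tL \text{ for all } i\in[n]}$ satisfies $\PP(\cE^c) \le 2n\exp(-\tL^2/C_{\psi_2}^2)$, which with $\tL = \tC C_{\psi_2}\sqrt{(1+\tau)\log n}$ becomes $2n^{1-\tC^2(1+\tau)}$; choosing $\tC$ large enough (in terms of $\tau$) renders this $O(n^{-\tau})$. Crucially, Assumption \ref{assumption: bdd_Q} implies $\norm{X_i-\mu}\le \tL \Rightarrow Q_i \in \cS_d(M_{\tL}^{-1},M_{\tL})$, so on $\cE$ every summand in the empirical parts of $\alpha_2$ and $\alpha_3$ vanishes identically: $\tX_{i,>}=0$ and $\1(Q_i \notin \cS_d(M_{\tL}^{-1},M_{\tL}))=0$ for all $i$.

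Consequently, on $\cE$ the two suprema reduce to purely deterministic population tails,
\[
\sup_{\theta\in\Theta}|\alpha_2(\theta)| = \sup_{\theta\in\Theta}\bigl|\EE\, x^\top \tX_{>}\,W^2(S,Q)\bigr|, \quad \sup_{\theta\in\Theta}|\alpha_3(\theta)| = \sup_{\theta\in\Theta}\bigl|\EE\, W^2(S,Q)\,\1(Q\notin \cS_d(M_{\tL}^{-1},M_{\tL}))\bigr|,
\]
which I bound directly. Using Lemma \ref{lem: diff}\ref{eqn: diff_W_sup} to get $W^2(S,Q)\le 2d(\eigmax{S}+\eigmax{Q})$, combining with $\eigmax{S}\le \tM_L$ and the a.s. estimate $\eigmax{Q}\le c_\Lambda(\norm{X-\mu}\vee 1)^{C_\Lambda}$ from Assumption \ref{assumption: bdd_Q}, each expectation splits into a piece proportional to $\tM_L$ and one proportional to a truncated moment $\EE \norm{X-\mu}^k\1(\norm{X-\mu}>\tL)$. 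A standard integration-by-parts against the sub-Gaussian tail yields $\EE \norm{X-\mu}^k\1(\norm{X-\mu}>\tL) \lesssim \tL^k \exp(-\tL^2/C_{\psi_2}^2)$ for $\tL$ large and any integer $k\ge 0$. Since $|x^\top \tX_{>}|\le L\norm{X-\mu}\1(\norm{X-\mu}>\tL)$ (taking $\mu=0$ WLOG), this delivers
\[
\sup_{\theta}|\alpha_2(\theta)| \lesssim L(\tM_L\tL + \tL^{1+C_\Lambda})\exp(-\tL^2/C_{\psi_2}^2), \quad \sup_{\theta}|\alpha_3(\theta)| \lesssim (\tM_L + \tL^{C_\Lambda})\exp(-\tL^2/C_{\psi_2}^2).
\]

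Substituting $\tL = \tC C_{\psi_2}\sqrt{(1+\tau)\log n}$ turns the exponential into $n^{-\tC^2(1+\tau)}$, and enlarging $\tC$ absorbs it into $n^{-(1+\tau)}$; since $L\ge 1$, both expressions fit inside the stated envelope $L(\tM_L\tL + \tL^{1+C_\Lambda})/n^{1+\tau}$. The only delicate point is calibrating $\tC$ so that a single choice simultaneously controls $\PP(\cE^c)$ and the truncated moments at the desired rate, but this is routine. No uniform-concentration chaining is required here because, conditionally on $\cE$, the empirical parts vanish and the population parts depend on $\theta=(x,S)$ only through the crude envelopes $\norm{x}\le L$ and $\eigmax{S}\le\tM_L$, so the supremum over $\Theta$ comes for free.
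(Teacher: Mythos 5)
Your proposal is correct and follows essentially the same route as the paper: the paper also observes that the empirical summands of $\alpha_2$ and $\alpha_3$ vanish unless some $\norm{X_i-\mu}>\tL$ (an event of probability $O(n\exp(-c\tL^2))=O(n^{-\tau})$ for $\tC$ large), and then bounds the remaining deterministic population tails via the truncated sub-Gaussian moments $\EE\norm{\tX_>}\lesssim \tL\exp(-\tL^2)$ and $\EE\norm{\tX_>}^{1+C_\Lambda}\lesssim \tL^{1+C_\Lambda}\exp(-\tL^2)$, exactly as you do. Your event-based phrasing and the paper's decomposition into the terms $\mathbf{I}(t)$ and $\mathbf{II}(t)$ are the same argument.
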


            \item $\alpha_1(\theta)$: Apply triangle inequality to see that
            \begin{equation}
                \begin{aligned}
                    \sup_{\theta \in \Theta} \left| \alpha_1(\theta) \right| &= \underbrace{\sup_{\theta \in \Theta} \left| \alpha_1(\theta) \right| - \EE \sup_{\theta \in \Theta} \left| \alpha_1(\theta) \right|}_{b_1} + \underbrace{\EE \sup_{\theta \in \Theta} \left| \alpha_1(\theta) \right|}_{b_2}
                \end{aligned}
                \label{eqn: prf_conc_lem_slw_clm_Rad_a1_decomp}
            \end{equation}
            After truncation, $\alpha_1(\theta)$ is uniformly bounded. Hence $b_1$ can be upper bounded by exploiting the bounded difference property (Lemma \ref{lem: bdd_diff}). For $b_2$, we first show that $W^2(Q,S)$ is only H\"older continuous in $S$ which then implies H\"older (rather than  Lipschitz) continuity of the corresponding sub-Gaussian norm. By generalizing Dudley’s integral inequality via chaining in Lemma \ref{lem: Rademacher}, we arrive at an upper bound for $b_2$. The results are summarized below in 
            in Claim \ref{claim: Rademacher} whose proof is deferred to Appendix \ref{subsubsec: prf_claim_Rademacher}.
            
            \begin{claim}
                Instate the notations and assumptions in Lemma \ref{lem: prf_conv_lem_slow} and Claim \ref{claim: truncation}.
                \begin{equation*}
                    \sup_{\theta \in \Theta} \left| \alpha_1(\theta) \right| = O \rbr{\frac{L \tM_L^2}{\sqrt{n}}   \sqrt{\log \tM_L }}
                \end{equation*}
                with probability at least $1- O\rbr{n^{-(1+\tau)}}$.
                \label{claim: Rademacher}
            \end{claim}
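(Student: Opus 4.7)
My plan is to follow the decomposition (\ref{eqn: prf_conc_lem_slw_clm_Rad_a1_decomp}): the fluctuation $b_1 = \sup_\theta|\alpha_1(\theta)| - \EE\sup_\theta|\alpha_1(\theta)|$ will be handled via the bounded-differences inequality (Lemma \ref{lem: bdd_diff}), while the expected supremum $b_2$ will be bounded via chaining (Lemma \ref{lem: Rademacher}). Throughout, I write $\psi(X,Q;\theta) = (1 + x^\top \tX)\, W^2(S,Q)\,\1(Q \in \cS_d(M_{\tL}^{-1}, M_{\tL}))$ for $\theta = (x,S) \in \Theta = B_\mu(L) \times \cS_d(0,\tM_L)$, so that $\alpha_1(\theta)$ is the centered empirical average of $\psi$.

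For $b_1$, I first obtain a uniform envelope for $\psi$. Since $\|\tX\| \leq \tL$ and $\|x\| \leq L$, one has $|1 + x^\top \tX| \lesssim L \tL$, and Lemma \ref{lem: diff}\ref{eqn: diff_W_sup} gives $W^2(S,Q) \lesssim \tM_L + M_{\tL} \lesssim \tM_L$ under the indicator, hence $\|\psi\|_\infty \lesssim L\tL\tM_L$. Applying the bounded-differences bound (\ref{eqn: bdd_diff_3}) with a standard McDiarmid-type tail inequality with this envelope then delivers $b_1 = O(L\tL\tM_L/\sqrt{n})$ with probability at least $1 - O(n^{-(1+\tau)})$; since $\tL$ is polylog in $n$, this is strictly smaller than the target rate and will be absorbed.

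For $b_2$, I estimate the sub-Gaussian modulus of continuity of $\psi$ in $\theta$. In the $x$-direction $\psi$ is Lipschitz with constant $\lesssim \tM_L \tL$. In the $S$-direction, using the triangle inequality $|W(S_1,Q) - W(S_2,Q)| \leq W(S_1,S_2) \leq \Fnorm{S_1^{1/2} - S_2^{1/2}}$ together with the H\"older-$1/2$ continuity of the matrix square-root map on spectra bounded by $\tM_L$, I obtain
\begin{equation*}
    |W^2(S_1,Q) - W^2(S_2,Q)| \;\lesssim\; \sqrt{\tM_L}\,\Fnorm{S_1 - S_2}^{1/2},
\end{equation*}
so $\psi$ is $1/2$-H\"older in $S$ with constant $\lesssim L\tL\sqrt{\tM_L}$. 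Combining both directions via a product metric and invoking (\ref{eqn: lem_rad_lip_avg}) transfers these moduli to the empirical process $\alpha_1$ with an extra $1/\sqrt{n}$ factor. Then the chaining bound (\ref{eqn: lem_rad_chain_holder}) with $\alpha_0 = 1/2$, $K \lesssim L\tL\sqrt{\tM_L}$, and $S$-diameter $D \asymp \tM_L$ (which dominates the $L$-diameter in $x$) yields $b_2 \lesssim L\tL\tM_L^{3/2}\sqrt{\log \tM_L}/\sqrt{n}$, which is bounded by the claimed $L\tM_L^2\sqrt{\log \tM_L}/\sqrt{n}$ using $\tL \lesssim \sqrt{\log n} \lesssim \sqrt{\tM_L}$.

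The main obstacle I anticipate is the H\"older (not Lipschitz) dependence of $W^2$ on its first argument on $\cS_d(0,\tM_L)$, which is unavoidable since this set contains nearly singular matrices where $\nabla_S W^2(S,Q)$ blows up. This forces $\alpha_0 = 1/2$ in Lemma \ref{lem: Rademacher}, producing the factor $D^{1 \vee \alpha_0} = D = \tM_L$ rather than $D^{1/2}$ in the chaining integral, and is the source of the quadratic dependence on $\tM_L$ in the final rate. A secondary technical care is the product structure of $\Theta$; I will handle it either by a two-pass chaining (an $\epsilon$-net in $x$, then chaining in $S$ for each point of the net) or equivalently by a product metric with $\log N(\epsilon;\Theta) \lesssim p\log^+(L/\epsilon) + d^2\log^+(\tM_L/\epsilon)$, so that the $\sqrt{\log \tM_L}$ factor is the inevitable metric-entropy price of covering the $\cS_d(0,\tM_L)$ component.
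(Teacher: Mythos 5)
Your proposal takes essentially the same route as the paper: decompose $\sup_\theta|\alpha_1(\theta)|$ into a fluctuation $b_1$ (bounded differences) and an expected supremum $b_2$ (chaining via Lemma \ref{lem: Rademacher}), with the key observation being the $1/2$-H\"older continuity of $W^2(\cdot,Q)$ on $\cS_d(0,\tM_L)$, which forces $\alpha_0=1/2$ and produces the $D=\tM_L$ (rather than $D^{1/2}$) factor in the Dudley integral.

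There are, however, two small but genuine slips in the modulus-of-continuity computation that prevent your stated constants from closing to the target $L\tM_L^2\sqrt{\log\tM_L}/\sqrt{n}$. First, you bound $|1+x^\top\tX|$ pointwise by $L\tL$, whereas the paper uses the sub-Gaussian norm $\|1+x^\top\tX\|_{\psi_2}\lesssim 1+\|x\|\,\|\tX\|_{\psi_2}\lesssim L$; the pointwise bound costs you an extra factor of $\tL\asymp\sqrt{\log n}$. Second, and more substantively, you declare $K\lesssim L\tL\sqrt{\tM_L}$ by reading off only the $S$-direction H\"older constant. But the function $\tau(\epsilon)=\frac{K}{\sqrt n}(\epsilon\vee\epsilon^{1/2})$ in Lemma \ref{lem: Rademacher} must dominate \emph{both} the H\"older and the Lipschitz moduli, so $K$ must be the maximum of the two constants. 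With your pointwise bounds the Lipschitz-in-$x$ constant is $\tL\tM_L$, and since $\sqrt{\tM_L}\gtrsim L\,L_\tau^{2}\gg L$ by the definition $\tM_L\asymp L^2 L_\tau^{4+(4\vee 2C_\Lambda)}$ in Lemma \ref{lem: bdd}, the Lipschitz constant dominates: $\tL\tM_L>L\tL\sqrt{\tM_L}$. Your chaining bound would therefore come out as $\tL\tM_L^{2}\sqrt{\log\tM_L}/\sqrt n$, which is off by $\tL/L$ — harmless for the downstream polylog absorption in Lemma \ref{lem: prf_conv_lem_slow}, but not the bound as claimed. Both issues are repaired simultaneously by replacing your sup bounds for $\tX$ with the $\psi_2$ bounds: then the H\"older constant is $LM_{\tL}^{1/2}$ and the Lipschitz constant is $\tM_L$, both $\leq L\tM_L$, so $K=L\tM_L$ and the target bound follows exactly. (You also use $W(S_i,Q)\lesssim\sqrt{\tM_L}$ as the multiplier in $|W^2-W^2|=|W-W|\cdot|W+W|$ where the paper gets $M_{\tL}^{1/2}$ through $g(Q;S)=(Q^{1/2}SQ^{1/2})^{1/2}$; this is a looser but valid alternative and would not affect the final rate once $K$ is set correctly.)
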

            \item combine Claim \ref{claim: truncation}, Claim \ref{claim: Rademacher} and (\ref{eqn: prf_conv_lem_slow_trunc_zeta2}) to see that
            \begin{equation}
                \zeta_2 = O \rbr{\frac{L \tM_L^3}{\sqrt{n}} }
                \label{eqn: prf_conv_lem_slow_gamma2}
            \end{equation}
            with probability at least $1- O \rbr{n^{-\tau}}$.
        \end{itemize}
        \item Combining results for $\zeta_1$ and $\zeta_2$, one has
        \begin{align*}
            \sup_{\theta \in \Theta} \abs{\hF_{\rho}(\theta) - F(\theta) } &\leq \zeta_1 + \zeta_2 \\
            &=O \rbr{\frac{L \tM_L^3}{\sqrt{n}} }
        \end{align*}
        with probability greater than $1-O\rbr{n^{-\tau}}$.
    \end{itemize}
    The proof of Lemma \ref{lem: F_conc} is now complete.
    
\subsubsection{Proof of Claim \ref{claim: truncation}}
\label{subsubsec: prf_truncation}
Without loss generality, assume $\gnorm{X} \leq \gnorm{\norm{X}}\leq 1$.

\noindent \textbf{Proof for $\alpha_{2}(\theta)$}: a crude upper bounded suffices here. By triangle inequality, one has
\begin{align*}
    &\sup_{\theta \in \Theta} \abs{\alpha_2(\theta)}\\
    \leq & \sup_{\theta \in \Theta} \abs{\EE x^\top  \tX_{>} W^2(S,Q)} + \sup_{\theta \in \Theta} \abs{\frac{1}{n} \sum_{i=1}^{n} x^\top \tX_{i,>} W^2(S,Q_i)} \\
    \leq & \sup_{\theta \in \Theta} \norm{x} \cdot \EE \norm{\tX_>} \cdot W^2(S,Q) + \sup_{\theta \in \Theta} \frac{1}{n} \sum_{i=1}^{n} \norm{x} \cdot \norm{\tX_{i,>}} \cdot W^2(S,Q_i)
\end{align*}
Therefore, for any $t > 0$, one has
\begin{align*}
    \PP \cbr{\sup_{\theta \in \Theta} \abs{\alpha_2(\theta)} \geq t } \leq &\underbrace{\PP \cbr{\sup_{\theta \in \Theta} \norm{x} \cdot \EE \norm{\tX_>} \cdot W^2(S,Q) \geq t}}_{=:\mathbf{I}(t)}\\
    +& \underbrace{\PP \cbr{\sup_{\theta \in \Theta} \frac{1}{n} \sum_{i=1}^{n} \norm{x} \cdot \norm{\tX_{i,>}} \cdot W^2(S,Q_i) >0}}_{=: \mathbf{II}(t)}
\end{align*}

\noindent\textbf{I(t):} Lemma \ref{lem: diff} and Assumption \ref{assumption: bdd_Q} imply that 
\begin{align*}
    \sup_{\theta \in \Theta} W^2(S,Q) \lesssim \tM_L + \gamma_{\Lambda}(\norm{X})
\end{align*}
which further gives
\begin{align*}
    \mathbf{(I)} &\leq \PP \cbr{ \EE \norm{\tX_{>}} \rbr{\tM_L + \norm{X}^{C_{\Lambda}}} \geq t/(Lc)}\\
    &\leq \PP \cbr{ \EE \norm{\tX_{>}} \geq  \frac{t}{2c L\tM_L}} + \PP \cbr{ \EE \norm{\tX_{>}}  \norm{X}^{C_{\Lambda}} \geq t/(2Lc)}
\end{align*}
By the definition of $\tX_>$, one can obtain that $\norm{\tX_{>}}  \norm{X}^{C_{\Lambda}} = \norm{\tX_>}^{1+C_{\Lambda}}$ and for any $s \geq 0$,
\begin{align*}
    \PP \cbr{\norm{\tX_>} \geq \tL + s} \leq 2 \exp \rbr{ - \tL^2} \exp \rbr{-s^2}
\end{align*}
Therefore, one can obtain that for any $\tL \geq 1$, one has
\begin{align*}
    \EE \norm{\tX_>}
    &\leq 2\exp(-\tL^2) \int_{0}^{\infty} (\tL +s ) \exp(-s^2)\\
    &\lesssim \exp(-\tL^2) \tL
\end{align*}
and 
\begin{align*}
    \EE \norm{\tX_{>}}^{1+C_{\Lambda}} &\leq 2 \exp(-\tL^2) \int_0^{\infty} (\tL+s)^{1+C_{\Lambda}} \exp(-s^2)ds\\
    &\lesssim \exp(-\tL^2) \tL^{1+C_{\Lambda}} 
\end{align*}
Therefore, by taking $\tL = O(L_{\tau})$, there exists constant $C_I>0$ such that
\begin{align*}
    \mathbf{I(t)} = 0, \quad \forall t \geq C_I n^{-(1+\tau)} \cdot \rbr{\tM_L + M_{\tL}} L \tL
\end{align*}

\noindent\textbf{II:} Note that $\norm{\tX_{i,>}} > 0$ if and only only if $\norm{X_i} > \tL $. Hence 
\begin{align*}
    \mathbf{(II)} &\leq n \cdot \PP \cbr{\norm{\norm{X_i}>\tL}}\\
    &\lesssim n \cdot \exp(-\tL^2)\\
    &\lesssim n^{-\tau}
\end{align*}





\noindent\textbf{Combining (I), (II),}  there exists constant $\tC_2>0$ such that for any $s \geq 0$,
\begin{align*}
    &\quad \ \PP \cbr{\sup_{\theta \in \Theta} \abs{\alpha_2(\theta)} \geq n^{-(1+\tau)}\tC_2 L \tL \rbr{\tM_L + \tL^{C_{\Lambda}}} } \lesssim n^{-\tau}
\end{align*}


\paragraph{Proof for $\alpha_3(\theta)$:} 
\begin{align*}
    \alpha_3(\theta)&:= \frac{1}{n}\sum_{i=1}^{n} W^2(S,Q_i) \1 \rbr{Q_i \notin \cS \rbr{M_{\tL}^{-1},M_{\tL}}}\\
    &\quad - \EE W^2(S,Q) \1 \rbr{Q \notin \cS \rbr{M_{\tL}^{-1},M_{\tL}}}
\end{align*}
Following a similar argument to the one above, there exists constant $\tC_3>0$ such that
\begin{align*}
    \PP \cbr{\sup_{\theta \in \Theta} \abs{\alpha_3(\theta)} \geq n^{-(1+\tau)} \tC_3 L (\tM_L +\tL^{C_{\Lambda}})} \lesssim n^{-\tau}
\end{align*}
Finally, combining above results on $\alpha_2(\theta)$ and $\alpha_3(\theta)$ proves Claim \ref{claim: truncation}.

\subsubsection{Proof of Claim \ref{claim: Rademacher}}
\label{subsubsec: prf_claim_Rademacher}
Recall that we assume without loss of generality that $C_{\psi_2} =1$, $\mu=0$, $\Sigma = I_p$, $ \tL = L_{\tau}$, $M_{\tL}= \gamma_{\Lambda}(\tL)$ and $\tM_L$ is defined in Lemma \ref{lem: bdd} as
\begin{align*}
    \tM_L \asymp L^2  L_{\tau}^{4+\rbr{4 \vee 2C_1}} \gtrsim c_{\Lambda}(1 \vee \tL)^{C_{\Lambda}}= M_{\tL}
\end{align*}

Denote $Z=(X,Q)$ and define  $\tX$ as in (\ref{eqn: prf_conv_lem_slow_trunc1}). Define 

\begin{equation*}
    \begin{aligned}
        f(Z;\theta)&:=\left( 1+ (x-\mu)^\top \Sigma^{-1}(\tX - \mu) \right)W^2(S,Q) \mathbbm{1} \rbr{Q \in \cS \rbr{M_{\tL}^{-1},M_{\tL}}}
    \end{aligned}
\end{equation*}
Then one has $\alpha_1(\theta)= n^{-1}\sum_{i=1}^{n} f(Z_i;\theta) - \EE f(Z;\theta)$. 

\paragraph{Analysis of $b_1$:} due to truncation, $f$ is uniformly bounded. To see this, note that recall the definition of $\Theta$, 
    \begin{equation*}
        \theta \in \Theta \iff x \in B_\mu(L), S \in \cS_d(0, \tM_L)
    \end{equation*}
    Then for any $\theta \in \Theta$, one has
    \begin{equation}
        \begin{aligned}
            \sup_{\theta \in \Theta} \sup_{Z} \abs{f(Z;\theta)} &= \sup_{\theta \in \Theta} \sup_{Z}  \abs{(1+x^\top \tX )W^{2}(Q,S)} \mathbbm{1} \rbr{Q \in \cS_d \rbr{M_{\tL}^{-1},M_{\tL}}}\\
            &\leq  \sup_{\theta \in \Theta} \sup_{Z} \rbr{1+ \norm{x-\mu}\cdot \norm{\tX - \mu}} \cdot \abs{W^{2}(Q,S)} \mathbbm{1} \rbr{Q \in \cS_d \rbr{M_{\tL}^{-1},M_{\tL}}}\\
            &\lesssim \rbr{1+ L \tL} \cdot (M_{\tL} + \tM_L) \\
            &\lesssim L\tL \tM_L
            \label{eqn: prf_conc_lem_slw_clm_unfm1_bd}
        \end{aligned}
    \end{equation}    
    By Lemma \ref{lem: bdd_diff}, one has $\gnorm{b_1(x)} \lesssim \frac{L \tL \tM_L}{\sqrt{n}}$ which implies
    \begin{equation}
        b_1 = O \rbr{\frac{L \tL \tM_L L_{\tau}}{\sqrt{n}}}
        \label{eqn: prf_conc_lem_slw_clm_unfm_b1}
    \end{equation}
    with probability at least $1- O\rbr{n^{-(1+\tau)}}$ for some absolute constant $C>0$.

\paragraph{Analysis of $b_2$:} to apply Lemma \ref{lem: Rademacher}, we follow the steps below.
    \begin{itemize}
        \item first we define a metric $d$ on $\Theta$ by
        \begin{equation*}
            d(\theta,\ttheta) = \max\left\{ \twonorm{x-\tx}, \Fnorm{S-\tS} \right\}
        \end{equation*}
        for any $\theta = (x,S)$ and $\tS=(\tx,\tS)$. Since $\cS_d(0,\tM_L)$ is a subset of $\left\{ S \in \RR^{d \times d}: \left| S_{ij} \right| \leq \tM_L \right\}$ (by noticing that for any $S \in \cS^+_{\tM_L}$, one has $0 \leq S_{ii} \leq \tM_L$ and $\left| S_{ij} \right| \leq \sqrt{S_{ii}S_{jj}}\leq \tM_L$), the diameter $D$ and metric entropy of $\Theta$ can be upper bounded by 
        \begin{equation}
            D \leq \max\left\{ \mathrm{diam}(B_\mu(\tL)), \mathrm{diam}(\cS_d(\tM_L^{-1},\tM_L)) \right\}\lesssim L \vee \tM_L \overset{(i)}{\lesssim} \tM_L
            \label{eqn: prf_conc_lem_slw_clm_unfm_up_bdds1}
        \end{equation}
        and 
        \begin{align}
            \log N(\epsilon;\Theta) &\leq \log \left( N(\epsilon; B_\mu(\tL)) \cdot N(\epsilon;\cS^+_{\tM_L}) \right) \nonumber\\
            &\lesssim \log^+ \left( \frac{L \vee \tM_L}{\epsilon} \right) \nonumber\\
            & \overset{(ii)}{\lesssim} \log^+ \left( \frac{\tM_L}{\epsilon} \right)
            \label{eqn: prf_conc_lem_slw_clm_unfm_up_bdds2}
        \end{align}
        Here both (i) and (ii) arise due to $\tM_L \gtrsim L$.
        \item Next, let us consider the sub-Gaussian norm of $\alpha_1(\theta)-\alpha_1(\ttheta)$ as (\ref{eqn: lem_rad_lip}) in Lemma \ref{lem: Rademacher}. By (\ref{eqn: lem_rad_lip_avg}) in Lemma \ref{lem: Rademacher}, one has

        \begin{equation}
            \begin{aligned}
                \gnorm{\alpha_1(\theta)-\alpha_1(\ttheta)} &\lesssim \frac{1}{\sqrt{n}} \gnorm{ \abs{f(Z;\theta) - f(Z;\ttheta)} }\\
                &= \frac{1}{\sqrt{n}}\gnorm{ \abs{f(Z;x,S) - f(Z;\tx,\tS)} }\\ 
                &\leq \frac{1}{\sqrt{n}}\underbrace{ \gnorm{\abs{f(Z;x,S) - f(Z;x,\tS)}} }_{\beta_1} + \frac{1}{\sqrt{n}} \underbrace{\gnorm{\abs{f(Z;x,\tS) - f(Z;\tx,\tS)}}}_{\beta_2}
            \end{aligned}
            \label{eqn: prf_conc_lem_slow_clm_Rad_a1_lip}
        \end{equation}
        \begin{itemize}
            \item $\beta_1$: 
            Let $g(Q;S):=(Q^{1/2}SQ^{1/2})^{1/2}$. Then for any $Q \in \cS_d(M_{\tL}^{-1}, M_{\tL})$,
            \begin{equation*}
                \begin{aligned}
                    \Fnorm{g(Q;S) - g(Q;\tS)} &= \Fnorm{(Q^{1/2}SQ^{1/2})^{1/2} - (Q^{1/2}\tS Q^{1/2})^{1/2}}\\
                    &\lesssim \Fnorm{Q^{1/2}SQ^{1/2} - Q^{1/2}\tS Q^{1/2}}^{1/2} \\
                    &\leq  \Fnorm{S -\tS}^{1/2} \Onorm{Q}^{1/2}\\
                    &\leq M_{\tL}^{1/2}\left\|S -\tS \right\|_F^{1/2}
                \end{aligned}
            \end{equation*}
            Here in the second inequality, we exploit the following the H\"older continuity of matrix square root in \citet{09_mtrx_holder,18_mtr_pert}.
            \begin{equation*}
                \left\|A^{1/2}-B^{1/2} \right\|_F \leq C_d \left\|A-B \right\|_F^{1/2}
            \end{equation*}
            where $C_d$ is a constant only depending on dimension $d$. Hence for any $\theta, \ttheta \in \Theta$,
            \begin{align}
                \beta_1 &= \gnorm{\abs{f(X,Q;x,S) - f(X,Q;x,\tS)}} \nonumber\\
                &\leq \gnorm{\abs{1+x^\top \tX }}
                 \cdot \Fnorm{\rbr{g(Q;S)  - g(Q;\tS)} \cdot \mathbbm{1} \rbr{Q \in \cS_d \rbr{M_{\tL}^{-1},M_{\tL}}}} \nonumber\\
                &\leq \left( 1+ \norm{x} \cdot \gnorm{\twonorm{\tX}} \right) \cdot M_{\tL}^{1/2}\Fnorm{S -\tS}^{1/2}\nonumber\\
                &\lesssim L  \cdot  M_{\tL} ^{1/2} \Fnorm{S- \tS}^{1/2} 
                \label{eqn: prf_conc_lem_slow_clm_Rad_b1}
            \end{align}
            \item $\beta_2$: for any $\theta, \ttheta \in \Theta$, one has
            \begin{align}
                \beta_2 &= \gnorm{(x-\tx)^\top \tX W^{2}(Q,S) \mathbbm{1} \rbr{Q \in \cS_d \rbr{M_{\tL}^{-1},M_{\tL}}}} \nonumber\\
                &\overset{(i)}{\lesssim} \norm{x-\tx} \gnorm{\tX } \cdot \rbr{M_{\tL} + \tM_L}\nonumber\\
                &\lesssim \tM_L \left\|x-\tx \right\|_2
                \label{eqn: prf_conc_lem_slow_clm_Rad_b2}
            \end{align}
            Here (i) follows from (\ref{eqn: C_sup}).
            \item As a result of (\ref{eqn: prf_conc_lem_slow_clm_Rad_b1}) and (\ref{eqn: prf_conc_lem_slow_clm_Rad_b2}), one can obtain
                \begin{align}
                    \left\| f(Z;\theta) - f(Z;\ttheta)\right\|_{\psi_2} 
                    &\leq \beta_1 + \beta_2 \nonumber\\
                    &\lesssim L M_{\tL}^{1/2} d(\theta,\ttheta)^{1/2} + \tM_L d(\theta,\ttheta)\nonumber\\
                    &\lesssim L \tM_L \left(  d(\theta,\ttheta)^{1/2} \vee d(\theta,\ttheta)\right)
                    \label{eqn: prf_conc_lem_slow_clm_Rad_f_lip}
                \end{align}
        \end{itemize}

        Combining (\ref{eqn: prf_conc_lem_slow_clm_Rad_a1_lip}) and ({\ref{eqn: prf_conc_lem_slow_clm_Rad_f_lip}}) then gives
        \begin{equation*}
            \begin{aligned}
                \left\| \alpha_1(\theta)  - \alpha_1(\ttheta)  \right\|_{\psi_2} \lesssim \frac{L \tM_L}{\sqrt{n}} \left( d(\theta,\ttheta)^{1/2} \vee d(\theta,\ttheta)\right)
            \end{aligned}
        \end{equation*}
        Hence $\tau (\epsilon)$ in Lemma \ref{lem: Rademacher} can be chosen as
        \begin{equation}
            \begin{aligned}
                \tau(\epsilon) =  \frac{K}{\sqrt{n}}\left( \epsilon^{1/2} \vee \epsilon\right), \quad  K=C_{f} L \tM_L
            \end{aligned}
            \label{eqn: prf_conc_lem_slow_unfm1_tau_n_inv}
        \end{equation}
        
        \item by Lemma \ref{lem: Rademacher}, one has
        \begin{equation*}
            \begin{aligned}
                b_2 := \EE \sup_{\theta \in \Theta} \left| \alpha_1(Z;\theta) \right| 
                &\lesssim \underbrace{\frac{KD}{\sqrt{n}} \sqrt{\log^+ D}}_{\zeta_1} + \underbrace{\sup_{\theta \in \Theta} \EE \left| \alpha_1(Z;\theta) \right|}_{\zeta_2}
            \end{aligned}
        \end{equation*}
        \begin{itemize}
            \item $\zeta_1$: by (\ref{eqn: lem_rad_chain_holder}) in Lemma \ref{lem: Rademacher} and (\ref{eqn: prf_conc_lem_slw_clm_unfm_up_bdds1}), one can obtain
            \begin{equation}
                \zeta_1 \lesssim \frac{L \tM_L^2}{\sqrt{n}}  \sqrt{\log \tM_L}
                \label{eqn: prf_conc_lem_slow_clm_Rad_g1}
            \end{equation}
            \item $\zeta_2$: For any $\theta \in \Theta$, we have
            \begin{equation*}
                \begin{aligned}
                    \gnorm{f(Z;\theta)} &\leq \gnorm{\abs{f(Z;\theta)}}\\
                    &\leq \gnorm{\abs{1+(x-\mu)^\top \Sigma^{-1}(\tX -\mu)}}\\
                    &\quad \cdot \sup_{S \in \cS_d(0,\tM_L)} \sbr{W^2(Q,S) \mathbbm{1} \rbr{Q \in \cS_d \rbr{M_{\tL}^{-1},M_{\tL}}}}\\
                    &\lesssim \left( 1 + \left\|x-\mu \right\| \left\|\left\|\tX - \mu \right\| \right\|_{\psi_2} \right) \cdot \rbr{M_{\tL}+\tM_L}\\
                    &\lesssim L \tM_L
                \end{aligned}
            \end{equation*}
            Therefore for any $\theta \in \Theta$, one has $\left\|\alpha_1(\theta) \right\|_{\psi_2} \lesssim L \tM_L/\sqrt{n}$ and 
            \begin{equation*}
                \begin{aligned}
                    \EE \abs{\alpha_1(Z;\theta)} \lesssim \gnorm{\alpha_1(\theta)} \lesssim \frac{L\tM_L}{\sqrt{n}}
                \end{aligned}
            \end{equation*}
            Take supremum over $\theta \in \Theta$ to see that
            \begin{equation}
                \zeta_2 \lesssim \frac{L\tM_L}{\sqrt{n}}
                \label{eqn: prf_conc_lem_slow_clm_Rad_g2}
            \end{equation}
            \item As a result of (\ref{eqn: prf_conc_lem_slow_clm_Rad_g1}) and (\ref{eqn: prf_conc_lem_slow_clm_Rad_g2}), one can obtain
            \begin{equation}
                \begin{aligned}
                    b_2  
                    &\lesssim \frac{L \tM_L^2}{\sqrt{n}}  \sqrt{\log \tM_L }
                \label{eqn: prf_conc_lem_slw_clm_unfm_b2}
                \end{aligned}
            \end{equation}
            
        \end{itemize}        
    \end{itemize}

Finally, combine (\ref{eqn: prf_conc_lem_slw_clm_unfm_b1}) and (\ref{eqn: prf_conc_lem_slw_clm_unfm_b2}) to see that
\begin{equation}
    \begin{aligned}
        \sup_{\theta \in \Theta}\left| \alpha_1(\theta) \right| 
        &\leq  b_1 + b_2 \lesssim   
        C\frac{L \tM_L^2}{\sqrt{n}}   \sqrt{\log \tM_L }
    \end{aligned}
\end{equation}
with probability at least $1-O\rbr{ n^{-(1+\tau)} }$. The proof of Claim \ref{claim: Rademacher} is then complete.

\subsection{Proof of Lemma \ref{lem: prf_conv_lem_slow}}
\label{subsec: prf_conv_lem_slow}
Without loss of generality, assume $\mu = 0$, $\Sigma = I_p$ and $C_{\psi_2}=1$ in Assumption \ref{assumption: X}-Assumption \ref{assumption: bdd_Q}. 
To prove the convergence rate of $\hQ_{\rho}$ (\ref{eqn: prf_conv_lem_slow2}), observe that for any $\delta_n,\epsilon_n >0$, one has
\begin{align}
    &\quad \cbr{\sup_{x \in B_\mu(L)} \Fnorm{\hQ_{\rho}(x) - Q^*(x)} \leq \delta_n }  \nonumber\\
    &\overset{(i)}{\supset} \cbr{\inf_{S: \Fnorm{S-Q^*(x)} \leq \delta_n} \hF_{\rho}(x,S) < \inf_{S: \Fnorm{S-Q^*(x) } \geq \delta_n} \hF_{\rho}(x,S) , \forall x \in B_{\mu}(L)} \nonumber\\
    &\overset{(ii)}{\supset} \cbr{ \hF_{\rho}(x,Q^*(x)) <  \inf_{S: \Fnorm{S-Q^*(x)} \geq \delta_n} \hF_{\rho}(x,S) , \forall x \in B_{\mu}(L)} \cap \cbr{\hQ_{\rho}(x) \preceq \tM_L I_d  , \forall x \in B_{\mu}(L)} \nonumber\\
    &\overset{(iii)}{\supset} \left\{ \hF_{\rho}(x,Q^*(x)) <  \inf_{   S:  \underset{S \preceq \tM_L I_d}{\Fnorm{S-Q^*(x)} \geq \delta_n} } \hF_{\rho}(x,S),\forall x \in B_{\mu}(L) \right\} \cap \left\{ \hQ_{\rho}(x) \preceq \tM_L I_d , \forall x \in B_{\mu}(L) \right\} \nonumber\\
    &\supset \underbrace{ \cbr{\sup_{  \genfrac{}{}{0pt}{2}{x\in B_\mu(L)}{S \preceq \tM_L I_d} } \abs{ \hF_{\rho}(x,S)-F(x,S)} \leq \epsilon_n}  }_{=:\cE_1} \nonumber\\
    &\quad \cap \underbrace{ \cbr{F(x,Q^*(x)) \leq  \inf_{ \genfrac{}{}{0pt}{2}{S:\Fnorm{S-Q^*(x)} \geq \delta_n}{S \preceq \tM_L I_d}    } F(x,S) - 3\epsilon_n ,\forall x \in B_{\mu}(L)} }_{=:\cE_2} \nonumber\\
    &\quad \cap \underbrace{ \cbr{\hQ_{\rho}(x) \preceq \tM_L I_d , \forall x \in B_{\mu}(L)} }_{=\tE_0 \text{ in Lemma } \ref{lem: bdd}} \label{eqn: prf_conv_lem_slow_set}
\end{align}
Here (i) and (iii) follows since $\hQ_{\rho}(x)$ is the minimizer of $F_{n,\rho}(x,\cdot)$, (ii) is obtained by setting~$S=Q^*(x)$.

With (\ref{eqn: prf_conv_lem_slow_set}) in place, it suffices to choose appropriate $\delta_n,\epsilon_n$ such that $\cE_1 \cap \cE_2$ occurs with high probability and define $E_2 = \cE_1 \cap \cE_2 \cap E_0$. To this end, we first find the uniform convergence rate $\epsilon_n$ so that $\PP(\cE_1^c) \lesssim n^{-\tau}$. Next, $\delta_n$ is set based on $\epsilon_n$ and properties of $F$ so that $\PP(\cE_2^c) \lesssim n^{-\tau}$. For $\cE_3$, we already showed in (\ref{eqn: prf_conv_lem_slow1}) that $\PP(\cE_3^c)\lesssim n^{-\tau}$. Combining results on $\cE_1,\cE_2$ and $\cE_3$, the proof is then finished.


\paragraph{Analysis of $\cE_1$:} Denote $\epsilon_0=\delta_F^{\alpha_F}/\gamma_F(L,\sqrt{d} \tM_L)$ ($\delta_F$ and $\gamma_F(\cdot,\cdot)$ are defined in Assumption \ref{assumption: minimizer_global}) and set
    \begin{align*}
        \epsilon_n &:= C_{F,\tau} \frac{L \tM_L^3}{\sqrt{n}} \wedge \frac{\epsilon_0}{3}
    \end{align*}
    Note that for $L=O\rbr{\sqrt{\log n}}$, $C_{F,\tau} \frac{L \tM_L^3}{\sqrt{n}} \leq \epsilon_0/3$ for large $n$.
    Therefore,  Lemma \ref{lem: F_conc} implies that for any $\tau\geq 0$, 
    \begin{align*}
        \PP \cbr{\cE_1} \geq 1 - c_{\delta,\tau} n^{-\tau}
    \end{align*}
    for some constant $c_{\delta,\tau}>0$ independent of $n$.
  
\paragraph{Analysis of $\cE_2$:}    
    Note that for any $x \in B_{\mu}(L)$ and $S \preceq \tM_L I_d$, one has
    \begin{align*}
        \Fnorm{S-Q^*(x)}\leq \sqrt{d}\max\cbr{\Onorm{S}, \Onorm{Q^*(x)}} \leq \sqrt{d} \tM_L
    \end{align*}
    Hence one has
    \begin{align*}
        \inf_{S:  \underset{S \preceq \tM_L I_d}{\Fnorm{S-Q^*(x)} \geq \delta_n} } F(x,S) - F(x,Q^*(x)) 
        \geq \inf_{S:  \delta_n \leq \Fnorm{S-Q^*(x)} \leq \sqrt{d} \tM_L } F(x,S) - F(x,Q^*(x)) 
    \end{align*}
    Set
    \begin{align*}
        \delta_n := \sbr{3 \epsilon_n \gamma_F(L, \sqrt{d}\tM_L)}^{1/\alpha_F}
    \end{align*}
    By definition, $\delta_n \leq \delta_F$.
    Therefore, Assumption \ref{assumption: minimizer_global} implies that $\PP(\cE_2)=1$ for $n$ large enough.
    
\paragraph{Analysis of $\cE_3$:} Note that $\cE_3 = \tE_0$ which is defined in Lemma \ref{lem: bdd}. Hence
\begin{align*}
    \PP(\cE_3^c) \leq C_{\lambda,\tau} n^{-\tau}
\end{align*}

\paragraph{Combining $\cE_1,\cE_2,\cE_3$:} Finally, combining results on $\cE_1,\cE_2$ and $\cE_3$ above, one can obtain
\begin{align*}
    \PP\rbr{ \sup_{x\in B_{\mu}(L)}\Fnorm{\hQ_{\rho}(x) - Q^*(x)} \leq C_{\delta,\tau} \sbr{\frac{L \tM_L^3 \gamma_F(L,\sqrt{d} \tM_L)}{\sqrt{n}}}^{1/\alpha_F} } \geq 1- c_{\delta,\tau} n^{-\tau}
\end{align*}
for constants $C_{\delta,\tau}, c_{\delta,\tau}$ independent of $n$.
Note that by definition, $L,\tM_L, \gamma_F(L,\sqrt{d}\tM_L)\lesssim \polylog{n}$, one can then arrive at (\ref{eqn: prf_conv_lem_slow2}).

Finally, Eq. (\ref{eqn: prf_slow_3}) follows  by noticing that $\delta_F \wedge \delta_n = o \rbr{M_L}$.

This finishes the proof of Lemma \ref{lem: prf_conv_lem_slow}.

\subsection{Proof of Lemma \ref{lem: conc_tld_exp}}
\label{subsec: conc_tld_exp}
Without loss of generality, assume $\mu = 0$, $\Sigma = I_p$ and $C_{\psi_2}=1$ in Assumption \ref{assumption: X}-Assumption \ref{assumption: bdd_Q}. 
If $X$ is unbounded, then one can apply the truncation as (\ref{eqn: prf_conv_lem_slow_trunc1}) and follow similar arguments as in Lemma \ref{lem: prf_conv_lem_slow}. Hence we can assume without loss of generality that $\left\|X-\mu \right\|\leq \tL$ almost surely with  $\tL = C_{\psi_2} L_{\tau}=C_{\psi_2} \sqrt{(1+\tau)\log n}$. Then we have a.s.
\begin{equation}
    \begin{aligned}
        \norm{X-\mu} &\leq \tL\\
        Q, Q^*(X) &\in \cS_d(M_{\tL}^{-1},M_{\tL})\\
        \mathbbm{1}(E_0) & =1
    \end{aligned}
    \label{eqn: conc_tld_exp_wlog}
\end{equation}
where we remind reader that $E_0 := \cbr{\norm{X_i - \mu}\leq \tL, Q_i \in \cS_d(M_{\tL}^{-1}, M_{\tL}), i \in [n]}$ is defined in (\ref{eqn: prf_conv_E0}).

\paragraph{Proof of (\ref{eqn: conc_lem_unfmA}):}  one can obtain the following decomposition.

\begin{equation}
    \begin{aligned}
        \sup_{x \in B_\mu(L)} \left\|\tA_n(x) \right\|_F \leq \underbrace{\sup_{x \in B_\mu(L)} \left\|\tA_n(x) \right\|_F - \EE \sup_{x \in B_\mu(L)} \left\|\tA_n(x) \right\|_F}_{a_1} + \underbrace{\EE \sup_{x \in B_\mu(L)} \left\|\tA_n(x) \right\|_F}_{a_2}
    \end{aligned}
    \label{eqn: prf_slow_A_decomp}
\end{equation}
Define $\varphi(Z;x)=w(x,X)\left( T^{Q}_{Q^*(x)}-I_d \right)$.

\noindent \textit{Analysis of $a_1$:} By (\ref{eqn: conc_tld_exp_wlog}), one has almost surely that for any $x \in B_{\mu}(L)$,
    \begin{equation*}
        \begin{aligned}
            \Fnorm{\varphi(Z;x)} &\lesssim \left( 1+ \norm{x-\mu}\cdot \norm{X-\mu} \right) \cdot  \sup_{ \genfrac{}{}{0pt}{2}{Q \in \cS_d(M_{\tL}^{-1},M_{\tL})}{S \in \cS_d((M_{L})^{-1}, M_{L})} } \Fnorm{T^Q_{S}-I_d}\\
            &\overset{(i)}{\lesssim} L \tL M_{\tL}^{1/2} M_{L}^{3/2}
        \end{aligned}
    \end{equation*}
     Here (i) follows by noticing that 
    \begin{equation*}
        \begin{aligned}
            \sup_{ \genfrac{}{}{0pt}{2}{Q \in \cS_d(M_{\tL}^{-1},M_{\tL})}{S \in \cS_d((M_{L})^{-1}, M_{L})} } \Onorm{T^Q_{S}} = \sup_{\genfrac{}{}{0pt}{2}{Q \in \cS_d(M_{\tL}^{-1},M_{\tL})}{S \in \cS_d((M_{L})^{-1}, M_{L})}}\Onorm{S^{-1/2} (S^{1/2} Q S^{1/2})^{1/2} S^{-1/2}}\leq M_{\tL}^{1/2} M_{L}^{3/2}
        \end{aligned}
    \end{equation*}
    Lemma \ref{lem: bdd_diff} then implies that $\gnorm{a_1} \lesssim \frac{L \tL M_{\tL}^{1/2} M_{L}^{3/2} }{\sqrt{n}}$. Therefore, one can obtain
    \begin{equation}
        a_1 \lesssim \frac{L \tL M_{\tL}^{1/2} M_{L}^{3/2} L_{\tau}}{\sqrt{n}}
        \label{eqn: conc_lem_unfmA_a1}
    \end{equation}
    with probability at least $1-O(n^{-(1+\tau)})$.

\noindent\textit{Analysis of $a_2$:} to apply Lemma \ref{lem: Rademacher}, we follow the steps below.
\begin{itemize}
    \item First, we consider the sub-Gaussian norm of $\Fnorm{\tA_n(x)} - \Fnorm{\tA_n(\tx)}$ as (\ref{eqn: lem_rad_lip}) in Lemma \ref{lem: Rademacher}. For any $x,\tx \in B_\mu(L)$, one has
    \begin{equation}
        \begin{aligned}
            \gnorm{\Fnorm{\tA_n(x)} - \Fnorm{\tA_n(\tx)}} &\leq \gnorm{\Fnorm{\tA_n(x) - \tA_n(\tx)}}
        \end{aligned}
        \label{eqn: conc_lem_unfmA_a2_triangle}
    \end{equation}
    Note that
    \begin{equation*}
        \begin{aligned}
            \tA_n(x) - \tA_n(\tx) = \frac{1}{n} \sum_{i=1}^{n} \left[ w(x,X_i)\left( T^{Q_i}_{Q^*(x)}-I_d \right) - w(\tx,X_i)\left( T^{Q_i}_{Q^*(\tx)}-I_d \right) \right]
        \end{aligned}
    \end{equation*}
    Since $\EE \tA_n(x)=0$ by (\ref{eqn: prf_conv_opt_Q*}), one has
    \begin{equation}
        \begin{aligned}
            \gnorm{\Fnorm{\tA_n(x) - \tA_n(\tx)}}
            &\overset{(i)}{\lesssim} \frac{1}{\sqrt{n}} \gnorm{\Fnorm{w(x,X)\left( T^{Q}_{Q^*(x)}-I_d \right) - w(\tx,X)\left( T^{Q}_{Q^*(\tx)}-I_d \right)}}\\
            &\leq \frac{1}{\sqrt{n}} \underbrace{ \gnorm{\Fnorm{w(x,X)\left( T^{Q}_{Q^*(x)}-I_d \right) - w(x,X)\left( T^{Q}_{Q^*(\tx)}-I_d \right)}} }_{b_1} \\
            &\quad + \frac{1}{\sqrt{n}} \underbrace{ \gnorm{\Fnorm{w(x,X)\left( T^{Q}_{Q^*(\tx)}-I_d \right) - w(\tx,X)\left( T^{Q}_{Q^*(\tx)}-I_d \right)}} }_{b_2}\\
        \end{aligned}
        \label{eqn: conc_lem_unfmA_a2_lip}
    \end{equation}
    Here (i) follows from (\ref{eqn: lem_rad_lip_avg}) in Lemma \ref{lem: Rademacher}.
    
    \begin{itemize}
        \item $b_1$: For any $x,\tx \in B_\mu(L)$, one has $Q^*(x),Q^*(\tx) \in \cS_d(M_L^{-1},M_L)$. Then one can obtain
        \begin{align}
            b_1 &= \gnorm{\Fnorm{w(x,X)\left( T^{Q}_{Q^*(x)}- T^{Q}_{Q^*(\tx)} \right)}}\nonumber\\
            &\lesssim \gnorm{\abs{1+ \norm{x-\mu}\cdot \norm{X-\mu}}\cdot \Fnorm{T^{Q}_{Q^*(x)}- T^{Q}_{Q^*(\tx)}}} \nonumber\\
            &\lesssim \gnorm{\abs{1+ \norm{x-\mu}\cdot \norm{X-\mu}}} \cdot \sup_{Q \in\cS(M_{\tL}^{-1},M_{\tL})} \Fnorm{T^{Q}_{Q^*(x)} - T^{Q}_{Q^*(\tx)}} \nonumber\\
            &\lesssim L \cdot\sup_{\genfrac{}{}{0pt}{2}{Q \in \cS_d(M_{\tL}^{-1},M_{\tL})}{x,\tx \in B_{\mu}(L)} } \Fnorm{T^{Q}_{Q^*(x)} - T^{Q}_{Q^*(\tx)}} \nonumber\\
            &\overset{(i)}{=} L \cdot \sup_{ \genfrac{}{}{0pt}{2}{Q \in \cS_d(M_{\tL}^{-1},M_{\tL})}{Q' \in \cS_d(M_{L}^{-1} , M_{L})} } \Fnorm{dT^{Q}_{Q'} \cdot (Q^*(x)- Q^*(\tx))} \nonumber\\
            &\overset{(ii)}{\leq} L \cdot \Fnorm{Q^*(x) -Q^*(\tx)}\cdot \sup_{ \genfrac{}{}{0pt}{2}{Q \in \cS_d(M_{\tL}^{-1},M_{\tL})}{S \in \cS_d(M_{L}^{-1} , M_{L})} } \opnorm{dT^{Q}_{S}}\nonumber\\
            &\overset{(iii)}{\lesssim} L \poly{M_L,M_{\tL}} \cdot \Fnorm{Q^*(x) -Q^*(\tx)}
            \label{eqn: conc_lem_unfmA_b1}
        \end{align}       
        Here (i) is a consequence of the mean value theorem \citep[Theorem 5.3]{dudley} for some $Q'$ that lies between $Q^*(x)$ and $Q^*(\tx)$, (ii) arises from Lemma \ref{lem: f_cal}, and (iii) follows from Lemma \ref{lem: diff}.
        \item $b_2$:
        \begin{align}
            b_2 &\leq \gnorm{\abs{(x-\tx)\Sigma^{-1}(X-\mu)}} \sup_{\genfrac{}{}{0pt}{2}{Q \in \cS_d(M_{\tL}^{-1},M_{\tL})}{S \in \cS_d(M_{L}^{-1} , M_{L})} } \Fnorm{T^Q_{S} -I_d} \nonumber\\
            &\overset{(i)}{\lesssim} \norm{x-\tx} \gnorm{\norm{X-\mu}} \poly{M_L, M_{\tL}} \nonumber\\
            &\lesssim L \poly{M_L, M_{\tL}} \norm{x-\tx}
            \label{eqn: conc_lem_unfmA_b2}
        \end{align}
        almost surely. Here (i) arises from Lemma \ref{lem: subG_norm} and the bounds on $T^Q_S$ in Lemma \ref{lem: diff}.
    \end{itemize}
    
    Combining (\ref{eqn: conc_lem_unfmA_a2_triangle}), (\ref{eqn: conc_lem_unfmA_a2_lip}), (\ref{eqn: conc_lem_unfmA_b1}) and (\ref{eqn: conc_lem_unfmA_b2}) gives that for any $x,\tx \in B_\mu(L)$,
    \begin{equation}
        \begin{aligned}
            &\quad \ \gnorm{\Fnorm{\tA_n(x)} - \Fnorm{\tA_n(\tx)}}\\ 
            &\lesssim \frac{1}{\sqrt{n}}\cdot L \poly{M_L, M_{\tL}} \left( \norm{x-\tx} + \left\|Q^*(x)-Q^*(\tx) \right\|_F \right)\\
            &\overset{(i)}{\lesssim }\frac{1}{\sqrt{n}}\cdot L \poly{M_L, M_{\tL}} \left( \norm{x-\tx} + M_L \gamma_F(L,M_L)\rbr{\norm{x-\tx}_2 \vee \norm{x-\tx}^{1/\alpha_F}} \right)\\
            &\overset{(ii)}{\leq } \frac{C}{\sqrt{n}}\cdot L^{1+C_F} M_L^{4 + C_F} \rbr{\norm{x-\tx}_2 \vee \norm{x-\tx}^{1/\alpha_F}}
        \end{aligned}
        \label{eqn: prf_conc_lem_slow_A_lip}
    \end{equation}
    Here (i) follows from Lemma \ref{lem: holder} and the fact that $M \gamma_F(L,M) \geq 1$.

    \item With the H\"older continuity (\ref{eqn: prf_conc_lem_slow_A_lip}) in place, we can apply Lemma \ref{lem: Rademacher} with $\tau (\epsilon)$ chosen as
    \begin{equation*}
        \begin{aligned}
            \tau(\epsilon) =  \frac{K}{\sqrt{n}}\left( \epsilon^{1/\alpha_F} \vee \epsilon\right), \quad  K=CL^{1+C_F} M_L^{4 + C_F}
        \end{aligned}
    \end{equation*} 
    which gives
    \begin{align*}
        a_2 &\lesssim \frac{KL^{1 \vee \alpha_F^{-1}}}{\sqrt{n}} \sqrt{\log^+ L}+ \sup_{x \in B_\mu(L)} \EE \Fnorm{\tA_n(x)}\\
        &\overset{(i)}{=}\underbrace{\frac{KL}{\sqrt{n}} \sqrt{\log^+ L}}_{\zeta_1} + \underbrace{\sup_{x \in B_\mu(L)} \EE \Fnorm{\tA_n(x)} }_{\zeta_2}
    \end{align*}
    Here (i) follows since $\alpha_F\geq 1$ as defined in Assumption \ref{assumption: minimizer_global}.
    \begin{itemize}
        \item $\zeta_1$: direct calculation gives that
        \begin{equation}
            \zeta_1 \lesssim \frac{L^{2+C_F} M_L^{4+C_F}}{\sqrt{n}} \sqrt{\log^+ L}
            \label{eqn: conc_lem_unfmA_g1}
        \end{equation}
        \item $\zeta_2$: for any $U \in \RR^{d \times d}$ with unit Frobenius norm, one has            
        \begin{equation*}
            \begin{aligned}
                \sup_{x \in B_{\mu}(L)} \gnorm{\Fnorm{\tA_n(x)}} 
                &\overset{(i)}{\lesssim} \sup_{x \in B_{\mu}(L)} \sup_{\Fnorm{U}=1}\gnorm{\inner{U}{\tA_n(x)}}\\
                &\overset{(ii)}{\lesssim} \sup_{x \in B_{\mu}(L)} \sup_{\Fnorm{U}=1} \gnorm{\inner{U}{\frac{1}{n}\sum_{i=1}^{n} \varphi(Z_i;x)- \EE \varphi(Z;x)}} \\
                &\lesssim \sup_{x \in B_{\mu}(L)} \sup_{\Fnorm{U}=1} \frac{1}{\sqrt{n}} \gnorm{ \inner{U}{\varphi(Z;x)}}\\
                &\leq  \sup_{x \in B_{\mu}(L)} \frac{1}{\sqrt{n}} \gnorm{ \Fnorm{\varphi(Z;x)} }\\
                &\leq \frac{1}{\sqrt{n}} \sup_{x \in B_\mu(L)}\gnorm{\abs{w(x,X)}} \cdot \sup_{\genfrac{}{}{0pt}{2}{Q \in \cS_d(M_{\tL}^{-1},M_{\tL})}{S \in \cS_d(M_{L}^{-1} , M_{L})} } \Fnorm{T^Q_S - I_d}\\
                &\lesssim \frac{1}{\sqrt{n}} L \cdot \poly{M_L, M_{\tL}}
            \end{aligned}
        \end{equation*}
        Here (i) results from Lemma \ref{lem: subG_norm}, (ii) follows from independence. 
        Therefore,
        \begin{equation}
            \begin{aligned}
                \zeta_2 \lesssim \frac{L \cdot \poly{M_L, M_{\tL}}}{\sqrt{n}} 
                \label{eqn: conc_lem_unfmA_g2}
            \end{aligned}
        \end{equation}
    \end{itemize}
    
    \item As a result of (\ref{eqn: conc_lem_unfmA_g1}) and (\ref{eqn: conc_lem_unfmA_g2}), one can obtain
        \begin{equation}
            a_2 \lesssim \frac{\poly{L, M_L, M_{\tL}}}{\sqrt{n}} \sqrt{\log^+ L}
            \label{eqn: conc_lem_unfmA_a2}
        \end{equation}
\end{itemize}
Finally, combining (\ref{eqn: prf_slow_A_decomp}), (\ref{eqn: conc_lem_unfmA_a1}) and (\ref{eqn: conc_lem_unfmA_a2}) gives (\ref{eqn: conc_lem_unfmA}).

\paragraph{Proof of (\ref{eqn: conc_lem_unfmPhi}):} one can obtain the following decomposition.
\begin{equation}
    \begin{aligned}
        \sup_{x \in B_\mu(L)} \opnorm{\tPhi_n(x) - \EE \tPhi_n(x)} &\leq \underbrace{\sup_{x \in B_\mu(L)} \opnorm{\tPhi_n(x) - \EE \tPhi_n(x)} - \EE \sup_{x \in B_\mu(L)} \opnorm{\tPhi_n(x) - \EE \tPhi_n(x)}}_{a_3}\\
        &\quad + \underbrace{\EE \sup_{x \in B_\mu(L)} \opnorm{\tPhi_n(x) - \EE \tPhi_n(x)}}_{a_4}
    \end{aligned}
    \label{eqn: prf_slow_Phi_decomp}
\end{equation}
Let $Z=(X,Q)$. Define $\phi(Z;x) = w(x,X)dT^Q_{Q^*(x)} $.

\noindent\textit{Analysis of $a_3$:} By (\ref{eqn: conc_tld_exp_wlog}), one has almost surely that for any $x \in B_{\mu}(L)$,
    \begin{equation*}
        \begin{aligned}
            \opnorm{\phi(Z;x) }
            &\lesssim L \tL \cdot \sup_{\genfrac{}{}{0pt}{2}{Q \in \cS_d(M_{\tL}^{-1},M_{\tL})}{S \in \cS_d(M_{L}^{-1} , M_{L})} } \opnorm{dT^Q_S}\\
            &\lesssim L \tL  \poly{M_L, M_{\tL}}
        \end{aligned}
    \end{equation*}
    By Lemma \ref{lem: bdd_diff}, one then can obtain
    \begin{equation}
        \gnorm{a_3 } \lesssim \frac{L \tL  \poly{M_L, M_{\tL}}}{\sqrt{n}}
        \label{eqn: conc_lem_unfmPhi_a3}
    \end{equation}

\noindent\textit{Analysis of $a_4$:} to apply Lemma \ref{lem: Rademacher}, we follow the steps below.
\begin{itemize}
    \item First, let us consider the sub-Gaussian norm of $\opnorm{\tPhi_n(x) - \EE \tPhi_n(x)} -  \opnorm{\tPhi_n(\tx) - \EE \tPhi_n(\tx)}$. For any $x,\tx \in B_\mu(L)$, 
    by (\ref{eqn: lem_rad_lip_avg}) in Lemma \ref{lem: Rademacher}, one can obtain
    \begin{equation}
        \begin{aligned}
            \gnorm{ \opnorm{\tPhi_n(x) - \EE \tPhi_n(x)} -  \opnorm{\tPhi_n(\tx) - \EE \tPhi_n(\tx)}}
            \lesssim \frac{1}{\sqrt{n}} \left\| \opnorm{\phi(Z;x) - \phi(Z;\tx) }\right\|_{\psi_2}
        \end{aligned}
        \label{eqn: conc_lem_unfmA_a4_triangle}
    \end{equation}

    Moreover,
    \begin{equation}
        \begin{aligned}
            &\quad \ \gnorm{\opnorm{\phi(Z;x) -\phi(Z;\tx)}}\\
            &= \gnorm{\opnorm{w(x,X)dT^Q_{Q^*(x)}-w(\tx,X)dT^Q_{Q^*(\tx)} }}\\
            &\leq \underbrace{\gnorm{\opnorm{w(x,X)dT^Q_{Q^*(x)}-w(x,X)dT^Q_{Q^*(\tx)} }}}_{b_3} \\
            &+\underbrace{\gnorm{ \opnorm{w(x,X)dT^Q_{Q^*(\tx)}-w(\tx,X)dT^Q_{Q^*(\tx)} }}}_{b_4}
        \end{aligned}
        \label{eqn: conc_lem_unfmA_a4_lip}
    \end{equation}    
    
    \begin{itemize}
        \item $b_3$:  we have for any $x,\tx \in B_\mu(L)$,
        \begin{align}
            b_3 & \overset{(i)}{=} \gnorm{\opnorm{w(x,X) d^2 T^Q_{Q'} \cdot (Q^*(x) - Q^*(\tx))}} \nonumber\\
            &\lesssim L \gnorm{X-\mu} \cdot   \Fnorm{Q^*(x)- Q^*(\tx)} \cdot \sup_{ \genfrac{}{}{0pt}{2}{Q \in \cS_d(M_{\tL}^{-1},M_{\tL})}{S \in \cS_d(M_{L}^{-1} , M_{L})} }\opnorm{ d^2T^Q_{S} } \nonumber\\
            &\overset{(ii)}{\lesssim} L \cdot \poly{M_L, M_{\tL}} \Fnorm{Q^*(x) -Q^*(\tx)}
            \label{eqn: conc_lem_unfmA_b3}
        \end{align}
        Here (i) is a consequence of the mean value theorem \citep[Theorem 5.3]{dudley} for some $Q'$ that lies between $Q^*(x)$ and $Q^*(\tx)$ and (ii) follows from Lemma \ref{lem: diff}.
        
        \item $b_4$: for any $x,\tx \in B_\mu(L)$,
        \begin{align}
            b_4 &\leq \left\|\left|w(x,X) - w(\tx,X) \right| \right\|_{\psi_2} \sup_{\genfrac{}{}{0pt}{2}{Q \in \cS_d(M_{\tL}^{-1},M_{\tL})}{S \in \cS_d(M_{L}^{-1} , M_{L})} } \opnorm{dT^Q_S} \nonumber\\
            &\lesssim \poly{M_L,M_{\tL}} \left\|x-\tx \right\|_2
            \label{eqn: conc_lem_unfmA_b4}
        \end{align}
    \end{itemize}
    Combining (\ref{eqn: conc_lem_unfmA_a4_triangle}), (\ref{eqn: conc_lem_unfmA_a4_lip}), (\ref{eqn: conc_lem_unfmA_b3}) and (\ref{eqn: conc_lem_unfmA_b4}), one can obtain that
    \begin{equation*}
        \begin{aligned}
            &\quad \gnorm{\opnorm{\tPhi_n(x) - \EE \tPhi_n(x)} -  \opnorm{\tPhi_n(\tx) - \EE \tPhi_n(\tx)}} \\
            &\lesssim \frac{1}{\sqrt{n}}\cdot L \poly{M_L, M_{\tL}} \rbr{\norm{x-\tx} + \Fnorm{Q^*(x) - Q^*(\tx)}}\\
            &\lesssim \frac{1}{\sqrt{n}}\cdot L\poly{M_L,M_{\tL}} \left( \norm{x-\tx}\vee \norm{x-\tx}^{1/\alpha_F} \right)
        \end{aligned}
    \end{equation*}

    \item With the H\"older continuity above, we can apply Lemma \ref{lem: Rademacher} with $\tau (\epsilon)$ chosen as
    \begin{equation*}
        \begin{aligned}
            \tau(\epsilon) =  \frac{K}{\sqrt{n}}\left( \epsilon^{1/\alpha_F} \vee \epsilon\right), \quad  K= \poly{L, M_L}
        \end{aligned}
    \end{equation*}
    which gives
    \begin{equation*}
        \begin{aligned}
            a_4 \lesssim \underbrace{\frac{KL}{\sqrt{n}} \sqrt{\log^+ L}}_{\zeta_3} + \underbrace{\sup_{x \in B_\mu(L)} \EE \opnorm{\tPhi_n(x)}}_{\zeta_4}
        \end{aligned}
    \end{equation*}
    \begin{itemize}
        \item $\zeta_3$: direct calculation gives
        \begin{equation}
            \zeta_3 \lesssim  \frac{\poly{L, M_L}}{\sqrt{n}}
            \label{eqn: conc_lem_unfmPhi_g3}
        \end{equation}
        \item $\zeta_4$: for any $U,V \in \RR^{d \times d}$ with unit Frobenius norm, one has
        \begin{equation*}
            \begin{aligned}
                \sup_{x \in B_\mu(L)} \gnorm{\opnorm{\tPhi_n(x)}} 
                &\overset{(i)}{\lesssim} \sup_{x \in B_\mu(L)} \sup_{\Fnorm{U}=\Fnorm{V}=1}\gnorm{\inner{V}{\tPhi_n(x) \cdot U}}\\
                &\overset{(ii)}{\lesssim} \sup_{x \in B_\mu(L)} \sup_{\Fnorm{U}=\Fnorm{V}=1} \frac{1}{\sqrt{n}} \gnorm{\inner{V}{\sbr{\phi(Z;x)-E \phi(Z;x)} \cdot U}} \\
                &\lesssim \sup_{x \in B_\mu(L)} \sup_{\Fnorm{U}=\Fnorm{V}=1} \frac{1}{\sqrt{n}} \gnorm{ \inner{V}{\phi(Z;x)\cdot U}}\\
                &\leq  \sup_{x \in B_\mu(L)} \frac{1}{\sqrt{n}} \gnorm{ \opnorm{\phi(Z;x)} }\\
                &\leq \frac{1}{\sqrt{n}} \sup_{x \in B_\mu(L)}\gnorm{\abs{w(x,X)}} \cdot \sup_{\genfrac{}{}{0pt}{2}{Q \in \cS_d(M_{\tL}^{-1},M_{\tL})}{S \in \cS_d(M_{L}^{-1} , M_{L})} } \opnorm{dT^Q_S}\\
                &\overset{(iii)}{\lesssim} \frac{1}{\sqrt{n}} L \poly{M_L,M_{\tL}}
            \end{aligned}
        \end{equation*}
        Here (i) results from Lemma \ref{lem: subG_norm}, (ii) follows from independence and (iii) is due to Lemma \ref{lem: diff}. Hence
        \begin{equation}
            \zeta_4 \lesssim \frac{L \poly{M_L, M_{\tL}}}{\sqrt{n}}
            \label{eqn: conc_lem_unfmPhi_g4}
        \end{equation}
    \end{itemize}
    \item As a result of (\ref{eqn: conc_lem_unfmPhi_g3}) and (\ref{eqn: conc_lem_unfmPhi_g4}), one can obtain
    \begin{equation}
        a_4 \lesssim \frac{\poly{L, M_L, M_{\tL}}}{\sqrt{n}}
        \label{eqn: conc_lem_unfmPhi_a4}
    \end{equation}
\end{itemize}

Finally, combining (\ref{eqn: prf_slow_Phi_decomp}), (\ref{eqn: conc_lem_unfmPhi_a3}) and (\ref{eqn: conc_lem_unfmPhi_a4}) gives (\ref{eqn: conc_lem_unfmPhi}).

\paragraph{Proof of (\ref{eqn: conc_lem_unfmPsi}):}
Denote $\Theta =  B_\mu(L) \times \cS_d((C_{\mathrm{slow}} M_{\tL})^{-1},C_{\mathrm{slow}} M_{\tL})$ and $\theta = (x,S) \in \Theta$.

\begin{equation}
    \begin{aligned}
        \sup_{\theta \in \Theta} \opnorm{\tPsi_n(\theta) - \EE \tPsi_n(\theta)} &\leq \underbrace{\sup_{\theta \in \Theta} \opnorm{\tPsi_n(\theta) - \EE \tPsi_n(\theta)} - \EE \sup_{\theta \in \Theta} \opnorm{\tPsi_n(\theta) - \EE \tPsi_n(\theta)}}_{a_5}\\
        &\quad + \underbrace{\EE \sup_{\theta \in \Theta} \opnorm{\tPsi_n(\theta) - \EE \tPsi_n(\theta)}}_{a_6}
    \end{aligned}
    \label{eqn: prf_slow_Psi_decomp}
\end{equation}
Let $Z=(X,Q)$. Define $\psi(Z;\theta) = w(x,X)d^2T^Q_{S} $ and $\bar{\psi}(Z;\theta) = \psi(Z;\theta) - \EE \psi(Z;\theta)$

\noindent\textit{Analysis of $a_5$:} By (\ref{eqn: conc_tld_exp_wlog}), one has almost surely that 
    \begin{equation*}
        \begin{aligned}
            \sup_{Z,\theta}\opnorm{\psi(Z;\theta) }
            &\leq \sup_{ \genfrac{}{}{0pt}{2}{x \in B_{\mu}(L)}{X \in B_{\mu}(\tL)} } \left| w(x,X) \right| \cdot \sup_{ \genfrac{}{}{0pt}{2}{Q \in \cS_d(M_{\tL}^{-1},M_{\tL})}{S \in \cS_d((C_{\mathrm{slow}}M_{L})^{-1},C_{\mathrm{slow}}M_{L})}} \opnorm{d^2 T^Q_S}\\
            &\lesssim L \tL\cdot \poly{M_{L}, M_{\tL}}
        \end{aligned}
    \end{equation*}
    Lemma \ref{lem: bdd_diff} then implies that , $\gnorm{a_5} \lesssim \frac{L \tL\cdot \poly{M_{\tL} M_L}}{\sqrt{n}}$. Therefore, one can obtain
    \begin{equation}
        a_5 \lesssim L \tL \cdot \poly{M_L M_{\tL}}\frac{L_\tau}{\sqrt{n}}
        \label{eqn: conc_lem_unfmPsi_a5}
    \end{equation}
    with probability at least $1-O(n^{-(1+\tau)})$.

\noindent\textit{Analysis of $a_6$:} to apply Lemma \ref{lem: Rademacher}, we follow the steps below.
\begin{itemize}
    \item First, let us consider the sub-Gaussian norm of $\opnorm{\tPsi_n(\theta) - \EE \tPsi_n(\theta)} -  \opnorm{\tPsi_n(\ttheta) - \EE \tPsi_n(\ttheta)}$. For any $\theta, \ttheta \in \Theta$, one has
    \begin{equation}
        \begin{aligned}
            \gnorm{ \opnorm{\tPsi_n(\theta) - \EE \tPsi_n(\theta)} -  \opnorm{\tPsi_n(\ttheta) - \EE \tPsi_n(\ttheta)}}
            \overset{(i)}{\lesssim} \frac{1}{\sqrt{n}} \left\| \opnorm{\psi(Z;\theta) - \psi(Z;\ttheta) }\right\|_{\psi_2}
        \end{aligned}
        \label{eqn: conc_lem_unfmA_a6_triangle}
    \end{equation}
    Here (i) follows from Lemma \ref{lem: Rademacher}. Moreover, one has
    \begin{equation}
        \begin{aligned}
            &\quad \gnorm{\opnorm{\psi(Z;\theta) -\psi(Z;\ttheta)}}\\
            &= \gnorm{\opnorm{w(x,X)d^2 T^Q_{S}-w(\tx,X)d^2 T^Q_{\tS} }}\\
            &\leq \underbrace{\gnorm{\opnorm{w(x,X)d^2T^Q_{S}-w(x,X)d^2T^Q_{\tS} }}}_{b_5} +\underbrace{\gnorm{ \opnorm{w(x,X)d^2 T^Q_{\tS}-w(\tx,X)d^2T^Q_{\tS} }}}_{b_6}
        \end{aligned}
        \label{eqn: conc_lem_unfmA_a6_lip}
    \end{equation}
    \begin{itemize}
        \item $b_5$: for any $\theta, \ttheta \in \Theta$, one can obtain
            \begin{align}
                b_5 &\overset{(i)}{=} \gnorm{\opnorm{w(x,X) d^3 T^Q_{Q'} \cdot (S - \tS)}} \nonumber\\
                &\leq \gnorm{ \abs{w(x,X)}} \cdot \Fnorm{S- \tS} \cdot \sup_{ \genfrac{}{}{0pt}{2}{Q \in \cS_d(M_{\tL}^{-1},M_{\tL})}{S \in \cS_d((C_{\mathrm{slow}}M_{L})^{-1},C_{\mathrm{slow}}M_{L})} }\opnorm{ d^3T^Q_{S} } \nonumber\\
                &\overset{(ii)}{\lesssim} L \cdot \poly{M_L,M_{\tL}} \Fnorm{S- \tS}
                \label{eqn: conc_lem_unfmA_b5}
            \end{align}
        Here (i) is a consequence of the mean value theorem \citep[Theorem 5.3]{dudley} for some $Q'$ that lies between $Q^*(x)$ and $Q^*(\tx)$ and (ii) follows from Lemma \ref{lem: diff}.
        
        \item $b_6$: for any $\theta, \ttheta \in \Theta$, one has
        \begin{align}
            b_6 &\leq \gnorm{\abs{w(x,X) - w(\tx,X)}} \sup_{ \genfrac{}{}{0pt}{2}{Q\in \cS_d (M_{\tL}^{-1},M_{\tL})}{S\in \cS_d ((C_{\mathrm{slow}}M_{L})^{-1},C_{\mathrm{slow}}M_{L})}   } \opnorm{d^2 T^Q_S} \nonumber\\
            &\lesssim \poly{M_L, M_{\tL}} \norm{x-\tx}
            \label{eqn: conc_lem_unfmA_b6}
        \end{align}
    \end{itemize}
    Combining (\ref{eqn: conc_lem_unfmA_a4_triangle}), (\ref{eqn: conc_lem_unfmA_a6_lip}), (\ref{eqn: conc_lem_unfmA_b5}) and  (\ref{eqn: conc_lem_unfmA_b6}), one can obtain
    \begin{equation*}
        \begin{aligned}
            \gnorm{ \opnorm{\tPsi_n(\theta) - \EE \tPsi_n(\theta)} -  \opnorm{\tPsi_n(\ttheta) - \EE \tPsi_n(\ttheta)}}
            &\lesssim \frac{L\poly{M_L, M_{\tL}}}{\sqrt{n}}\cdot  d(\theta,\ttheta)
        \end{aligned}
    \end{equation*}

    \item With the Lipschitz continuity above, we can apply Lemma \ref{lem: Rademacher} with $\tau (\epsilon)$ chosen as
    \begin{equation*}
        \begin{aligned}
            \tau(\epsilon) =  \frac{K}{\sqrt{n}}\epsilon, \quad  K=C \cdot L \poly{M_L, M_{\tL}}
        \end{aligned}
    \end{equation*}
    which gives
    \begin{equation*}
        \begin{aligned}
            a_6 \lesssim \underbrace{\frac{K M_{\tL}}{\sqrt{n}} \sqrt{\log^+ M_{\tL}}}_{\zeta_5} + \underbrace{\sup_{\theta \in \Theta} \EE \opnorm{\tPsi_n(\theta)}}_{\zeta_6}
        \end{aligned}
    \end{equation*}
    \begin{itemize}
        \item $\zeta_5$: direct calculation gives
        \begin{equation}
            \zeta_5 \lesssim \frac{L\poly{ M_L M_{\tL}}}{\sqrt{n}} 
            \label{eqn: conc_lem_unfmPsi_g5}
        \end{equation}
        \item $\zeta_6$: for any $U,V \in \RR^{d \times d}$ with unit Frobenius norm, one has
        \begin{equation*}
            \begin{aligned}
                \gnorm{\inner{V}{\tPsi_n(\theta) \cdot U}} &\lesssim \frac{1}{\sqrt{n}} \gnorm{\inner{V}{\sbr{\psi(Z;\theta) - \EE \psi(Z;\theta)}\cdot U}} \\
                &\lesssim \frac{1}{\sqrt{n}} \gnorm{ \inner{V}{\psi(Z;\theta)\cdot U}}\\
                &\leq  \frac{1}{\sqrt{n}} \gnorm{ \opnorm{\psi(Z;\theta)} }\\
                &\leq \frac{1}{\sqrt{n}} \gnorm{\abs{w(x,X)}} \cdot \sup_{ \genfrac{}{}{0pt}{2}{Q\in \cS_d (M_{\tL}^{-1},M_{\tL})}{S\in \cS_d ((C_{\mathrm{slow}}M_{L})^{-1}, C_{\mathrm{slow}}M_{L})}  } \opnorm{d^2 T^Q_S}\\
                &\lesssim \frac{1}{\sqrt{n}} L \cdot \poly{M_L M_{\tL}}
            \end{aligned}
        \end{equation*}
        which combined with  Lemma \ref{lem: subG_norm} implies
        \begin{equation}
            \zeta_6 \lesssim \frac{L \poly{M_L M_{\tL}}}{\sqrt{n}}
            \label{eqn: conc_lem_unfmPsi_g6}
        \end{equation}
    \end{itemize}
    \item As a result of (\ref{eqn: conc_lem_unfmPsi_g5}) and (\ref{eqn: conc_lem_unfmPsi_g6}), one can obtain
    \begin{equation}
        a_6 \lesssim \frac{ L\poly{M_L M_{\tL}}}{\sqrt{n}}
        \label{eqn: conc_lem_unfmPsi_a6}
    \end{equation}
\end{itemize}
Finally, combining (\ref{eqn: prf_slow_Psi_decomp}), (\ref{eqn: conc_lem_unfmPsi_a5}) and (\ref{eqn: conc_lem_unfmPsi_a6}) gives (\ref{eqn: conc_lem_unfmPsi}).

\paragraph{Proof of (\ref{eqn: conc_lem_E_Psi}):} by definition, one has
\begin{align*}
    \EE \tPsi_n(x;S)= \EE_{(X,Q)} \sbr{w(x,X)d^2 T^Q_S}
\end{align*}
Therefore, by the truncation assumption (\ref{eqn: conc_tld_exp_wlog}), one can obtain

\begin{align*}
    &\sup_{ \genfrac{}{}{0pt}{2}{x \in B_\mu(L)}{S \in \cS_d \left( (C_{\mathrm{slow}}M_{\tL})^{-1},C_{\mathrm{slow}}M_{\tL} \right)} } \opnorm{\EE \tPsi_n(x;S)}\\
    \leq &\sbr{\sup_{x \in B_{\mu}(L)} \EE \abs{w(x,X)}} \cdot \sbr{\sup_{ \genfrac{}{}{0pt}{2}{Q \in \cS_d(M_{\tL}^{-1},M_{\tL})}{S \in \cS_d \left( (C_{\mathrm{slow}}M_{L})^{-1},C_{\mathrm{slow}}M_{L} \right)}  } \opnorm{d^2 T^Q_S}}\\
    \overset{(i)}{\lesssim} & L \cdot \poly{M_L M_{\tL}}
\end{align*}
Here (i) follows from Lemma \ref{lem: diff}. 

Finally, note that by definition, $\tL \asymp \sqrt{\log n}$ and $M_{\tL} = \poly{\tL}$, one can obtain (\ref{eqn: conc_lem_E_Psi}).

\subsection{Proof of Lemma \ref{lem: conc_s_sin}}
\label{subsec: prf_conc_lem_s_sin}

As argued at the beginning of Appendix \ref{subsec: conc_tld_exp}, we can assume without loss of generality that $\norm{X-\mu}\leq \tL$ almost surely with $\tL$ defined in (\ref{eqn: prf_conv_E0}). As a result, we have almost surely that
\begin{equation}
    \begin{aligned}
        \norm{X-\mu} &\leq \tL\\
        Q, Q^*(X) &\in \cS_d(M_{\tL}^{-1},M_{\tL})\\
        \mathbbm{1}(E_0)&=1
    \end{aligned}
    \label{eqn: prf_s_sn_trunc}
\end{equation}
where $E_0 := \cbr{\norm{X_i - \mu}\leq L, Q_i \in \cS_d(M_{\tL}^{-1}, M_{\tL}), i \in [n]}$.

With boundedness condition (\ref{eqn: prf_s_sn_trunc}) in place, Lemma \ref{lem: diff} implies  the following upper bounds
\begin{equation*}
    \begin{aligned}
        \sup_{ \genfrac{}{}{0pt}{2}{Q \in \cS_d(M_{\tL}^{-1},M_{\tL})}{S \in \cS_d \left( (C_{\mathrm{slow}}M_{L})^{-1},C_{\mathrm{slow}}M_{L} \right)} } \Fnorm{T^Q_S - I_d} &\leq \poly{M_{L}, M_{\tL}} \\
        \sup_{ \genfrac{}{}{0pt}{2}{Q \in \cS_d(M_{\tL}^{-1},M_{\tL})}{S \in \cS_d \left( (C_{\mathrm{slow}}M_{L})^{-1},C_{\mathrm{slow}}M_{L} \right)}} \opnorm{dT^Q_S} &\leq \poly{M_L, M_{\tL}}\\
        \sup_{ \genfrac{}{}{0pt}{2}{Q \in \cS_d(M_{\tL}^{-1},M_{\tL})}{S \in \cS_d \left( (C_{\mathrm{slow}}M_{L})^{-1},C_{\mathrm{slow}}M_{L} \right)}} \opnorm{d^2 T^Q_S} &\leq \poly{M_L, M_{\tL}}
    \end{aligned}
\end{equation*}
almost surely. Applying Lemma \ref{lem: sf_snf} then gives the desired results.

\subsection{Proof of Lemma \ref{lem: conv_smmry}}
\label{subsec: prf_conv_lem_smmry}

\paragraph{Proof of (\ref{eqn: lem_smmry_An}):} apply the triangle inequality to see that
\begin{equation*}
    \begin{aligned}
        \sup_{x \in B_\mu(L)} \Fnorm{\hA_{\rho}(x)} &\leq \sup_{x \in B_\mu(L)} \Fnorm{\hA_{\rho}(x) - \tA_n(x)} + \sup_{x \in B_\mu(L)} \Fnorm{\tA_n(x)}\\
        &\overset{(i)}{\leq} \frac{\polylog{n}}{\sqrt{n}}
    \end{aligned}
\end{equation*}
Here (i) follows from Lemma \ref{lem: conc_tld_exp} and Lemma \ref{lem: conc_s_sin}

\paragraph{Proof of (\ref{eqn: lem_smmry_Psi}):} apply the triangle inequality to see that
\begin{equation*}
    \begin{aligned}
        &\sup_{ \genfrac{}{}{0pt}{2}{x \in B_\mu(L)}{S \in \cS_d \left( (2M_L)^{-1},2M_L \right)} } \opnorm{\hPsi_{\rho}(x,S)} \\
        &\leq \sup_{\genfrac{}{}{0pt}{2}{x \in B_\mu(L)}{S \in \cS_d \left( (2M_L)^{-1},2M_L \right)}} \opnorm{\hPsi_{\rho}(x,S) - \tPsi_n(x,S)} 
        + \sup_{\genfrac{}{}{0pt}{2}{x \in B_\mu(L)}{S \in \cS_d \left( (2M_L)^{-1},2M_L \right)} } \opnorm{\tPsi_n(x,S)- \EE \tPsi_n(x,S)}\\
        &\qquad + \sup_{\genfrac{}{}{0pt}{2}{x \in B_\mu(L)}{S \in \cS_d \left( (2M_L)^{-1},2M_L \right)} } \opnorm{\EE \tPsi_n(x,S)}\\
        &\overset{(i)}{\leq} \frac{\polylog{n}}{\sqrt{n}}+\frac{\polylog{n}}{\sqrt{n}} + \polylog{n}\\
        &\leq \polylog{n}
    \end{aligned}
\end{equation*}
Here (i) follows from Lemma \ref{lem: conc_tld_exp} and Lemma \ref{lem: conc_s_sin}.

\paragraph{Proof of (\ref{eqn: lem_smmry_Phi}), (\ref{eqn: lem_smmry_Phi_inv}):} By the remark after Assumption \ref{assumption: minimizer_local} and the eigenvalue stability inequality, one can obtain
\begin{equation*}
    \begin{aligned}
        &\quad \ \inf_{x \in B_{\mu(L)}}\lambda_{\min} \left( -\hPhi_{\rho}(x) \right)\\
        &\geq \inf_{x \in B_{\mu(L)}} \lambda_{\min} \left( -\EE \tPhi_n(x) \right) - \sup_{x \in B_{\mu(L)}}\opnorm{\tPhi_n(x) - \EE \tPhi_n(x)} - \sup_{x \in B_{\mu(L)}}\opnorm{\tPhi_n(x) - \hPhi_{\rho}(x)}\\
        &\overset{(i)}{\geq} \frac{1}{\polylog{n}} - \frac{\polylog{n}}{\sqrt{n}} - \frac{\polylog{n}}{\sqrt{n}}\\
        &\geq \frac{1}{\polylog{n}}
    \end{aligned}
\end{equation*}
Here (i) follows by noticing that $\EE (-\tPhi_n(x))= \EE \left( -w(x,X)dT^Q_{Q^*(x)} \right)$, Assumption \ref{assumption: minimizer_local} and applying Lemma \ref{lem: conc_tld_exp}, Lemma \ref{lem: conc_s_sin}. 

By the remark after Assumption \ref{assumption: minimizer_local} and (\ref{eqn: lem_smmry_Phi}), one then has
\begin{equation*}
    \opnorm{-\hPhi_{\rho}^{-1}(x)}  = \lambda_{\max} \left( -\hPhi_{\rho}^{-1}(x) \right) \leq \frac{1}{\lambda_{\min} \left( -\hPhi_{\rho}(x) \right)}
\end{equation*}
Then (\ref{eqn: lem_smmry_Phi_inv}) follows from (\ref{eqn: lem_smmry_Phi}).

\subsection{Proof of Lemma \ref{lem: fast}}
\label{subsec: prf_fast}

Recall definitions of $Q^*(x)$ and $\hQ_{\rho}(x)$ that
\begin{align*}
    Q^*(x) &= \argmin_{S \in \cS_d^{++}} \EE \sbr{w(x,X)W^2(S,Q)}\\
    \hQ_{\rho}(x) &= \argmin_{S \in \cS_d^{++}} \frac{1}{n} \sum_{i=1}^n w_{n,\rho}(x,X_i)W^2(S,Q_i)
\end{align*}
First, the differential properties of $W^2$ (Lemma \ref{lem: diff}) imply the following optimality conditions for $Q^*(x)$ and $\hQ_{\rho}(x)$.
\begin{align}
    \EE w(x,X) \rbr{T_{Q^*(x)}^Q - I_d} &= 0 \label{eqn: prf_conv_opt_Q*} \\
    \frac{1}{n} \sum_{i=1}^n w_{n,\rho}(x,X_i) \rbr{T_{\hQ_{\rho}(x)}^{Q_i} - I_d} &= 0
    \label{eqn: prf_conv_opt_Qn}
\end{align}
Next, one can apply Lemma \ref{lem: diff} again to get the 2nd order Taylor expansion of (\ref{eqn: prf_conv_opt_Qn}) around $Q^*(x)$ as follows.
\begin{align*}
    0 &= \frac{1}{n} \sum_{i=1}^n w_{n,\rho}(x,X_i) \rbr{T_{Q^*(x)}^{Q_i} - I_d} + \frac{1}{n} \sum_{i=1}^n w_{n,\rho}(x,X_i) dT_{Q^*(x)}^{Q_i} \cdot \rbr{\hQ_{\rho}(x) - Q^*(x)}\\
    \qquad &+ \frac{1}{2n} \sum_{i=1}^n w_{n,\rho}(x,X_i) d^2 T_{\tQ_n(x)}^{Q_i} \cdot \rbr{\hQ_{\rho}(x) - Q^*(x)}^{\otimes 2}
\end{align*}
where $\tQ_n(x)$ lies between $Q^*(x)$ and $\hQ_{\rho}(x)$. Rearranging then gives
\begin{equation}
    \begin{aligned}
        &\hQ_{\rho}(x)-Q^*(x) = \left( -\underbrace{\frac{1}{n} \sum_{i=1}^{n} w_{n,\rho}(x,X_i)  dT^{Q_i}_{Q^*(x)} }_{=\hPhi_{\rho}(x)}\right)^{-1} 
        \cdot \underbrace{\frac{1}{n} \sum_{i=1}^{n} w_{n,\rho}(x,X_i)\left( T^{Q_i}_{Q^*(x)} - I_d \right)}_{=\hA_{\rho}(x)}  \\
        &+ \left( -\frac{1}{n} \sum_{i=1}^{n} w_{n,\rho}(x,X_i)  dT^{Q_i}_{Q^*(x)}\right)^{-1} 
        \cdot \frac{1}{2}\cdot \underbrace{\frac{1}{n} \sum_{i=1}^{n} w_{n,\rho}(x,X_i)  d^2 T^{Q_i}_{\tQ_n(x)} }_{=\hPsi_{\rho}\rbr{x; \tQ_{n,\rho}(x)}} \cdot (\hQ_{\rho}(x)-Q^*(x))^{\otimes 2} 
    \end{aligned}
    \label{eqn: prf_conv_Qn_Q}
\end{equation}
Taking Frobenius norm on both sides of (\ref{eqn: prf_conv_Qn_Q}) gives the following quadratic inequality for $\Fnorm{\hQ_{\rho}(x)-Q^*(x)}$.
\begin{align}
    &\Fnorm{\hQ_{\rho}(x)-Q^*(x)} \nonumber\\
    \leq &\opnorm{-\hPhi_{\rho}(x)^{-1}} \cdot \Fnorm{\hA_{\rho}(x)} + \frac{1}{2}\opnorm{-\hPhi_{\rho}(x)^{-1}} \cdot \opnorm{\hPsi_{\rho}(x,\tQ(x))}\cdot \Fnorm{\hQ_{\rho}(x)-Q^*(x)}^2
    \label{eqn: prf_conv_Qn_Q_x}
\end{align}

With Lemma \ref{lem: conv_smmry} in place, one can then derive from (\ref{eqn: prf_conv_Qn_Q_x}) that the following quadratic inequality holds uniformly for any $x \in B_{\mu}(L)$ under $\tE_2$.


\begin{equation*}
     \Fnorm{\hQ_{\rho}(x)-Q^*(x)} \leq a_0 + a_2 \Fnorm{\hQ_{\rho}(x)-Q^*(x)}^2
\end{equation*}
Here $a_0,a_2>0$ are uniform for any $x \in B_\mu(L)$ and satisfies
\begin{equation*}
    \begin{aligned}
        a_0 &= \tC_{\tau,1}' \frac{\polylog{ n}}{\sqrt{n}}\\
        a_2 &= \tC_{\tau,2}' \polylog{n} \polylog{n}
    \end{aligned}
\end{equation*}
for constants $\tC_{\tau,1}', \tC_{\tau,2}'>0$ independent of $n$.

Therefore, taking the supremum over $x$ gives
\begin{equation*}
    \sup_{x \in B_\mu(L)}\Fnorm{\hQ_{\rho}(x)-Q^*(x)} \leq a_0 + a_2 \sup_{x \in B_\mu(L)} \Fnorm{\hQ_{\rho}(x)-Q^*(x)}^2
\end{equation*}
Solving the above quadratic inequality for $\sup_{x \in B_\mu(L)}\Fnorm{\hQ_{\rho}(x)-Q^*(x)}$ then gives
\begin{equation*}
    \sup_{x \in B_\mu(L)}\Fnorm{\hQ_{\rho}(x)-Q^*(x)} \in \sbr{0,\frac{1 - \sqrt{1-4a_0a_2}}{2a_2}} \cup \left[\frac{1 + \sqrt{1-4a_0a_2}}{2a_2}, +\infty \right)
\end{equation*}
Since $\tE_2 \subset \tE_1$ and the slow rate of convergence (Lemma \ref{lem: prf_conv_lem_slow}) implies that only the smaller branch should be retained. Therefore, one has
\begin{equation*}
    \begin{aligned}
        \sup_{x \in B_\mu(L)}\Fnorm{\hQ_{\rho}(x)-Q^*(x)} &\leq \frac{1 - \sqrt{1-4a_0a_2}}{2a_2}\\
        &=\frac{4a_0a_2}{2a_2 (1 + \sqrt{1-4a_0a_2})}\\
        &\leq C_{\mathrm{fast}, \tau}\frac{\cdot \polylog{n}}{\sqrt{n}}
    \end{aligned}
\end{equation*}
for some constant $C_{\mathrm{fast}, \tau}$ independent of $n$.

The proof of Lemma \ref{lem: fast} is then complete.

\subsection{Proof of Lemma \ref{lem: F_moment}}
\label{subsec: prf_F_moment}

Fix $\tau =10$. Denote
\begin{align*}
    \Delta_{n, \rho} &= \sup_{x \in B_\mu(L)} \Fnorm{\hQ_{n, \rho}(x)-Q^*(x)}\\
    \tE_{\rho} &= \cbr{\sup_{x \in B_{\mu}(L)} \Onorm{\hQ_{n, \rho}(x)} \leq n^2 \tM_L}, 
\end{align*}
where $\tM_L = \rbr{C_B \vee c_b}^2 L^2  L_{\tau}^{4+\rbr{4 \vee 2C_1}} \geq t_0^2$ is defined in Lemma \ref{lem: bdd}.
By definition, one has $\tE_2 \subset \tE_0 \subset \tE_{\rho}$. Hence one can obtain the following decomposition
\begin{align*}
    \EE \Delta_{n, \rho} &= \EE \Delta_{n, \rho} \1 \cbr{\tE_2} + \EE \Delta_{n, \rho} \1 \cbr{\tE_2^c \cap \tE_\rho} + \EE \Delta_{n, \rho} \1 \cbr{\tE_\rho^c}\\
    &\leq C_{\mathrm{fast}, \tau} \frac{\polylog{n}}{\sqrt{n}} + + \EE \Delta_{n, \rho} \1 \cbr{\tE_2^c \cap \tE_\rho} + \EE \Delta_{n, \rho} \1 \cbr{\tE_\rho^c}
\end{align*}
\begin{itemize}
    \item Under event $\tE_2^c \cap \tE_\rho$, one has
    \begin{align*}
        \Delta_{n, \rho} &\lesssim  \sup_{x \in B_\mu(L)} \Onorm{\hQ_{\rho}(x)-Q^*(x)}\\
        &\leq \sup_{x \in B_\mu(L)} \sbr{\Onorm{\hQ_{\rho}(x)} \vee \Onorm{Q^*(x)}}\\
        &\overset{(i)}{\lesssim} n^2 \tM_L
    \end{align*}
    Here (i) follows from Lemma \ref{lem: Q*x}, \ref{lem: bdd} and the definition of $\tE_{\rho}$. Hence
    \begin{align*}
        \EE \Delta_{n, \rho} \1 \cbr{\tE_2^c \cap \tE_\rho} &\lesssim n^2 \tM_L \PP(\tE_2^c)\\
        &\lesssim  \frac{\polylog{n}}{n^{\tau-2}}
    \end{align*}
    \item Under event $\tE_\rho^c$, one has
    \begin{align*}
        \Delta_{n, \rho} &\lesssim \sup_{x \in B_{\mu}(L)} \sbr{\Onorm{\hQ_{\rho}(x)} + \Onorm{Q^*(x)}}\\
        &\lesssim \sup_{x \in B_{\mu}(L)} \Onorm{\hQ_{\rho}(x)} 
    \end{align*}
    Lemma  \ref{lem: bdd} implies that under $\tE_\rho^c$, for any $t \geq n^2 \tM_L$,
    \begin{align*}
        &\quad \ \PP \cbr{\sup_{x \in B_{\mu}(L)} \Onorm{\hQ_{\rho}(x)} \geq t}\\
        &\lesssim h_0(\sqrt{t}/n)\\
        &\lesssim n \cdot\exp  \sbr{- \rbr{\frac{\sqrt{t}}{c_b Ln}}^{2/(C_{\Lambda}\vee 2 + 2)}}\\
        &= n \cdot\exp  \sbr{- \rbr{\frac{t}{c_b^2 L^2 n^2}}^{1/(C_{\Lambda}\vee 2 + 2)}}
    \end{align*}
    Hence
    \begin{align*}
        \EE \Delta_{n, \rho}\1 \cbr{\tE_\rho^c} &\lesssim  \EE \sup_{x \in B_{\mu}(L)} \Onorm{\hQ_{\rho}(x)} \1\cbr{\sup_{x\in B_{\mu}(L)}\Onorm{\hQ_{\rho}(x)} > n^2 \tM_L}\\
        &\lesssim \int_{n^2 \tM_L}^{\infty} n t \exp  \sbr{- \rbr{\frac{t}{c_b^2 L^2 n^2}}^{1/(C_{\Lambda}\vee 2 + 2)}} dt\\
        &= \int_{\tM_L}^{\infty} n^5 t \exp  \sbr{- \rbr{\frac{t}{c_b^2 L^2}}^{1/(C_{\Lambda}\vee 2 + 2)}} dt\\
        &\overset{(i)}{\lesssim} \int_{\tM_L}^{\infty} n^5 t \cbr{\exp  \sbr{- \rbr{\frac{\tM_L}{c_b^2 L^2}}^{1/(C_{\Lambda}\vee 2 + 2)} - \rbr{\frac{t}{c_b^2 L^2}}^{1/(C_{\Lambda}\vee 2 + 2)}}} dt\\
        &\lesssim \int_{\tM_L}^{\infty} n^5 t \cbr{ n^{-(1+\tau)}\exp  \sbr{ - \rbr{\frac{t}{c_b^2 L^2}}^{1/(C_{\Lambda}\vee 2 + 2)}}} dt\\
        &\lesssim n^{4-\tau} L^4 \int_{\tM_L/(c_b L^2)}^{\infty} t \exp  \sbr{ - t^{1/(C_{\Lambda}\vee 2 + 2)}} dt\\
        &\lesssim n^{4-\tau} L^4
    \end{align*}
    Here (i) follows from the inequality
    \begin{align*}
        (a+b)^{\alpha} \geq (a^{\alpha} + b^{\alpha})/2^{1-\alpha}, \quad a,b \geq 0, 0 < \alpha \leq 1
    \end{align*}

\end{itemize}
Finally, it follows by combining results above that for $\tau \geq 9/2$, 
\begin{align*}
\EE \Delta_{n, \rho} \leq \frac{\polylog{n}}{\sqrt{n}}
\end{align*}
This finishes the proof of Lemma \ref{lem: F_moment}.

\subsection{Proof of Lemma \ref{lem: W_moment}}
\label{subsec: prf_W_moment}
Let $\tau > 9/2$.
Lemma \ref{lem: diff} \ref{eqn: lem_diff_W_Frob} implies that under $\tE_2$, one also has
\begin{align*}
    \sup_{x \in B_{\mu}(L)} W\rbr{\hQ_{\rho}(x), Q^*(x)} \leq \frac{\polylog{n}}{\sqrt{n}}
\end{align*}
The proof then follows similar steps to those in Lemma \ref{lem: F_moment}. Specifically, denote 
\begin{align*}
    \tDelta_{n,\rho} = \sup_{x \in B_{\mu}(L)} W\rbr{\hQ_{\rho}(x), Q^*(x)}
\end{align*}
Then one can obtain
\begin{align*}
    \EE \tDelta_{n, \rho} &= \EE \tDelta_{n, \rho} \1 \cbr{\tE_2} + \EE \tDelta_{n, \rho} \1 \cbr{\tE_2^c \cap \tE_\rho} + \EE \tDelta_{n, \rho} \1 \cbr{\tE_\rho^c}\\
    &\leq \frac{\polylog{n}}{\sqrt{n}} + + \EE \tDelta_{n, \rho} \1 \cbr{\tE_2^c \cap \tE_\rho} + \EE \tDelta_{n, \rho} \1 \cbr{\tE_\rho^c}
\end{align*}
\begin{itemize}
    \item Under event $\tE_2^c \cap \tE_\rho$, Lemma \ref{lem: diff} \ref{eqn: diff_W_sup} implies that 
    \begin{align*}
        \tDelta_{n, \rho} &\lesssim \sup_{x \in B_\mu(L)} \sbr{\Onorm{\hQ_{\rho}(x)} \vee \Onorm{Q^*(x)}}\\
        &\overset{(i)}{\lesssim} n^2 \tM_L
    \end{align*}
    Here (i) follows from Lemma \ref{lem: Q*x}, \ref{lem: bdd} and the definition of $\tE_{\rho}$. Hence
    \begin{align*}
        \EE \tDelta_{n, \rho} \1 \cbr{\tE_2^c \cap \tE_\rho} &\leq n^2 \tM_L \PP(\tE_2^c)\\
        &\leq \frac{\polylog{n}}{n^{\tau-2}}
    \end{align*}
    \item Under event $\tE_\rho^c$, one can similarly obtain that
    \begin{align*}
        \tDelta_{n, \rho} 
        &\lesssim \sup_{x \in B_{\mu}(L)} \Onorm{\hQ_{\rho}(x)} 
    \end{align*}
    and
    \begin{align*}
        \EE \tDelta_{n, \rho}\1 \cbr{\tE_\rho^c} &\lesssim  \EE \sup_{x \in B_{\mu}(L)} \Onorm{\hQ_{\rho}(x)} \1\cbr{\sup_{x\in B_{\mu}(L)}\Onorm{\hQ_{\rho}(x)} > n^2 \tM_L}\\
        &\lesssim n^{-(\tau-4)}
    \end{align*}

\end{itemize}
Finally, one can arrive at the conclusion that for $\tau \geq 9/2$, 
\begin{align*}
\EE \tDelta_{n, \rho} \leq \frac{\polylog{n}}{\sqrt{n}}
\end{align*}
This finishes the proof of Lemma \ref{lem: W_moment}.

    \section{Proof of Theorem \ref{thm: estimation}}
    \label{sec: proof_est}
    To simplify notation, we fix $x$ and write $Q^*$ for $Q^*(x)$, $\hQ_{n,\rho}$ for $\hQ_{n,\rho}(x)$ when there is no ambiguity. 

Following the same argument as (\ref{eqn: prf_conv_Qn_Q}) in Appendix \ref{sec: prf_unfm_conv}, one can obtain
\begin{align}
    &\EE_{(X,Q) \sim \PP} \  w(x,X)\left( T_{Q^*}^Q - I_d \right)=0 \label{eqn: prf_clt_opt_Q*}\\
    &\sqrt{n}\rbr{\hQ_{n,\rho}(x)-Q^*(x)} = \left( -\underbrace{\frac{1}{n} \sum_{i=1}^{n} w_{n,\rho}(x,X_i)  dT^{Q_i}_{Q^*(x)} }_{=\hPhi_{\rho}(x)}\right)^{-1} 
    \cdot \underbrace{\frac{1}{\sqrt{n}} \sum_{i=1}^{n} w_{n,\rho}(x,X_i)\left( T^{Q_i}_{Q^*(x)} - I_d \right)}_{=:a_n(x)}  \nonumber\\
    +& \left( -\frac{1}{n} \sum_{i=1}^{n} w_{n,\rho}(x,X_i)  dT^{Q_i}_{Q^*(x)}\right)^{-1} 
    \cdot \underbrace{\frac{\sqrt{n}}{2}\cdot \frac{1}{n} \sum_{i=1}^{n} w_{n,\rho}(x,X_i)  d^2 T^{Q_i}_{\tQ_n(x)}  \cdot (\hQ_{n,\rho}(x)-Q^*(x))^{\otimes 2} }_{=:b_n(x)} \label{eqn: prf_clt_taylor}
\end{align}

\paragraph{Analysis of $\hPhi_{\rho}(x)$:} Lemma \ref{lem: conc_tld_exp} and \ref{lem: conc_s_sin} together imply that for any fixed $x$,
\begin{align*}
    \hPhi_{\rho}(x)  \xlongrightarrow{p} \EE \sbr{- w(x,X) dT^Q_{Q^*(x)}}
\end{align*}
Then by Assumption \ref{assumption: minimizer_local}, one has
\begin{align}
    \sbr{\hPhi_{\rho}(x)}^{-1}  \xlongrightarrow{p} \sbr{\EE \rbr{- w(x,X) dT^Q_{Q^*(x)}}}^{-1}
    \label{eqn: prf_clt_Phi_inv}
\end{align}

\paragraph{Analysis of $b_n(x)$:} Theorem \ref{thm: unfm_conv}, Lemma \ref{lem: conc_tld_exp} and \ref{lem: conc_s_sin} together imply that for any fixed $x$, with probability at least $1-O(n^{-100})$,
\begin{align}
    \abs{b_n(x)} &\leq \frac{\sqrt{n}}{2} \cdot \opnorm{\frac{1}{n} \sum_{i=1}^{n} w_{n,\rho}(x,X_i)  d^2 T^{Q_i}_{\tQ_n(x)}} \cdot \Fnorm{\hQ_{n,\rho}(x)-Q^*(x)}^2 \nonumber\\
    &\leq \frac{\sqrt{n}}{2} \cdot \polylog{n} \cdot \frac{\polylog{n}}{n} \nonumber\\
    &= \frac{\polylog{n}}{\sqrt{n}} \label{eqn: prf_clt_bn}
\end{align}

\paragraph{Analysis of $a_n(x)$:} by Lemma \ref{lem: weights}, one has $w(x,X) = \vec{x}^\top \vec{\Sigma}^{-1}\vec{X} $ and $w_{n,\rho}(x,X) = \vec{x}^\top \hat{\vec{\Sigma}}_{\rho}^{-1}\vec{X} $. Hence one can obtain
\begin{equation}
    \begin{aligned}
        a_n(x) &= \frac{1}{\sqrt{n}} \sum_{i=1}^n \vec{x}^\top \hat{\vec{\Sigma}}_{\rho}^{-1} \vec{X}_i\left( T^{Q_i}_{Q^*} - I_d \right)\\
        &= \left( \vec{x}^\top \hat{\vec{\Sigma}}_{\rho}^{-1} \otimes I_d \right) \cdot \frac{1}{\sqrt{n}} \sum_{i=1}^n  \vec{X}_i \otimes \left( T^{Q_i}_{Q^*} - I_d \right)\\
        &= \underbrace{\rbr{\vec{x}^\top \vec{\Sigma}^{-1} \otimes I_d} \cdot \frac{1}{\sqrt{n}} \sum_{i=1}^n  \vec{X}_i \otimes \left( T^{Q_i}_{Q^*} - I_d \right)}_{=:a_{n,1}(x)}\\
        &\quad  + \underbrace{\sqrt{n}\rbr{\vec{x}^\top (\hat{\vec{\Sigma}}_{\rho}^{-1} - \vec{\Sigma}^{-1}) \otimes I_d} \cdot \EE \vec{X} \otimes \rbr{ T^{Q}_{Q^*} - I_d }}_{=:a_{n,2}(x)}\\
        &\quad + \underbrace{\sqrt{n}\rbr{\vec{x}^\top (\hat{\vec{\Sigma}}_{\rho}^{-1} - \vec{\Sigma}^{-1}) \otimes I_d} \cdot \sbr{\frac{1}{n} \sum_{i=1}^n  \vec{X}_i \otimes \left( T^{Q_i}_{Q^*} - I_d \right) - \EE \vec{X} \otimes \rbr{ T^{Q}_{Q^*} - I_d }}}_{=:a_{n,3}(x)}
    \end{aligned}
    \label{eqn: proof_est_a2_decom}
\end{equation}

\begin{itemize}
    \item $a_{n,1}(x)$: by (\ref{eqn: prf_clt_opt_Q*}) and Lemma \ref{lem: weights}, one can obtain 
    \begin{align}
        \EE a_{n,1}(x) = 0 \label{eqn: prf_est_a21}
    \end{align}
    To see this, it suffices to show $\EE \rbr{\vec{x}^\top \vec{\Sigma}^{-1} \otimes I_d} \cdot \rbr{\vec{X} \otimes (T_{Q^*}^Q - I_d) } = 0$. Direct computation shows that the LHS of the equality is equal to
    \begin{align*}
        \EE \vec{x}^\top \vec{\Sigma}^{-1} \vec{X} \rbr{T_{Q^*}^Q - I_d} &= \EE w(x,X) (T_{Q^*}^{Q} - I_d) = 0
    \end{align*}
    Here the equality follows from the optimality condition (\ref{eqn: prf_clt_opt_Q*}).
    \item $a_{n,2}(x)$: using the formula $A^{-1}-B^{-1} = A^{-1}(B-A)B^{-1} $, one can obtain
    \begin{align}
        a_{n,2}(x) &= \sqrt{n}\rbr{\vec{x}^\top \hat{\vec{\Sigma}}_{\rho}^{-1}(\vec{\Sigma} - \hat{\vec{\Sigma}}_{\rho}^{-1})\vec{\Sigma}^{-1} \otimes I_d} \cdot \EE \vec{X} \otimes \rbr{ T^{Q}_{Q^*} - I_d } \nonumber\\
        &= \sqrt{n}\rbr{\vec{x}^\top \vec{\Sigma}^{-1}(\vec{\Sigma} - \hat{\vec{\Sigma}}_{\rho})\vec{\Sigma}^{-1} \otimes I_d} \cdot \EE \vec{X} \otimes \rbr{ T^{Q}_{Q^*} - I_d } \nonumber \\
        &\quad - \sqrt{n}\rbr{\vec{x}^\top (\vec{\Sigma}^{-1}-\hat{\vec{\Sigma}}_{\rho}^{-1})(\vec{\Sigma} - \hat{\vec{\Sigma}}_{\rho}^{-1})\vec{\Sigma}^{-1} \otimes I_d} \cdot \EE \vec{X} \otimes \rbr{ T^{Q}_{Q^*} - I_d } \nonumber\\
        &= - \frac{1}{\sqrt{n}} \rbr{\sum_{i=1}^n \vec{x}^\top \vec{\Sigma}^{-1} (\vec{X}_i \vec{X}_i^\top - \vec{\Sigma})\vec{\Sigma}^{-1} \otimes I_d} \cdot \rbr{\EE \vec{X} \otimes (T^{Q}_{Q^*} - I_d)}  + O_p \rbr{ \frac{1}{\sqrt{n}}} 
        \label{eqn: prf_clt_an2}
    \end{align}
    Here the last line follows from the fact that $\hat{\vec{\Sigma}}_{\rho}-\vec{\Sigma} = O_p(n^{-1/2})$ which is itself due to the sub-Gaussianity of $X$.
    \item $a_{n,3}(x)$: note that $\sbr{\frac{1}{n} \sum_{i=1}^n  \vec{X}_i \otimes \left( T^{Q_i}_{Q^*} - I_d \right) - \EE \vec{X} \otimes \rbr{ T^{Q}_{Q^*} - I_d }} = O_p(n^{-1/2})$ again by the sub-Gaussianity of $X$ and Assumption \ref{assumption: bdd_Q}, one can arrive at
    \begin{align}
        a_{n,3}(x) = O_p(n^{-1/2}) \label{eqn: prf_clt_an3}
    \end{align}
\end{itemize}

Combining (\ref{eqn: proof_est_a2_decom}), (\ref{eqn: prf_est_a21}), (\ref{eqn: prf_clt_an2}), (\ref{eqn: prf_clt_an3}) and the functional central limit theorem  \citep[Theorem 7.7.6]{hsing}, one can obtain
\begin{align}
    a_n(x) &=  \frac{1}{\sqrt{n}} \sum_{i=1}^n \Big[ \rbr{\vec{x}^\top \vec{\Sigma}^{-1} \otimes I_d} \cdot \rbr{\vec{X}_i \otimes ( T^{Q_i}_{Q^*} - I_d )} - 
    \nonumber\\
    &\qquad \qquad \quad \  \rbr{ \vec{x}^\top \vec{\Sigma}^{-1} (\vec{X}_i \vec{X}_i^\top - \vec{\Sigma}) \otimes I_d} \cdot \rbr{\EE \vec{X} \otimes (T^{Q}_{Q^*} - I_d)} \Big]  + o_p(1) \nonumber\\
    &\overset{w}{\to} Z_x
    \label{eqn: prf_clt_an}
\end{align}
where $Z_x \sim \cN \rbr{0, \Xi_x}$ is a Gaussian random matrix with mean $0$ and covariance $\Xi_x$. Here $\Xi_x := \EE V_x \otimes V_x$ with
\begin{align*}
    &V_x = w(x,X) \rbr{ T^{Q}_{Q^*} - I_d } - 
     \rbr{ \vec{x}^\top \vec{\Sigma}^{-1} (\vec{X} \vec{X}^\top - \vec{\Sigma}) \otimes I_d} \cdot \rbr{\EE \vec{X} \otimes (T^{Q}_{Q^*} - I_d)}
\end{align*}

Finally, (\ref{eqn: prf_clt_taylor}), (\ref{eqn: prf_clt_Phi_inv}), (\ref{eqn: prf_clt_bn}) and (\ref{eqn: prf_clt_an}) together imply
\begin{equation*}
    \sqrt{n}\left( \hQ_{n,\rho} - Q^* \right) \overset{w}{\to} \left( -\EE w(x,X)dT^{Q}_{Q^*} \right)^{-1} \cdot Z_x
\end{equation*}
which completes the proof.

    \section{Proof of Corollary \ref{cor: clt_ind}}
    \label{sec: prf_clt_cor}
    The optimality condition for $Q^*(x)$ implies that
\begin{align*}
    \EE w(x,X)\rbr{T_{Q^*(x)}^Q  - I_d} = 0
\end{align*}
Note that independence between $X$ and $Q$ implies that
\begin{align*}
    \EE w(x,X)\rbr{T_{Q^*(x)}^Q  - I_d} &= \EE w(x,X) \EE \rbr{T_{Q^*(x)}^Q  - I_d}\\
    &\overset{(i)}{=} \EE \rbr{T_{Q^*(x)}^Q  - I_d}
\end{align*}
Here (i) follows from the fact that $\EE w(x,X)=1$. Therefore, one has
\begin{align*}
    \EE \rbr{T_{Q^*(x)}^Q  - I_d} = 0
\end{align*}
By independence, the above equality then implies 
\begin{align*}
    \EE \vec{X} \otimes (T^{Q}_{Q^*} - I_d) = 0
\end{align*}
Hence one has $V_{x,2}=0$, the proof is then complete.

    \section{Proof of Theorem \ref{thm: test_dist}}
    \label{sec: proof_null}
    First, we demonstrate in Lemma \ref{lem: prf_null_lem_uni} below uniform fast convergence under Assumption \ref{assumption: X}, \ref{assumption: bdd_Q}, \ref{assumption: minimizer_local}, \ref{assumption: cond_ind} and the null hypothesis. In order to apply Theorem \ref{thm: unfm_conv}, it suffices to verify Assumption \ref{assumption: minimizer_local} and Assumption \ref{assumption: minimizer_global}. Note that Assumption \ref{assumption: cond_ind} and the null implies the independence between $X$ and $Q$, then both Assumption \ref{assumption: minimizer_local} and \ref{assumption: minimizer_global} are consequences of Lemma \ref{lem: F_ind}.
\begin{lemma}
    \label{lem: prf_null_lem_uni}
    Instate the assumptions in Theorem \ref{thm: test_dist}. Then with probability at least $1-O(n^{-100})$, one has
    \begin{align*}
        \sup_{x\in B_{\mu}(L_n)} \Fnorm{\hQ_{\rho}(x) - Q^*(x)} &\leq \frac{\polylog{n}}{\sqrt{n}}\\
        \sup_{1\leq i\leq n} \Fnorm{\hQ_{\rho}(X_i) - Q^*(X_i)} &\leq \frac{\polylog{n}}{\sqrt{n}}
    \end{align*}
\end{lemma}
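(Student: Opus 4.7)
The plan is to reduce this lemma to a direct application of Theorem \ref{thm: unfm_conv} by verifying its hypotheses in the present setting, and then to convert the uniform bound over $B_\mu(L_n)$ into a bound over the random design points $\{X_i\}_{i=1}^n$ via a standard sub-Gaussian maximal inequality.

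First I would observe that the null hypothesis $\cH_0: Q^*(x) \equiv Q^*$ in \eqref{eqn: testing_null} together with Assumption \ref{assumption: cond_ind} forces $X$ and $Q$ to be independent: conditional on $Q^*(X) = Q^*$ (a deterministic event under the null), $X$ and $Q$ are already independent, but conditioning on a constant has no effect. With independence in hand, Lemma \ref{lem: F_ind} immediately supplies the remaining two structural assumptions used by Theorem \ref{thm: unfm_conv}: Assumption \ref{assumption: minimizer_local} holds with $C_\lambda = 0$ and some $c_\lambda$, and Assumption \ref{assumption: minimizer_global} holds with $\delta_F = 1$, $\alpha_F = 2$, $C_F = 1$ and some $c_F$. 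Assumption \ref{assumption: X} and Assumption \ref{assumption: bdd_Q} are in force by hypothesis, and Assumption \ref{assumption: frechet} is assumed in the theorem.

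Next I would apply part 1 of Theorem \ref{thm: unfm_conv} with $L_n = C_L \sqrt{\log n}$ and $\tau = 99$ (any fixed $\tau > 99$ would do) to conclude that, on an event $E_{\mathrm{unif}}$ of probability at least $1 - O(n^{-100})$,
\begin{equation*}
    \sup_{\|x - \mu\| \le L_n} \Fnorm{\hQ_{\rho}(x) - Q^*(x)} \le C_1 \frac{\log^{C_2} n}{\sqrt{n}} = \frac{\polylog{n}}{\sqrt{n}},
\end{equation*}
which is exactly the first displayed bound. For the second bound, I would introduce the event $E_{X} := \{\max_{i \in [n]} \|X_i - \mu\| \le L_n\}$. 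By Assumption \ref{assumption: X} and the standard sub-Gaussian maximal inequality (via \eqref{eqn: sub_alpha_P_ineq} and a union bound over $i \in [n]$), one has $\PP(E_X^c) \lesssim n \cdot \exp(-c L_n^2) \lesssim n^{-100}$ for $C_L$ chosen large enough. On $E_X$ every $X_i$ lies in $B_\mu(L_n)$, so on $E_X \cap E_{\mathrm{unif}}$ the second supremum is dominated by the first.

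I do not anticipate a substantive obstacle here: the real work has already been done in Theorem \ref{thm: unfm_conv} and Lemma \ref{lem: F_ind}, and the argument is essentially a bookkeeping reduction. The only minor point of care is the choice of $\tau$ and $C_L$ so that the two exceptional events are each $O(n^{-100})$, after which a single union bound closes the argument.
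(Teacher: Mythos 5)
Your proposal is correct and matches the paper's own argument: the paper likewise observes that the null hypothesis together with Assumption \ref{assumption: cond_ind} forces independence of $X$ and $Q$, invokes Lemma \ref{lem: F_ind} to supply Assumptions \ref{assumption: minimizer_global}--\ref{assumption: minimizer_local}, and then applies Theorem \ref{thm: unfm_conv} (with the passage from the uniform bound over $B_\mu(L_n)$ to the design points $X_i$ handled as in the remark following that theorem). The only trivial bookkeeping fix is that $\tau$ should be taken $\geq 100$ rather than $99$ to land exactly on the stated $1 - O(n^{-100})$ probability.
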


With Lemma \ref{lem: prf_null_lem_uni} in place, we consider the Taylor expansion of the optimality condition for $\hQ_{\rho}(x)$. Recall that the optimality condition for $\hQ_{\rho}(x)$ gives
\begin{align*}
    \frac{1}{n} \sum_{i=1}^n w_{n,\rho}(x,X_i) \rbr{T^{Q_i}_{\hQ_{\rho}(x)} - I_d} = 0
\end{align*}
Then one can apply Lemma \ref{lem: diff} to get the following 2nd order Taylor expansion around $Q^*(x)$.
\begin{align}
    0 &= \frac{1}{n} \sum_{i=1}^n w_{n,\rho}(x,X_i) \rbr{T_{Q^*(x)}^{Q_i} - I_d} + \frac{1}{n} \sum_{i=1}^n w_{n,\rho}(x,X_i) dT_{Q^*(x)}^{Q_i} \cdot \rbr{\hQ_{\rho}(x) - Q^*(x)} + R_2(x) 
    \label{eqn: test_taylor}
\end{align}
where $R_2(x)$ is the 2nd order remainder term with some $\tQ_n(x)$ lying between $Q^*(x)$ and $\hQ_{\rho}(x)$ defined as follows.
\begin{align*}
    R_2(x) := \frac{1}{2n} \sum_{i=1}^n w_{n,\rho}(x,X_i) d^2 T_{\tQ_n(x)}^{Q_i} \cdot \rbr{\hQ_{\rho}(x) - Q^*(x)}^{\otimes 2}
\end{align*}
Under the null hypothesis (\ref{eqn: testing_null}), one has $Q^*(x) \equiv Q^*$ and (\ref{eqn: test_taylor}) reduces to
\begin{align}
    0 &= \frac{1}{n} \sum_{i=1}^n w_{n,\rho}(x,X_i) \rbr{T_{Q^*}^{Q_i} - I_d} + \frac{1}{n} \sum_{i=1}^n w_{n,\rho}(x,X_i) dT_{Q^*}^{Q_i} \cdot \rbr{\hQ_{\rho}(x) - Q^*(x)} + R_2(x) 
    \label{eqn: test_taylor_null}
\end{align}
Setting $x=\bar{X}$ in (\ref{eqn: test_taylor_null}) then gives
\begin{align}
    0 &= \frac{1}{n} \sum_{i=1}^n  \rbr{T_{Q^*}^{Q_i} - I_d} + \frac{1}{n} \sum_{i=1}^n dT_{Q^*}^{Q_i} \cdot \rbr{\hQ_{\rho}(x) - Q^*(x)} + R_2(x) 
    \label{eqn: test_taylor_null_xbar}
\end{align}
Take difference between (\ref{eqn: test_taylor_null}) and (\ref{eqn: test_taylor_null_xbar}) and rearrange, one can obtain
\begin{equation}
    \begin{aligned}
        &\underbrace{\left( -\frac{1}{n}\sum_{i=1}^n dT_{Q^*}^{Q_i} \right) \cdot (\hQ_{\rho}(x) -\hQ_{\rho}(\bar{X}))}_{=:\tau(x)}\\
        = &  
        \underbrace{\frac{1}{n}\sum_{i=1}^{n} (w_{n,\rho}(x,X_i)-1) (T_{Q^*}^{Q_i} - I_d)}_{=:\alpha_0(x)} + \underbrace{\frac{1}{n}\sum_{i=1}^{n} (w_{n,\rho}(x,X_i)-1) dT^{Q_i}_{Q^*}\cdot (\hQ_{\rho}(x) -\hQ_{\rho}(\bar{X})) }_{=:\alpha_1(x)}\\
        &+ \underbrace{R_2(x) - R_2(\bar{X})}_{=:\alpha_2(x)}
    \end{aligned}
    \label{eqn: prf_null_decomp}
\end{equation}
Before delving further into the proof, we pause to introduce more notations. Define 
\begin{equation*}
    \begin{aligned}
        \htau(x) &= \left( -\frac{1}{n}\sum_{i=1}^n dT_{\hQ_{\rho}(\bar{X})}^{Q_i} \right) \cdot (\hQ_{\rho}(x) -\hQ_{\rho}(\bar{X}))
    \end{aligned}
\end{equation*}
Then by definition, the test statistic $\hat{\cT}_{\rho} = \sum_{i=1}^{n} \Fnorm{\htau(X_i)}^2$, and we have the following decomposition of $\tau(x)$.
\begin{equation*}
    \begin{aligned}
        \htau(x) &= \tau(x) + \htau(x) - \tau(x) \\
        &=\tau(x) +  \underbrace{\left( \frac{1}{n}\sum_{i=1}^n dT_{Q^*}^{Q_i}-\frac{1}{n}\sum_{i=1}^n dT_{\hQ_{\rho}(\bar{X})}^{Q_i} \right) \cdot (\hQ_{\rho}(x) -\hQ_{\rho}(\bar{X}))}_{=:\alpha_3(x)}\\
        &= \alpha_0(x) + \underbrace{\alpha_1(x) +\alpha_2(x) + \alpha_3(x)}_{=:R(x)}
    \end{aligned}
\end{equation*}
We also define counterparts of $\alpha_0,...,\alpha_2$ and $R_2$ by replacing $w_{n,\rho}(\cdot, \cdot)$ with $w(\cdot, \cdot)$ as follows.
\begin{equation*}
    \begin{aligned}
        \talpha_0(x)&:=\frac{1}{n}\sum_{i=1}^{n} (w(x,X_i)-1) (T_{Q^*}^{Q_i} - I_d)\\
        \talpha_1 (x)&:= \frac{1}{n}\sum_{i=1}^{n} (w(x,X_i)-1) dT^{Q_i}_{Q^*} \cdot (\hQ_{\rho}(x) -\hQ_{\rho}(\bar{X}))\\
        \talpha_2(x) &:= \tR_2(x) - \tR_2(\bar{X})
    \end{aligned}
\end{equation*}
where 
\begin{equation*}
    \tR_2(x)= \frac{1}{2n}\sum_{i=1}^{n} w(x,X_i)  d^2 T^{Q_i}_{\tQ_n(x)} \cdot (\hQ_{\rho}(x)-Q^*)^{\otimes 2}
\end{equation*}
With the above notation in place, one can obtain
\begin{equation}
    \begin{aligned}
        \hat{\cT}_{\rho} &= \sum_{k=1}^{n} \Fnorm{\alpha_0(X_k)}^2 + \underbrace{2 \sum_{k=1}^{n} \inner{\alpha_0(X_k)}{R(X_k)} + \sum_{k=1}^{n} \Fnorm{R(X_k)}^2 }_{\rem_n}
    \end{aligned}
    \label{eqn: prf_null_T_op1}
\end{equation}
The proof is then divided into three steps.
\begin{itemize}
    \item First, we give upper bounds for $\talpha_i$ as well as $\talpha_i - \alpha_i$ and their consequences.
    \item Next, we show that the remainder term $\rem_n$ is negligible.
    \item Then, we demonstrate that $\sum_i \Fnorm{\alpha_0(X_i)}^2$ converges weakly to the desired asymptotic null distribution (\ref{eqn: asymp_null_dist}).
\end{itemize}

\paragraph{Analysis of $\talpha$ and $\talpha-\alpha$:} We give uniform upper bounds for $\talpha_i(x)-\alpha_i(x)$ for $i=0,1,2$ in Lemma \ref{lem: prf_null_lem_s_sin} as uniform upper bounds for $\talpha_i(x)$ and $\alpha_3(x)$ in Lemma \ref{lem: prf_null_lem_til}; see Appendix \ref{subsec: proof_lem_diff_ab}, \ref{subsec: prf_null_lem_til} for the proof.
\begin{lemma}
    \label{lem: prf_null_lem_s_sin}
    Instate the notations and assumptions in Theorem \ref{thm: test_dist}.
    \begin{align}
        \sup_{x \in B_\mu(L)} \left\| \alpha_0(x) - \talpha_0(x)\right\|_F &\leq \frac{\polylog{n}}{\sqrt{n}} \label{eqn: prf_null_lem_s_sin_0}\\
        \sup_{x \in B_\mu(L)} \left\| \alpha_1(x) - \talpha_1(x)\right\|_F &\leq \frac{\polylog{n}}{n} \label{eqn: prf_null_lem_s_sin_1}\\
        \sup_{x \in B_\mu(L)} \left\| \alpha_2(x) - \talpha_2(x)\right\|_F &\leq \frac{\polylog{n}}{n^{3/2}} \label{eqn: prf_null_lem_s_sin_2}
    \end{align} 
    with probability at least $1-O\left( n^{-99} \right)$.
\end{lemma}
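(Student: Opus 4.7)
The plan is to apply Lemma~\ref{lem: sf_snf} separately to each of $\alpha_i(x) - \talpha_i(x)$ for $i=0,1,2$, combined with the uniform consistency from Lemma~\ref{lem: prf_null_lem_uni}. In each case the difference takes the form $\tfrac{1}{n}\sum_{i=1}^n [w_{n,\rho}(x,X_i) - w(x,X_i)]\, \psi_i(Q_i;\theta_i(x))$, possibly multiplied by an $i$-independent factor that vanishes at a known rate under the null. All three bounds then follow by pairing the $\polylog{n}/\sqrt{n}$ rate delivered by Lemma~\ref{lem: sf_snf} with the additional small factor.

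For $\alpha_0 - \talpha_0$, take $\psi_0(Q) = T^{Q}_{Q^*} - I_d$. Under the event $E_0$ of \eqref{eqn: prf_conv_E0}, Lemma~\ref{lem: Q*x} and Lemma~\ref{lem: diff}\ref{eqn: lem_diff_dkT_QS} yield $\Fnorm{\psi_0(Q_i)} \lesssim \polylog{n}$ uniformly in $i$. Applying Lemma~\ref{lem: sf_snf} with $K_n, L_n \lesssim \polylog{n}$ and a sufficiently large $\tau$ directly gives \eqref{eqn: prf_null_lem_s_sin_0}. For $\alpha_1 - \talpha_1$, the factor $\hQ_\rho(x) - \hQ_\rho(\bar X)$ does not depend on $i$, so we factor it out:
\begin{equation*}
\alpha_1(x) - \talpha_1(x) = \sbr{\frac{1}{n}\sum_{i=1}^n \bigl(w_{n,\rho}(x,X_i) - w(x,X_i)\bigr)\, dT^{Q_i}_{Q^*}}\bigl(\hQ_\rho(x) - \hQ_\rho(\bar X)\bigr).
\end{equation*}
Apply Lemma~\ref{lem: sf_snf} with $\psi_1(Q) = dT^{Q}_{Q^*}$, whose operator norm is $\polylog{n}$ under $E_0$ by Lemma~\ref{lem: diff}\ref{eqn: lem_diff_dkT_QS}, to bound the bracketed operator by $\polylog{n}/\sqrt{n}$. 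Combined with $\Fnorm{\hQ_\rho(x) - \hQ_\rho(\bar X)} \leq 2 \sup_{x \in B_\mu(L)} \Fnorm{\hQ_\rho(x) - Q^*} \lesssim \polylog{n}/\sqrt{n}$ from Lemma~\ref{lem: prf_null_lem_uni}, this yields \eqref{eqn: prf_null_lem_s_sin_1}.

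For $\alpha_2 - \talpha_2$, write it as $[R_2(x) - \tR_2(x)] - [R_2(\bar X) - \tR_2(\bar X)]$ and apply the analogous factorization to each summand:
\begin{equation*}
R_2(x) - \tR_2(x) = \tfrac{1}{2}\sbr{\frac{1}{n}\sum_{i=1}^n \bigl(w_{n,\rho}(x,X_i) - w(x,X_i)\bigr)\, d^2 T^{Q_i}_{\tQ_n(x)}}\bigl(\hQ_\rho(x) - Q^*\bigr)^{\otimes 2}.
\end{equation*}
Since $\tQ_n(x)$ lies on the segment between $Q^*$ and $\hQ_\rho(x)$, Lemma~\ref{lem: prf_null_lem_uni} guarantees that $\tQ_n(x) \in \cS_d((C_{\mathrm{slow}} M_L)^{-1}, C_{\mathrm{slow}} M_L)$ uniformly in $x \in B_\mu(L)$ on the same high-probability event. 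Taking this set as the parameter space $\Theta$ in Lemma~\ref{lem: sf_snf} and invoking $\opnorm{d^2 T^Q_S} \lesssim \polylog{n}$ from Lemma~\ref{lem: diff}\ref{eqn: lem_diff_dkT_QS} bounds the bracketed operator by $\polylog{n}/\sqrt{n}$. Multiplying by $\Fnorm{\hQ_\rho(x) - Q^*}^2 \lesssim \polylog{n}/n$ gives the $\polylog{n}/n^{3/2}$ rate for each summand, and \eqref{eqn: prf_null_lem_s_sin_2} follows by the triangle inequality. The main subtlety is the $x$-dependent intermediate point $\tQ_n(x)$, which couples $x$ to $\hQ_\rho(x)$; this is resolved by enlarging the parameter space to a polylog-bounded neighborhood of $Q^*$ known to contain all $\tQ_n(x)$ with high probability, after which the invocation of Lemma~\ref{lem: sf_snf} is routine.
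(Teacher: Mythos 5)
Your proposal is correct and follows essentially the same route as the paper: each difference $\alpha_i(x) - \talpha_i(x)$ is written as an $i$-independent factor (a power of $\hQ_\rho(x) - \hQ_\rho(\bar{X})$ or $\hQ_\rho(x) - Q^*$, controlled by Lemma~\ref{lem: prf_null_lem_uni}) acting on a sum $\tfrac{1}{n}\sum_i (w_{n,\rho} - w)\psi_i$, and that sum is bounded uniformly by the weight-replacement concentration result — the paper routes through the packaged Lemma~\ref{lem: conc_s_sin} (for $\hA_\rho - \tA_n$, $\hPhi_\rho - \tPhi_n$, $\hPsi_\rho - \tPsi_n$) while you invoke the underlying Lemma~\ref{lem: sf_snf} directly, but these are the same estimate. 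Your explicit decomposition of $\alpha_2 - \talpha_2$ into the $x$- and $\bar X$-parts $[R_2(x) - \tR_2(x)] - [R_2(\bar X) - \tR_2(\bar X)]$ is slightly more careful than the paper's writeup, which silently absorbs the $\bar X$-summand; your handling is the correct complete argument.
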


\begin{lemma}
    \label{lem: prf_null_lem_til}
    Instate the notations and assumptions in Theorem \ref{thm: test_dist}.
    \begin{align}
        \sup_{x \in B_\mu(L)} \left\| \talpha_0(x)\right\|_F &\leq \frac{\polylog{n}}{\sqrt{n}} \label{eqn: prf_null_lem_til_0}\\
        \sup_{x \in B_\mu(L)} \left\| \talpha_1(x)\right\|_F &\leq \frac{\polylog{n}}{n} \label{eqn: prf_null_lem_til_1}\\
        \sup_{x \in B_\mu(L)} \left\| \talpha_2(x)\right\|_F &\leq \frac{\polylog{n}}{n} \label{eqn: prf_null_lem_til_2}\\
        \sup_{x \in B_\mu(L)} \left\| \alpha_3(x)\right\|_F &\leq \frac{\polylog{n}}{n} \label{eqn: prf_null_lem_til_3}
    \end{align}
    with probability at least $1-O\left( n^{-99} \right)$.
\end{lemma}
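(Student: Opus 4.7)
All four bounds will be obtained by combining the uniform consistency of $\hQ_{\rho}(\cdot)$ from Lemma~\ref{lem: prf_null_lem_uni} with the uniform concentration results already established in Lemma~\ref{lem: conc_tld_exp} and Lemma~\ref{lem: conv_smmry}. The crucial structural simplification under the null is that $Q^*(x) \equiv Q^*$ and, by Assumption~\ref{assumption: cond_ind}, $X$ and $Q$ are independent, so that the optimality condition gives $\EE(T_{Q^*}^Q - I_d) = 0$ and several mean terms vanish by factorization.

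\textbf{Bound on $\talpha_0$.} Since $\EE(w(x,X)-1) = 0$ and $X \perp Q$ implies $\EE[(w(x,X)-1)(T_{Q^*}^{Q_i} - I_d)] = 0$, the process $\talpha_0(x)$ is a centered empirical average. Writing $\talpha_0(x) = \tA_n(x) - \frac{1}{n}\sum_i (T_{Q^*}^{Q_i} - I_d)$ (because $Q^*(x) \equiv Q^*$), the first term is bounded uniformly on $B_\mu(L_n)$ by $\polylog(n)/\sqrt{n}$ via (\ref{eqn: conc_lem_unfmA}) of Lemma~\ref{lem: conc_tld_exp}, while the second term is $O_p(1/\sqrt n)$ by the standard CLT together with Assumption~\ref{assumption: bdd_Q}. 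This gives (\ref{eqn: prf_null_lem_til_0}).

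\textbf{Bounds on $\talpha_1$ and $\alpha_3$.} For $\talpha_1$, pull out the increment to obtain $\Fnorm{\talpha_1(x)} \le \opnorm{\frac{1}{n}\sum_i (w(x,X_i)-1) dT^{Q_i}_{Q^*}} \cdot \Fnorm{\hQ_\rho(x) - \hQ_\rho(\bar X)}$. The operator norm factor is again a centered average (its mean vanishes by $X \perp Q$), and uniform concentration at rate $\polylog(n)/\sqrt n$ follows by the same chaining argument underlying (\ref{eqn: conc_lem_unfmPhi}), since $Q^*(x) \equiv Q^*$ removes any $x$-dependence of the envelope. The second factor is bounded by $2\sup_{x \in B_\mu(L_n)} \Fnorm{\hQ_\rho(x) - Q^*} \lesssim \polylog(n)/\sqrt n$ by Lemma~\ref{lem: prf_null_lem_uni}, yielding (\ref{eqn: prf_null_lem_til_1}). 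For $\alpha_3$, apply the mean value theorem (\citet[Thm.~5.3]{dudley}) to write $dT_{Q^*}^{Q_i} - dT_{\hQ_\rho(\bar X)}^{Q_i} = d^2 T_{Q'_i}^{Q_i}\cdot(Q^* - \hQ_\rho(\bar X))$ for some $Q'_i$ between $Q^*$ and $\hQ_\rho(\bar X)$; both matrices lie in $\cS_d((C_{\mathrm{slow}}M_L)^{-1}, C_{\mathrm{slow}}M_L)$ on the event of Lemma~\ref{lem: prf_null_lem_uni}, so (\ref{eqn: lem_smmry_Psi}) bounds $\opnorm{\frac{1}{n}\sum_i d^2 T_{Q'_i}^{Q_i}}$ by $\polylog(n)$. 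Combining with the two $\polylog(n)/\sqrt n$ bounds $\Fnorm{Q^* - \hQ_\rho(\bar X)}$ and $\Fnorm{\hQ_\rho(x) - \hQ_\rho(\bar X)}$ from Lemma~\ref{lem: prf_null_lem_uni} gives (\ref{eqn: prf_null_lem_til_3}).

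\textbf{Bound on $\talpha_2$.} By definition of $\tR_2$,
\[
\Fnorm{\tR_2(x)} \le \tfrac{1}{2}\opnorm{\tfrac{1}{n}\sum_i w(x,X_i) d^2 T^{Q_i}_{\tQ_n(x)}} \cdot \Fnorm{\hQ_\rho(x) - Q^*}^2.
\]
On the event of Lemma~\ref{lem: prf_null_lem_uni}, $\tQ_n(x)$ lies on the segment between $Q^*$ and $\hQ_\rho(x)$ and hence in the admissible spectral window of Lemma~\ref{lem: conv_smmry}, so the operator-norm factor is $O(\polylog n)$ by (\ref{eqn: lem_smmry_Psi}) together with (\ref{eqn: conc_lem_unfmPsi})--(\ref{eqn: conc_lem_E_Psi}). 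The squared error factor is $O(\polylog(n)/n)$ by Lemma~\ref{lem: prf_null_lem_uni}. Hence $\sup_{x \in B_\mu(L_n)}\Fnorm{\tR_2(x)} \lesssim \polylog(n)/n$, and the triangle inequality yields (\ref{eqn: prf_null_lem_til_2}). Taking a union bound over the high-probability events of Lemma~\ref{lem: prf_null_lem_uni}, Lemma~\ref{lem: conc_tld_exp}, and Lemma~\ref{lem: conv_smmry} completes the proof with probability at least $1 - O(n^{-99})$. The main technical delicacy is verifying that $\tQ_n(x)$ and $Q'_i$ stay in the spectral window $\cS_d((C_{\mathrm{slow}}M_L)^{-1}, C_{\mathrm{slow}}M_L)$ uniformly in $x \in B_\mu(L_n)$, which is exactly the content of (\ref{eqn: prf_slow_3}) in Lemma~\ref{lem: prf_conv_lem_slow}.
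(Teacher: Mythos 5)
Your proof is correct and follows essentially the same route as the paper's: (\ref{eqn: prf_null_lem_til_0}) via the uniform concentration (\ref{eqn: conc_lem_unfmA}) adapted to the $w(x,X)-1$ weight, (\ref{eqn: prf_null_lem_til_1}) by factoring out the $\hQ_{\rho}$-increment and concentrating the centered operator average, (\ref{eqn: prf_null_lem_til_2}) as a $\polylog{n}$ operator-norm envelope times the squared uniform estimation error via (\ref{eqn: conc_lem_unfmPsi})--(\ref{eqn: conc_lem_E_Psi}), and (\ref{eqn: prf_null_lem_til_3}) via the mean value theorem on $dT^{Q_i}_{\cdot}$ combined with uniform consistency. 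One small point of rigor: to reach probability $1-O(n^{-99})$, the term $\frac{1}{n}\sum_i(T_{Q^*}^{Q_i}-I_d)$ in your $\talpha_0$ decomposition should be handled by a concentration inequality (e.g., using $\gnorm{\Fnorm{T^{Q}_{Q^*}-I_d}}\lesssim 1$ as in Lemma \ref{lem: barycenter}) rather than by the $O_p(n^{-1/2})$ conclusion of the CLT, which controls only fixed-level probabilities.
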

With the above lemmas in place, one can readily obtain that with probability at least $1-O(n^{-99})$,
\begin{align}
    \sup_{i \in [n]} \Fnorm{\alpha_0(X_i)}
    &\leq \sup_{i \in [n]} \Fnorm{\talpha_0(X_i) - \alpha_0(X_i)} + \sup_{i \in [n]} \Fnorm{\talpha_0(X_i)}
    \leq \frac{\polylog{n}}{\sqrt{n}} \label{eqn: prf_null_a0}\\
    \sup_{i \in [n]} \Fnorm{\alpha_1(X_i)}
    &\leq \sup_{i \in [n]} \Fnorm{\talpha_1(X_i) - \alpha_1(X_i)} + \sup_{i \in [n]} \Fnorm{\talpha_1(X_i)}
    \leq \frac{\polylog{n}}{n} \label{eqn: prf_null_a1}\\
    \sup_{i \in [n]} \Fnorm{\alpha_2(X_i)}
    &\leq \sup_{i \in [n]} \Fnorm{\talpha_2(X_i) - \alpha_2(X_i)} + \sup_{i \in [n]} \Fnorm{\talpha_2(X_i)}
    \leq \frac{\polylog{n}}{n} \label{eqn: prf_null_a2}
\end{align}
which then implies
\begin{equation}
    \sup_{i \in [n]} \Fnorm{R(X_i)} \leq \sum_{k=1}^{3} \sup_{i \in [n]} \Fnorm{\alpha_k(X_i)} \leq \frac{\polylog{n}}{n}
    \label{eqn: prf_null_R}
\end{equation}

\paragraph{Negligibility of $\rem_n$:} We consider two terms $\sum_{k=1}^{n} \inner{\alpha_0(X_k)}{R(X_k)}$ and $\sum_{k=1}^{n} \Fnorm{R(X_k)}^2$ separately.

\noindent\textit{Analysis of $\sum_{k=1}^{n} \inner{\alpha_0(X_k)}{R(X_k)}$:} By (\ref{eqn: prf_null_a0}) and (\ref{eqn: prf_null_R}), one has with probability at least $1-O(n^{-99})$,
\begin{align*}
    \sum_{k=1}^{n} \inner{\alpha_0(X_k)}{R(X_k)} &\leq \sum_{k=1}^n \Fnorm{\alpha_0(X_k)} \cdot \Fnorm{R(X_k)}\\
    &\leq \sum_{k=1}^n \frac{\polylog{n}}{n^{3/2}}\\
    &= \frac{\polylog{n}}{\sqrt{n}}
\end{align*}

\noindent\textit{Analysis of $\sum_{k=1}^{n} \Fnorm{R(X_k)}^2$:} Similarly, by (\ref{eqn: prf_null_R}), one has with probability at least $1-O(n^{-99})$,
\begin{align*}
    \sum_{k=1}^{n} \Fnorm{R(X_k)}^2 \leq \frac{\polylog{n}}{n}
\end{align*}

Therefore, the above results imply that with probability at least $1-O(n^{-99})$,
\begin{align}
    \rem_n \leq \frac{\polylog{n}}{\sqrt{n}}
    \label{eqn: prf_null_rem}
\end{align}

\paragraph{Analysis of $\alpha_0$:} To consider the main term $\sum_{i=1}^{n} \Fnorm{\alpha_0(X_i)}^2$, one has
\begin{equation*}
    \begin{aligned}
        &\sum_{k=1}^{n} \Fnorm{\alpha_0(X_k)}^2 \\
        = &\frac{1}{n^2} \sum_{k=1}^{n} \inner{\sum_{i=1}^{n} (X_k-\bar{X})^\top \hSigma_{\rho}^{-1} (X_i - \bar{X})(T_{Q^*}^{Q_i} - I_d)}{\sum_{j=1}^{n} (X_k-\bar{X})^\top \hSigma_{\rho}^{-1} (X_j - \bar{X})(T_{Q^*}^{Q_j} - I_d)}\\
        = &\frac{1}{n^2} \sum_{i,j=1}^{n} \sum_{k=1}^{n} (X_i - \bar{X})^\top \hSigma_{\rho}^{-1} (X_k - \bar{X}) (X_k - \bar{X})^\top \hSigma_{\rho}^{-1}(X_j - \bar{X}) \inner{T_{Q^*}^{Q_i} - I_d}{T_{Q^*}^{Q_j} - I_d}\\
        = &\frac{1}{n} \sum_{i,j =1}^{n} (X_i - \bar{X})^\top \hSigma_{\rho}^{-1} (X_j - \bar{X})\inner{T_{Q^*}^{Q_i} - I_d}{T_{Q^*}^{Q_j} - I_d}\\
        = & \Fnorm{\frac{1}{\sqrt{n}} \sum_{i=1}^{n} \hSigma_{\rho}^{-1/2}(X_i - \bar{X}) \otimes  (T_{Q^*}^{Q_i}- I_d)}^2
    \end{aligned}
\end{equation*}
Note that
\begin{align}
    &\quad \ \frac{1}{\sqrt{n}} \sum_{i=1}^{n} \hSigma_{\rho}^{-1/2}(X_i - \bar{X}) \otimes  (T_{Q^*}^{Q_i}- I_d) \nonumber\\ 
    &= \sbr{(\hSigma_{\rho}^{-1/2} \Sigma^{1/2})\otimes I_d} \cdot \sbr{\frac{1}{\sqrt{n}} \sum_{i=1}^{n} \Sigma^{-1/2}(X_i - \bar{X}) \otimes  (T_{Q^*}^{Q_i}- I_d)} \nonumber\\
    &= (1+o_p(1))\frac{1}{\sqrt{n}} \sum_{i=1}^{n} \Sigma^{-1/2}(X_i - \bar{X}) \otimes  (T_{Q^*}^{Q_i}- I_d)\label{eqn: proof_null_alpha1_1}
\end{align}
Also, one can obtain
\begin{equation}
    \begin{aligned}
        &\left( \frac{1}{\sqrt{n}} \sum_{i=1}^{n} \Sigma^{-1/2}(X_i - \bar{X}) \otimes  (T_{Q^*}^{Q_i}- I_d) \right) - \left( \frac{1}{\sqrt{n}} \sum_{i=1}^{n} \Sigma^{-1/2}(X_i - \mu) \otimes  (T_{Q^*}^{Q_i}- I_d) \right) \\
        = &\frac{1}{\sqrt{n}} \sum_{i=1}^{n} \Sigma^{-1/2} (\mu- \bar{X}) \otimes (T_{Q^*}^{Q_i}- I_d)\\
        =& \Sigma^{-1/2} (\mu- \bar{X}) \otimes \sbr{  \frac{1}{\sqrt{n}} \sum_{i=1}^n \rbr{T_{Q^*}^{Q_i}- I_d}}\\
        = &o_p(1)
    \end{aligned}
    \label{eqn: proof_null_alpha1_2}
\end{equation}
Here the last line follows since $\bar{X}-\mu = o_p(1)$ and $\frac{1}{\sqrt{n}} \sum_{i=1}^{n} (T_{Q^*}^{Q_i}- I_d)$ is asymptotically normal with zero mean. The zero mean is justified in the following claim whose proof is deferred to Appendix \ref{subsec: prf_null_claim_ind}.
\begin{claim}
    \label{claim: null_ind}
    Under the null hypothesis (\ref{eqn: testing_null}) and Assumption \ref{assumption: minimizer_global}, \ref{assumption: cond_ind}, $X$ and $Q$ are independent and one has
    \begin{align}
        \EE (X-\mu) \otimes \rbr{T_{Q^*}^{Q}- I_d}  &= 0 \label{eqn: claim_ind_otimes}\\
        \EE \rbr{T_{Q^*}^{Q}- I_d}  &=0 \label{eqn: claim_ind_T}
    \end{align}
\end{claim}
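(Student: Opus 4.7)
The plan is to prove the three assertions in order, since each builds on the previous, and the whole argument is essentially bookkeeping around the Fréchet optimality condition together with independence.

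First I would establish the independence of $X$ and $Q$. Under the null hypothesis $Q^*(X) \equiv Q^*$, the conditioning variable in Assumption \ref{assumption: cond_ind} is a (deterministic) constant, so the conditional independence of $X$ and $Q$ given $Q^*(X)$ degenerates to unconditional independence of $X$ and $Q$. This gives the first part of the claim and supplies the factorization identity $\EE[f(X) g(Q)] = \EE f(X) \cdot \EE g(Q)$ used below.

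Next I would prove \eqref{eqn: claim_ind_T} by invoking the first-order optimality condition for $Q^*(x)$ in the Fréchet regression model. Differentiating $F(x,S) = \EE[w(x,X) W^2(S,Q)]$ in $S$ at the minimizer and using Lemma \ref{lem: diff}\ref{eqn: lem_diff_quad_approx} yields
\begin{equation*}
\EE\bigl[w(x,X)\bigl(T^{Q}_{Q^*(x)} - I_d\bigr)\bigr] = 0 \quad \text{for every } x \in \supp X.
\end{equation*}
Under the null this becomes $\EE[w(x,X)(T^{Q}_{Q^*} - I_d)] = 0$, and by the independence established above the expectation factors:
\begin{equation*}
\EE[w(x,X)] \cdot \EE\bigl[T^{Q}_{Q^*} - I_d\bigr] = 0.
\end{equation*}
A direct computation, $\EE[w(x,X)] = 1 + (x-\mu)^\top \Sigma^{-1} \EE(X-\mu) = 1$, then forces $\EE[T^{Q}_{Q^*} - I_d] = 0$, which is \eqref{eqn: claim_ind_T}.

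Finally, for \eqref{eqn: claim_ind_otimes}, independence of $X-\mu$ and $T^{Q}_{Q^*} - I_d$ implies the tensor expectation also factors, giving
\begin{equation*}
\EE\bigl[(X-\mu) \otimes (T^{Q}_{Q^*} - I_d)\bigr] = \EE(X-\mu) \otimes \EE\bigl[T^{Q}_{Q^*} - I_d\bigr] = 0,
\end{equation*}
since either factor already vanishes: the first by definition of $\mu$, and the second by \eqref{eqn: claim_ind_T}. No delicate estimation is involved — the only non-trivial ingredient is translating Assumption \ref{assumption: cond_ind} under the null into unconditional independence, and that is immediate once one observes that the conditioning variable is degenerate.
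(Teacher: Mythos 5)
Your proof is correct and follows essentially the same route as the paper's: derive unconditional independence from Assumption \ref{assumption: cond_ind} together with the degeneracy of $Q^*(X)$ under the null, invoke the first-order optimality condition and factor the expectation using $\EE w(x,X)=1$ to obtain \eqref{eqn: claim_ind_T}, and factor the tensor expectation using $\EE(X-\mu)=0$ to obtain \eqref{eqn: claim_ind_otimes}. The only cosmetic differences are the order in which you prove \eqref{eqn: claim_ind_otimes} and \eqref{eqn: claim_ind_T} (the paper does the tensor identity first, relying only on $\EE(X-\mu)=0$) and your slightly more explicit unpacking of why conditional independence given a constant collapses to unconditional independence.
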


Claim \ref{claim: null_ind} also implies that $\EE \Sigma^{-1/2}(X-\mu) \otimes (T_{Q^*}^{Q_i} - I_d)=0 $. Therefore, by the functional central limit theorem  \citep[Theorem 7.7.6]{hsing}, one can obtain
\begin{equation}
    \frac{1}{\sqrt{n}} \sum_{i=1}^{n} \Sigma^{-1/2}(X_i - \mu) \otimes  (T_{Q^*}^{Q_i}- I_d) \overset{w}{\to} \cN\left( 0, I_p \otimes \EE \left[ (T_{Q^*}^{Q} - I_d) \otimes (T_{Q^*}^{Q} - I_d) \right]\right)
    \label{eqn: proof_null_alpha1_3}
\end{equation}
Combining (\ref{eqn: proof_null_alpha1_1}), (\ref{eqn: proof_null_alpha1_2}) and (\ref{eqn: proof_null_alpha1_3}), we arrive at
\begin{equation*}
    \begin{aligned}
        \frac{1}{\sqrt{n}} \sum_{i=1}^{n} \hSigma_{\rho}^{-1/2}(X_i - \bar{X}) \otimes  (T_{Q^*}^{Q_i}-I_d) \overset{w}{\to} \cN\left( 0, I_p \otimes \EE \left[ (T_{Q^*}^{Q} - I_d) \otimes (T_{Q^*}^{Q} - I_d) \right]\right)
    \end{aligned}
\end{equation*}
which then implies that
\begin{equation}
    \sum_{k=1}^{n} \left\|\alpha_0(X_k) \right\|^2 \overset{w}{\to} \sum_{i}^{} \lambda_i w_i
    \label{eqn: proof_null_sum_alpha0}
\end{equation}
where $w_i$ are i.i.d. $\chi_p^2$ random variables and $\lambda_i$ are the eigenvalues of $\EE \left[ (T_{Q_*}^Q -I_d) \otimes (T_{Q_*}^Q -I_d) \right]$.

Finally, taking (\ref{eqn: prf_null_T_op1}) (\ref{eqn: prf_null_rem}) and (\ref{eqn: proof_null_sum_alpha0}) collectively yields
\begin{equation*}
    \hat{\cT}_{\rho} \overset{w}{\to} \sum_{i}^{}  \lambda_i w_i
\end{equation*}



\subsection{Proof of Lemma \ref{lem: prf_null_lem_s_sin}}
\label{subsec: proof_lem_diff_ab}

Apply Lemma \ref{lem: prf_null_lem_uni} and triangle inequality to see that under the null (\ref{eqn: testing_null}), with probability at least $1-O(n^{-100})$, one has
\begin{equation}
    \begin{aligned}
        \sup_{x \in B_\mu(L)} \Fnorm{\hQ_{\rho}(x) - Q^*} &\leq \frac{\polylog{n}}{\sqrt{n}}\\
        \sup_{x \in B_\mu(L)} \Fnorm{\hQ_{\rho}(x) -\hQ_{\rho}(\bar{X})} &\leq \sup_{x \in B_\mu(L)} \Fnorm{\hQ_{\rho}(x) -Q^*} + \sup_{x \in B_\mu(L)} \Fnorm{\hQ_{\rho}(\bar{X}) -Q^*} \leq \frac{\polylog{n}}{\sqrt{n}}
    \end{aligned}
    \label{eqn: prf_null_fast}
\end{equation}

\paragraph{Proof of (\ref{eqn: prf_null_lem_s_sin_0}):} note that
\begin{equation*}
    \begin{aligned}
        \alpha_0(x)-\talpha_0(x) &= \frac{1}{n}\sum_{i=1}^{n} \left( w_{n,\rho}(x,X_i)- w(x,X_i)\right) (T_{Q^*}^{Q_i} - I_d)\\
        &=A_n(x)- \tA_n(x)
    \end{aligned}
\end{equation*}
Hence (\ref{eqn: prf_null_lem_s_sin_0}) follows from Lemma \ref{lem: conc_s_sin}.

\paragraph{Proof of (\ref{eqn: prf_null_lem_s_sin_1}):} one can obtain
\begin{equation*}
    \begin{aligned}
        \Fnorm{\alpha_1(x) - \talpha_1(x)} &= \Fnorm{\frac{1}{n}\sum_{i=1}^{n} (w_{n,\rho}(x,X_i)-w(x,X_i)) dT^{Q_i}_{Q^*}\cdot (\hQ_{\rho}(x) -\hQ_{\rho}(\bar{X})) }\\
        &\leq \opnorm{\frac{1}{n}\sum_{i=1}^{n} (w_{n,\rho}(x,X_i)-w(x,X_i)) dT^{Q_i}_{Q^*}} \cdot \Fnorm{\hQ_{\rho}(x) -\hQ_{\rho}(\bar{X})}\\
        &= \opnorm{\Phi_n(x)- \tPhi_n(x)} \cdot \Fnorm{\hQ_{\rho}(x) -\hQ_{\rho}(\bar{X})}
    \end{aligned}
\end{equation*}
Hence (\ref{eqn: prf_null_lem_s_sin_1}) follows from Lemma \ref{lem: conc_s_sin} and (\ref{eqn: prf_null_fast}).

\paragraph{Proof of (\ref{eqn: prf_null_lem_s_sin_2}):} one can obtain
\begin{equation*}
    \begin{aligned}
        \Fnorm{\alpha_2(x) - \talpha_2(x)} &= \Fnorm{\frac{1}{n}\sum_{i=1}^{n} (w_{n,\rho}(x,X_i)-w(x,X_i)) d^2 T^{Q_i}_{\tQ_n(x)} \cdot (\hQ_{\rho}(x)-Q^*)^{\otimes 2} }\\
        &\leq \opnorm{\frac{1}{n}\sum_{i=1}^{n} (w_{n,\rho}(x,X_i)-w(x,X_i)) d^2 T^{Q_i}_{\tQ_n(x)}} \cdot \Fnorm{\hQ_{\rho}(x) -\hQ_{\rho}(\bar{X})}^2 \\
        &= \opnorm{\Psi_n(x,\tQ_n(x))- \tPsi_n(x,\tQ_n(x))} \cdot \Fnorm{\hQ_{\rho}(x) -\hQ_{\rho}(\bar{X})}^2
    \end{aligned}
\end{equation*}
Hence (\ref{eqn: prf_null_lem_s_sin_2}) follows from Lemma \ref{lem: conc_s_sin} and (\ref{eqn: prf_null_fast}).

\subsection{Proof of Lemma \ref{lem: prf_null_lem_til}}
\label{subsec: prf_null_lem_til}

Apply Lemma \ref{lem: prf_null_lem_uni} and triangle inequality to see that under the null (\ref{eqn: testing_null}), with probability at least $1-O(n^{-100})$, one has
\begin{equation}
    \begin{aligned}
        \sup_{x \in B_\mu(L)} \Fnorm{\hQ_{\rho}(x) - Q^*} &\leq \frac{\polylog{n}}{\sqrt{n}}\\
        \sup_{x \in B_\mu(L)} \Fnorm{\hQ_{\rho}(x) -\hQ_{\rho}(\bar{X})} &\leq \sup_{x \in B_\mu(L)} \Fnorm{\hQ_{\rho}(x) -Q^*} + \sup_{x \in B_\mu(L)} \Fnorm{\hQ_{\rho}(\bar{X}) -Q^*} \leq \frac{\polylog{n}}{\sqrt{n}}
    \end{aligned}
    \label{eqn: prf_null_lem2_fast}
\end{equation}

\paragraph{Proof of (\ref{eqn: prf_null_lem_til_0}):} 
Recall the definition of $\tA_n(x)$ in (\ref{eqn: prf_conv_def_til}), one can see that (\ref{eqn: prf_null_lem_til_0}) follows from (\ref{eqn: conc_lem_unfmA}) in Lemma \ref{lem: conc_tld_exp} with a slight and straightforward modification to accommodate the $w(x,X)-1$ term here. For brevity, we omit the proof.

\paragraph{Proof of (\ref{eqn: prf_null_lem_til_1}):}
\begin{equation}
    \begin{aligned}
        \Fnorm{\talpha_1} \leq \opnorm{\frac{1}{n}\sum_{i=1}^{n} (w(x,X_i)-1) dT^{Q_i}_{Q^*}} \cdot \Fnorm{\hQ_{\rho}(x) -\hQ_{\rho}(\bar{X})} 
    \end{aligned}
    \label{eqn: prf_null_lem_til1_1}
\end{equation}
Recall the definition of $\tPhi_n(x)$ in (\ref{eqn: prf_conv_def_til}), with a slight modification of the proof of (\ref{eqn: conc_lem_unfmPhi}) in Lemma \ref{lem: conc_tld_exp}, one can obtain
\begin{equation}
    \sup_{x \in B_\mu(L)}\opnorm{\frac{1}{n}\sum_{i=1}^{n} (w(x,X_i)-1) dT^{Q_i}_{Q^*}} \leq \frac{\polylog{n}}{\sqrt{n}}
    \label{eqn: prf_null_lem_til1_2}
\end{equation}
For brevity, the proof is omitted.

Combining (\ref{eqn: prf_null_lem2_fast}), (\ref{eqn: prf_null_lem_til1_1}) and (\ref{eqn: prf_null_lem_til1_2}) gives (\ref{eqn: prf_null_lem_til_1}).

\paragraph{Proof of (\ref{eqn: prf_null_lem_til_2}):} Apply triangle inequality to see that
\begin{equation*}
    \begin{aligned}
        \sup_{x \in B_\mu(L)} \Fnorm{\talpha_2(x)} &\leq \Fnorm{\tR_2(\bar{X})} + \sup_{x \in B_\mu(L)}\Fnorm{\tR_2(x)} \\
        &\leq 2\sup_{x \in B_\mu(L)}\Fnorm{\tR_2(x)}
    \end{aligned}
\end{equation*}
Moreover, one can obtain
\begin{equation*}
    \begin{aligned}
        \sup_{x \in B_\mu(L)} \Fnorm{\tR_2(x)} &\lesssim  \sup_{x \in B_\mu(L)}\opnorm{\frac{1}{n}\sum_{i=1}^{n} w(x,X_i) d^2 T^{Q_i}_{\tQ_n(x)}} \cdot \Fnorm{\hQ_{\rho}(x)-Q^*}^2\\
        &\leq \sup_{ \genfrac{}{}{0pt}{2}{x \in B_\mu(L)}{S \in \cS_d((2M_L)^{-1}, 2M_L)} } \opnorm{\tPsi_n(x,S)} \cdot \sup_{x \in B_\mu(L)}\Fnorm{\hQ_{\rho}(x)-Q^*}^2\\
        &\overset{(i)}{\leq} \frac{\polylog{n}}{n}
    \end{aligned}
\end{equation*}
Here (i) follows from (\ref{eqn: conc_lem_unfmPsi}), (\ref{eqn: conc_lem_E_Psi}) in Lemma \ref{lem: conc_tld_exp} as well as (\ref{eqn: prf_null_lem2_fast}).

\paragraph{Proof of (\ref{eqn: prf_null_lem_til_3}):} one can obtain
\begin{align}
    \sup_{x \in B_{\mu}(L)} \Fnorm{\alpha_3(x)} &= \sup_{x \in B_{\mu}(L)} \Fnorm{\left( \frac{1}{n}\sum_{i=1}^n dT_{Q^*}^{Q_i}-\frac{1}{n}\sum_{i=1}^n dT_{\hQ_{\rho}(\bar{X})}^{Q_i} \right) \cdot (\hQ_{\rho}(x) -\hQ_{\rho}(\bar{X}))} \nonumber\\
    &\leq \opnorm{ \frac{1}{n}\sum_{i=1}^n dT_{Q^*}^{Q_i}-\frac{1}{n}\sum_{i=1}^n dT_{\hQ_{\rho}(\bar{X})}^{Q_i} } \cdot \sup_{x \in B_{\mu}(L)} \Fnorm{\hQ_{\rho}(x) -\hQ_{\rho}(\bar{X})} \label{eqn: prf_null_lem_til1_a3}
\end{align}
Moreover, one has
\begin{align}
    \opnorm{ \frac{1}{n}\sum_{i=1}^n dT_{Q^*}^{Q_i}-\frac{1}{n}\sum_{i=1}^n dT_{\hQ_{\rho}(\bar{X})}^{Q_i} }
    &\overset{(i)}{\leq} \frac{1}{n} \sum_{i=1}^{n} \opnorm{dT_{Q^*}^{Q_i} - dT_{\hQ_{\rho}(\bar{X})}^{Q_i}} \nonumber\\
    &\overset{(ii)}{\leq} \frac{1}{n} \sum_{i=1}^{n} \opnorm{d^2 T^{Q_i}_{Q_{i,n}'} \cdot (\hQ_{\rho}(\bar{X}) - Q^*)} \nonumber\\
    &\overset{(iii)}{\leq}  \frac{1}{n} \sum_{i=1}^{n} \opnorm{d^2 T^{Q_i}_{Q_{i,n}'}} \cdot \Fnorm{\hQ_{\rho}(\bar{X}) - Q^*} \label{eqn: prf_null_lem_til1_dT}
\end{align}
Here (i) follows from triangle inequality, (ii) from the mean value theorem \citep[Theorem 5.3]{dudley} for some $Q'_{i,n}$ that lies on the segment between $Q^*$ and $\hQ_{\rho}(\bar{X})$, and (iii) arises from Lemma \ref{lem: f_cal}. Therefore, with a slight modification of the proof of (\ref{eqn: conc_lem_E_Psi}) in Lemma \ref{lem: conc_tld_exp}, one can obtain with probability at least $1-O(n^{-100})$,
\begin{equation}
    \frac{1}{n} \sum_{i=1}^{n} \opnorm{d^2 T^{Q_i}_{Q_{i,n}'}} \leq \polylog{n}
    \label{eqn: prf_null_lem_til1_d2T_sum}
\end{equation}
For brevity, the proof is omitted.

Finally, combining (\ref{eqn: prf_null_lem_til1_a3}), (\ref{eqn: prf_null_lem_til1_dT}) and (\ref{eqn: prf_null_lem_til1_d2T_sum}) gives
\begin{align*}
    \sup_{x \in B_{\mu}(L)} \Fnorm{\alpha_3(x)} 
    &\lesssim \sbr{\frac{1}{n} \sum_{i=1}^{n} \opnorm{d^2 T^{Q_i}_{Q_{i,n}'}}} \cdot \sbr{\sup_{x \in B_{\mu}(L)} \Fnorm{ \hQ_{\rho}(x) - Q^*(x) } }^2\\
    &\leq \frac{\polylog{n}}{n}
\end{align*}
which finishes the proof for (\ref{eqn: prf_null_lem_til_3}).

\subsection{Proof of Claim \ref{claim: null_ind}}
\label{subsec: prf_null_claim_ind}
The independence follows directly from Assumption \ref{assumption: cond_ind} and the null hypothesis (\ref{eqn: testing_null}).

\paragraph{Proof of (\ref{eqn: claim_ind_otimes}):} by independence, we have
\begin{align*}
    \EE (X-\mu) \otimes \rbr{T_{Q^*}^Q - I_d} &= \sbr{\EE(X-\mu)} \otimes \sbr{\EE\rbr{T_{Q^*}^Q - I_d}}\\
    &= 0
\end{align*}

\paragraph{Proof of (\ref{eqn: claim_ind_T}):} The optimality condition of $Q^*(x)$ gives that
\begin{align*}
    \EE w(x,X)\rbr{T_{Q^*(x)}^Q - I_d} = 0
\end{align*}
By independence, one then has
\begin{align*}
    \EE \rbr{T_{Q^*(x)}^Q - I_d} &\overset{(i)}{=} \sbr{\EE w(x,X)} \cdot \sbr{\EE w(x,X)\rbr{T_{Q^*(x)}^Q - I_d}}\\
    &= \EE w(x,X)\rbr{T_{Q^*(x)}^Q - I_d} \\
    &= 0
\end{align*}
Here (i) follows from the fact that $\EE w(x,X)\equiv 1$.

    \section{Proof of Proposition \ref{prop: size}}
    \label{sec: prf_size}
    Note that under the null (\ref{eqn: testing_null}), it holds that $Q^*(\bar{X})=Q^*$. Then Theorem \ref{thm: unfm_conv} implies that with probability at least $1-O(n^{-100})$,
\begin{align*}
    \Fnorm{\hQ_{\rho}(\bar{X}) - Q^*} \leq \frac{\polylog{n}}{\sqrt{n}}
\end{align*}
Therefore, one has with probability at least $1-O(n^{-100})$,
\begin{align}
    \hQ_{\rho}(\bar{X}) \in \cS_d((2c_1)^{-1},2c_1)
    \label{eqn: prf_size_set}
\end{align}
for $n$ large enough. Here we recall that $c_1 \geq 1$ is a constant defined in Assumption \ref{assumption: bdd_Q}.

One can apply the triangle inequality to see that
\begin{align}
    &\Fnorm{\frac{1}{n}\sum_{i=1}^{n} \rbr{T^{Q_i}_{\hQ_{\rho}(\bar{X})} - I_d} \otimes \rbr{T^{Q_i}_{\hQ_{n}(\bar{X})} - I_d} - \EE \rbr{T^{Q}_{Q^*} - I_d} \otimes \rbr{T^{Q}_{Q^*} - I_d}} \nonumber\\
    \leq & \Fnorm{\frac{1}{n}\sum_{i=1}^{n} \rbr{T^{Q_i}_{\hQ_{\rho}(\bar{X})} - I_d} \otimes \rbr{T^{Q_i}_{\hQ_{n}(\bar{X})} - I_d} - \EE \rbr{T^{Q}_{\hQ_{\rho}(\bar{X})} - I_d} \otimes \rbr{T^{Q}_{\hQ_{\rho}(\bar{X})} - I_d}} \nonumber\\
    &+ \Fnorm{\EE \rbr{T^{Q}_{\hQ_{\rho}(\bar{X})} - I_d} \otimes \rbr{T^{Q}_{\hQ_{\rho}(\bar{X})} - I_d} - \EE \rbr{T^{Q}_{Q^*} - I_d} \otimes \rbr{T^{Q}_{Q^*} - I_d}} \nonumber\\
    \overset{(i)}{\leq} & \underbrace{\sup_{S \in \cS_d((2c_1)^{-1},2c_1)} \Fnorm{\frac{1}{n}\sum_{i=1}^{n} \rbr{T^{Q_i}_{S} - I_d} \otimes \rbr{T^{Q_i}_{S} - I_d} - \EE \rbr{T^{Q}_{S} - I_d} \otimes \rbr{T^{Q}_{S} - I_d}}}_{\zeta_1} \nonumber\\
    &+ \underbrace{\sup_{S: \Fnorm{S-Q^*}\leq \polylog{n}/\sqrt{n}}  \Fnorm{\EE \rbr{T^{Q}_{S} - I_d} \otimes \rbr{T^{Q}_{S} - I_d} - \EE \rbr{T^{Q}_{Q^*} - I_d} \otimes \rbr{T^{Q}_{Q^*} - I_d}} }_{\zeta_2}
    \label{eqn: prf_size_approx_F_decomp}
\end{align} 
Here (i) follows from (\ref{eqn: prf_size_set}).

Upper bounds for $\zeta_1,\zeta_2$ in (\ref{eqn: prf_size_approx_F_decomp}) are summarized in the lemma below whose proof is deferred to Appendix \ref{subsec: prf_size_lem_lem_approx}. Note that Lemma \ref{lem: prf_size_lem_approx} do not assume the null hypothesis so that it can be reused later for the proof of the power (Theorem \ref{thm: power}).
\begin{lemma}
    \label{lem: prf_size_lem_approx}
    Suppose Assumption \ref{assumption: X}-\ref{assumption: cond_ind} hold. Then with probability at least $1-O(n^{-100})$,
    \begin{align}
        \zeta_1 &\leq \frac{\polylog{n}}{\sqrt{n}} \label{eqn: prf_size_lem_z1}\\
        \zeta_2 & \leq \frac{\polylog{n}}{\sqrt{n}} \label{eqn: prf_size_lem_z2}
    \end{align}
\end{lemma}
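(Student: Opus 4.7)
The plan is to reduce both bounds to the machinery already developed in Lemmas \ref{lem: conc_tld_exp}--\ref{lem: conc_s_sin}, exploiting that after truncating $X$ to the ball $B_\mu(L_\tau)$ with $L_\tau = O(\sqrt{\log n})$ (at cost $O(n^{-\tau})$ in probability), Assumption \ref{assumption: bdd_Q} delivers $Q \in \cS_d(M_{L_\tau}^{-1}, M_{L_\tau})$ almost surely, so the integrand $\rbr{T^Q_S-I_d}\otimes \rbr{T^Q_S-I_d}$ is uniformly bounded in operator norm and uniformly Lipschitz in $S$ on compact domains. I will establish \eqref{eqn: prf_size_lem_z2} first as the deterministic ingredient, then \eqref{eqn: prf_size_lem_z1} as the empirical-process statement.

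For $\zeta_2$, apply the identity $A\otimes A - B\otimes B = (A-B)\otimes A + B \otimes (A-B)$ with $A = T^Q_S - I_d$ and $B = T^Q_{Q^*}-I_d$ inside the expectation. By the mean value theorem \citep[Theorem 5.3]{dudley} and Lemma \ref{lem: diff}\ref{eqn: lem_diff_dkT_QS}, for every $Q$ in the truncated range one has $\Fnorm{T^Q_S - T^Q_{Q^*}} \le \opnorm{dT^Q_{S'}}\Fnorm{S-Q^*} \lesssim \polylog{n}\cdot \Fnorm{S-Q^*}$ uniformly over $S' $ on the segment joining $Q^*$ and $S$ provided $\Fnorm{S-Q^*} \le \polylog{n}/\sqrt n$ (so $S$ stays in a bounded eigenvalue window around $Q^* \in \cS_d(c_1^{-1}, c_1)$). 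Combining these with $\Fnorm{T^Q_{Q^*}-I_d}\vee \Fnorm{T^Q_S - I_d}\lesssim \polylog{n}$ and taking expectation on the truncation event yields $\zeta_2 \le \polylog{n}\cdot \polylog{n}/\sqrt n$; the contribution from the tail event where $X\notin B_\mu(L_\tau)$ is $O(n^{-\tau})$, which is absorbed.

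For $\zeta_1$, mirror the proof of \eqref{eqn: conc_lem_unfmPsi} in Lemma \ref{lem: conc_tld_exp}. Decompose
\begin{equation*}
\zeta_1 \le \bigl(\zeta_1 - \EE \zeta_1\bigr) + \EE \zeta_1.
\end{equation*}
After truncation, each summand $\bigl(T^{Q_i}_S - I_d\bigr)\otimes \bigl(T^{Q_i}_S-I_d\bigr)$ has operator norm bounded by $\polylog{n}$ uniformly over $i$ and $S\in\cS_d((2c_1)^{-1},2c_1)$, so Lemma \ref{lem: bdd_diff} provides $\zeta_1 - \EE\zeta_1 \lesssim \polylog{n}/\sqrt n$ with probability at least $1-O(n^{-100})$. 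For the expectation, apply Lemma \ref{lem: Rademacher} with parameter space $\Theta = \cS_d((2c_1)^{-1},2c_1)$ of diameter $O(1)$ and metric entropy $\log N(\epsilon;\Theta) \lesssim d^2 \log(1/\epsilon)$. The requisite Lipschitz estimate is
\begin{equation*}
\gnorm{\opnorm{\rbr{T^Q_S - I_d}\otimes \rbr{T^Q_S-I_d} - \rbr{T^Q_{\tS} - I_d}\otimes \rbr{T^Q_{\tS}-I_d}}} \lesssim \polylog{n}\cdot \Fnorm{S-\tS},
\end{equation*}
obtained from the same tensor identity used for $\zeta_2$ together with the uniform bounds $\Fnorm{T^Q_S - I_d},\opnorm{dT^Q_{S'}} \lesssim \polylog{n}$ from Lemma \ref{lem: diff}. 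Plugging into \eqref{eqn: lem_rad_lip_avg} and then \eqref{eqn: lem_rad_chain_holder} with $\alpha_0 = 1$ yields $\EE \zeta_1 \lesssim \polylog{n}/\sqrt n$, completing \eqref{eqn: prf_size_lem_z1}.

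The main obstacle is not conceptual but bookkeeping: each operator-norm estimate on $T^Q_S-I_d$ and $dT^Q_S$ picks up a polynomial factor in $M_{L_\tau} = \gamma_\Lambda(L_\tau)$, and one must verify that all of these factors remain $\polylog{n}$ under the truncation level $L_\tau \asymp \sqrt{\log n}$ and the restriction $S\in \cS_d((2c_1)^{-1}, 2c_1)$ (which, because $c_1$ is a fixed constant, keeps the dependence on $S$ harmless). Once that check is made the result is a direct specialization of the chaining/bounded-difference template already used throughout Appendix \ref{sec: prf_unfm_conv}.
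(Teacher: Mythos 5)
Your outline for $\zeta_1$ is exactly what the paper has in mind: the paper's proof of~\eqref{eqn: prf_size_lem_z1} simply says ``the proof is similar to Lemma~\ref{lem: conc_tld_exp} and is hence omitted,'' and your bounded-difference plus Dudley-chaining sketch is the standard instantiation of that template. For $\zeta_2$, however, you take a genuinely different route than the paper. You truncate $X$ to $B_\mu(L_\tau)$, get a $\polylog{n}$ Lipschitz constant on the truncation event, and then assert that the tail contribution is $O(n^{-\tau})$. The paper does not truncate at all: it exploits that, by Assumption~\ref{assumption: bdd_Q}, the pointwise product $\Fnorm{T^Q_S - I_d}\cdot \opnorm{dT^Q_{S'}}$ is bounded by a fixed polynomial in $\norm{X-\mu}$ uniformly over $S,S' \in \cS_d((2c_1)^{-1},2c_1)$, and then uses sub-Gaussianity of $X$ to take expectation directly, yielding an $O(1)$ constant in front of $\Fnorm{S-Q^*}$ rather than a $\polylog{n}$. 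Both approaches land on $\polylog{n}/\sqrt n$, but the paper's is cleaner because it sidesteps the tail estimate entirely (and in fact gives a slightly tighter constant).

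Your tail step is the one place where you owe an argument: on the event $\norm{X-\mu}>L_\tau$ the integrand $\Fnorm{\phi(Q,S)-\phi(Q,Q^*)}$ is \emph{unbounded}, growing polynomially in $\norm{X-\mu}$ through $\gamma_\Lambda(\norm{X-\mu})$, so multiplying a uniform bound by the tail probability does not apply. You need something like
\begin{equation*}
\EE\bigl[\text{poly}(\norm{X-\mu})\,\1(\norm{X-\mu}>L_\tau)\bigr]
\le \Bigl(\EE\,\text{poly}(\norm{X-\mu})^2\Bigr)^{1/2}\,\PP\bigl(\norm{X-\mu}>L_\tau\bigr)^{1/2}\lesssim n^{-\tau/2},
\end{equation*}
which combined with the extra factor $\Fnorm{S-Q^*}\le \polylog{n}/\sqrt n$ still gives a negligible contribution. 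This is fixable in one line, but as written the step is asserted rather than proved. Also note $\zeta_2$ is deterministic (it is an expectation), so the ``with probability'' clause of the lemma refers only to $\zeta_1$; your phrasing slightly blurs this.
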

Lemma \ref{lem: prf_size_lem_approx} combined with (\ref{eqn: prf_size_approx_F_decomp}) then implies
\begin{align}
    \Fnorm{\frac{1}{n}\sum_{i=1}^{n} \rbr{T^{Q_i}_{\hQ_{\rho}(\bar{X})} - I_d} \otimes \rbr{T^{Q_i}_{\hQ_{n}(\bar{X})} - I_d} - \EE \rbr{T^{Q}_{Q^*} - I_d} \otimes \rbr{T^{Q}_{Q^*} - I_d}} \leq \frac{\polylog{n}}{\sqrt{n}}
    \label{eqn: prf_size_approx_F}
\end{align}
As a result, if $\rbr{\lambda_i}_{i \in [d^2]}$ are sorted in order, then one has $\hlambda_i \to \lambda_i$ uniformly for $i \in [d^2]$ in probability which further implies that 
\begin{equation*}
    \sum_{i=1}^{d^2} \hlambda_{i}w_i \overset{p}{\rightarrow} \sum_{i=1}^{d^2} \lambda_{i}w_i
\end{equation*}
Then the continuous mapping theorem implies that $\hq_{1-\alpha} \to q_{1-\alpha}$ in probability. Then one can obtain
\begin{equation*}
    \begin{aligned}
        \PP\rbr{ \hat{\cT}_{\rho} > \hq_{1-\alpha}} \leq \PP \rbr{\hat{\cT}_{\rho} > q_{1-\alpha}-\epsilon} + \PP\rbr{\abs{\hq_{1-\alpha} - q_{1-\alpha}}>\epsilon}
    \end{aligned}
\end{equation*}
Taking the limit as $n \to \infty$, followed by letting $\epsilon \to 0$ to get that
\begin{equation}
    \limsup_{n \to \infty}\PP\rbr{ \hat{\cT}_{\rho} > \hq_{1-\alpha}} \leq \alpha
    \label{eqn: prf_size_limsup}
\end{equation}
A similar lower bound shows
\begin{equation}
    \liminf_{n \to \infty}\PP\rbr{ \hat{\cT}_{\rho} > \hq_{\alpha}} \geq \alpha
    \label{eqn: prf_size_liminf}
\end{equation}
Finally, combining (\ref{eqn: prf_size_limsup}) and (\ref{eqn: prf_size_liminf}) completes the proof.

\subsection{Proof of Lemma \ref{lem: prf_size_lem_approx}}
\label{subsec: prf_size_lem_lem_approx}

\paragraph{Proof of (\ref{eqn: prf_size_lem_z1}):} the proof is similar to Lemma \ref{lem: conc_tld_exp}, and is hence omitted for brevity.

\paragraph{Proof of (\ref{eqn: prf_size_lem_z2}):} First, recall differential properties (Lemma \ref{lem: diff}) that
\begin{equation}
    \begin{aligned}
        \Onorm{T^Q_S} &\leq \eigmax{Q}\cdot \eigmin{Q}^{-1/2} \cdot \eigmin{S}^{-1/2}\\
        \opnorm{dT^Q_S} &\leq \frac{1}{2} \eigmax{Q}^{1/2} \cdot \eigmin{S}^{-2}\cdot \eigmax{S}^{1/2}
    \end{aligned}
    \label{eqn: prf_size_lem_diff}
\end{equation}
Also, denote $\phi(Q,S):= \rbr{T^{Q}_{S} - I_d} \otimes \rbr{T^{Q}_{S} - I_d}$. 

With these results in place, one can obtain
\begin{equation*}
    \begin{aligned}
        \Fnorm{ \phi(Q,S) -  \phi(Q,Q^*)} 
        &\leq \Fnorm{ \rbr{T^{Q}_{S} - I_d} \otimes \rbr{T^{Q}_{S} - T^Q_{Q^*}}} + \Fnorm{ \rbr{T^Q_{S} - T^Q_{Q^*}} \otimes \rbr{T^Q_{Q^* }- I_d}}\\
        &\leq \Fnorm{T^{Q}_{S} - I_d} \cdot \Fnorm{T^{Q}_{S} - T^Q_{Q^*}} + \Fnorm{T^Q_{S} - T^Q_{Q^*}} \cdot \Fnorm{T^Q_{Q^*}- I_d}\\
        &\leq \Fnorm{T^{Q}_{S} - I_d} \cdot \opnorm{dT^{Q}_{S'}}\cdot \Fnorm{S-Q^*} + \Fnorm{T^Q_{Q^* }- I_d}\cdot \opnorm{dT^{Q}_{S'}}\cdot \Fnorm{S-Q^*}
    \end{aligned}
\end{equation*}
where $S'$ lies between $Q^*$ and $S$. Note that Assumption \ref{assumption: bdd_Q} and the condition $\Fnorm{S-Q^*}\leq \polylog{n}/\sqrt{n}$ implies that
\begin{align*}
    S,Q^* \in \cS_d((2c_1)^{-1}, 2c_1)
\end{align*}
Then for any $S \in \cS_d((2c_1)^{-1}, 2c_1)$, one has
\begin{equation*}
    \begin{aligned}
        \EE \Fnorm{T^{Q}_{S} - I_d} \cdot \opnorm{dT^{Q}_{S'}} &\overset{(i)}{\lesssim} \EE \rbr{ \norm{X-\mu}^{3C_1/2} + 1} \cdot \norm{X-\mu}^{C_1/2}\\
        &\overset{(ii)}{\lesssim} 1
    \end{aligned}
\end{equation*}
Here $C_1$ is defined in Assumption \ref{assumption: bdd_Q}, (i) follows from (\ref{eqn: prf_size_lem_diff}) and (ii) is a result of the sub-Gaussianity of $X$. Similarly, one has
\begin{equation*}
    \EE \Fnorm{T^Q_{Q^* }- I_d}\cdot \opnorm{dT^{Q}_{S'}} \lesssim 1
\end{equation*}
Combining results above, one can obtain
\begin{align*}
    \zeta_2 &\leq  \sup_{S: \Fnorm{S-Q^*}\leq \polylog{n}/\sqrt{n}} \EE_Q \Fnorm{ \phi(Q,S) -  \phi(Q,Q^*)}\\
    &\lesssim \sup_{S: \Fnorm{S-Q^*}\leq \polylog{n}/\sqrt{n}} 1 \cdot \Fnorm{S-Q^*}\\
    &\leq \frac{\polylog{n}}{\sqrt{n}}
\end{align*}
The proof is then complete.

    \section{Proof of Theorem \ref{thm: power}}
    \label{sec: proof_power}
    As argued at the beginning of Appendix \ref{subsec: conc_tld_exp}, one can assume without loss of generality that $\norm{X-\mu}\leq L$ almost surely with $L=C_L \sqrt{\log n}$ for some constant $C_L$ large enough as in (\ref{eqn: prf_conv_E0}). Recall the notation that $M_L:=\gamma_{\Lambda}(L)$.

\paragraph{Power under Frobenius norm:} First we demonstrate concentration of various quantities of interest and derive their consequences in Lemma \ref{lem: prf_power} below. The proof is in Appendix \ref{subsubsec: prf_power_lem}.
\begin{lemma}
    \label{lem: prf_power}
     Instate the notations and assumptions in Theorem \ref{thm: power}. Assume in addition that 
    $\norm{X-\mu}\leq L$ almost surely.
    Then there exists an event $E_n$ that satisfies $\PP(E_n) \geq 1-O(n^{-100})$ for any $\PP \in \mathfrak{P}$, under which the following inequalities
    \begin{align}
        \norm{X_i-\mu} \leq L, &\quad  Q_i \in \cS_d (M_L^{-1},M_L) \qquad \text{for} \quad i \in [n] \label{eqn: prf_pow_XQ}\\
        \sup_{x \in B_\mu(L)}\Fnorm{\hQ_{\rho}(x) - Q^*(x)} &\leq \frac{\polylog{n}}{\sqrt{n}} \label{eqn: prf_pow_hQ_Q}\\
        \cbr{\hQ_{\rho}(x):x \in B_{\mu}(L)} &\subset \cS_d \rbr{ \rbr{2^{1/6}M_L}^{-1}, 2^{1/6}M_L} \label{eqn: prf_pow_hQ_in}\\
        \sum_{i=1}^{n} \Fnorm{Q^*(X_i) - Q^*(\mu)}^2 &\geq \frac{n a_n^2}{2} \label{eqn: prf_pow_sum1}\\
        \Fnorm{\hQ_{\rho}(\bar{X})-Q^*(\mu)} &\leq \frac{\polylog{n}}{\sqrt{n}} \label{eqn: prf_pow_bary}\\
        \abs{\hlambda_i} &\leq 2 \lambda_1, \qquad i \in \sbr{d^2} \label{eqn: prf_pow_eig}
    \end{align}
    hold for $n$ large enough.
\end{lemma}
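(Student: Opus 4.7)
Proof proposal for Lemma \ref{lem: prf_power}. The plan is to define $E_n$ as the intersection of six sub-events, one for each displayed inequality, and bound the failure probability of each by $O(n^{-100})$ via a union bound. Throughout I work under the WLOG truncation $\norm{X-\mu}\le L$ a.s., so (\ref{eqn: prf_pow_XQ}) is immediate from Assumption \ref{assumption: bdd_Q}: conditional on $X=x$ with $\norm{x-\mu}\le L$, we have $Q\in \cS_d(\gamma_\Lambda(L)^{-1},\gamma_\Lambda(L))=\cS_d(M_L^{-1},M_L)$ almost surely. For (\ref{eqn: prf_pow_hQ_Q}), I would apply Theorem \ref{thm: unfm_conv} with $L_n=L$ and $\tau=100$, which is valid because Assumptions \ref{assumption: X}--\ref{assumption: minimizer_global} hold uniformly over $\mathfrak{P}$.

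Inequality (\ref{eqn: prf_pow_hQ_in}) will follow from (\ref{eqn: prf_pow_hQ_Q}) combined with the envelope $Q^*(x)\in \cS_d(M_L^{-1},M_L)$ from Lemma \ref{lem: Q*x}: for $n$ large enough, the $\polylog{n}/\sqrt{n}$ perturbation is smaller than $(2^{1/6}-1)M_L^{-1}$, so the eigenvalue stability inequality gives the desired containment. For the barycenter bound (\ref{eqn: prf_pow_bary}), I evaluate (\ref{eqn: prf_pow_hQ_Q}) at the (random) point $x=\bar X$, which lies in $B_\mu(L)$ with probability $1-O(n^{-100})$ by sub-Gaussianity of $X$; this yields $\Fnorm{\hQ_{\rho}(\bar X)-Q^*(\bar X)}\le \polylog{n}/\sqrt{n}$. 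Then the H\"older continuity of $Q^*$ from Lemma \ref{lem: holder} together with $\norm{\bar X-\mu}=O_p(n^{-1/2})$ bounds $\Fnorm{Q^*(\bar X)-Q^*(\mu)}$ by $\polylog{n}/\sqrt{n}$; a triangle inequality closes the argument.

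The sum estimate (\ref{eqn: prf_pow_sum1}) is a concentration statement. Since $\PP\in\mathfrak{P}_F(a_n)$ gives $\EE\Fnorm{Q^*(X)-Q^*(\mu)}^2\ge a_n^2$ and under (\ref{eqn: prf_pow_XQ}) the summands $\xi_i:=\Fnorm{Q^*(X_i)-Q^*(\mu)}^2$ are uniformly bounded by $C\cdot M_L^2$ (Lemma \ref{lem: Q*x}), Hoeffding's inequality yields
\begin{equation*}
\PP\!\left(\sum_{i=1}^n \xi_i \le \tfrac{n a_n^2}{2}\right)\le \exp\!\left(-\frac{c\, n a_n^4}{M_L^4}\right).
\end{equation*}
The assumption $a_n\gtrsim n^{-1/2+\alpha_2}$ gives $na_n^4\gtrsim n^{4\alpha_2-1}\cdot n=n^{4\alpha_2}$, which, together with $M_L\lesssim \polylog{n}$, makes this bound $o(n^{-100})$ for $n$ large enough. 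For (\ref{eqn: prf_pow_eig}), I would reuse the Frobenius-norm covariance-operator estimate (\ref{eqn: prf_size_approx_F}) from the proof of Proposition \ref{prop: size}; note that the proof of Lemma \ref{lem: prf_size_lem_approx} does not actually use the null hypothesis. Combined with (\ref{eqn: prf_pow_bary}), Weyl's inequality for ordered eigenvalues gives $\max_i |\hat\lambda_i-\lambda_i|\le \polylog{n}/\sqrt{n}$, so $|\hat\lambda_i|\le \lambda_1+\polylog{n}/\sqrt{n}\le 2\lambda_1$ for $n$ large (noting $\lambda_1>0$ is fixed by the alternative, since $Q^*(\mu)\ne Q^*(X)$ with positive probability).

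The only step requiring delicacy is (\ref{eqn: prf_pow_sum1}), because the threshold $a_n^2/2$ can itself shrink with $n$; the Hoeffding application above succeeds precisely thanks to the prescribed rate $a_n\gtrsim n^{-1/2+\alpha_2}$ with $\alpha_2>0$. A final union bound over the six sub-events, each of failure probability $O(n^{-100})$, defines $E_n$ and completes the proof.
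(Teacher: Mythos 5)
Your plan matches the paper's for (\ref{eqn: prf_pow_XQ})--(\ref{eqn: prf_pow_hQ_in}) and (\ref{eqn: prf_pow_eig}), but there are two genuine gaps in the remaining two items.

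First, for (\ref{eqn: prf_pow_bary}) you propose the decomposition $\hQ_{\rho}(\bar X)-Q^*(\mu) = \bigl(\hQ_{\rho}(\bar X)-Q^*(\bar X)\bigr) + \bigl(Q^*(\bar X)-Q^*(\mu)\bigr)$ and then invoke H\"older continuity of $Q^*$ from Lemma \ref{lem: holder}. But Lemma \ref{lem: holder} only gives $\Fnorm{Q^*(\bar X)-Q^*(\mu)} \lesssim \polylog{n}\cdot \norm{\bar X-\mu}^{1/\alpha_F}$ with $\alpha_F\ge 1$, and under the alternative no assumption pins $\alpha_F=1$ (Assumption \ref{assumption: minimizer_global} only requires $\alpha_F\ge 1$, and Lemma \ref{lem: F_ind} gives $\alpha_F=2$ even under the null). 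With $\norm{\bar X-\mu}=O_p(n^{-1/2})$ you only get $O(n^{-1/(2\alpha_F)})$, which is slower than $n^{-1/2}$ whenever $\alpha_F>1$. The paper instead observes that $\hQ_{\rho}(\bar X)$ and $Q^*(\mu)$ are precisely the empirical and population Bures--Wasserstein barycenters, and appeals to Lemma \ref{lem: barycenter}, which uses the variance inequality of \citet{che20_gd} and the fast-rate machinery of \citet{rig22_fast_alex} and \citet{alt21_averaging} to get the parametric rate directly, without routing through the continuity of $x\mapsto Q^*(x)$.

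Second, for (\ref{eqn: prf_pow_sum1}) your use of Hoeffding is too crude and contains an arithmetic slip. With summands bounded by $M_F\lesssim M_L^2$, Hoeffding gives a tail of order $\exp(-c\,n a_n^4/M_F^2)$, and $n a_n^4 \gtrsim n\cdot n^{-2+4\alpha_2} = n^{-1+4\alpha_2}$, not $n^{4\alpha_2}$ as you wrote. For $\alpha_2\le 1/4$ this exponent does \emph{not} diverge, so Hoeffding does not yield the required $O(n^{-100})$ failure probability. The paper fixes this by using Bernstein's inequality and exploiting the small-variance bound $\Var(Y_i)\le \EE|Y_i|\, M_F \le 2a_n^2 M_F$, which gives $|\sum_i Y_i|\lesssim M_F\log n + \sqrt{n a_n^2 M_F\log n}$; this deviation is $o(n a_n^2)$ precisely because $n a_n^2\gtrsim n^{2\alpha_2}\to\infty$ for \emph{every} $\alpha_2>0$. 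The Bernstein route is therefore essential here, not merely a refinement.

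As a minor remark on (\ref{eqn: prf_pow_eig}): the condition ensuring $\lambda_1>0$ is that $\PP(Q\neq Q^*(\mu))>0$ (i.e., nondegenerate noise), not that $Q^*(\mu)\neq Q^*(X)$ with positive probability.
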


Next, note that (\ref{eqn: lem_diff_dT_eigen}) in Lemma \ref{lem: diff}  implies that for any $Q,S \in \cS_d \rbr{ \rbr{2^{1/6}M_L}^{-1}, 2^{1/6}M_L}$ (here $2^{1/6}$ is chosen only for technical computation), one has
\begin{equation*}
    \frac{1}{2\sqrt{2} M_L^3}\leq \eigmin{-dT^Q_S} \leq \eigmax{-dT^Q_S} \leq \frac{\sqrt{2}}{2}M_L^3
\end{equation*}
which then implies that under $E_n$, the following holds.
\begin{equation*}
    \begin{aligned}
        \eigmin{\hat{H}} \geq \frac{1}{n}\sum_{i=1}^{n} \eigmin{-dT^{Q_i}_{\hQ_{\rho}(\bar{X})}} \geq \frac{1}{2\sqrt{2} M_L^3}
    \end{aligned}
\end{equation*}
Therefore, under $E_n$, one has 
\begin{equation}
    \begin{aligned}
        \hat{\cT}_{\rho} \geq \frac{1}{8 M_L^6} \sum_{i=1}^{n} \Fnorm{\hQ_{\rho}(X_i) - \hQ_{\rho}(\bar{X})}^2
    \end{aligned}
    \label{eqn: prf_pow_test2Frob}
\end{equation}

Then from the following decomposition 
\begin{equation*}
    \hQ_{\rho}(X_i) - \hQ_{\rho}(\bar{X}) = Q^*(X_i) - Q^*(\mu) + \underbrace{\hQ_{\rho}(X_i) - Q^*(X_i) + Q^*(\mu) - \hQ_{\rho}(\bar{X})}_{\Delta_i}
\end{equation*}
one can obtain
\begin{align}
    &\sum_{i=1}^{n} \Fnorm{\hQ_{\rho}(X_i) - \hQ_{\rho}(\bar{X})}^2 \nonumber\\
    \overset{(i)}{=}& \sum_{i=1}^{n} \Fnorm{Q^*(X_i) - Q^*(\mu)}^2 + 2\sum_{i=1}^{n} \inner{Q^*(X_i) - Q^*(\mu)}{\Delta_i}+\sum_{i=1}^{n} \Fnorm{\Delta_i}^2 \nonumber\\
    \overset{(ii)}{\geq}& \sum_{i=1}^{n} \Fnorm{Q^*(X_i) - Q^*(\mu)}^2 - 2\rbr{\sum_{i=1}^{n}  \Fnorm{Q^*(X_i) - Q^*(\mu)}^2}^{1/2} \cdot \rbr{\sum_{i=1}^{n}  \Fnorm{\Delta_i}^2}^{1/2} +\sum_{i=1}^{n} \Fnorm{\Delta_i}^2 \nonumber\\
    \geq & \rbr{\sum_{i=1}^{n} \Fnorm{Q^*(X_i) - Q^*(\mu)}^2} \cdot \rbr{1- 2\rbr{\frac{\sum_{i=1}^{n}  \Fnorm{\Delta_i}^2}{\sum_{i=1}^{n} \Fnorm{Q^*(X_i) - Q^*(\mu)}^2}}^{1/2}}
    \label{eqn: prf_pow_Fro_lower1}
\end{align}
Here (i) follows by developing the square, (ii) is a consequence of the Cauchy-Schwarz inequality.

Lemma \ref{lem: prf_power} implies that under $E_n$, one has
\begin{align}
    \Fnorm{\Delta_i} &\leq \Fnorm{\hQ_{\rho}(X_i) - Q^*(X_i)} + \Fnorm{Q^*(\mu) - \hQ_{\rho}(\bar{X})} \nonumber\\
    &\lesssim \frac{\polylog{n}}{\sqrt{n}}
    \label{eqn: prf_pow_Delta}
\end{align}
Therefore, (\ref{eqn: prf_pow_Fro_lower1}) and (\ref{eqn: prf_pow_Delta}) together imply the following inequality for $n$ large enough.
\begin{align}
    \sum_{i=1}^{n} \Fnorm{\hQ_{\rho}(X_i) - \hQ_{\rho}(\bar{X})}^2 \overset{(I)}{\geq} \frac{na_n^2}{2} \cdot \frac{3}{4}
    \label{eqn: prf_pow_Fro_lower}
\end{align}
Here (I) arises due to Lemma \ref{lem: prf_power}, (\ref{eqn: prf_pow_Delta}) as well as the fact that $\polylog{n} = o \rbr{na_n^2}$.

Combining (\ref{eqn: prf_pow_test2Frob}) and (\ref{eqn: prf_pow_Fro_lower}) then gives that under $E_n$, one has
\begin{align}
    \hat{\cT}_{\rho} &\geq \frac{3n a_n^2}{64 M_L^6} \nonumber\\
    &\geq \sqrt{na_n^2} \label{eqn: prf_pow_hTau_lower}
\end{align}
for $n$ large enough.

Denote $\tq_{1-\alpha}$ the $1-\alpha$ quantile of $\sum_{i=1}^{d^2} 2\lambda_1w_i$, which is a fixed constant. Then as a consequence of Lemma \ref{lem: prf_power},  $\sum_{i=1}^{d^2} \hlambda_i w_i$ is stochastically dominated by $\sum_{i=1}^{d^2} 2\lambda_1w_i$  under $E_n$, which then implies that
\begin{equation*}
    \hq_{1-\alpha} < \tq_{1-\alpha}, \qquad \text{under } E_n
\end{equation*}
Therefore, for any $\PP \in \mathfrak{P}$, one can obtain
\begin{equation*}
    \begin{aligned}
        \PP \rbr{ \hat{\cT}_{\rho} > \hq_{1-\alpha}} &\geq \PP \rbr{\cbr{\hat{\cT}_{\rho} > \tq_{1-\alpha}} \cap \cbr{ \tq_{1-\alpha}>\hq_{1-\alpha}}} \\
        &\geq \PP \rbr{\cbr{\hat{\cT}_{\rho} > \tq_{1-\alpha}} \cap E_n} \\
        &\overset{(i)}{\geq} \PP \rbr{\cbr{\sqrt{na_n^2} > \tq_{1-\alpha}}\cap E_n }    \end{aligned}
\end{equation*}
for $n$ large enough. Here (i) follows from (\ref{eqn: prf_pow_hTau_lower}).

Finally, one can take $n \to \infty$ to see that
\begin{equation*}
    \begin{aligned}
        \liminf_{n \to \infty} \inf_{\PP \in \mathfrak{P}}\PP \rbr{ \hat{\cT}_{\rho} > \hq_{1-\alpha}}
        & \geq \liminf_{n \to \infty} \inf_{\PP \in \mathfrak{P}} \PP \rbr{\cbr{\sqrt{na_n^2} > \tq_{1-\alpha}}\cap E_n }\\
        &\geq 1
    \end{aligned}
\end{equation*}
which implies 
\begin{equation*}
    \lim_{n \to \infty}\inf_{\PP \in \mathfrak{P}} \PP \rbr{ \hat{\cT}_{\rho} > \hq_{1-\alpha}} = 1
\end{equation*}
The proof is then complete.

\paragraph{Power under Wasserstein distance:} by Lemma \ref{lem: diff} and the boundedness assumption, one has
\begin{align*}
    \EE \Fnorm{Q^*(X)-Q^*(\mu)}^2 \geq \frac{1}{\polylog{n}} \EE W^2\rbr{Q^*(X),Q^*(\mu)}
\end{align*}
Therefore, the 1st part of the proof (Frobenius norm) can be applied.

\subsection{Proof of Lemma \ref{lem: prf_power}}
\label{subsubsec: prf_power_lem}

(\ref{eqn: prf_pow_XQ})-(\ref{eqn: prf_pow_hQ_in}) are shown in the proof of Theorem \ref{thm: unfm_conv} and (\ref{eqn: prf_pow_bary}) is due to Lemma \ref{lem: barycenter}.

\paragraph{Proof of (\ref{eqn: prf_pow_sum1}):}

With these in place, we have almost surely that
\begin{align}
    \Fnorm{Q^*(X) - Q^*(\mu)}^2 &\leq d \Onorm{Q^*(X) - Q^*(\mu)}^2 \nonumber\\
    &\leq d M_L^2 =:M_F \asymp \polylog{n}\label{eqn: prf_power_lem_trunc}
\end{align}
Define random variable $Y_i$ as follows.
\begin{align*}
    Y_i&:=  \Fnorm{Q^*(X_i) - Q^*(\mu)}^2 - \EE \Fnorm{Q^*(X) - Q^*(\mu)}^2
\end{align*}
With (\ref{eqn: prf_power_lem_trunc}) in place, one then has $Y_i \in [-M_F,M_F]$, $\EE Y_i = 0$ and
\begin{align*}
    \Var{Y_i}&= \EE Y_i^2 \\
    &\leq \EE \abs{Y_i} M_F \\
    &\leq 2\EE \Fnorm{Q^*(X_i) - Q^*(\mu)}^2 \cdot M_F \\
    &= 2a_n^2 \cdot M_F
\end{align*}
To prove (\ref{eqn: prf_pow_sum1}), it suffices to show that ,
\begin{align}
    \abs{\sum_{i}^{n} Y_i} \leq \frac{n a_n^2}{2} \quad \text{with probability at least } 1-O(n^{-100})
    \label{eqn: prf_pow_lem_suff}
\end{align}
To this end, note that by Bernstein's inequality \citep[Theorem 2.8.4]{vershynin}, one has with probability at least $1-O(n^{-100})$,
\begin{align}
    \abs{\sum_{i}^{n} Y_i} &\lesssim M_F \log n + \sqrt{n \Var{Y_i}} \cdot \sqrt{\log n} \nonumber\\
    &\lesssim M_F \log n + \sqrt{n a_n^2 M_F} \cdot \sqrt{\log n} \nonumber\\
    &\overset{(i)}{\leq} \polylog{n} \cdot \rbr{1+\sqrt{n a_n^2}} \label{eqn: prf_pow_lem_berns}
\end{align}
Then (\ref{eqn: prf_pow_lem_berns}) implies (\ref{eqn: prf_pow_lem_suff}) for $n$ large enough due to the assumption that $na_n^2 \gtrsim n^{2\alpha_2}$ for some constant $\alpha_2>0$. The proof is then complete.

\paragraph{Proof of (\ref{eqn: prf_pow_eig}):} Note that the proof of Lemma \ref{lem: prf_size_lem_approx} does not assume the null hypothesis. Therefore, with (\ref{eqn: prf_pow_bary}) in place, one can exactly follow (\ref{eqn: prf_size_set})-(\ref{eqn: prf_size_approx_F}) as in the proof of Lemma \ref{lem: prf_size_lem_approx} to get 
\begin{align*}
    \Fnorm{\frac{1}{n}\sum_{i=1}^{n} \rbr{T^{Q_i}_{\hQ_{\rho}(\bar{X})} - I_d} \otimes \rbr{T^{Q_i}_{\hQ_{\rho}(\bar{X})} - I_d} - \EE \rbr{T^{Q}_{Q^*} - I_d} \otimes \rbr{T^{Q}_{Q^*} - I_d}} \leq \frac{\polylog{n}}{\sqrt{n}}
\end{align*}
which implies (\ref{eqn: prf_pow_eig}).

    \section{Additional simulations}
    \label{sec: more_sim}
    We conduct simulations to compare different initialization methods and step sizes, as described in Section \ref{subsec: algm}.

\subsection{Initialization}
To assess the performance of different initialization methods, we conduct simulations based on Example \ref{example_nc} with parameters $n = 200$, $p = 5$ and $d = 6$, setting $\eta = 1$, $T=30$ and $\mathrm{eps} = 10^{-6}$ in Algorithm \ref{alg: gd}. We consider three different initialization methods:
\begin{enumerate}
    \item Identity Initialization : $S_0 = I_d$.
    \item Random Initialization: $S_0 = U \exp(M) U^\top$ where $U$ and $M$ are independent; $U \in \cO_d$ is a random orthogonal matrix following the Haar measure, and $M$ is a random diagonal matrix with i.i.d. diagonal entries $M_{kk} \sim \cN(0,1/\sqrt{d})$, $k \in [d]$.
    \item Mean Initialization: $S_0 = \sum_{i=1}^n Q_i$.
\end{enumerate}
Table \ref{tab: init} reports the number of steps before termination and the relative error with respect to the identity initialization, averaged over 100 simulations. In each simulation, after generating $n = 200$ pairs of $(X_i, Q_i)$, we randomly generate $\tX \sim \mathrm{Uniform}[-1,1]^p$ and compute $\hQ_{\rho}(\tX)$ using Algorithm \ref{alg: gd} with the three initialization methods described above.
\begin{table}[!htb] 
    \centering 
    \caption{Comparison of different initialization methods.} 
    \label{tab: init} 
    \begin{tabular}{c c|c c}
        \hline
        $\delta$ & Initialization method & Steps & Relative error ($\times \mathrm{eps}$) \\
        \hline
        \multirow{3}{*}{$0$} & $I_d$ & 3 & 0 \\ 
        & Random & 3.4 & 5.3e-4 \\ 
        & Mean & 3 & 4.8e-6 \\
        \hline
        \multirow{3}{*}{$0.2$} & $I_d$ & 3 & 0 \\ 
        & Random & 3.4 & 5.9e-4 \\ 
        & Mean & 3 & 4.8e-6 \\
        \hline
        \multirow{3}{*}{$0.4$} & $I_d$ & 3 & 0 \\ 
        & Random & 3.3 & 6.0e-4 \\ 
        & Mean & 3 & 5.9e-6 \\
        \hline
    \end{tabular}
\end{table}
It can be observed that all three methods converge to the same point within nearly the same number of steps, as the relative error is significantly smaller than the threshold $\eps$.

\subsection{Step size}
To examine the performance of different step sizes, we conduct simulations based on Example~\ref{example_nc} with parameters $n = 200$, $p = 5$ and $d = 6$, setting $S_0 = I_d$, $T=30$ and $\mathrm{eps} = 10^{-6}$ in Algorithm \ref{alg: gd}.
Table \ref{tab: step_size} reports the number of steps before termination and the relative error with respect to step size $1$, averaged over 100 simulations. In each simulation, after generating $n = 200$ pairs of $(X_i, Q_i)$, we randomly generate $\tX \sim \mathrm{Uniform}[-1,1]^p$ and compute $\hQ_{\rho}(\tX)$ using Algorithm \ref{alg: gd} with step size $\eta \in \cbr{1,0.9,0.8,0.7,0.6}$

\begin{table}[!htb] 
    \centering 
    \caption{Comparison of different step sizes.} 
    \label{tab: step_size} 
    \begin{tabular}{c c|c c}
        \hline
        $\delta$ & $\eta$ & Steps & Relative error ($\times \mathrm{eps}$) \\
        \hline
        \multirow{5}{*}{$0$}
        & 1 & 3 & 0 \\ 
        & 0.9 & 8 & 0.013 \\ 
        & 0.8 & 10 & 0.13 \\
        & 0.7 & 13 & 0.20\\
        & 0.6 & 17 & 0.22\\
        \hline
        \multirow{5}{*}{$0.2$}
        & 1 & 3 & 0 \\ 
        & 0.9 & 8 & 0.013 \\ 
        & 0.8 & 10 & 0.13 \\
        & 0.7 & 13 & 0.20\\
        & 0.6 & 17 & 0.22\\
        \hline
        \multirow{5}{*}{$0.4$}
        & 1 & 3 & 0 \\ 
        & 0.9 & 7.94 & 0.017 \\ 
        & 0.8 & 10 & 0.13 \\
        & 0.7 & 13 & 0.20\\
        & 0.6 & 15.97 & 0.22\\
        \hline
    \end{tabular}
\end{table}
It can be observed that all three methods converge to the same point, as the relative error is smaller than the threshold $\eps$. Additionally, setting $\eta = 1$ results in the fastest convergence.

\end{document}